\theoremstyle{plain}
\newtheorem{theorem}{Theorem}
\newtheorem{lemma}[theorem]{Lemma}
\newtheorem{corollary}[theorem]{Corollary}
\theoremstyle{definition}
\newcommand{\TSAT}{\textsc{3SAT}\xspace}
\newcommand{\PTSAT}{\textsc{Planar-3SAT}\xspace}
\newcommand{\LPTSAT}{\textsc{Layered-Planar-3SAT}\xspace}
\newcommand{\MLPTSAT}{\textsc{Monotone-Layered-Planar-3SAT}\xspace}
\newcommand{\MPTSAT}{\textsc{Monotone-Planar-3SAT}\xspace}
\newcommand{\CTSAT}{\textsc{Clover-3SAT}\xspace}
\newcommand{\MCTSAT}{\textsc{Monotone-Clover-3SAT}\xspace}
\newcommand{\USPack}{\textsc{$2\times 2$-Square-Packing}\xspace}
\newcommand{\USCover}{\textsc{Small-Cover}\xspace}
\newcommand{\USPartition}{\textsc{Small-Partitioning}\xspace}
\title{Hardness of Packing, Covering and Partitioning Simple Polygons with Unit Squares}
\author{Mikkel Abrahamsen \and Jack Stade}
\date{April 2024}
\begin{document}

\pagenumbering{roman}
\maketitle

\begin{abstract}
We show that packing axis-aligned unit squares into a simple polygon $P$ is NP-hard, even when $P$ is an orthogonal and orthogonally convex polygon with half-integer coordinates.
It has been known since the early 80s that packing unit squares into a polygon with holes is NP-hard~[Fowler, Paterson, Tanimoto, Inf. Process. Lett., 1981], but the version without holes was conjectured to be polynomial-time solvable more than two decades ago~[Baur and Fekete, Algorithmica, 2001].

Our reduction relies on a new way of reducing from \textsc{Planar-3SAT}. Interestingly, our geometric realization of a planar formula is non-planar.
Vertices become rows and edges become columns, with crossings being allowed.
The planarity ensures that all endpoints of rows and columns are incident to the outer face of the resulting drawing.
We can then construct a polygon following the outer face that realizes all the logic of the formula geometrically, without the need of any holes.

This new reduction technique proves to be general enough to also show hardness of two natural covering and partitioning problems, even when the input polygon is simple.
We say that a polygon $Q$ is \emph{small} if $Q$ is contained in a unit square.
We prove that it is NP-hard to find a minimum number of small polygons whose union is $P$ (covering) and to find a minimum number of pairwise interior-disjoint small polygons whose union is $P$ (partitioning), when $P$ is an orthogonal simple polygon with half-integer coordinates.
This is the first partitioning problem known to be NP-hard for polygons without holes, with the usual objective of minimizing the number of pieces.
\end{abstract}

\tableofcontents

\newpage

\clearpage

\clearpage
\pagenumbering{arabic}

\setcounter{page}{1}

\section{Introduction}

Packing is a large field in computational geometry, operations research and pure mathematics concerned with arranging certain geometric shapes without overlap in a space efficient way.
The importance of the area is emphasized by numerous industrial settings where packing problems appear, such as in shipping, manufacturing, VLSI design and clothing production.

One of the simplest packing problems is to decide if $k$ axis-aligned unit squares can be placed in a given polygon $P$ without overlap.
In this paper, we shall be mainly concerned with the equivalent problem
\USPack:
Given a polygon $P$ and an integer $k$, decide if $P$ has room for $k$ axis-aligned squares of size $2\times 2$.
Focusing on $2\times 2$ squares makes it more convenient to state our results and explain our constructions.

It has been known for more than four decades that \USPack is NP-hard if $P$ can have holes.
This was shown in 1981 independently by Fowler, Paterson and Tanimoto~\cite{DBLP:journals/ipl/FowlerPT81} and by Berman, Leighton and Snyder~\cite{berman1981optimal}, later expanded in the paper~\cite{DBLP:journals/jal/BermanJLSS90}.
The reduction in~\cite{berman1981optimal,DBLP:journals/jal/BermanJLSS90} made use of the at the time recent discovery that \textsc{Planar-3SAT} is NP-hard~\cite{DBLP:journals/siamcomp/Lichtenstein82}, and the authors thus didn't have to develop a ``crossing gadget''.
The reduction in~\cite{DBLP:journals/ipl/FowlerPT81} reduced directly from \textsc{3SAT}, using crossing gadgets.
The reductions are otherwise quite similar in how they get from a \TSAT\ instance $\Phi$ to a polygon:
The edges of $\Phi$ are turned into corridors that can be packed in two optimal ways, corresponding to the values of a binary variable.
Each clause is realized as a carefully designed ``room'', where one more square can fit if the packing in one of the connected corridors corresponds to a value that makes the clause satisfied. 

These works constitute the first published NP-hardness proofs of problems where the input is a polygon that we are aware of.\footnote{The only preceding work with a result of this type seems to be a manuscript by Masek cited by Garey and Johnson~\cite[p.~232]{garey1979computers}, proving NP-hardness of the problem of describing an orthogonal polygon as a union of a minimum number of rectangles.
The manuscript was never published and is now apparently lost.}
The technique of ``building a polygon'' on top of (some version of) \TSAT\ has since been used to show hardness of a great variety of problems, for instance the Art Gallery Problem and other covering problems~\cite{DBLP:journals/tit/ORourkeS83}, as well as problems concerned with triangulations~\cite{DBLP:journals/jacm/MulzerR08,DBLP:journals/corr/LubiwM17,DBLP:conf/approx/FeketeKKMS11}, partitions~\cite{DBLP:journals/ijcga/FeketeM01,DBLP:conf/icalp/Lingas82}, tool paths for milling~\cite{DBLP:journals/algorithmica/ArkinHS00,DBLP:journals/comgeo/ArkinFM00}, Voronoi games~\cite{DBLP:journals/comgeo/FeketeM05}, facility location~\cite{DBLP:journals/ior/FeketeMB05}, separation of point sets~\cite{DBLP:conf/compgeom/DemaineEHILMOW04}, and motion planning~\cite{DBLP:conf/cccg/KirkpatrickKP11}.
However, the method has the downside that it necessarily leads to a polygon with holes, since each bounded face of the plane embedding of $\Phi$ will lead to a hole in the resulting polygon.

While the complexity of \USPack for polygons with holes was settled early, the complexity remained unknown in the case of \emph{simple} polygons, i.e., polygons without holes.
This is Problem 56 in The Open Problems Project~\cite{topp}.
Baur and Fekete~\cite{DBLP:journals/algorithmica/BaurF01} conjectured in 2001 that for any fixed integer $s>1$, there is a polynomial-time algorithm to pack a maximum number of $s\times s$ squares in a simple \emph{grid polygon}, i.e., an orthogonal polygon with vertices at integer coordinates.
There have apparently also been some other attempts to resolve the problem, leading to algorithms for special cases and other results~\cite{el2009decomposing,DBLP:conf/cccg/RenssenS11,DBLP:conf/cccg/El-KhechenDIO09}.
Our main result is that an even more restricted version of the problem is NP-hard.
A polygon $P$ is \emph{orthogonally convex} if, for any vertical or horizontal line $\ell$, the intersection $P\cap \ell$ is connected.
Note that an orthogonally convex polygon is necessarily simple.

\begin{restatable}{theorem}{mainthm}
\label{thm:main}
The problem \USPack\ is NP-hard, even for orthogonally convex grid polygons.
\end{restatable}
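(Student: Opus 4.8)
To prove that \USPack is NP-hard on orthogonally convex grid polygons, the plan is to reduce from (a layered variant of) \PTSAT, say \LPTSAT, producing in polynomial time an orthogonally convex grid polygon $P$ and an integer $k$ such that $P$ admits $k$ interior-disjoint $2\times 2$ squares if and only if the input formula $\Phi$ is satisfiable. Geometrically, I would place the layers of $\Phi$ at consecutive heights, so that layer $i$ becomes a long, thin, horizontal corridor (a \emph{row}) of width $2$; this stack of rows is exactly the staircase structure that orthogonal convexity forces on $P$. Each variable living in a row is given a suitably dented sub-corridor whose \emph{maximum} packing by $2\times 2$ squares has precisely two shapes --- a ``left'' one and a ``right'' one --- which we read off as its truth value. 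Each incidence of a variable in a clause becomes a vertical corridor (a \emph{column}) of width $2$ joining the variable's row to the clause's row; the two optimal packings of the variable force, via a chain of forced square placements running along the column, one of two optimal packings of the column, so the column transmits the literal faithfully (a short detour in the column serves as an inverter for negated literals). Finally, each clause becomes a small room fed by its three columns, containing a single distinguished cell that accepts one extra $2\times 2$ square exactly when at least one arriving literal satisfies the clause; we set $k$ to be the total number of squares forced inside all rows, columns and crossings, plus the number $m$ of clauses.

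The crux --- and the new idea announced in the introduction --- is the layout. Realizing vertices as horizontal rows and edges as vertical columns forces a column joining two non-adjacent layers to run straight through the rows in between, so the realization of the planar formula is itself \emph{non-planar}; at every point where a column passes through a foreign row I therefore need a \emph{crossing gadget}. The benefit of having started from a planar (indeed layered-planar) formula is structural: every free end of every row and every column, and in particular every clause room, can be made to lie on the outer face of the resulting figure of corridors. I can then trace that outer face and replace it by the boundary of a single \emph{simple} polygon $P$ --- carving the variable, wire, crossing and clause gadgets as indentations that are all reachable from the outside --- so that no bounded complementary region, i.e.\ no hole, is ever created. A final massaging of the outline (padding it into a monotone ``diamond'' or arguing directly that every axis-parallel line meets $P$ in one interval) yields orthogonal convexity, and all coordinates can be kept half-integral.

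It remains to build the four gadgets so that they compose. The \emph{variable gadget} is a dented row segment whose shape admits exactly two maximum packings, each of which blocks or frees the first cell of every attached column; the \emph{wire gadget} propagates the resulting phase down (or up) the column by a domino-like chain of forced placements, with an optional detour acting as an inverter; the \emph{crossing gadget} is the delicate one, where I must guarantee that for each of the four combinations of (row phase, column phase) the crossing region has a packing with a fixed number of squares and that no packing of the crossing region can ever exceed that number --- otherwise the packing count is no longer a faithful proxy for satisfiability; and the \emph{clause gadget} is a three-column room with one bonus cell that is free precisely when some incoming phase is the satisfying one. Granting these, completeness is immediate: a satisfying assignment is copied into the canonical packing of every row, column and crossing, and then each clause room contributes its bonus square, for a total of $k$. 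Soundness is the usual normalization argument: any packing of size $k$ must be canonical in every row, column and crossing (a local exchange argument, since no region can beat its canonical count), hence it induces a single consistent assignment, and the $m$ bonus squares force every clause to be satisfied. I expect the two main obstacles to be (i) the crossing gadget --- a $2\times 2$ square is large enough that a column punched through a width-$2$ corridor genuinely interacts with both packings at once, so making the crossing ``transparent'' to the logic is subtle --- and (ii) checking that the fully assembled polygon is orthogonally convex while still keeping every gadget accessible from the outer face.
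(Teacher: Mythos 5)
Your plan has a genuine gap at exactly the point where the theorem becomes hard: orthogonal convexity. You treat it as a ``final massaging'' of a corridor-style simple-polygon construction (padding the outline into a monotone diamond), but padding is not benign for packing: any area you add to straighten the boundary either admits extra $2\times 2$ squares or destroys the forced placements your gadgets rely on, and an orthogonally convex polygon simply cannot have the accessible dents, clause rooms and inverter detours that your rows, columns and clause gadgets require --- every such indentation would make some axis-parallel line meet $P$ in two intervals. This is precisely why the paper does not adapt its simple-polygon construction but builds a new one: gadget ``walls'' are formed by squares of the packing itself rather than by the polygon boundary (membrane rows, pinch squares, helper rows, redundancy columns), and clauses are not rooms with a bonus cell but large components of criss-crossing ``tester'' and ``literal'' stacks whose correctness is established by a global counting/algebraic argument (Lemma~\ref{lem:ClauseGadget}) controlling how slack propagates between clauses. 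Relatedly, your choice of \LPTSAT\ as the source problem is not enough: the paper has to introduce \CTSAT\ (clause-linked on both sides, giving nested clause intervals, Lemma~\ref{lem:ConvexConversion}) precisely so that the schematic of rows and columns is itself orthogonally convex; a layered embedding alone does not give that shape.

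A second, smaller but real divergence is your crossing gadget. The paper's central trick is that no crossing gadget exists or is needed: with the reference-center framework (each integer-coordinate $2\times 2$ square covers exactly one reference center, and a packing is ``perfect'' when their counts match), vertical and horizontal freedom of a square are independent, so a push column crosses a variable row while merely growing in width (the pyramid/stack-growth lemmas), without exchanging any information with it. In your width-$2$ corridor picture, the obstacle you yourself flag --- that a $2\times 2$ column punched through a width-$2$ row interacts with both phases at once --- is in fact fatal as stated; the resolution is not a cleverer local gadget but the change of framework (wide rows of squares, reference centers, perfect packings, stacks that grow at crossings). Your completeness/soundness scheme with one bonus square per clause and a ``local exchange'' normalization also implicitly needs something like the reference-center counting to rule out non-canonical packings; without it the claim that no region can beat its canonical count is unsupported. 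So while your proposal captures the high-level vertices-as-rows/edges-as-columns idea from the simple-polygon part of the paper, it does not contain the ideas (Clover-3SAT, packing-formed boundaries, tester-based clause components with slack accounting, gadget-free crossings via reference centers) that the orthogonally convex case actually requires.
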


Allen and Iacono~\cite{DBLP:journals/corr/abs-1209-5307} mentioned that this special case of \USPack was the simplest packing problem with unknown complexity, and that the problem was ``most likely in P.''

Like the known reductions for polygons with holes, we also reduce from \PTSAT.
Interestingly, our geometric realization of a planar formula is non-planar, where binary values are represented by configurations of horizontal rows and vertical columns of squares, and these often intersect each other in the interior of the polygon.
The crucial observation is that movement in one direction does not influence movement in the other direction, so binary values can be ``transported'' independently in both directions through a crossing.

The technique proves to be general enough to also show hardness of some other problems.
We say that a polygon $Q$ is \emph{small} if $Q$ is contained in an axis-aligned $2\times 2$ square.
We show that it is NP-hard to find optimal covers and partitions of a simple polygon using small polygons.
In the problem \USCover, we are given as input a polygon $P$ and an integer $k$ and want to decide if there exists $k$ small polygons whose union is $P$.
The problem \USPartition\ is similar, but where we require the $k$ small polygons to be pairwise interior-disjoint.
We show that both of these problems are NP-hard, even when $P$ is simple.

\begin{theorem}\label{thm:covering}
The problem \USCover\ is NP-hard, even for simple grid polygons.
\end{theorem}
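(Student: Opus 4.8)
The plan is to reduce from \PTSAT\ along exactly the lines of the proof of \cref{thm:main}: keep the non-planar realization of the planar formula (vertices become horizontal rows, edges become vertical columns, crossings are allowed, and planarity forces every row- and column-endpoint onto the outer face), keep the idea of tracing the outer face by a simple orthogonal polygon, and only re-engineer the interior of the corridor, crossing, and clause gadgets so that it is the \emph{minimum covering number} of $P$, rather than its maximum packing number, that encodes the logic. Formally, from a planar 3-CNF formula $\Phi$ we build in polynomial time a simple grid polygon $P$ and an integer $K$ with the property that $P$ is the union of $K$ small polygons if $\Phi$ is satisfiable and of no fewer than $K+1$ small polygons otherwise; together with the hardness of \PTSAT\ this proves \cref{thm:covering}.

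For the ``satisfiable $\Rightarrow$ covered by $K$'' direction I would describe a canonical cover. Outside the clause rooms $P$ is (essentially) a disjoint union of width-$2$ corridors whose lengths are odd; a $2\times(2m{+}1)$ corridor has area $4m{+}2$ and is covered optimally by $m$ aligned $2\times 2$ squares plus one ``deficient'' $1\times 2$ tile, and the position of that deficient tile is a free bit --- this is the value carried by the corridor. Given a satisfying assignment, place the deficient tile of each variable corridor so as to encode its truth value, carry it through each crossing gadget without change (the horizontal and vertical corridors meeting at a crossing can be covered so that each retains its own deficient tile, because movement in the two directions does not interact), and in each clause room use the covering that is available exactly when one of the three incident corridors delivers its deficient tile on the clause side. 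Adding up the tiles used gives exactly $K$.

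The converse --- every cover uses at least $K$ tiles, and at least $K+1$ when $\Phi$ is unsatisfiable --- is the crux and the step I expect to be hardest. The area bound $\#\text{tiles}\ge\lceil\operatorname{area}(P)/4\rceil$ has the right magnitude but is not tight enough on its own, and, in contrast to the packing setting, a covering tile is an arbitrary shape inscribed in a $2\times 2$ box, so it may cross grid lines and wrap around reflex corners; the ``one tile per cell'' reasoning available for $2\times 2$ squares breaks down. I would instead run a charging/witness argument adapted to the gadgets: install a family of probe features --- small notches, or points that are pairwise too far apart, or too ``occluded'' inside $P$, to lie in one common small polygon --- such that each tile can be charged to at most one probe, use the thin necks joining the gadgets together with orthogonal convexity to bound how much of a clause room or crossing a single incoming tile can serve, and then check locally that achieving the probe bound inside a clause room forces a consistent choice of ``which incident literal is true.'' A globally consistent such choice is precisely a satisfying assignment, so when $\Phi$ is unsatisfiable at least one room must pay for an extra tile, giving the $K+1$ bound.

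Since every partition of $P$ into small polygons is in particular such a cover, the canonical cover above (which can be arranged to be interior-disjoint) together with the same lower bound also pins down the minimum partition number, so the construction will additionally give the hardness of \USPartition; the few extra verifications needed for that are deferred.
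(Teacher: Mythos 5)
Your reduction framework (rows/columns realization of \PTSAT, outer-face boundary, clause rooms) matches the paper's, but the step you yourself flag as the crux --- the lower bound on the number of covering pieces --- is exactly where the proposal has a genuine gap, and the machinery you sketch for it (notches, ``probe'' points pairwise too far apart, a charging argument, ``thin necks together with orthogonal convexity'') is not developed and is not what makes the argument go through. The paper's key observation is much more direct: since a small polygon is by definition contained in an axis-aligned $2\times 2$ square, any cover of $P$ by $k$ small polygons yields a cover by $k$ axis-aligned $2\times 2$ squares; each such square covers exactly unit area of the pattern of reference centers (the unit cells $[2k-1,2k]\times[2\ell-1,2\ell]$ inside $P$), so every cover needs at least as many pieces as there are reference centers, and an analogue of \Cref{Finiteness} lets one assume the squares in a minimum square cover have integer coordinates. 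This restores precisely the ``four candidate positions per reference center'' rigidity from the packing proof, so no bespoke charging/witness scheme is needed; your probe idea, if pushed far enough to give both the tight count and the local rigidity inside the gadgets, would essentially have to reinvent this reference-center mechanism, and as stated it does not. (Orthogonal convexity plays no role in \Cref{thm:covering}; that is only needed for \Cref{thm:main}.)

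A second, related problem is your gadget design. You transport information via a $2\times(2m+1)$ corridor covered by $m$ full squares plus one ``deficient'' $1\times 2$ tile whose position is the carried bit. In the covering setting the pieces may overlap and sit at arbitrary (real, and after the integrality reduction, arbitrary integer) positions, so an optimal cover of such a corridor is far less rigid than an optimal packing, and you never establish that the bit is actually forced to propagate; this is the analogue of the verification you defer. The paper instead reuses the packing-style components with ``push'' replaced by ``pull'' (rows that must pull left/right to cover the corridor ends, PULL-UP-IF-PLUS and PULL-DOWN-IF-MINUS gadgets acting on a pull column, and OR gadgets where an uncovered unit square appears unless some auxiliary variable is in the right position), and the rigidity of these is exactly what the reference-center analysis certifies. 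It also reduces from \MPTSAT\ (monotone, with auxiliary variables as in the packing section) rather than plain \PTSAT, though that difference is cosmetic. Finally, your claim that the same construction immediately yields \USPartition\ needs the extra structural fact, proved in the paper via \Cref{lem:coverpartition}, that in a perfect cover each square meets $P$ in a star-shaped set containing the square's center; asserting ``the canonical cover can be arranged to be interior-disjoint'' does not by itself give the matching lower bound for partitions.
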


\begin{theorem}\label{thm:partition}
The problem \USPartition\ is NP-hard, even for simple grid polygons.
\end{theorem}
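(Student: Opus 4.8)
The plan is to reuse, essentially unchanged, the construction behind Theorem~\ref{thm:covering}. That reduction sends a planar $3$-CNF formula $\Phi$ to a simple grid polygon $P$ together with an integer $k$ such that $P$ can be covered by $k$ small polygons exactly when $\Phi$ is satisfiable. I would show that the very same pair $(P,k)$ is a correct instance for \USPartition. One direction is free: a partition of $P$ into small polygons is in particular a cover of $P$ by small polygons, so the minimum number of pieces in a partition is never smaller than the minimum number of pieces in a cover. Hence if $\Phi$ is unsatisfiable, then $P$ admits no cover of size $k$ by Theorem~\ref{thm:covering}, and therefore no partition of size $k$ either.

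For the converse I would revisit the explicit size-$k$ cover of $P$ that the proof of Theorem~\ref{thm:covering} builds from a satisfying assignment of $\Phi$, and argue that it is already a partition. In this style of reduction the canonical cover of each gadget — the wire corridors, the crossing gadgets, the clause rooms, and the connective pieces between them — is assembled from axis-aligned rectangles (and perhaps $L$-shaped pieces) of bounding box at most $2\times 2$, laid out in a brick-like pattern so that consecutive pieces meet only along boundary segments; no interior overlap is created and the pieces exactly tile the gadget. Checking this gadget by gadget, and at the seams where two gadgets are glued together, yields a genuine partition of $P$ into $k$ small polygons whenever $\Phi$ is satisfiable. Combined with the previous paragraph, $P$ can be partitioned into at most $k$ small polygons if and only if $\Phi$ is satisfiable; and since $(P,k)$ is produced by the same polynomial-time map as in Theorem~\ref{thm:covering} and $P$ is a simple grid polygon, this is a valid polynomial-time reduction from \PTSAT, establishing that \USPartition\ is NP-hard.

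The step I expect to be the real obstacle is exactly this claim that the canonical cover has no essential overlap, i.e., that for the polygons we construct the covering optimum and the partitioning optimum coincide. If some gadget's optimal cover genuinely required overlapping pieces, the repair would be to redesign that gadget so that its optimal cover becomes a disjoint tiling — for instance replacing an overlapping pair of $1\times 2$ rectangles by one rectangle together with one $L$-tromino, or shifting the brick pattern — or, failing that, to use target $k+1$ and separately prove that an unsatisfiable $\Phi$ forces at least $k+2$ pieces in any partition. I would pursue the first, cleaner option.
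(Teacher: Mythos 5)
Your easy direction is right and matches the paper: any partition into small pieces yields a square cover of the same size, so an unsatisfiable $\Phi$ (which by the covering analysis excludes a perfect cover) also excludes a partition with $k$ pieces. But the heart of your argument --- that the explicit size-$k$ cover built from a satisfying assignment is already interior-disjoint --- is precisely the step that does not hold, and the paper never attempts it. In the covering construction the pieces of a perfect cover are the intersections of $P$ with full $2\times 2$ integer-coordinate squares, one per reference center, and these squares genuinely overlap inside $P$: for example, wherever a pulled-up column widens while crossing a pair of unaligned rows, a square of the lower row that is pulled up overhangs into the gap row and meets, in a unit cell of $P$, the ordinary (not pulled-up) square of the neighboring position in the row above; similar collisions occur around the OR gadgets. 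So a gadget-by-gadget check that the canonical cover is a tiling would fail, and your fallback plans (redesigning gadgets until the optimal cover becomes a disjoint tiling, or moving the threshold to $k+1$) would force you to redo the entire covering verification. Also, as a side remark, your picture of the construction (wire corridors, crossing gadgets, clause rooms, brick-laid rectangles and L-trominoes) is the classical holes-based layout; this paper's cover consists of shifted $2\times 2$ squares along rows and pull columns.

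The missing idea is that the cover need not be disjoint to begin with: it can be disjointified canonically. The paper proves \Cref{lem:coverpartition}: if $P$ is contained in the union of $k$ axis-aligned $2\times 2$ squares such that each intersection $P\cap S$ is star-shaped with kernel containing the center of $S$, then intersecting $P$ with the $L_\infty$ Voronoi cells of the $k$ centers produces $k$ pairwise interior-disjoint pieces, each contained in its square (hence small) and each star-shaped, hence a connected polygon. The covering construction guarantees that in a perfect cover every $P\cap S$ is either $S$ or three quadrants of $S$, so the hypothesis is automatic, and a perfect cover converts into a partition of the same cardinality with no change to the gadgets. Combined with your correct first direction, this yields that $P$ admits a partition into $k$ small polygons if and only if it admits a perfect cover, if and only if $\Phi$ is satisfiable, which is how \Cref{thm:partition} follows from the construction behind \Cref{thm:covering}. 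Without this Voronoi conversion (or an equivalent disjointification argument), your proposal has a genuine gap at its central claim.
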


There have been many prior examples of covering problems that are intractable for simple polygons, and some are even $\exists\mathbb R$-hard like the Art Gallery Problem~\cite{DBLP:journals/jacm/AbrahamsenAM22} and covering with convex polygons~\cite{DBLP:conf/focs/Abrahamsen21}.
It is more remarkable that \Cref{thm:partition} gives the first example of a \emph{partitioning} problem that is hard already for simple polygons (with the usual objective of minimizing the number of pieces).
For other partitioning problems, like partitioning into convex~\cite{chazelle1985optimal} or star-shaped pieces~\cite{DBLP:journals/corr/abs-2311-10631} or trapezoids~\cite{DBLP:journals/jacm/AsanoAI86}, there are known polynomial-time algorithms.
Partitioning problems usually become hard in the presence of holes, which is also the case for the here mentioned problems~\cite{o1987art,DBLP:journals/jacm/AsanoAI86} (a notable exception is partitioning into rectangles, which can be done optimally in polynomial time even for polygons with holes~\cite{DBLP:journals/siamcomp/ImaiA86}).
Like packing, polygon decomposition forms a large subfield in computational geometry, with several books and survey papers that give an overview of the state-of-the-art at the time of publication~\cite{keil1985minimum,chazelle1985approximation,o1987art,DBLP:journals/pieee/Shermer92,chazelle1994decomposition,keil1999polygon,o2004polygons}.
Abrahamsen and Rasmussen~\cite{DBLP:journals/corr/abs-2211-01359} recently described an $13$-approximation algorithm for finding a partition of a simple polygon into a minimum number of small pieces.
The problem is motivated by various settings in manufacturing and shipping.

\subsection{Other related work on packing}

Several other packing problems have been shown to be NP-hard.
Here we mention the problem of packing axis-aligned squares of varying sizes into a square~\cite{DBLP:journals/jpdc/LeungTWYC90},
packing segments into a simple polygon~\cite{PackingSegments},
packing disks into a square~\cite{DBLP:journals/corr/abs-1008-1224} and packing $1\times 3$-rectangles (that can be rotated) into an orthogonal polygon with holes~\cite{DBLP:journals/comgeo/BeauquierNRR95}.

Allen and Iacono~\cite{DBLP:journals/corr/abs-1209-5307} showed that it is NP-hard to pack identical simple (small) polygons $Q$ into a simple (larger) polygon $P$.
Here, both $Q$ and $P$ are specified as part of the input.
We show that this problem is hard even when $Q$ is the $2\times 2$ square and $P$ is orthogonally convex.

Some packing problems are even known to be $\exists\mathbb R$-complete, and thus likely not in NP.
Abrahamsen, Miltzow and Seiferth~\cite{DBLP:journals/corr/abs-2004-07558} showed that when the pieces can be rotated, the problem of packing convex polygons into a square is $\exists\mathbb R$-complete.
When the pieces can only be translated, the problem is $\exists\mathbb R$-complete if arcs from hyperbolae can appear on the boundaries of the pieces or the container.

On the positive side, Hochbaum and Maass~\cite{DBLP:journals/jacm/HochbaumM85} gave a PTAS for \USPack in grid polygons with holes.
Faster schemes have since then been described by Agarwal, van Kreveld and Suri~\cite{DBLP:journals/comgeo/AgarwalKS98} and Chan~\cite{DBLP:journals/ipl/Chan04}.
El-Khechen~\cite{el2009decomposing} and van Renssen and Speckmann~\cite{DBLP:conf/cccg/RenssenS11} described families of simple grid polygons where \USPack can be solved optimally in polynomial time.
El-Khechen, Dulieu, Iacono and van Omme~\cite{DBLP:conf/cccg/El-KhechenDIO09} showed that \USPack into grid polygons with holes is in NP.
This is not immediately clear since, if $P$ is given in the standard representation as the coordinates of the vertices in binary, then the number of squares that fit in $P$ can be exponential in $n$ (the number of vertices), so specifying the placements of the squares is not a valid certificate.

Aamand, Abrahamsen, Ahle and Rasmussen~\cite{DBLP:journals/talg/AamandARA23} proved that packing dominoes, i.e., rectangles of size $1\times 2$ that can be rotated, into a given grid polygon $P$ with holes is polynomial-time solvable.
Already when we go to $2\times 2$-squares or $1\times 3$-rectangles, the problem becomes NP-hard, and as we show in this paper for $2\times 2$ squares, this is even the case for \emph{simple} grid polygons.

\subsection{Technical overview}

We present two constructions for packing.
The first construction shows that \USPack is NP-hard for simple grid polygons.
The second construction shows that this remains true if the polygon is restricted to be orthogonally convex.
The second result is strictly stronger than the first, but the proof is much more complicated, so we believe it is justified to give the constructions separately. 

Our constructions are made possible by two key ideas.
The first idea is a way to constrain what a packing can look like locally, even in parts of the polygon that are far from the boundary.
We show that we can define a certain set of $1\times 1$ squares called \emph{reference centers} in the polygon in such a way that any $2\times 2$ square in the polygon with integer coordinates must contain a reference center.
A packing is \emph{perfect} if there are as many squares as there are reference centers.
It is then straightforward to verify how the individual parts of our construction can be packed, since for each reference center, there are just four relevant $2\times 2$ squares to consider, namely one containing the reference center in each of the quadrants.

The second idea is a new way of drawing a planar graph where the vertices become rows and the edges become columns. 
We reduce from \PTSAT, and our drawing provides a schematic for our construction.
Rows and columns are allowed to cross in these schematics, but the planarity of the graph ensures that the endpoints of all rows and columns are incident to the outer face of the drawing.
This makes it possible to construct a simple polygon, following the boundary of the outer face, that ``implements'' the functionality of the rows and columns.

Variables are represented by neighbouring rows of squares that can be packed in essentially only two different ways, corresponding to the truth values of a variable.
Dependency between variables is created with so-called PUSH gadgets, where a column of \emph{push squares} will be pushed up or down depending on the value of a variable.
A crucial observation is that a column of push squares can cross any number of other variable rows without interacting with them.
Only the variable where the column ends will be affected by the push squares.

Our first construction, described in \Cref{sec:simplepacking}, shows that \USPack is NP-hard for simple grid polygons.
The construction is not so involved and quite straightforward to verify.

In \Cref{sec:covering}, we study the problems \USCover and \USPartition.
A construction very similar to the one for packing proves that \USCover is NP-hard, establishing \Cref{thm:covering}.
A main difference is that in packing, information propagate by squares \emph{pushing} each other, because they are not allowed to overlap, while in covering, the squares are \emph{pulling} each other, because they must cover everything.
We also show that in a satisfiable instance, an optimal covering has a nice property that allows it to be converted into a partition of the same cardinality, hence proving \Cref{thm:partition}.

In \Cref{sec:orthconvexpack}, we then turn our attention to packing $2\times 2$ squares in an orthogonally convex grid polygon.
This is a strictly stronger result than the one from \Cref{sec:simplepacking}, that requires a different, much more complicated construction.
We develop the problem \emph{\CTSAT}, which is a modification of \PTSAT\ under which our schematics have an orthogonally convex shape. This builds on work by Pilz \cite{PilzLayeredPlanar3SAT}, who showed that a related problem called \textsc{Variable-Clause-Linked-Planar-3SAT} is NP-hard. 

The variable and clause components are built out of a large number of criss-crossing rows and columns. Since it is harder to isolate the different pieces, the dependencies between different parts of the construction are much more complicated. We need a long technical verification to ensure that the packings behave as we intend.

\section{Packing in simple polygons}\label{sec:simplepacking}

\subsection{Schematics of the construction}\label{sec:schematic}

We reduce from \MPTSAT, as introduced by de Berg and Khosravi~\cite{DBLP:journals/ijcga/BergK12}.
An instance of \TSAT\ is \emph{monotone} if in each clause, all literals are positive or all literals are negative.
An instance $\Phi=(F,G)$ of \MPTSAT\ consists of a monotone instance $F$ of \TSAT\ with variables $x_1,\ldots,x_k$ and clauses $c_1,\ldots,c_\ell$ and a plane graph $G$ where
\begin{itemize}
\item the vertices of $G$ are $\{x_1,\ldots,x_k,c_1,\ldots,c_\ell\}$,
\item the edges of $G$ are $\{x_ic_j\mid x_i\in c_j\lor \lnot x_i\in c_j\}\cup C$, where $C=\{x_1x_2,\ldots,x_{k-1}x_k,x_kx_1\}$,
\item the cycle $C$ separates all positive clauses from all negative clauses.
\end{itemize}
The following lemma was proved in~\cite{DBLP:journals/ijcga/BergK12}.

\begin{lemma}
\MPTSAT\ is NP-complete.
\end{lemma}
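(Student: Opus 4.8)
The plan is to reduce from \PTSAT, which is NP-hard~\cite{DBLP:journals/siamcomp/Lichtenstein82}; membership of \MPTSAT\ in NP is immediate, since a satisfying assignment of $F$ is a polynomial-size certificate. One starts from a planar $3$-CNF formula $\phi$ given together with a planar embedding of its variable--clause incidence graph that, in the standard formulation of \PTSAT, also contains the variable cycle: this cycle $C_\phi$ is a simple closed curve through every variable vertex, splitting the clause vertices into those drawn inside $C_\phi$ and those drawn outside.

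The workhorse is a \emph{negation gadget}: for a variable $y$, introduce a fresh variable $\bar y$ and the two clauses $(y\lor\bar y)$ and $(\lnot y\lor\lnot\bar y)$, padded to three literals by repetition. Any assignment satisfying them has $\bar y=\lnot y$, and this is the only constraint they impose; moreover one of the two clauses is all-positive and one is all-negative, so in the plane they can be attached on opposite sides of the edge $y\bar y$. Using such gadgets, replace each variable $x$ of $\phi$ by a small cyclic arrangement of \emph{copies} of $x$, consecutive copies chained by negation gadgets, so that the arrangement contains copies that are logically $x$ as well as copies that are logically $\lnot x$, and is drawable inside a tiny disk at the former position of $x$ on $C_\phi$. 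Then rewrite the clauses: a clause drawn outside $C_\phi$ is made all-positive by replacing each literal $\ell$ with a copy that is logically $\ell$; a clause drawn inside $C_\phi$ is made all-negative by replacing each $\ell$ with the negation of a copy that is logically $\lnot\ell$. In both cases the new clause is monotone and, under the forced behavior of the copies, equivalent to the old one. The outcome is a monotone formula $F$ of polynomial size that is satisfiable if and only if $\phi$ is.

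It remains to construct the plane graph $G$, in particular the separating cycle $C$. At a vertex $x$, the two edges of $C_\phi$ split the incident clause-edges into an ``outside arc'' and an ``inside arc'' of the rotation at $x$; inside the disk replacing $x$ one arranges the copies so that the outside-clauses attach to copies reachable from outside and the inside-clauses to copies reachable from inside, realizing the polarities those clauses demand by interleaving $x$-copies and $\lnot x$-copies through the negation gadgets. Let $C$ be $C_\phi$ rerouted to detour through each such disk, visiting every copy and the auxiliary variable of every negation gadget, and route each negation gadget so that its all-positive clause lies just outside $C$ and its all-negative clause just inside. Then $C$ is a simple closed curve through exactly the variables of $F$, with every positive clause outside and every negative clause inside, so $(F,G)$ is a valid instance of \MPTSAT.

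The main obstacle is this last step: one must choose how to interleave the two polarities of copies inside each disk and how to thread the rerouted cycle $C$ through the disks and gadgets so that, simultaneously, (i) no edges cross, (ii) each rewritten clause reaches three copies of the correct polarities from the correct side of $C$, and (iii) every positive clause ends up outside $C$ and every negative clause inside. Producing a single drawing that meets all of (i)--(iii) at once is the delicate part; the remaining equivalence and polynomial-size claims are routine verification.
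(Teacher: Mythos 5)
First, note that the paper does not prove this lemma at all: it simply cites de Berg and Khosravi~\cite{DBLP:journals/ijcga/BergK12}, where \MPTSAT\ was introduced and shown NP-complete. So your proposal is attempting a self-contained proof, and its core idea (a negation gadget $(y\lor\bar y)$, $(\lnot y\lor\lnot\bar y)$ forcing $\bar y=\lnot y$, with the all-positive copy on one side of the cycle and the all-negative copy on the other, used to rewrite every clause monotonically) is indeed the same mechanism that drives the cited proof.

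However, there is a genuine gap, and you have named it yourself in the last paragraph: the entire content of the lemma is the claim that the rewritten formula admits a plane embedding with a simple cycle through all variables separating positive from negative clauses, and your proposal never establishes this. The ``cyclic arrangement of copies inside a tiny disk'' together with a rerouted cycle $C$ is only a plan; you do not verify that the outside-arc and inside-arc legs at a variable can be served by copies of the required polarities in the required rotation order while every negation gadget simultaneously places its positive clause outside $C$ and its negative clause inside, with no crossings. A proof that ends by saying the delicate part is producing such a drawing has not proved the statement. The standard way to close this gap (and essentially what \cite{DBLP:journals/ijcga/BergK12} does) is to avoid free-form rerouting altogether: start from a rectilinear representation of a \PTSAT\ instance in which the variables lie on a horizontal line (which plays the role of $C$) and clauses are nested brackets above and below it, decide that clauses above the line must become all-positive and clauses below all-negative, and for each offending literal occurrence $\lnot x$ (resp.\ $x$) insert one fresh variable $z$ on the line immediately at the point where that clause's leg met $x$, add the two tiny consistency clauses $(x\lor z)$ just above and $(\lnot x\lor\lnot z)$ just below that spot, and redirect the leg to $z$. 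Because $z$ sits adjacent to $x$ exactly where the leg attached, planarity, the rotation orders, and the positive/negative separation by the variable line are all immediate, per occurrence, with no global interleaving argument needed. Without this (or an equivalent explicit embedding argument), your construction is a plausible sketch rather than a proof.
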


Let $\Phi$ be an instance of \MPTSAT\ with variables $x_1,\ldots,x_k$.
We now describe schematically the overall construction of a grid polygon $P$, so that $P$ has a packing with a certain number of $2\times 2$ squares if and only if $\Phi$ is satisfiable.
The first step is shown in \Cref{fig:ConversionSchematic2} (middle).
We make a horizontal segment for each variable $x_1,\ldots,x_k$ in this order bottom-up, where the $x$-coordinates of the right endpoints are non-decreasing, as are those of the left endpoints.
Furthermore, the right endpoint of $x_1$ is to the right of the left endpoint of $x_k$, so that all the segments have a common horizontal overlap.
These are the \emph{main} variable rows.

The positive clauses are represented as rows above $x_k$ and the negative as rows below $x_1$.
In the embedding of $G$, a clause can be \emph{nested} inside another, defining a partial order on the clauses.
For instance, in \Cref{fig:ConversionSchematic2}, $c_2$ and $c_3$ are nested inside $c_1$.
If a positive clause $c_i$ is nested inside $c_j$, then we draw the row of $c_i$ above that of $c_j$ in the schematics.
If instead the clauses are negative, the row of $c_i$ is below that of $c_j$.
Each edge $x_ic_j$ is realized as a vertical segment connecting the rows of $x_i$ and $c_j$.
All vertical segments to the positive clauses are placed to the right of all those to the negative clauses.
The left-to-right ordering of the edges to the positive clauses along the variables $x_1,\ldots,x_k$ in $G$ is preserved by the corresponding vertical segments, as well as the ordering of the edges to the negative clauses.
The row of a positive (resp.~negative) clause $c_j$ starts at the top (resp.~bottom) endpoint of the vertical segment corresponding to the leftmost edge incident to $c_j$ in $G$ and ends at the top (resp.~bottom) endpoint of the rightmost edge.

In the next step, as shown in \Cref{fig:ConversionSchematic2} (right), we replace each clause row by three \emph{auxiliary} variable rows, one for each of the variables connected to the clause.
In the right side of the auxiliary rows, we connect them using vertical segments to an OR gadget, represented by a red horizontal segment.
As sketched in the figure, we construct a polygon whose boundary approximately follows the outer face of the resulting arrangement, but the finer details will be given later.

\begin{figure}
\centering
\includegraphics[page=1,width=\textwidth]{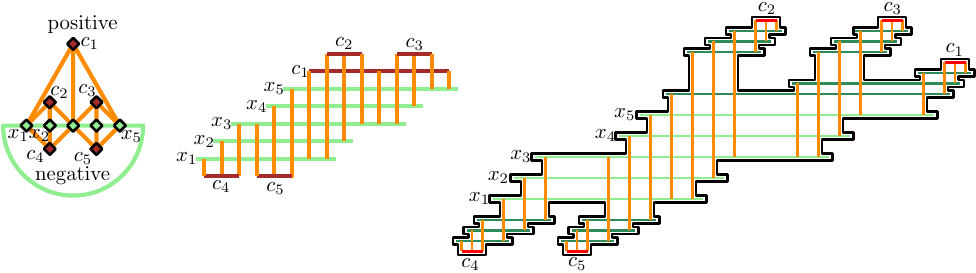}
\caption{The figure shows schematically how to convert an instance of \MPTSAT\ into a polygon.
Left: The instance of \MPTSAT that we start with.
Middle: We realize variables and clauses as rows.
Right:
We replace the clause rows with dark green auxiliary variables connected to red OR gadgets.
The boundary of our constructed polygon is sketched in black.
}
\label{fig:ConversionSchematic2}
\end{figure}

The use of auxiliary variables is necessary to make sure the OR gadgets can be placed appropriately, so that they are not crossed by other columns and without the need of creating holes in the final polygon.
In order to not introduce any holes in our polygon, it is important that every endpoint of a segment be incident to the outer face of the arrangement induced by the segments.
Indeed, the endpoints correspond to some ``functionality'' that we need to realize using an appropriate polygon boundary.

\begin{lemma}
There is a schematics as described, where the segments representing the OR gadgets are not crossed by any edges, and all segment endpoints are incident to the outer face of the drawing.
\end{lemma}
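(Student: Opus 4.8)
The plan is to start from the plane embedding of $G$ guaranteed by the definition of \MPTSAT\ and turn it into the claimed schematics step by step, tracking which vertices and edges can remain incident to the outer face. First I would fix a plane embedding of $G$ with the cycle $C$ drawn as a convex polygon, so that all positive clauses lie outside $C$ (or in a designated region) and all negative clauses on the other side, using the third property in the definition of \MPTSAT. The variables $x_1,\dots,x_k$ appear in cyclic order on $C$; I would cut the cycle between $x_k$ and $x_1$ and ``unroll'' it into the left-to-right order $x_1,\dots,x_k$ on a horizontal line, which is exactly the middle picture of \Cref{fig:ConversionSchematic2}. The key point to justify here is that the common horizontal overlap of the main variable rows can be arranged: since we only require the left and right endpoints to be monotone and $x_1$'s right endpoint to lie right of $x_k$'s left endpoint, we have enough freedom to choose $x$-coordinates of the vertical edge-segments consistently with the left-to-right order inherited from the embedding of $G$.

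Next I would argue that the clause rows can be stacked according to the nesting partial order without crossings. Because the embedding of $G$ is planar, the edges from a clause $c_j$ to its (up to three) variables bound a region of the plane, and nesting of clauses corresponds exactly to containment of these regions; a clause nested inside another must therefore be drawn ``further out'' (above for positive, below for negative). Assigning each positive clause a distinct horizontal level, with more deeply nested clauses placed on higher levels, and routing the vertical edge-segments monotonically, yields a drawing in which the only crossings are between a vertical edge-segment and a horizontal clause or variable row — never between two vertical segments, by the preserved left-to-right ordering. I would then observe that every endpoint of every segment at this stage is on the outer face: the left endpoints of the variable rows and the ``outermost'' ends of the clause rows are on the far left/top/bottom, and the right endpoints of the variable rows, together with the tops/bottoms of the vertical segments, sit along the right side, which is still reachable from infinity because nothing is drawn to their right.

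The last and most delicate step is introducing the auxiliary variables and the OR gadgets and checking the two desired properties simultaneously: the OR-gadget segments are uncrossed, and all endpoints remain on the outer face. Here I would replace each clause row of $c_j$ by three parallel auxiliary rows (one per incident variable), connect each auxiliary row down/up to the corresponding vertical edge-segment exactly where the clause row used to attach, and route the three auxiliary rows rightward to a fresh OR-gadget segment placed to the right of everything else at that level. Since the OR gadgets are the rightmost features and we place them at pairwise distinct heights respecting the same nesting order, no vertical segment needs to pass to their right, so they are uncrossed; and their right endpoints, being the rightmost points of the whole drawing, are trivially on the outer face. The main obstacle I anticipate is bookkeeping the heights: one must show that the auxiliary rows for nested clauses, together with their OR gadgets and connecting verticals, can all be assigned distinct levels so that (i) an auxiliary row never separates a more deeply nested clause's OR gadget from the outer face, and (ii) the connecting verticals of one clause do not cross the OR segment of another. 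I would handle this by processing clauses from the outermost nesting level inward, reserving a contiguous band of horizontal levels for each clause and all clauses nested inside it, which makes the required orderings automatic and completes the proof.
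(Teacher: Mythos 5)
Your construction is essentially the paper's: the paper proves the lemma by induction on the number of clauses, repeatedly removing a most deeply nested clause and re-inserting its three auxiliary rows and OR gadget above all previously placed OR gadgets (stretching the drawing horizontally to make room for the new columns), which is just the inductive phrasing of your outermost-to-innermost processing with reserved bands and rightmost OR gadgets. One small caveat: your stated reason that the lower attachment points of the vertical segments lie on the outer face (``nothing is drawn to their right'') is imprecise, since longer rows and other columns do lie to their right and the access is really through the staircase gaps created by the monotone row endpoints---but the paper's own proof is equally terse on this point, so this is a matter of polish rather than a gap.
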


\begin{proof}
We proceed by induction on the number of clauses.
The claim is trivial with no clauses.
Consider a formula $\Phi$ with $n$ clauses and suppose inductively that the claim holds for $n-1$ clauses.
Let $c$ be a clause of maximum depth of nestedness in $\Phi$ and let $\Phi'$ be $\Phi$ with $c$ removed.
By the induction hypothesis, we can consider a schematics of $\Phi'$ with the stated properties.
Without loss of generality, we consider the case where $c$ is a positive clause, so in the schematics, $c$ should be realized with three auxiliary variable rows connected to an OR gadget, and the auxiliary variables should be above the other OR gadgets.
We can stretch the schematics of $\Phi'$ horizontally to make enough room for the columns connecting the main variable rows and the new auxiliary variable rows.
These columns then appear consecutively along the main variable rows.
We can then also draw the auxiliary variable rows and the OR gadget.
We thus avoid crossing through other OR gadgets and we keep all segment endpoints incident to the outer face of the drawing.
\end{proof}

Algebraically, we make use of the following equivalence when introducing auxiliary variables:
\begin{align}
& (x_i\vee x_j \vee x_k)
\iff \nonumber \\
& \Big(\exists y_i,y_j,y_k:
(y_i\implies x_i)\wedge (y_j\implies x_j)\wedge(y_k\implies x_k)\wedge (y_i\vee y_j\vee y_k)\Big)
\label{eq:xy}
\end{align}
Here, $x_i,x_j,x_k$ are the original variables and $y_i,y_j,y_k$ are the auxiliary variables introduced for that particular clause.
For a negative clause, we use an analogous equivalence:
\begin{align}
& (\neg x_i\vee \neg x_j \vee \neg x_k)
\iff \nonumber \\
& \Big(\exists y_i,y_j,y_k:
(\neg y_i\implies \neg x_i)\wedge (\neg y_j\implies \neg x_j)\wedge(\neg y_k\implies \neg x_k)\wedge (\neg y_i\vee \neg y_j\vee \neg y_k)\Big)
\label{eq:xyneg}
\end{align}

\subsection{Reference centers}\label{sec:refcenters}

By the following lemma, we can restrict our attention to packings where the squares have integer coordinates.
The lemma appears to be folklore, but is also proved in \cite{DBLP:journals/algorithmica/BaurF01}.

\begin{lemma}\label{Finiteness}
If a grid polygon can be packed with $k$ axis-aligned $2\times 2$ squares, then such a packing can be chosen where the coordinates of the vertices of all the squares are integers.
\end{lemma}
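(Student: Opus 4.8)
The plan is to start from an arbitrary valid packing of $k$ axis-aligned $2\times 2$ squares in the grid polygon $P$ and gradually push the squares into integer positions, one coordinate direction at a time, without ever creating an overlap or leaving $P$. First I would argue that it suffices to snap the $x$-coordinates to integers and then, by the same argument applied to the resulting packing, snap the $y$-coordinates; since these two steps are independent (sliding squares horizontally does not change any vertical relationship, and vice versa), handling one direction at a time loses nothing.

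For the horizontal step, consider the set of vertical lines $x=m$ for $m\in\mathbb Z$. Each square occupies an open interval of width $2$ in the $x$-direction, so its left edge lies in some half-open unit interval $[m,m+1)$; I would process the squares in order of increasing left-edge $x$-coordinate and shift each one left so that its left edge sits exactly at the integer $m$ (the floor of its current left coordinate). The key observations to check are: (i) this keeps the square inside $P$, because $P$ is a grid polygon, so the region of $P$ available to the square near its current position, intersected with the relevant horizontal strip, is a union of rectangles with integer $x$-coordinates, and sliding left to the nearest integer boundary stays within such a rectangle; and (ii) no new overlap is introduced, because any two squares whose $x$-projections overlapped after the shift must have had left edges within the same or adjacent unit intervals, and processing in sorted order together with the fact that each square moves left by less than $1$ means their relative horizontal order and the ``gap or overlap'' structure is preserved — more precisely, a cleaner way is to note that two $2\times 2$ squares are non-overlapping iff their $x$-projections are disjoint or their $y$-projections are disjoint, and snapping $x$-coordinates downward to integers can only shrink or preserve the set of pairs with disjoint $x$-projections in a controlled way; I would instead phrase it as: if two squares did not overlap before, their projections onto some axis were disjoint, and if it was the $y$-axis nothing changes, while if it was the $x$-axis, having disjoint width-$2$ intervals means the left endpoints differ by at least $2$, hence their floors differ by at least $1$...

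Actually the slickest route, which I would take, is the standard "sweep and compress" argument: sort the squares by $x$-coordinate of their left edges; if the $i$-th square's left edge is not at an integer, we can move it (and, if necessary to avoid collisions, all squares to its right that are "blocking" it) leftward until some square's left edge hits an integer or a square's left edge bumps against another square's right edge or against a vertical edge of $\partial P$; repeating, the coordinates form a finite decreasing sequence bounded below, so the process terminates with all $x$-coordinates integral. Here one uses that $\partial P$ has integer coordinates so the obstructions are always at integers, and that the moves never decrease the feasibility.

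The main obstacle I anticipate is making precise the claim that a square can always be slid to the nearest integer grid line without leaving $P$ — this is where orthogonality of $P$ and integrality of its vertices is essential, and it requires a small lemma saying that if an axis-aligned $2\times 2$ square lies in a grid polygon $P$ with its left edge at $x$-coordinate $t$, then the translate of the square with left edge at $\lfloor t\rfloor$ (keeping $y$ fixed) also lies in $P$; equivalently, the horizontal cross-section behavior of $P$ between $\lfloor t\rfloor$ and $t+2$ is "rectangular" enough. I would prove this by observing that within the horizontal slab $y\in[\,b,\,b+2\,]$ spanned by the square, the set $P\cap(\text{slab})$, restricted to the $x$-range $[\lfloor t\rfloor, \lceil t\rceil+2]$ touched during the slide, is a union of axis-aligned rectangles with integer $x$-extents (since $P$ is a grid polygon and no vertical edge of $P$ has $x$-coordinate strictly between consecutive integers), so the square stays inside one such rectangle throughout the motion. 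Everything else is bookkeeping; since the lemma is folklore and also proved in~\cite{DBLP:journals/algorithmica/BaurF01}, I would keep the write-up short and refer there for the routine details.
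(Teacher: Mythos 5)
Your proposal is correct in substance but takes a more explicit route than the paper, whose entire proof is a one-line extremal argument: choose, among all packings with $k$ squares, one minimizing the sum over all squares of the two coordinates of the center; any square with a non-integer coordinate could be slid left or down (the obstructions---vertical/horizontal edges of the grid polygon and edges of other squares already snapped---would force a strictly smaller sum), so the minimizer is integral. Your first idea, simultaneous floor-snapping of all $x$-coordinates (then all $y$-coordinates), is in fact the cleanest way to make either argument rigorous, and you should finish it rather than trail off: if two squares have interior-disjoint $x$-projections, say $t_2\ge t_1+2$, then $\lfloor t_2\rfloor\ge\lfloor t_1+2\rfloor=\lfloor t_1\rfloor+2$, so the snapped projections are still interior-disjoint (your written claim ``floors differ by at least $1$'' is both weaker than needed and weaker than what is true); combined with your correct observation that each horizontal cross-section of a grid polygon is a union of intervals with integer endpoints (so each square stays inside $P$ when its left edge is moved to $\lfloor t\rfloor$), this yields the lemma in a few lines, one axis at a time, with no sequential processing at all. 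Your fallback ``sweep and compress'' procedure also works but is the weakest part of the write-up: termination is not immediate from ``a decreasing sequence bounded below'' and needs an argument such as each compression step reducing the number of distinct non-integer coordinates, which the one-shot snapping avoids entirely. In exchange for being slightly longer than the paper's potential-function phrasing, your argument is constructive and gives an explicit polynomial-time rounding of any packing to an integral one.
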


\begin{proof}
Consider a packing with $k$ squares that minimizes the sum  over all squares of the sum of both coordinates of the square center.
\end{proof}

We are going to construct a grid polygon $P$ based on the formula $\Phi$.
The unit squares of the form $[2k-1,2k]\times [2\ell-1,2\ell]$ for $k,\ell\in\mathbb Z$ that are contained in $P$ are called the \emph{reference centers}.

\begin{lemma}\label{lem:referenceCenters}
In a packing of a grid polygon using $2\times 2$ squares with integer coordinates, each square covers exactly one reference center.
\end{lemma}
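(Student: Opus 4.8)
The plan is to argue that a $2\times 2$ square with integer coordinates placed inside $P$ must cover exactly one unit cell of the form $[2k-1,2k]\times[2\ell-1,2\ell]$, and that this cell is automatically contained in $P$, hence is a reference center. First I would set up the parity observation: a $2\times 2$ square with integer corners has the form $[a,a+2]\times[b,b+2]$ for integers $a,b$, and it is the union of the four unit cells $[a,a+1]\times[b,b+1]$, $[a+1,a+2]\times[b,b+1]$, $[a,a+1]\times[b+1,b+2]$, $[a+1,a+2]\times[b+1,b+2]$. Among the two horizontal intervals $[a,a+1]$ and $[a+1,a+2]$, exactly one has the form $[2k-1,2k]$ (namely the one whose left endpoint is odd), since $a$ and $a+1$ have opposite parities; similarly exactly one of the vertical intervals has the form $[2\ell-1,2\ell]$. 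So exactly one of the four constituent unit cells is of the ``reference'' shape $[2k-1,2k]\times[2\ell-1,2\ell]$.

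Next I would observe that this distinguished unit cell is contained in the $2\times 2$ square, which in turn is contained in $P$; therefore the cell is contained in $P$, so by definition it is a reference center. This shows every square in the packing covers at least one reference center. For the ``exactly one'' part, note that the other three constituent unit cells are not of the reference shape (their defining intervals do not both have an odd left endpoint), so they are not reference centers, and any reference center other than the distinguished one is disjoint from the interior of the square: reference centers are cells of an axis-aligned partition of the plane into $2\times 2$ blocks shifted to have corners at odd coordinates, so two distinct reference centers have disjoint interiors, and a $2\times 2$ square with integer corners meets the interior of at most one such block — indeed it meets exactly the block it sits inside, since its corners $a,a+2$ and $b,b+2$ have the same parities, meaning $[a,a+2]$ either equals some $[2k-1,2k+1]$ or some $[2k,2k+2]$, and only in a specific alignment does it coincide with a reference block; in the other alignment it still meets only one reference cell because it spans exactly one odd integer in each coordinate.

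Let me streamline that last point, since it is the only place requiring a little care. The reference centers, viewed as subsets of the plane (not just those inside $P$), tile the plane: they are the cells $[2k-1,2k]\times[2\ell-1,2\ell]$, but these alone do not cover the plane — rather, I should phrase it as: a $2\times2$ square $[a,a+2]\times[b,b+2]$ contains the odd integer $2\lceil (a+1)/2\rceil - 1$ in its interior-or-boundary in the $x$-direction, call it $2k-1$, and then the unique constituent unit cell with left $x$-coordinate $2k-1$ is $[2k-1,2k]$; likewise in $y$. Any reference center intersecting the open square must, in each coordinate, be an interval $[2m-1,2m]$ that overlaps $(a,a+2)$; since $(a,a+2)$ is an open interval of length $2$, it contains exactly one odd integer, so exactly one value of $m$ works in each coordinate. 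Hence exactly one reference center meets the square, and as shown it is covered by the square. The main obstacle, such as it is, is purely bookkeeping of parities and making sure the open-versus-closed interval conventions line up so that ``covers'' and ``meets the interior'' coincide for this distinguished cell; there is no real difficulty beyond that.
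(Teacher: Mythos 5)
Your proposal is correct and is essentially the paper's own argument spelled out: the paper proves the lemma by inspecting the four parity combinations of the coordinates of the $2\times 2$ square, which is exactly your observation that exactly one of $a,a+1$ (and of $b,b+1$) is odd, so exactly one of the four constituent unit cells has the form $[2k-1,2k]\times[2\ell-1,2\ell]$, and any covered reference center must be one of those four cells. The extra discussion of ``meets the interior'' is unneeded (and its claim that the open interval $(a,a+2)$ always contains an odd integer fails for odd $a$), but the first part of your argument already suffices, so this is harmless.
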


\begin{proof}
The statement follows by inspection of the four combinations of the parity of the coordinates of a $2\times 2$ square with integer coordinates.
\end{proof}

We get from \Cref{lem:referenceCenters} that there can be at most as many squares as reference centers in any packing.
We say that a packing is \emph{perfect} if it has squares of integer coordinates and consists of as many squares as there are reference centers.
This means that for each reference center, there are only $4$ possible ways for a square in a perfect packing to cover the reference center, i.e., the reference center must be in one of the four quadrants of the square.
This makes it straightforward to verify the individual parts of our construction, at least in principle.

A square in a perfect packing \emph{pushes up} (resp.~\emph{down}, \emph{left}, \emph{right}) if the reference center is contained in one of the lower (resp.~upper, right, left) quadrants.
For clarity, we will draw the reference centers in our diagrams.
We use a system of color coding that describes some of the constraints on how that tile can move, as shown in \Cref{fig:ColorKey}.
Squares with pink reference centers will always push down and left, while squares with red reference centers will always push up and left.
Squares with orange reference centers can move vertically but will always push left. 
Squares with green reference centers can move horizontally but will always push down. 
Squares with dark blue reference centers will always push up.
Squares with light blue reference centers will have some freedom in both horizontal and vertical directions.

\begin{figure}
\centering
\includegraphics[page=4]{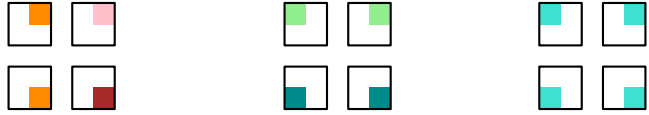}
\caption{The color key.}
\label{fig:ColorKey}
\end{figure}

\subsection{Components and gadgets}\label{sec:types}

We informally distinguish between \emph{gadgets} and \emph{components} in our constructions, where a gadget is a few edges of the boundary creating some simple functionality, and a component is thought of as a whole region of the polygon that can involve an arbitrary number of gadgets.

First, we describe a \emph{variable component} in \Cref{sec:simplevar}, which has two possible positions representing the values of a binary variable.
The variable components are represented in these schematics as green horizontal rows.
The two positions of a variable component are \emph{plus} and \emph{minus}, corresponding to the values true and false of a binary variable.

Next, we create \emph{PUSH gadgets} in \Cref{sec:rows,sec:implies}.
A PUSH gadget is a section of the polygon boundary that interacts with a variable component.
Each PUSH gadget pushes on a \emph{push column}, which is shown as an orange column in our schematics.
We describe a PUSH-UP-IF-MINUS and a PUSH-DOWN-IF-PLUS gadget.
Consider a variable component $x$ that is below another $y$.
We can make a PUSH-UP-IF-MINUS gadget on $x$ and a PUSH-DOWN-IF-PLUS on $y$ and a connection between these using a push column.
Since a push column cannot be pushed both up and down at the same time, we have ensured that $x$ is plus or $y$ is minus, so we have made the implication $y\implies x$.

Note that variable components only push on push columns via the presence of PUSH gadgets.
Without a PUSH gadget, a push column and a variable component will cross each other without interacting.

The last gadget is an OR gadget, as described in \Cref{sec:or}.
We represent this as a red row in our schematics. An OR gadget interacts with the top of three push columns and always pushes down on at least one of them.
Unlike the variable components, a push column \emph{cannot} pass through an OR gadget; any push column that touches an OR gadget should terminate there.
This is why we have attached the OR gadget using auxiliary rows, as in \Cref{fig:ConversionSchematic2} (right).

\subsection{The variable component}\label{sec:simplevar}

Consider a variable row $x$ in the schematics.
We make a corresponding variable component that contains a pair of rows of squares.
At the ends of the rows, the polygon boundary constrain the rows to always be unaligned, as shown in \Cref{fig:VarGadget}.
In between the ends, there is a PUSH gadget for each edge to the variable row $x$ in the schematics.
The PUSH gadgets are described in \Cref{sec:implies} below.

\begin{lemma}\label{lem:variable}
In any perfect packing, one row of the variable component pushes right and the other pushes left.
\end{lemma}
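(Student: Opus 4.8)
The plan is to analyze the variable component locally, reference center by reference center, exploiting the structure of a perfect packing established in \Cref{lem:referenceCenters}. In a perfect packing, every reference center is covered by exactly one square, and each such square covers its reference center in one of its four quadrants; equivalently, each square ``pushes'' in one horizontal and one vertical direction. So the variable component consists of (at least) two horizontal rows of reference centers, and in a perfect packing each of these reference centers carries a square that either pushes left or pushes right. First I would argue that the squares covering the reference centers in a single row of the component must all push the same horizontal direction: because consecutive reference centers in a row are two units apart horizontally, if the square on one pushed right and the square on the next pushed left, the two $2\times 2$ squares would overlap, contradicting that the packing is a valid packing. Hence each of the two rows is ``coherent'': it pushes uniformly left or uniformly right.

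Next I would use the boundary constraints at the two ends of the component — the ones depicted in \Cref{fig:VarGadget} — to rule out the two degenerate configurations in which both rows push the same way. The figure shows that at one end the polygon boundary forces the two rows to be ``unaligned'' (staggered), and similarly at the other end but with the opposite offset. Concretely, at the right end the boundary has room for the squares of one row to be flush against it (that row pushes right) only if the other row is pulled back (pushes left), and vice versa at the left end; if both rows pushed right, there would be no valid placement satisfying the right-end boundary together with coherence, and symmetrically for both pushing left at the left end. I would make this precise by listing, for each of the $O(1)$ reference centers near an end, the at most four candidate square placements and checking that the only globally consistent combinations are the two claimed ones (one row left, the other right). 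This is exactly the kind of finite case check that the reference-center framework is designed to make routine.

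Putting these together: coherence forces each row to be uniformly left-pushing or uniformly right-pushing, and the end gadgets forbid ``both left'' and ``both right,'' leaving precisely the two configurations in which one row pushes right and the other pushes left. Since the PUSH gadgets in the interior of the component (described in \Cref{sec:implies}) only interact with push columns and do not couple the two rows to each other horizontally, they do not create any additional constraints that could eliminate either of these two configurations, nor introduce new ones. This establishes the lemma, and incidentally identifies the ``plus'' configuration with (say) the lower row pushing right and the ``minus'' configuration with the lower row pushing left.

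The main obstacle I anticipate is the end-gadget case analysis: I need the boundary in \Cref{fig:VarGadget} to be drawn so that the ``both rows push right'' and ``both rows push left'' configurations are genuinely infeasible (not merely non-perfect), and so that the two surviving configurations are both realizable as part of a perfect packing. Getting the half-integer boundary coordinates right so that exactly these constraints hold — and verifying that nothing forces a square off of a reference center or leaves a reference center uncovered near the ends — is where the real work lies; the interior ``coherence'' argument and the logical bookkeeping are straightforward by comparison.
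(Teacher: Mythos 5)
There is a genuine (though repairable) gap at your ``coherence'' step. The overlap argument you give is asymmetric: it only shows that a square pushing \emph{right} cannot sit immediately to the left of a square pushing \emph{left} in the same row (those two placements would overlap). It does \emph{not} rule out a left-pushing square followed by a right-pushing square --- that configuration simply leaves a one-unit gap between them, which is perfectly consistent with a perfect packing, since a perfect packing covers all reference centers but need not tile the polygon. So a row may a priori \emph{split} into a left-pushing prefix and a right-pushing suffix; the paper itself relies on exactly this phenomenon elsewhere (the helper rows of the orthogonally convex construction, and the explicit split case analysis in \Cref{lem:Redundancy}). Consequently ``each row is uniformly left or uniformly right'' is not established, your reduction to the four cases LL, LR, RL, RR is incomplete, and the end-gadget check as you framed it does not by itself exclude split rows.

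The repair is small and turns your argument into the paper's proof: your overlap observation gives a \emph{directional} propagation --- if a square pushes right, every square to its right in the same row must push right, and symmetrically a left-pushing square forces everything to its left in that row to push left. The boundary at the left end of the component (\Cref{fig:VarGadget}) forces at least one of the two leftmost squares to push right, so that entire row pushes right by rightward propagation; the boundary at the right end forces at least one of the two rightmost squares to push left, which must then be the other row, and leftward propagation makes that row push left throughout. Note that uniformity of each row is thus a \emph{conclusion} of the end constraints plus propagation, not a standalone premise. Your worry about whether the bad configurations are ``genuinely infeasible (not merely non-perfect)'' is unnecessary: the lemma quantifies only over perfect packings, so excluding them within perfect packings (where every reference center is covered, per \Cref{lem:referenceCenters}) is exactly what is needed.
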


\begin{proof}
We first consider the two leftmost squares in the variable component and note that the polygon boundary forces at least one of them to push to the right. Similarly, (at least) one of the two rightmost squares must push to the left. So all the squares in one of the rows must push to the left and all the squares in the other row must push to the right.
\end{proof}

As we have seen, a perfect packing of the variable component has two possible positions, and they correspond to the values of a binary variable.
The transition from one position to the other corresponds to all the squares rotating one step either clockwise or counterclockwise around the cycle formed by the squares.
We say that \emph{plus} is the position where the squares are rotated in the positive (i.e., counterclockwise) direction, and \emph{minus} is the position where they rotated in the negative direction.

\begin{figure}
\centering
\includegraphics[page=1,scale=0.666666]{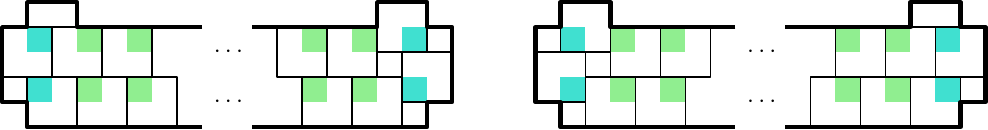}
\caption{The ends of the variable component.
The plus position is shown left and the minus is shown right.
Some squares in the middle may be pushed up.}
\label{fig:VarGadget}
\end{figure}

\subsection{Rows and alignment}\label{sec:rows}

We describe a way to make a dependency between two variable components $x$ and $y$, where $x$ is below $y$, as shown schematically in \Cref{fig:ImpliesSchematic}.
This is done by an upside down pyramid of squares that are raised by $1$ unit.
The squares are raised by a PUSH gadget of the lower variable $x$.
The pyramid should be able to cross variable rows of other variables in between $x$ and $y$ without interacting with them.

One extra layer of squares on the left edge of the pyramid may move up or stay down depending on the positions of $x$ and $y$.
The squares in this layer are called a \emph{push column} or \emph{push} squares.
This layer acts like a wire connecting the bottom and top edges of the polygon that bound the variable components $x$ and $y$, respectively.
The gadgets described in \Cref{sec:implies} work by pushing on the push squares.
This structure is pyramid shaped because a stack of raised squares has to increase in width every time it passes between a pair of rows with opposite horizontal alignments.

\begin{figure}
\centering
\includegraphics[page=2]{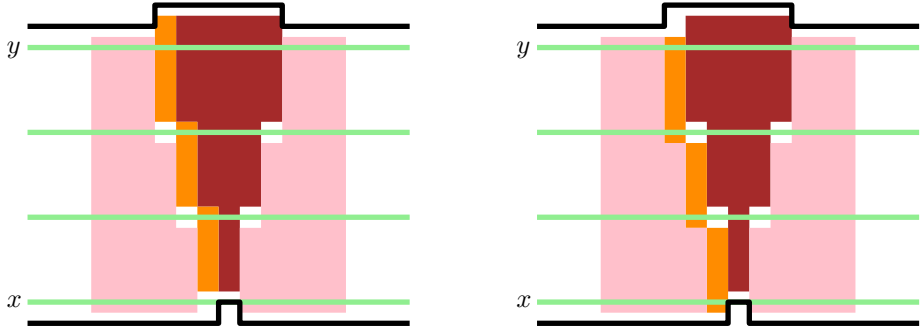}
\caption{Schematic of how to make a dependency between two variable components.
When crossing each variable component in between, the width of the pyramid grows by two squares.
The push column is shown in orange. Depending on the position of the bottom variable component $x$, this column may be pushed up (left).
Depending on the position of the top variable component $y$, it may be pushed down (right).
Since the column can't be pushed both up and down, this creates a constraint between $x$ and $y$.}
\label{fig:ImpliesSchematic}
\end{figure}

\begin{figure}
\centering
\includegraphics[page=2,scale=0.6666666]{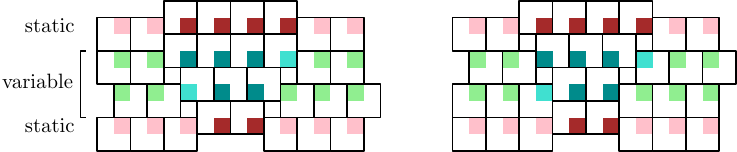}
\caption{A static row.
Some squares may be pushed up, but all push left.}
\label{fig:StaticRow}
\end{figure}

Recall that each variable component consists of one pair of neighbouring unaligned rows.
We want to control the width of the pyramid at the top, which means that the number of times it crosses a pair of unaligned rows should not depend of the positions of the individual variable components crossed by the pyramid.
To do this, we place each variable component between two \emph{static} rows such that, in either position of the variable component, one of the static rows is aligned with the adjacent row in the variable component and the other is unaligned.
\Cref{fig:StaticRow} shows a static row.
In each static row, the rightmost square is pushed to the left by an edge of the polygon, so the whole row of squares are in a left position.

\begin{figure}
\centering
\includegraphics[page=1]{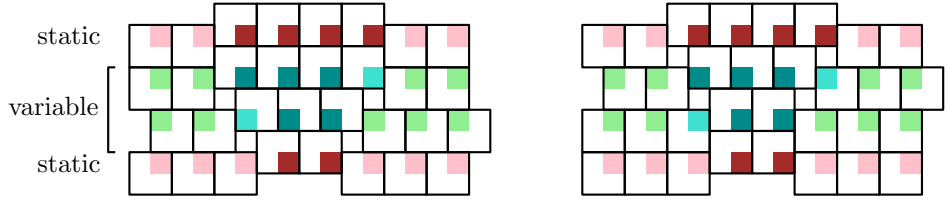}
\caption{A pyramid crossing a variable component.
The size of the pyramid always grows by 2 squares, regardless of the position of the variable component.}
\label{fig:RowAlignment}
\end{figure}

\Cref{fig:RowAlignment} shows in detail how the width of a pyramid grows by two squares whenever it crosses a variable component.
This is expressed by the following lemma.

\begin{lemma}\label{lem:PyramidGrowth}
Consider a perfect packing.
If a block of consecutive squares in a static row below a variable component pushes up, then the corresponding squares and their two horizontal neighbors in the static row above the variable component must also push up.
If a block of consecutive squares in an upper static row pushes down, then the corresponding squares and their two horizontal neighbors in the static row below must also push down.
\end{lemma}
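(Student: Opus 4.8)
The plan is to analyze the pyramid–variable-component crossing by examining the relevant reference centers locally, using the fact that in a perfect packing each reference center is covered by exactly one of four possible squares (\Cref{lem:referenceCenters}). By \Cref{lem:variable}, the variable component consists of two unaligned rows, one pushing left and one pushing right, and by the construction the variable component sits between two static rows. First I would fix attention on a single column of reference centers spanning the two static rows and the two variable rows between them, and read off from \Cref{fig:RowAlignment} which $2\times 2$ squares are available there. The key structural input is that exactly one of the two variable rows is aligned with each adjacent static row: this is exactly the device that makes the analysis independent of the plus/minus position.

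The main argument is a propagation claim. Suppose a square $s$ in the lower static row pushes up, i.e., its reference center lies in a lower quadrant. The square above $s$ (in the aligned variable row, on the side where the variable row and lower static row are aligned) then cannot push down without overlapping $s$, so it pushes up; crossing to the other, unaligned variable row costs one horizontal unit of width, so the upward-pushing block there is the original block shifted and widened by one on the relevant side; and then crossing into the upper static row (again aligned/unaligned with appropriate parity) widens it by one more. Summing the two crossings gives the claimed growth by two squares total, i.e., the original block plus its two horizontal neighbours, pushing up in the upper static row. The downward statement is the mirror image: if a block in the upper static row pushes down, then by the same local non-overlap reasoning the squares below are forced down, and the widening by one per crossed row again accumulates to two.

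I would carry this out by: (1) setting up coordinates so that the reference centers of the four rows are named explicitly; (2) listing, for each reference center in the crossing region, the (at most) four candidate squares, using the color-coding constraints from \Cref{fig:ColorKey} (e.g., squares in static rows always push left, push squares on the pyramid edge have vertical freedom) to prune candidates; (3) arguing the upward-propagation step by a short case check that a down-pushing square immediately above an up-pushing square is geometrically impossible, hence the contrapositive forces the propagation; (4) tracking the horizontal extent and observing the $+1$ per unaligned crossing, which doubles over the two static-row-to-variable-row transitions; and (5) repeating steps (3)–(4) in reverse for the downward claim.

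The main obstacle I expect is the bookkeeping of \emph{which} side the block widens on, since this depends on whether the aligned static/variable rows are aligned on the left or the right, and this in turn depends on the plus/minus position of the variable component. The point of placing the variable between two static rows (as in \Cref{fig:StaticRow}) is precisely that the \emph{total} widening is position-independent even though the intermediate shift is not, so I would organize the case analysis around the two positions of the variable component and verify that in both cases a block of width $w$ becomes a block of width $w+2$ centered compatibly — i.e., the original squares together with their two horizontal neighbours — rather than trying to pin down the intermediate geometry in a position-independent way. Once the two-position case check is done, the lemma follows.
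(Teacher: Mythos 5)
Your proposal is correct and follows essentially the same route as the paper: both static rows are left-aligned while exactly one variable row is right-aligned, so the non-overlap propagation forces the raised block upward, widening by one square at the single left-to-right transition and by one at the single right-to-left transition, independently of the plus/minus position. The paper's proof is just a condensed version of this same alignment-counting argument, with the downward claim handled symmetrically.
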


\begin{proof}
We prove the first claim; the other one is analogous.
A static row is always left-aligned with respect to the reference centers.
One of the rows of the variable component is right-aligned, which causes one more square to the left of that row to rise, as compared to the row below.
In the next row (which is either the upper variable row or the static row above the variable), one more square to the right is then also pushed up.
\end{proof}

\subsection{The PUSH gadgets}\label{sec:implies}

Consider a variable $x$ that is part of a positive clause in $\Phi$.
We then have an auxiliary variable $y$ above $x$, and we need to make the implication $y\implies x$.
We create an upside-down pyramid from $x$ to $y$ that has a column of push squares on the left. 
This creates the constraint that both gadgets cannot simultaneously push on the push column, i.e.,
\[x \text{ does not push up on its push square} \vee y \text{ does not push down on its push square.}\]

It hence suffices to make gadgets ensuring
\begin{align*}
x \text{ pushes up on its push square} & \iff x\text{ is minus}, \quad \text{and} \\
y \text{ pushes down on its push square} & \iff y\text{ is plus}
\end{align*}

To this end, we make the two gadget \emph{PUSH-UP-IF-MINUS} and \emph{PUSH-DOWN-IF-PLUS}, respectively.
\Cref{fig:ImpliesBottomAligned} shows the PUSH-UP-IF-MINUS gadget, which simply consists of small indent in the polygon wall that forces some squares to rise.
\Cref{fig:ImpliesLeftAligned} shows the PUSH-DOWN-IF-PLUS gadget, which consists of an \emph{out}dent in the polygon wall allowing some squares to rise.
The width of this outdent is adjusted according to the width of the push pyramid, which, as stated by \Cref{lem:PyramidGrowth}, depends on the number of variable components it crosses.
This also ensures that the pyramids don't collide in the interior of the polygon.

\begin{figure}
\centering
\includegraphics[page=2]{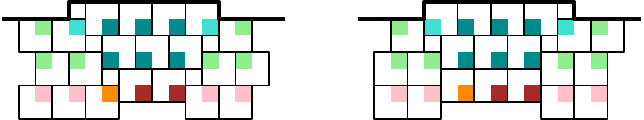}
\caption{The PUSH-UP-IF-MINUS gadget.
The gadget forces some number of squares in the static row (top, red) to rise, and requires an additional push square to the left to rise if the variable component (bottom two rows, green) is minus.
The plus position is shown on the left and the minus position is shown on the right.}
\label{fig:ImpliesBottomAligned}
\end{figure}

\begin{figure}
\centering
\includegraphics[page=1]{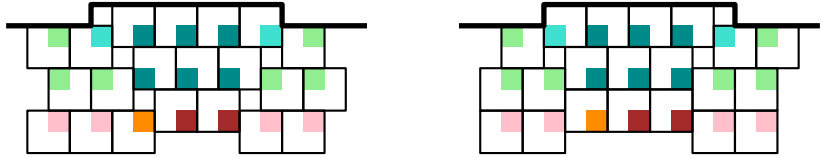}
\caption{The PUSH-DOWN-IF-PLUS gadget.
The gadget allows some number of squares in the static row (bottom, red) to rise, and \emph{allows} the square to left of these to rise only if the variable component (top two rows, green) is plus.
Note that the top gadgets may need to be much wider since the constraint pyramid could increase in width many times.}
\label{fig:ImpliesLeftAligned}
\end{figure}

\begin{figure}
\centering
\includegraphics[page=3,width=\textwidth]{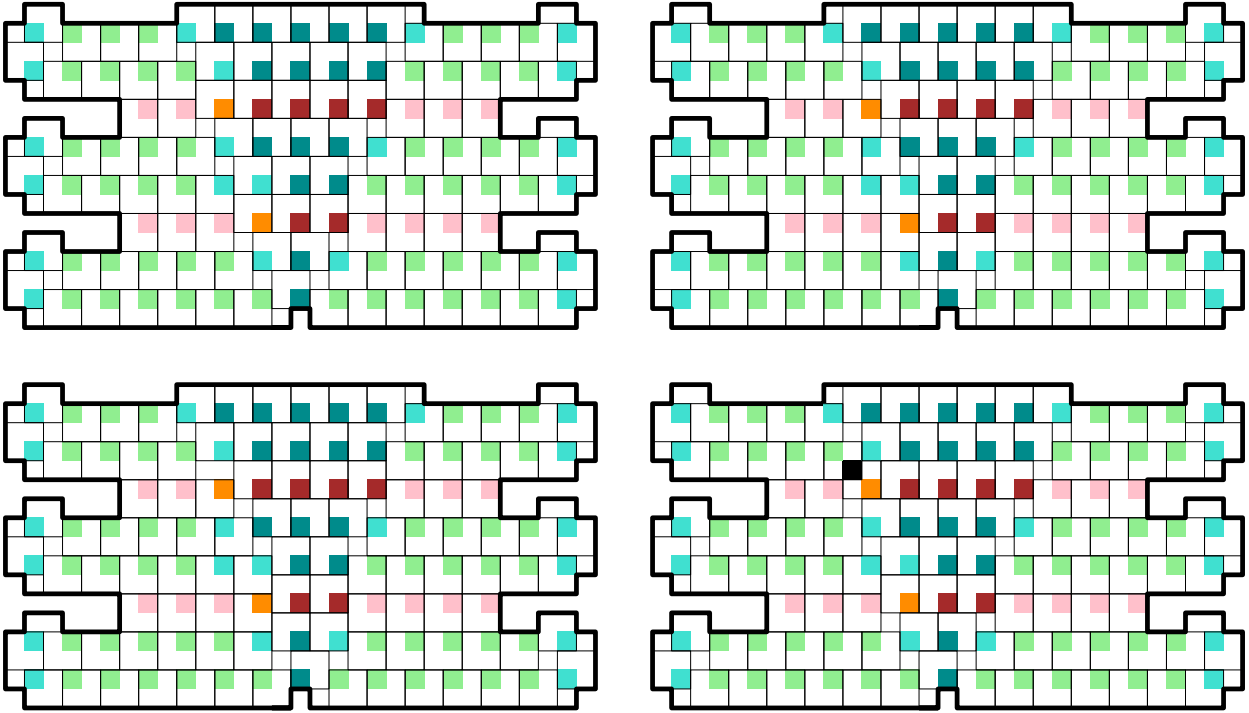}
\caption{Example of a dependency using the PUSH-UP-IF-MINUS gadget on the bottom variable and the PUSH-DOWN-IF-PLUS gadget on the top variable.
Top left: The bottom and top variables are both minus.
Top right: The bottom and top variables are both plus.
Bottom left: The bottom variable is plus and the top is minus.
Since no gadget pushes on the push square, this introduces some slack so there are more than one perfect packing.
Bottom right: The bottom variable is minus and the top is plus, which causes an overlap at the black square.}
\label{fig:BigImplies1}
\end{figure}

Suppose now that $x$ is part of a negative clause.
We then have an auxiliary variable $y$ below $x$, and need to realize the implication $\neg y\implies\neg x$, or equivalently $x\implies y$.
We can thus realize this implication using the gadget PUSH-DOWN-IF-PLUS on $x$ and PUSH-UP-IF-MINUS on $y$. 

\Cref{fig:BigImplies1} shows an example of how to make a dependency using these two gadgets.
Out of the four possible combinations of positions of the top and bottom variable components, three are allowed by the constraint.
Two of these have one gadget pushing on the push column, while the third has neither gadget pushing its push square.

In conclusion, we have described gadgets with the properties stated by the following lemma.

\begin{lemma}\label{lem:implies}
In any perfect packing, the following holds.
If a variable is minus, the push square of any PUSH-UP-IF-MINUS gadget pushes up.
If a variable is plus, the push square of any PUSH-DOWN-IF-PLUS gadget pushes down.
As a consequence, combining a PUSH-UP-IF-MINUS gadget and a PUSH-DOWN-IF-PLUS gadget, we can realize the implications $y\implies x$ and $\neg y\implies\neg x$, i.e., ensure that the values of variables encoded by the packing satisfy the implications.
\end{lemma}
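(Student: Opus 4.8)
The plan is to reduce the lemma to a bounded, local case analysis, exploiting the rigidity that reference centers impose on a perfect packing. By \Cref{lem:referenceCenters}, in a perfect packing every $2\times 2$ square covers exactly one reference center and every reference center is covered, so each square occupies one of only four positions relative to its center. For each PUSH gadget I would fix the value (plus or minus) of the incident variable component, use \Cref{lem:variable} to pin the horizontal alignment of the variable row adjacent to the gadget, and then propagate the forced positions one square at a time, starting from a square pinned by the polygon boundary (the indent, resp.~the outdent), until reaching the push square. The static rows flanking each variable component, together with \Cref{lem:PyramidGrowth}, make this propagation deterministic and insensitive to the positions of the other variable components; this is what makes the width of the push pyramid, and hence of the outdent, a fixed quantity. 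Since the argument is local to one gadget, it applies simultaneously to every PUSH gadget attached to a given variable, which is what ``any'' in the statement requires.

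For the PUSH-UP-IF-MINUS gadget (\Cref{fig:ImpliesBottomAligned}): the indent forces a block of consecutive squares of the static row just above $x$ to push up, regardless of the value of $x$. If $x$ is minus, then by \Cref{lem:variable} the row of $x$'s variable component adjacent to that static row is in the alignment that, via the crossing analysis of \Cref{lem:PyramidGrowth} and the left-aligned static rows of \Cref{sec:rows}, forces one additional square on the left---the bottom of the push column---to push up as well. If $x$ is plus, this square is free to stay down, and \Cref{fig:BigImplies1} shows a perfect packing in which it does. This proves the first claim.

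For the PUSH-DOWN-IF-PLUS gadget (\Cref{fig:ImpliesLeftAligned}) I would run the mirror-image argument using the second half of \Cref{lem:PyramidGrowth}: the outdent, whose width is chosen to equal the (fixed) width the push pyramid has at that height so that the two pyramids never meet in the interior, lets the relevant block of the static row just below $y$ move only in a way compatible with $y$'s alignment. When $y$ is plus this forces the top of the push column to push down; when $y$ is minus it may rise, which \Cref{fig:BigImplies1} again shows is realizable. This proves the second claim. For the consequence, I combine the two claims with the rigidity of a push column: a push column is a single vertical stack of squares of fixed height, consecutive squares covering consecutive reference centers, so an easy induction shows that either all of its squares push up or all of them push down, and in particular it cannot be forced up at its bottom end and down at its top end. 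Hence, if some perfect packing had $x$ minus and $y$ plus, the shared push column would be forced up by the PUSH-UP-IF-MINUS gadget on $x$ and down by the PUSH-DOWN-IF-PLUS gadget on $y$, a contradiction; so the values encoded by the packing satisfy $x\text{ plus}\lor y\text{ minus}$, i.e., $y\implies x$. For a negative clause, where the auxiliary variable $y$ lies below $x$, the same construction with the roles of the two gadgets swapped yields $x\implies y$, which is $\neg y\implies\neg x$.

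The step I expect to be the real obstacle is the bookkeeping in the two propagation arguments: checking that the width of the outdent in PUSH-DOWN-IF-PLUS matches, square for square, the width the push pyramid has accumulated from \Cref{lem:PyramidGrowth} after crossing all the intervening variable components, so that distinct pyramids neither collide nor leave unintended slack; and checking the ``inactive'' configuration, where neither gadget pushes its push square, to confirm that the extra freedom there is still consistent with the packing being perfect. The rest reduces to the routine four-way check at each reference center.
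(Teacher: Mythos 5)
Your proposal is correct and follows essentially the same route as the paper, which proves \Cref{lem:implies} only implicitly, by the gadget descriptions in \Cref{sec:implies} and inspection of \Cref{fig:ImpliesBottomAligned,fig:ImpliesLeftAligned,fig:BigImplies1}, relying on \Cref{lem:variable} and \Cref{lem:PyramidGrowth} exactly as you do. One small overstatement: a push column need not be ``all up or all down'' (in the slack case a lower portion can stay down while an upper portion rises); what your induction actually shows, and all that is needed, is that the up-pushing squares form an upward-closed set, so the column cannot be forced up at its bottom and down at its top.
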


\subsection{The OR gadgets}\label{sec:or}

We create two types of OR gadgets, namely positive OR gadgets for the positive clauses and negative for the negative clauses.
\Cref{fig:OrGadget} shows the positive OR gadget.
The OR gadget is connected to three auxiliary variables that are below the gadget, using upside down pyramids of raised squares with push columns, as shown in \Cref{fig:OrGadgetConnection}.
These pyramids are created by PUSH-UP-IF-MINUS gadgets on the auxiliary variables.
If all three push columns push up, then it is impossible to pack the OR gadget.
On the other hand, \Cref{fig:OrGadgetOpen} shows that it is possible to pack the OR gadget if one or more of the columns is not pushed up.
In this way we can think of the OR gadget as being a gadget that pushes down on one of three input columns.
Hence, one of the auxiliary variables must be plus.

\Cref{fig:NegOrGadget} shows the negative OR gadget.
Here we connect the auxiliary variables using PUSH-DOWN-IF-PLUS gadgets, as shown in \Cref{fig:NegOrGadgetConnection}.
We use a slightly different variant of the PUSH-DOWN-IF-PLUS gadget than what is shown in \Cref{fig:ImpliesLeftAligned}:
We shift the outdent one unit square to the right, which causes the column of push squares to be on the right side of the pyramid instead of the left.
This allows for a slightly simpler OR gadget than if we were to use the usual push column on the left side of the pyramids.
By inspection, we obtain the following lemma.

\begin{figure}
\centering
\includegraphics[page=1]{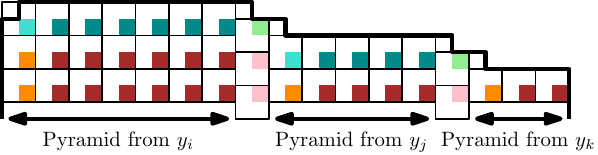}
\caption{The positive OR gadget.
In this figure, all the orange push columns are pushed up, meaning that we can't pack all the squares into the OR gadget.
The labels $y_i,y_j,y_k$ refer to \Cref{eq:xy}.
}
\label{fig:OrGadget}
\end{figure}

\begin{figure}
\centering
\includegraphics[page=2]{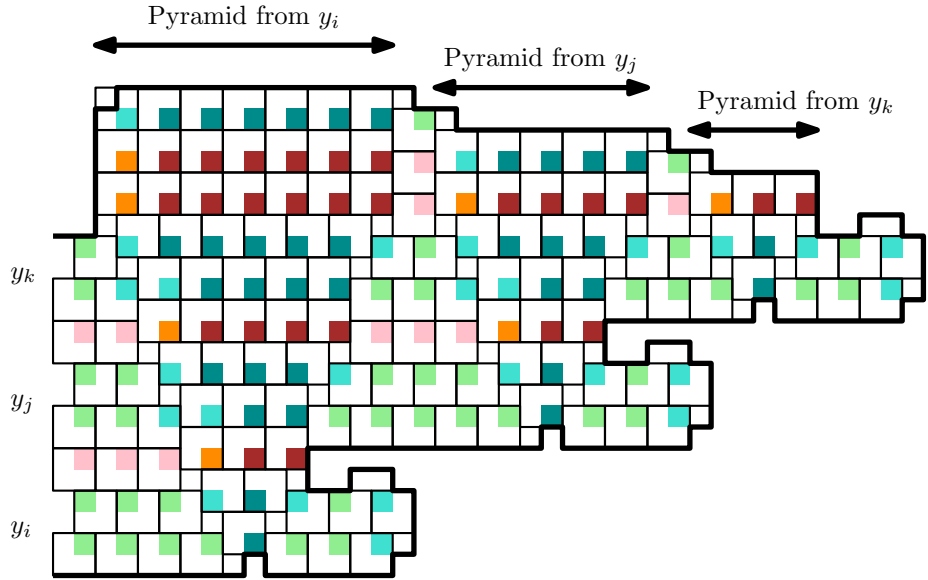}
\caption{The figure shows how the auxiliary variable components are connected to the OR gadget using pyramids.
Here is shown the situation where all variables are minus, so there is no perfect packing and the top left square sticks out.}
\label{fig:OrGadgetConnection}
\end{figure}

\begin{figure}
\centering
\includegraphics[page=3,width=\textwidth]{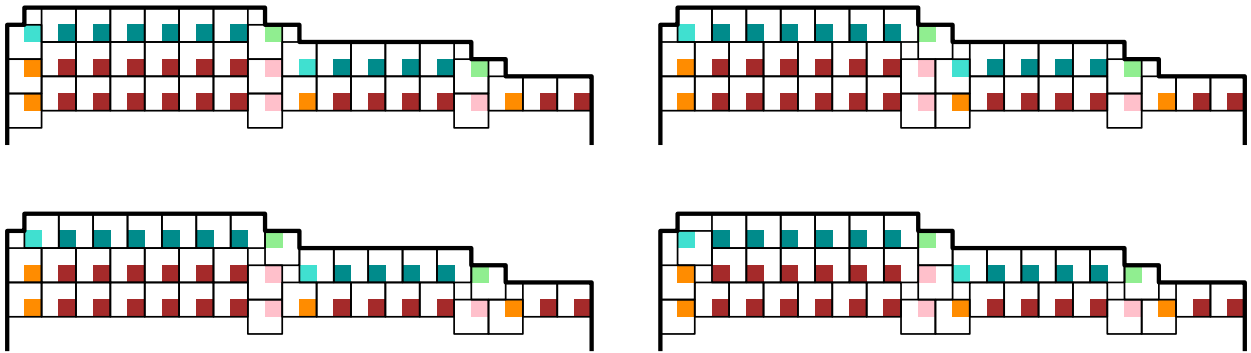}
\caption{Satisfying assignments of the OR gadget.
As in \Cref{fig:OrGadget}, we say that the push columns represent values $y_i,y_j,y_k$ from left to right, respectively.
The top two and the bottom left diagrams show packings where the push squares of $y_i,y_j,y_k$ are down, respectively.
To the bottom right is shown the situation that all three are down, which creates some slack so that there are more perfect packings.
}
\label{fig:OrGadgetOpen}
\end{figure}

\begin{figure}
\centering
\includegraphics[page=4]{figs/OrGadget.pdf}
\caption{The negative OR gadget.
In this figure, all the orange push columns are pushed down, meaning that we can't pack all the squares into the OR gadget.
The labels $y_i,y_j,y_k$ refer to \Cref{eq:xyneg}.
}
\label{fig:NegOrGadget}
\end{figure}

\begin{figure}
\centering
\includegraphics[page=5]{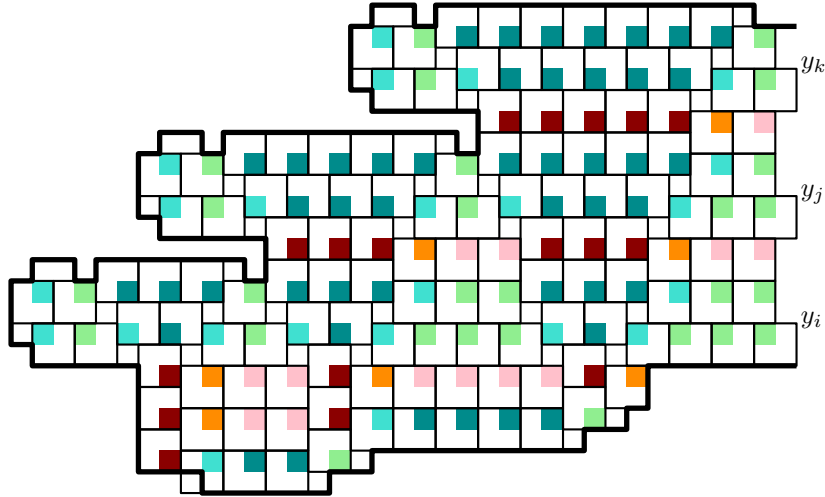}
\caption{The figure shows how the variable components are connected to the negative OR gadget.
Here is shown the situation where all variables are plus, so there is no perfect packing, and the bottom left square sticks out.}
\label{fig:NegOrGadgetConnection}
\end{figure}

\begin{lemma}\label{lem:or}
In a positive OR gadget, at least one of the connected auxiliary variables is plus.
In a negative OR gadget, at least one of the connected auxiliary variables is minus.
\end{lemma}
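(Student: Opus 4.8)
The plan is to reduce the statement to a bounded, local case check on the reference centers inside the OR gadget, using \Cref{lem:implies} as the bridge. First I would translate the claim about variables into one about push columns. By \Cref{lem:implies}, if all three auxiliary variables connected to a positive OR gadget were minus, then the push square at the top of each of the three connecting pyramids (which come from PUSH-UP-IF-MINUS gadgets) would push up; likewise, if all three auxiliary variables of a negative OR gadget were plus, the push square at the top of each connecting pyramid (now coming from the right-shifted PUSH-DOWN-IF-PLUS variant described in \Cref{sec:or}) would push down. So it suffices to show: in any perfect packing of $P$, a positive OR gadget cannot have all three incident push columns pushing up, and a negative OR gadget cannot have all three pushing down.

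For the positive gadget I would fix a perfect packing of $P$. By \Cref{lem:referenceCenters} it places exactly one square on each reference center, in particular on each reference center lying in the OR gadget region of \Cref{fig:OrGadget}, so for each such reference center there are only four candidate squares to consider. Assuming the leftmost push column pushes up, I would propagate the forced placements inward and upward: a square that pushes up on a push column forces its horizontal neighbours in the adjacent static row to push up as well, by the same alignment/parity argument used in the proof of \Cref{lem:PyramidGrowth}. Doing this simultaneously for the left, middle, and right columns and chasing the forced moves toward the top wall of the gadget, I would reach a reference center whose only covering square would have to stick out of the polygon — a contradiction. The negative OR gadget is handled by the mirror-image argument, reading the boundary from \Cref{fig:NegOrGadget,fig:NegOrGadgetConnection} and replacing ``up'' by ``down'' and the left-sided push columns by the right-sided ones.

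Although not needed for \Cref{lem:or} itself, the correctness of the reduction also requires the converse, so I would additionally exhibit, for every configuration in which at least one push column is free to stay down (resp.\ up), an explicit perfect packing of the gadget, as drawn in \Cref{fig:OrGadgetOpen} for the positive case and in the analogous configurations of the negative gadget. When more than one column is free there is slack and several perfect packings exist, but mere existence is all that is required here.

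The main obstacle I anticipate is making ``by inspection'' genuinely rigorous: one must check that the chain of forced placements in the all-up (resp.\ all-down) case really closes up to a contradiction, and in particular that no alternative placement of some intermediate square can absorb the shift. This is a constant-size computation — the gadget has bounded size and each reference center admits only four squares — but it requires careful reading of the polygon boundary from the figures and checking every branch. The symmetry between the positive and negative gadgets, and among the three push columns, reduces the work considerably but does not eliminate it.
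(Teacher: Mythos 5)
Your proposal is correct and follows essentially the same route as the paper: the paper's argument is exactly the observation that all three push columns pushing up (resp.\ down) makes a perfect packing of the positive (resp.\ negative) OR gadget impossible, verified ``by inspection'' of \Cref{fig:OrGadget,fig:OrGadgetOpen,fig:NegOrGadget}, with \Cref{lem:implies} linking variable positions to the push columns. Your plan just spells out that inspection as a bounded reference-center case check, which is exactly what the paper's figures encode.
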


\subsection{Verifying the construction}\label{sec:verificationSimple}

It is clear from our diagrams that a perfect packing exists when the instance is satisfiable, so it remains to check that every perfect packing corresponds to a satisfying assignment of $\Phi$, as established by the following lemma.

\begin{lemma}\label{lem:verifySimple}
If there exists a perfect packing of $P$, then $\Phi$ is satisfied.
\end{lemma}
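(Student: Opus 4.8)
The plan is to read a Boolean assignment off the packing and verify, clause by clause, that it satisfies $\Phi$; given \Cref{lem:variable,lem:implies,lem:or} this amounts to pure propositional logic.

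First I would fix a perfect packing of $P$. Every variable component of the construction --- both the main ones representing $x_1,\dots,x_k$ and, for each clause, the three auxiliary ones representing the variables $y_i,y_j,y_k$ appearing in \eqref{eq:xy} and \eqref{eq:xyneg} --- is a variable component in the sense of \Cref{sec:simplevar}, so by \Cref{lem:variable} each of them is in its plus or its minus position in this packing. I would then define an assignment by setting a variable to \texttt{true} if its component is plus and to \texttt{false} if it is minus.

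Next I would invoke the gadget lemmas on this same packing. Each edge of the schematics joining a clause to one of its variables is realized, as in \Cref{sec:implies}, by a PUSH-UP-IF-MINUS/PUSH-DOWN-IF-PLUS pair joined by a push column, so \Cref{lem:implies} guarantees that the assignment satisfies $y\implies x$ for every positive clause and $\neg y\implies\neg x$ for every negative clause, where $y$ is the auxiliary variable and $x$ the original variable it is attached to. Moreover, by \Cref{lem:or}, for each positive clause at least one of its three auxiliary variables is plus, and for each negative clause at least one is minus. It then remains to combine these facts using \eqref{eq:xy} and \eqref{eq:xyneg}: for a positive clause $c=(x_i\vee x_j\vee x_k)$ with auxiliary variables $y_i,y_j,y_k$, \Cref{lem:or} gives some $y$, say $y_i$, that is plus, i.e., \texttt{true}, and then $y_i\implies x_i$ forces $x_i$ to be \texttt{true}, so $c$ holds; the negative case is symmetric, using that one auxiliary variable is minus, hence \texttt{false}, and the implication $\neg y\implies\neg x$ makes the corresponding literal $\neg x$ true. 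Since every clause of $F$ is satisfied, the restriction of the assignment to $x_1,\dots,x_k$ satisfies $F$, so $\Phi$ is satisfiable.

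The only thing that really needs care here is bookkeeping rather than mathematics: one must track that the sign conventions line up (positive clauses use PUSH-UP-IF-MINUS on the auxiliary side and PUSH-DOWN-IF-PLUS on the main side, and vice versa for negative clauses) and that all of the gadget lemmas may legitimately be applied at once to a single perfect packing --- which is fine, since each is stated for an arbitrary perfect packing and the schematic of \Cref{sec:schematic} together with the width control of \Cref{lem:PyramidGrowth} ensures that the pyramids and gadgets do not interfere. All of the genuine difficulty has already been absorbed into the preceding lemmas, so this step should be short.
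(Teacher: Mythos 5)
Your proposal is correct and follows essentially the same route as the paper: read off an assignment via \Cref{lem:variable}, get one satisfied auxiliary per clause from \Cref{lem:or}, and transfer to the original variables via the implications of \Cref{lem:implies} together with \eqref{eq:xy} and \eqref{eq:xyneg}. One minor bookkeeping slip in your parenthetical: for a positive clause the paper places the PUSH-UP-IF-MINUS gadget on the \emph{main} variable $x$ (below) and the PUSH-DOWN-IF-PLUS on the \emph{auxiliary} $y$ (above) --- the reverse of what you wrote --- but since your actual deduction uses the correct implication $y\implies x$ from \Cref{lem:implies}, this does not affect the argument.
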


\begin{proof}
By \Cref{lem:variable}, each variable component encodes a value of one of the variables $x_i$ of $\Phi$ or an auxiliary $y_i$ used for an OR gadget.
Consider a clause $C(x_i,x_j,x_k)$ of $\Phi$.
By \Cref{lem:or}, the packing encodes values of the auxiliary variables $y_i,y_j,y_k$ that satisfy the corresponding clause $C(y_i,y_j,y_k)$.
It now follows from \Cref{lem:implies} that $C(x_i,x_j,x_k)$ is also satisfied.
Hence, the full formula $\Phi$ is likewise satisfied.
\end{proof}

It is clear that our polygon can be constructed in polynomial time from the instance $\Phi$.
We have thus proven the following theorem.

\begin{theorem}\label{thm:simplepack}
The problem \USPack\ is NP-hard, even for simple grid polygons.
\end{theorem}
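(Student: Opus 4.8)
The plan is to give a polynomial-time many-one reduction from \MPTSAT, which is NP-complete by the lemma of de Berg and Khosravi quoted above. Given an instance $\Phi=(F,G)$ with variables $x_1,\dots,x_k$ and clauses $c_1,\dots,c_\ell$, I would first build the schematic of \Cref{sec:schematic}: lay out the main variable rows bottom-up with a common horizontal overlap, place the positive clause rows above $x_k$ and the negative ones below $x_1$ respecting the nesting order, route each edge $x_ic_j$ as a vertical segment, and then replace each clause row by three auxiliary variable rows feeding an OR gadget, exactly as licensed by the equivalences \eqref{eq:xy} and \eqref{eq:xyneg}. The schematic lemma guarantees this can be done so that the OR segments are uncrossed and every segment endpoint lies on the outer face, which is what lets the next step avoid holes.

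Next I would turn the schematic into a concrete grid polygon $P$ by instantiating the components of \Cref{sec:types}: each variable row becomes a variable component flanked by two static rows (\Cref{sec:rows}), each edge becomes a push column carried by an upside-down pyramid that grows by two squares at every variable it crosses (\Cref{lem:PyramidGrowth}), each edge is equipped with a PUSH-UP-IF-MINUS gadget at its lower endpoint and a PUSH-DOWN-IF-PLUS gadget at its upper endpoint (\Cref{sec:implies}), and each clause's three auxiliary variables feed a positive or negative OR gadget (\Cref{sec:or}). One chooses the width of each PUSH-DOWN-IF-PLUS outdent according to the (known, polynomial) width of the incident pyramid, which also guarantees the pyramids never collide in the interior. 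Finally, following the outer face of the resulting arrangement, one closes the boundary into a simple orthogonal polygon with half-integer coordinates and sets $k$ equal to the number of reference centers of $P$; by \Cref{lem:referenceCenters} a packing of size $k$ is exactly a \emph{perfect} packing. All of this is clearly computable in time polynomial in $|\Phi|$.

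For correctness I would argue both directions. If $\Phi$ is satisfiable, fix a satisfying assignment; putting each variable component into the plus or minus position matching its truth value and propagating the forced push directions through the pyramids and OR gadgets yields a perfect packing — this is the direction that is immediate from the figures, so the work is only to check each gadget type against its four local square placements per reference center. Conversely, if $P$ admits a perfect packing, then \Cref{lem:verifySimple} (which chains \Cref{lem:variable}, \Cref{lem:implies} and \Cref{lem:or}) says the assignment read off from the variable components satisfies $\Phi$. Hence $P$ packs $k$ squares if and only if $\Phi$ is satisfiable, establishing NP-hardness even for simple grid polygons.

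The main obstacle is the geometric bookkeeping in the second step: one must certify that the global arrangement of variable components, static rows, pyramids and OR gadgets can actually be realized with half-integer coordinates, with pairwise non-colliding pyramids, and with a boundary that follows the outer face so that no hole is ever forced. The preparatory lemmas — the schematic lemma, \Cref{lem:PyramidGrowth}, and the width adjustment of the PUSH-DOWN-IF-PLUS gadget — were designed precisely to handle this, so in the write-up the theorem reduces to assembling the already-established pieces, fixing the value of $k$, and invoking \Cref{lem:verifySimple}.
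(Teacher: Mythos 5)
Your proposal follows essentially the same route as the paper: a reduction from \MPTSAT\ via the schematic of \Cref{sec:schematic}, instantiated with the variable components, static rows, pyramids, PUSH and OR gadgets, with $k$ set to the number of reference centers and correctness obtained from \Cref{lem:verifySimple} in one direction and the explicit packings in the other. The only cosmetic point is that you should also invoke \Cref{Finiteness} before \Cref{lem:referenceCenters} to restrict to integer-coordinate packings, exactly as the paper does at the start of \Cref{sec:refcenters}.
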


\section{Covering and partitioning}\label{sec:covering}

The proof of \Cref{thm:covering} is almost analogous to that of \Cref{thm:simplepack}, but we need to modify the gadgets and replace ``push'' by ``pull'' in many places.

Let us first show a connection between covering and partitioning.
Recall that we define a polygon $Q$ to be \emph{small} if $Q$ is contained in an axis-aligned $2\times 2$ square.

\begin{lemma}\label{lem:coverpartition}
Suppose a polygon $P$ is contained in the union of $k$ axis-aligned $2\times 2$ squares in such a way that the intersection of $P$ with each square is a star-shaped polygon with the kernel containing the center of the square. Then $P$ can be partitioned into $k$ small polygons.
\end{lemma}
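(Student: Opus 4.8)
The plan is to start from the given cover of $P$ by $k$ axis-aligned $2\times 2$ squares $S_1,\dots,S_k$, where each intersection $P_i := P\cap S_i$ is star-shaped with a kernel containing the center $z_i$ of $S_i$, and to ``carve up'' the overlaps greedily so that we get pairwise interior-disjoint small pieces whose union is still $P$. First I would fix the cover and, for $i=1,\dots,k$ in order, define $Q_i := P_i \setminus \bigl(\bigcup_{j<i} Q_j\bigr)$, or rather the closure of that set difference; equivalently $Q_i = \mathrm{cl}\bigl(P_i \setminus \bigcup_{j<i} P_j\bigr)$ restricted so that the already-claimed regions are removed. By construction the $Q_i$ are pairwise interior-disjoint and $\bigcup_i Q_i = \bigcup_i P_i = P$, and each $Q_i \subseteq P_i \subseteq S_i$ is contained in a $2\times 2$ square, so it is small. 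The only thing that requires an argument is that each $Q_i$ is actually a \emph{polygon} in the intended sense — a connected region bounded by a simple closed polygonal curve — rather than a disconnected or pathological set; the star-shapedness with respect to the center is exactly what buys us this.

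The key step is the following claim: if $A$ is a star-shaped polygon whose kernel contains a point $z$, and $H_1,\dots,H_m$ are closed half-planes (or more generally convex sets) each \emph{not} containing $z$ in its interior, then $A \setminus \bigcup_t H_t$ is again star-shaped with respect to $z$ (in particular connected and simply connected), and it is a polygon when the $H_t$ are bounded by lines. The reason is that for any point $p \in A\setminus\bigcup_t H_t$, the segment $\overline{zp}$ lies in $A$ (since $z$ is in the kernel of $A$), and it avoids each $H_t$: indeed $z \notin \mathrm{int}(H_t)$ and if some interior point of $\overline{zp}$ lay in $H_t$ then, since $p\notin H_t$... — here one has to be slightly careful, so instead I would work with open half-planes: arrange that each region we subtract is an \emph{open} convex set not containing $z$, so that $\overline{zp}$, which has both endpoints outside the open set, could still dip in; that is the subtlety. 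The clean way: subtract half-planes whose bounding line passes through or ``beyond'' $z$ relative to the piece, i.e. use the perpendicular bisector structure. Concretely, to separate $P_i$ from $P_j$ cleanly I would \emph{not} just set-minus the raw polygons but first replace the claim regions by convex ones: since $z_i \in S_j$ need not hold, the right move is to observe that $P_i$ and $P_j$ both lie in a bounded region, and use a sweep/shelling argument instead.

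Let me restructure to the approach I actually expect to work: process the squares in order; maintain the invariant that the union of already-built pieces $R = Q_1\cup\dots\cup Q_{i-1}$ is such that for each not-yet-processed $P_\ell$, the set $P_\ell \setminus R$ is star-shaped with respect to $z_\ell$. When we build $Q_i = \mathrm{cl}(P_i\setminus R)$, this is star-shaped w.r.t. $z_i$ by the invariant, hence a simple polygon contained in $S_i$, hence small. To re-establish the invariant we must check $P_\ell\setminus(R\cup Q_i)$ is still star-shaped w.r.t. $z_\ell$; since $Q_i$ is star-shaped w.r.t. $z_i$ and we are removing it from a region star-shaped w.r.t. $z_\ell$, what we need is a lemma: \emph{removing a set star-shaped about a point $z_i$ from a set star-shaped about $z_\ell$ yields a set star-shaped about $z_\ell$, provided every segment $\overline{z_\ell p}$ with $p$ surviving meets the removed set in at most an initial (or empty) sub-segment} — which holds here because the removed region $Q_i\subseteq S_i$ and $S_i$, $S_\ell$ are axis-aligned $2\times2$ squares, so their geometry is rigid enough that the ``shadow'' of $Q_i$ from $z_\ell$ is well-behaved. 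I expect the main obstacle to be precisely this bookkeeping: showing the greedy subtraction never creates a hole or disconnects a surviving piece, and handling the degenerate boundary-overlap cases (shared edges, where the ``small'' pieces may need to be taken as closed sets that touch only along boundary). All of this is elementary planar geometry, but getting the invariant exactly right, and choosing closed-vs-open set differences so that $\bigcup Q_i = P$ on the nose while interiors stay disjoint, is the delicate part of the write-up.
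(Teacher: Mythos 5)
Your overall plan---keep the $k$ squares and carve the overlaps so the pieces become interior-disjoint---is the right kind of idea, but the proof as written has a genuine gap exactly where you flag it. The greedy step relies on the claim that after subtracting the already-built pieces $Q_1,\dots,Q_{i-1}$, each remaining $P_\ell\setminus R$ is still star-shaped about $z_\ell$ (or at least connected), and this is neither proved nor true in the generality you use it: removing a set that is star-shaped about $z_i$ from a set that is star-shaped about $z_\ell$ can disconnect the latter. Concretely, take $S_1=[0,2]^2$ and $S_2=[1,3]\times[0,2]$, and let $P$ contain the whole overlap strip $[1,2]\times[0,2]$ together with two lobes of $P\cap S_2$ in $x>2$ (one near the top-right of $S_2$, one near the bottom-right) that are joined to each other only through the strip; both $P\cap S_1$ and $P\cap S_2$ can be arranged star-shaped about their centers. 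Processing $S_1$ first gives $Q_1\supseteq[1,2]\times[0,2]$, and then $Q_2=\mathrm{cl}(P_2\setminus Q_1)$ is the union of the two lobes, which is disconnected---so it is not a polygon, and splitting it into components would exceed $k$ pieces. Your attempted rescue (``the geometry of the $2\times 2$ squares is rigid enough that the shadow of $Q_i$ from $z_\ell$ is well-behaved'') is precisely the statement that fails here, and no processing order is specified that would avoid it.

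The paper avoids this by not cutting greedily but along the $L_\infty$ Voronoi diagram of the $k$ square centers: the piece for square $S_i$ is $V_i\cap P$, where $V_i$ is the Voronoi cell of the center $z_i$. Smallness is automatic because the squares cover $P$, so every point of $P$ lies within $L_\infty$-distance $1$ of its nearest center, giving $V_i\cap P\subseteq S_i$ and in fact $V_i\cap P=V_i\cap(P\cap S_i)$. Connectedness then follows because $V_i$ is star-shaped with $z_i$ in its kernel and $P\cap S_i$ is star-shaped with $z_i$ in its kernel by hypothesis, and the intersection of two star-shaped sets with a common kernel point is again star-shaped about that point. This is the global analogue of the invariant you were trying to maintain locally: the Voronoi cut guarantees in one stroke that no piece can be blocked or disconnected as seen from its own center, which is the step your sequential subtraction cannot supply. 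If you want to keep your structure, replacing ``subtract the earlier pieces'' by ``intersect with the $L_\infty$ bisector half-planes $\{p:\lVert p-z_i\rVert_\infty\le\lVert p-z_j\rVert_\infty\}$'' turns your sketch into the paper's proof.
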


\begin{proof}
Consider the $L_\infty$ distance Voronoi diagram for the set of centers of squares.
For each square $S$ with center $c$ and Voronoi region $V$, we use $V\cap P$ as a piece.
We need to show that $V\cap P$ is small, and it suffices to show that (i) $V\cap P\subseteq S$ and (ii) $V\cap P$ is connected.
Part (i) follows since the squares cover $P$, so every point in $P$ is within a distance of $1$ from the center of some square.
For part (ii), note that since $V$ is a Voronoi cell, $V$ is a star-shaped polygon with a kernel containing $c$, and $P\cap S$ is also a star-shaped polygon with a kernel containing $c$ by assumption.
Hence, $V\cap P = V\cap (P\cap S)$ is also a star-shaped polygon and therefore connected.
\end{proof}

Note that axis-aligned unit squares could be replaced by circles, or indeed any symmetric convex shape (as long as rotations are not allowed), and the same result can be obtained by changing the metric used.
The optimal coverings considered in the proof of \Cref{thm:covering} satisfy the conditions of the lemma, implying \Cref{thm:partition}.

Consider a polygon $P$ and a set $\mathcal S$ of axis-aligned $2\times 2$ squares so that $P\subseteq \bigcup \mathcal S$.
We say that $\mathcal S$ is a \emph{square cover} for $P$.
In our construction, we adopt the use of reference centers, as described in~\Cref{sec:refcenters}.
Since the area of reference centers covered by one $2\times 2$ square is exactly $1$, we know that $\lvert \mathcal S\rvert \geq k$, where $k$ is the number of reference centers.
We say that the cover $\mathcal S$ is \emph{perfect} if $\lvert \mathcal S\rvert = k$.

We construct a polygon $P$ based on a \MPTSAT\ instance $\Phi$ so that $P$ has a perfect square cover if and only if $\Phi$ is satisfiable.
For our polygon $P$ it holds that if a perfect square cover $\mathcal S$ exists, then for each square $S\in\mathcal S$, the polygon $P\cap S$ is either $S$ or three quadrants of $S$, in particular $P\cap S$ is connected.
Hence, $\Phi$ is satisfiable if and only if there are $k$ small polygons whose union is $P$, so \USCover\ is NP-hard.
As the polygons $P\cap S$ have that form, we also know by \Cref{lem:coverpartition} that $\Phi$ is satisfiable if and only if there is a partition of $P$ into $k$ small polygons.
Hence, \USPartition\ is also NP-hard.
In the following, we therefore analyze how a perfect square cover for $P$ must look.

Analogous to \Cref{Finiteness}, we have the following.

\begin{lemma}
If an orthogonal polygon with integer coordinates has a square cover of size $k$, then such a square cover can be chosen where the coordinates of the vertices of all the squares are integers.
\end{lemma}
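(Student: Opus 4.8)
The plan is to give a self-contained ``snapping'' argument rather than the extremal argument behind \Cref{Finiteness}: for covers the extremal argument is subtler, since moving a square to decrease a potential can uncover part of $P$. Write each $2\times 2$ square of the given cover as $[a,a+2]\times[b,b+2]$. I will show that replacing every square $[a,a+2]\times[b,b+2]$ by $[\lfloor a\rfloor,\lfloor a\rfloor+2]\times[b,b+2]$ (``snapping down'' in the $x$-direction), and then replacing every resulting square $[a',a'+2]\times[b,b+2]$ by $[a',a'+2]\times[\lfloor b\rfloor,\lfloor b\rfloor+2]$ (snapping down in the $y$-direction), again yields a square cover of $P$. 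Snapping changes neither the number of squares nor, after both steps, the fact that all coordinates are integers, so this proves the lemma.

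The key reduction is to a one-dimensional statement. For every non-integer height $y_0$ (in particular every $y_0$ that is not the $y$-coordinate of a horizontal edge of $P$), the slice $P\cap\{y=y_0\}$ is a finite union of closed intervals with integer endpoints, and it is covered by the nonempty slices of the squares, which are exactly the length-$2$ intervals $[a_i,a_i+2]$ over the indices $i$ with $b_i\le y_0\le b_i+2$; after the $x$-snap these slices become $[\lfloor a_i\rfloor,\lfloor a_i\rfloor+2]$ over the same index set. So it suffices to prove: \emph{if a finite union $P_1$ of closed intervals with integer endpoints is covered by intervals $[a_i,a_i+2]$, then it is also covered by the intervals $[\lfloor a_i\rfloor,\lfloor a_i\rfloor+2]$.} Granting this, the union of the $x$-snapped squares is a closed set containing $P\cap\{y=y_0\}$ for a dense set of heights $y_0$, hence contains $\overline P=P$. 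After the $x$-snap, every vertical edge of $P$ and every vertical edge of every square lies at an integer $x$-coordinate, so the same one-dimensional claim applied to the generic (non-integer-abscissa) vertical slices of $P$ handles the $y$-snap verbatim.

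For the one-dimensional claim, fix $p\in P_1$; it suffices to find an index $i$ with $\lfloor a_i\rfloor\le p\le\lfloor a_i\rfloor+2$, i.e., with $a_i$ in the half-open interval $[\lceil p\rceil-2,\lfloor p\rfloor+1)$. If $p$ is an integer this is immediate, since any original interval covering $p$ has $a_i\in[p-2,p]$. If $p$ is not an integer, then $p$ lies in a component of $P_1$ with integer endpoints, so $[\lfloor p\rfloor,\lceil p\rceil]\subseteq P_1$; assume for contradiction that no $a_i$ lies in $[\lfloor p\rfloor-1,\lfloor p\rfloor+1)$. Then every original interval whose left endpoint is $<\lceil p\rceil=\lfloor p\rfloor+1$ actually has left endpoint $<\lfloor p\rfloor-1$, hence right endpoint $<\lceil p\rceil$; let $M$ be the largest right endpoint among all intervals with left endpoint $<\lfloor p\rfloor-1$. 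There is at least one such interval (the one covering $p$), each has right endpoint $<\lceil p\rceil$, and one checks $M\ge\lfloor p\rfloor$ (the interval covering $p$ reaches past $\lfloor p\rfloor$), so $(M,\lceil p\rceil)$ is a nonempty subinterval of $(\lfloor p\rfloor,\lceil p\rceil)\subseteq P_1$. Any interval covering a point $q\in(M,\lceil p\rceil)$ cannot have left endpoint $<\lfloor p\rfloor-1$ (its right endpoint would be $\le M<q$), and by assumption cannot have left endpoint in $[\lfloor p\rfloor-1,\lfloor p\rfloor+1)$ either, so its left endpoint is $\ge\lceil p\rceil>q$ — impossible, a contradiction.

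I expect the one-dimensional claim to be the crux: snapping $[a,a+2]$ down to $[\lfloor a\rfloor,\lfloor a\rfloor+2]$ loses only the strip of width $a-\lfloor a\rfloor$ at the right end, and the reason no point of $P$ is genuinely lost is exactly that the slices of $P$ have integer endpoints, so an integer point of $P$ just past that strip pins a covering interval into a favorable position. The remaining points are routine: that the generic slices are dense in $P$ so that the closed-union argument is valid, and that after the $x$-snap the squares' $x$-coordinates are integral, so that the generic vertical slices again have integer endpoints and the $y$-snap is a verbatim repeat of the first step.
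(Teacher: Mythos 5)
Your proof is correct, and it is genuinely more than what the paper records: the paper states this covering lemma without proof, merely as ``analogous'' to \Cref{Finiteness}, whose one-line justification is an extremal argument (take a packing minimizing the sum of the center coordinates). As you rightly observe, that per-square perturbation argument does not transfer verbatim to covers, since sliding a single square down or left can uncover points of $P$ that no other square catches. Your simultaneous snap avoids this: snapping all squares at once in the $x$-direction and reducing to the one-dimensional statement on generic horizontal slices isolates exactly where the hypothesis on $P$ enters, namely that the slices of an integer-coordinate orthogonal polygon at non-integer heights are unions of intervals with integer endpoints (your 1D counterexample-proofing is sound: the case analysis on $a_i\in[\lfloor p\rfloor-1,\lfloor p\rfloor+1)$ versus the contradiction via the maximal right endpoint $M$ checks out, as do the index-set bookkeeping for the second, $y$-direction snap and the density-plus-closedness step, which is valid because a polygon is the closure of its interior). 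What the paper's (implicit) approach buys is brevity, treating the statement as folklore; what yours buys is an actual proof of the covering version, with the subtlety that distinguishes it from the packing version made explicit. The only cosmetic remark is that the appeal to the squares' vertical edges being integral after the $x$-snap is not needed for the second application of the 1D claim — only the integrality of $P$'s vertical slices matters — but this does no harm.
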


This again allows us to consider only four possible squares for each reference center, with the reference center in each of the quadrants.

\subsection{Components and gadgets}

\begin{figure}
\centering
\includegraphics[page=1]
{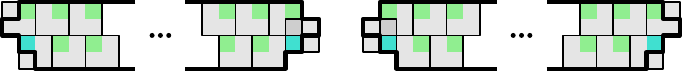}
\caption{Variable component for the covering problem.}
\label{fig:covervariable}
\end{figure}

Due to the similarity with our construction for packing, we shall give a less detailed description of the construction.
\Cref{fig:covervariable} shows the variable component, and we have the following analogue of \Cref{lem:variable}.

\begin{lemma}\label{lem:variableCover}
In any perfect cover, one row of the variable component pulls left and the other pulls right.
\end{lemma}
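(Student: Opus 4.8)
The plan is to mirror the proof of \Cref{lem:variable} exactly, replacing the packing obstruction (two squares cannot overlap, so they \emph{push} each other) with the covering obstruction (the reference centers must all be covered, so the squares \emph{pull} each other). Recall that in a perfect cover there are exactly as many $2\times 2$ squares as reference centers, so by \Cref{lem:referenceCenters}-type counting each square covers exactly one reference center, and thus for each reference center there are only four candidate squares, one in each quadrant. I would reuse the terminology from \Cref{sec:refcenters}: a square \emph{pulls left} if its reference center lies in a right quadrant (so the square extends leftward to cover the reference center to the left of the one it "owns"), etc.

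First I would examine the two leftmost squares of the variable component in \Cref{fig:covervariable}. The polygon boundary is designed so that there is a reference center just inside the left wall that can only be covered by a square sitting flush against that wall, which forces at least one of the two leftmost squares to pull right (i.e., to reach rightward from the left edge). Symmetrically, the rightmost configuration of the component forces at least one of the two rightmost squares to pull left. Since the component consists of a pair of rows of squares that the end-gadgets keep "unaligned" (as in the packing case), once one square in a row pulls right, the reference-center constraint propagates horizontally along that row: each successive square must pull right to cover the reference center vacated by its neighbour. Hence an entire row pulls right; by the analogous argument from the other end, the other row pulls left. This establishes the dichotomy claimed in the lemma.

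The main obstacle, as in the packing construction, is making the propagation step fully rigorous: I need to argue that within a single row of the variable component, it is \emph{not} possible for the covering to "switch direction" partway along the row, and that the interior PUSH/PULL gadgets attached to the component do not create extra degrees of freedom that break the left/right dichotomy. This is handled exactly as in \Cref{lem:PyramidGrowth}: a local inspection of the four candidate squares at each reference center shows that if a square pulls right, the reference center to its left is uncovered unless its left neighbour also pulls right (and the unaligned partner row forces the bookkeeping to be consistent). I would phrase the write-up to emphasize that this is "by inspection of the four combinations of parities," deferring the detailed case check to the figures, since the argument is structurally identical to the packing case and the covering gadgets are designed to be the mirror image. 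The only genuinely new point to verify is that the \emph{pulling} version of the end-gadget (an appropriately placed indentation/protrusion with an isolated reference center near the wall) indeed forces the boundary square in the stated direction, which follows from the fact that such a reference center has a unique covering square among the four candidates.

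Given all this, the proof would read:

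\begin{proof}
As in the packing construction, a perfect cover has exactly one $2\times 2$ square per reference center, so for each reference center only four covering squares are possible, one having the reference center in each quadrant. We first consider the two leftmost squares of the variable component. The polygon boundary places a reference center adjacent to the left wall that can be covered only by a square that pulls right; hence at least one of these two squares pulls right. Symmetrically, at least one of the two rightmost squares pulls left. Since the variable component is a pair of neighbouring unaligned rows, a local inspection of the four candidate squares at each reference center shows that if a square in one row pulls right, then its left horizontal neighbour in the same row must also pull right in order to cover the intervening reference center; propagating this from the right end, all squares of one row pull right. The analogous propagation from the left end forces all squares of the other row to pull left.
\end{proof}
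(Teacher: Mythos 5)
There is a genuine gap: your forcing argument at the ends of the component uses the \emph{packing} mechanism, which does not exist in the covering problem, and as a result you also get the directions backwards. In \USCover, a square cover is just a set of $2\times 2$ squares whose union contains $P$; the squares are \emph{not} required to lie inside $P$, so a reference center sitting next to the left wall can perfectly well be covered by a square protruding through that wall. Hence your key step --- ``a reference center just inside the left wall can only be covered by a square sitting flush against that wall, forcing a pull right'' --- is not valid. The correct obstruction in covering is uncovered \emph{area}, not forbidden overlap with the boundary: the variable component in \Cref{fig:covervariable} has material at its left end that only a square shifted leftward can cover, so the left end forces one row to pull \emph{left}, and symmetrically the right end forces the other row to pull \emph{right}. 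This is exactly the paper's (one-line) proof, and it is the opposite of what you assert (left end forces pull right, right end forces pull left); the whole point of the covering reduction, as stressed in the technical overview, is that information propagates by squares \emph{pulling} each other outward to cover everything, rather than being pushed inward by walls.

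Your propagation step is also phrased in packing terms (``cover the reference center vacated by its neighbour''): reference centers are never vacated in a perfect cover --- each is covered by exactly one square by the area count --- and what actually propagates the alignment along a row is that the unit strips of $P$ between consecutive reference centers must be covered by one of the two adjacent squares of that row, which is what keeps a row from switching direction in the interior. With the end-forcing corrected to pull-left at the left end and pull-right at the right end, the rest of the argument does go through in the spirit of \Cref{lem:variable}, but as written the proposal proves the wrong statement by an inapplicable mechanism.
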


\begin{proof}
One row must pull left to cover the left end of the variable component, and the other must pull right to cover the right end.
\end{proof}

\begin{figure}
\centering
\includegraphics[page=2]
{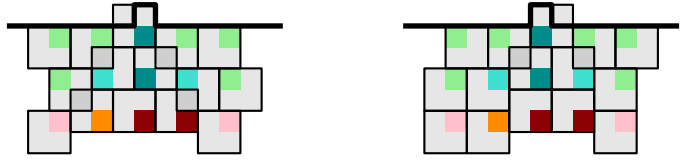}
\caption{The PULL-UP-IF-PLUS gadget.
The gadget pulls up a pyramid of squares, and the orange pull square is also pulled up in the plus position.}
\label{fig:pullupifup}
\end{figure}

\begin{figure}
\centering
\includegraphics[page=3]
{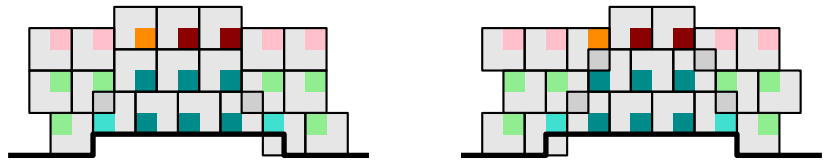}
\caption{The PULL-DOWN-IF-MINUS gadget.
The gadget allows some number of squares to pull up and requires the orange pull square to stay down in the minus position.}
\label{fig:pulldownifdown}
\end{figure}

\Cref{fig:pulldownifdown,fig:pullupifup} shows the PULL-UP-IF-PLUS and PULL-DOWN-IF-MINUS gadgets.
Instead of a column of push squares, we now have a column of \emph{pull squares}.
Note that a PULL-UP-IF-PLUS gadget on a variable $y$ forces a pyramid of squares to pull up.
Similarly as in packing, we can make a matching PULL-DOWN-IF-MINUS gadget on a variable $x$ below $y$, resulting in a dependency between the variables.
We have the following analogy of \Cref{lem:implies}.

\begin{lemma}\label{lem:impliesCovering}
In any perfect covering, the following holds.
If a variable is plus, the pull square of any PULL-UP-IF-PLUS gadget pulls up.
If a variable is minus, the pull square of any PULL-DOWN-IF-MINUS gadget is pulls down.
As a consequence, combining a PULL-UP-IF-PLUS gadget and a PULL-DOWN-IF-MINUS gadget, we can realize the implications $y\implies x$ and $\neg y\implies\neg x$, i.e., ensure that the values of variables encoded by the packing satisfy the implications.
\end{lemma}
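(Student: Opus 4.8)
The plan is to follow the proof of \Cref{lem:implies} almost verbatim, replacing ``push'' by ``pull'' throughout and re-justifying each local step by inspection of \Cref{fig:pullupifup,fig:pulldownifdown}, using the fact (the covering analogue of \Cref{lem:referenceCenters}) that in a perfect cover each reference center lies in exactly one of the four quadrants of a unique covering square, so there are only four candidate squares to consider per reference center. As in the packing case, the argument splits into three parts: the behaviour of the PULL-UP-IF-PLUS gadget, the behaviour of the PULL-DOWN-IF-MINUS gadget, and the way a single pull column connecting the two forces the implication.

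First I would analyze the PULL-UP-IF-PLUS gadget (\Cref{fig:pullupifup}). The indent in the polygon boundary contains a sliver of area whose only possible covering square is one that pulls up; this forces a whole block of consecutive squares in the adjacent static row to pull up. By the covering analogue of \Cref{lem:PyramidGrowth} — the block of pulled-up squares widens by two squares each time it crosses a variable component, thanks to the flanking static rows that fix the alignment count — this block propagates as a pyramid toward the variable component. When the variable component is in the plus position, its right-aligned row forces the extra square to the left, namely the pull square, to pull up as well; when it is minus, that pull square is free to stay down. This establishes the first bullet.

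Next I would analyze the PULL-DOWN-IF-MINUS gadget (\Cref{fig:pulldownifdown}). Here the outdent merely \emph{allows} a block of squares in the static row to pull up, and (exactly as in the packing construction) the width of the outdent is set to match the width of the pyramid arriving from below, which both accommodates the pyramid and keeps distinct pyramids from overlapping in the interior — giving precisely the interior-disjointness needed to invoke \Cref{lem:coverpartition} later. The key local check is that when the variable component is minus, pulling the pull square up would leave a reference center inside the outdent uncovered, so in the minus position the pull square must pull down; this is the second bullet. For the consequence, combine a PULL-UP-IF-PLUS gadget on $y$ with a PULL-DOWN-IF-MINUS gadget on a variable $x$ below $y$, joined by one pull column carried through the pyramid. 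A pull square cannot simultaneously pull up and pull down, so $y$ plus and $x$ minus is impossible, i.e.\ $y\implies x$; placing the gadgets in the mirrored configuration gives $\neg y\implies\neg x$. Conversely, \Cref{fig:pullupifup,fig:pulldownifdown} display perfect covers for every admissible combination of truth values, including the slack case in which neither gadget pulls the column, so the gadget excludes no satisfying assignment.

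The only genuinely new content beyond the packing proof — and the step I expect to be the main obstacle — is the per-reference-center verification inside the two figures: one must check that ``pulling'' (everything must be covered) forces exactly the configurations claimed, that the slack cover in the non-pulling case is still perfect, and that the outdent-width bookkeeping keeps the pyramids of different PUSH/PULL columns interior-disjoint. Everything else, including the propagation of the pyramid across intermediate variable components, is a direct translation of the corresponding lemmas from \Cref{sec:simplepacking}.
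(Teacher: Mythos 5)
Your proposal matches the paper's treatment: the paper gives no separate written proof of \Cref{lem:impliesCovering}, justifying it exactly as you do, by translating the packing argument for \Cref{lem:implies} (push becomes pull) and checking the gadget figures locally, with the reference-center count limiting each square to four candidate placements and the shared pull column excluding the combination ``$y$ plus and $x$ minus.'' Your aside tying the outdent-width bookkeeping to \Cref{lem:coverpartition} is not quite what that lemma needs (it needs the star-shapedness of $P\cap S$, not disjointness of pyramids), but this is tangential and does not affect the argument for the lemma itself.
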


\begin{figure}
\centering
\includegraphics[page=5,width=\textwidth]
{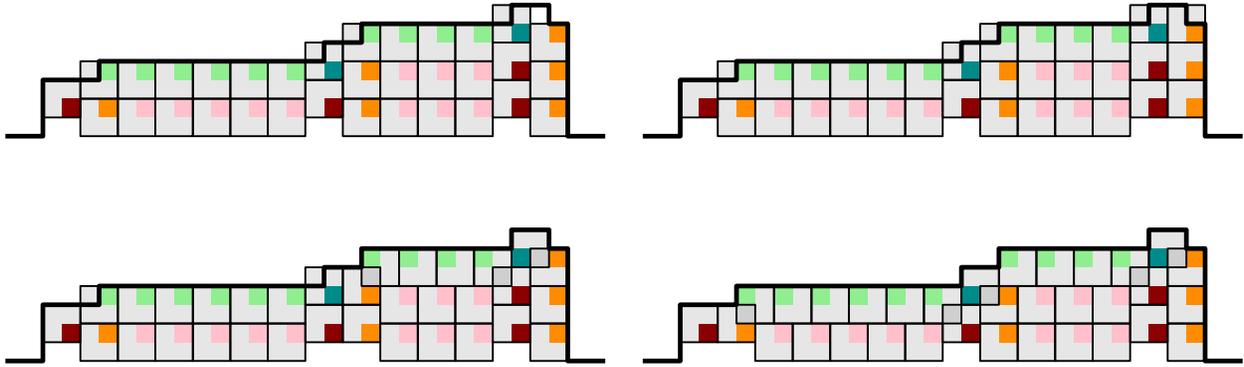}
\caption{The positive OR gadget.
Top left: The situation where all the orange pull squares pull down, and a unit square at the top is not covered.
The other three:
If one pull square goes up, everything can be covered.}
\label{fig:coverORPos}
\end{figure}

\begin{figure}
\centering
\includegraphics[page=4]
{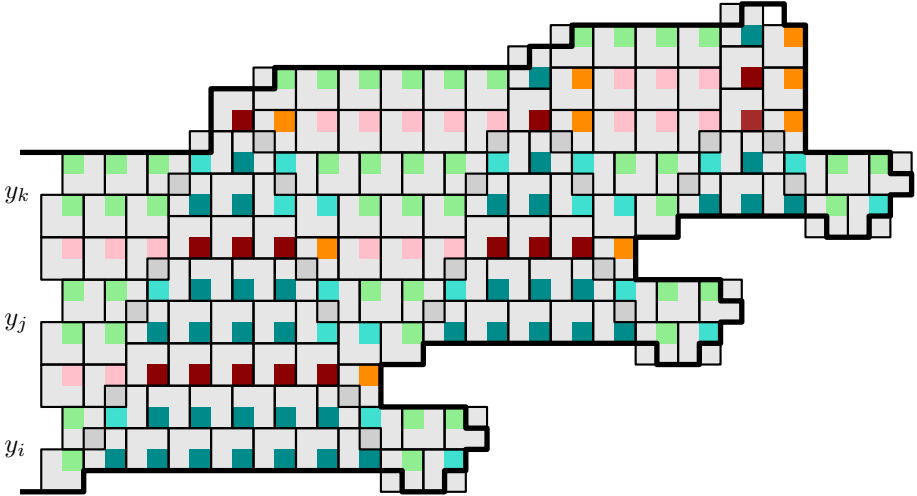}
\caption{Connecting the auxiliary variables to the positive OR gadget using PULL-DOWN-IF-MINUS gadgets.
We have made a slight adjustment to the PULL-DOWN-IF-MINUS gadgets as compared to \Cref{fig:pulldownifdown} in order to have the pull squares in the right side of the pyramids.}
\label{fig:coverORPosCon}
\end{figure}

\begin{figure}
\centering
\includegraphics[page=7]
{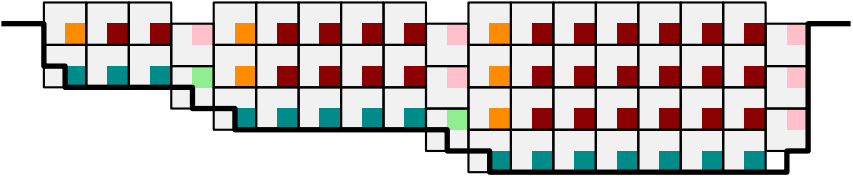}
\caption{The negative OR gadget.
We show the situation where all the orange pull squares pull up, and a unit square at the bottom is not covered.
If one pull square goes down, everything can be covered.}
\label{fig:coverORNeg}
\end{figure}

\begin{figure}
\centering
\includegraphics[page=6]
{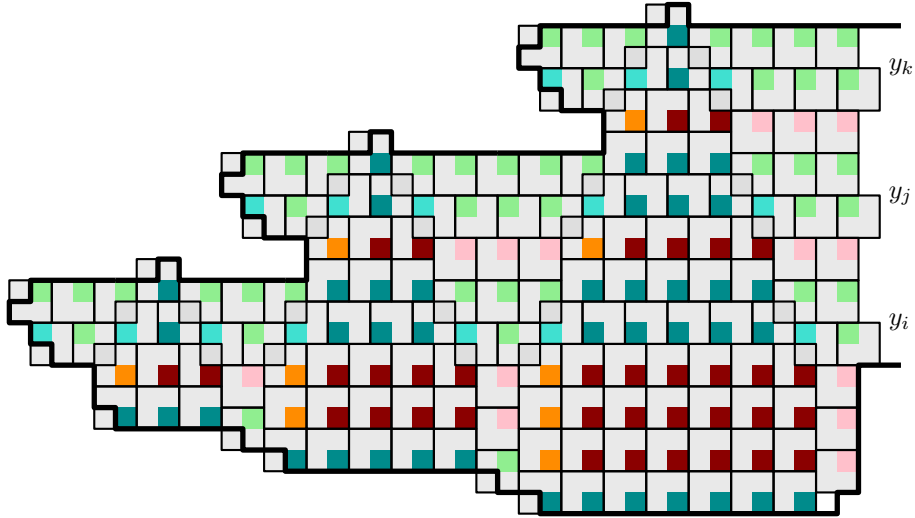}
\caption{Connecting the auxiliary variables to the negative OR gadget using PULL-UP-IF-PLUS gadgets.}
\label{fig:coverORNegCon}
\end{figure}

\Cref{fig:coverORPos,fig:coverORPosCon} show the positive OR gadget, and \Cref{fig:coverORNeg,fig:coverORNegCon} show the negative.
By inspection, we get the following lemma, identical to \Cref{lem:or}.

\begin{lemma}\label{lem:orCover}
In a positive OR gadget, at least one of the connected auxiliary variables is plus.
In a negative OR gadget, at least one of the connected auxiliary variables is minus.
\end{lemma}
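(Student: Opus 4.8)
The plan is to mirror the proof of \Cref{lem:or} from the packing construction, replacing ``push'' by ``pull'' and \Cref{lem:implies} by \Cref{lem:impliesCovering} throughout. Fix a perfect square cover $\mathcal S$ of $P$; by the preceding lemma we may take every square in $\mathcal S$ to have integer coordinates, so each reference center inside an OR region is covered by exactly one of the (at most four) $2\times 2$ squares having it in one of its quadrants, and the gadgets are designed so that each pull square of a pull column is constrained to move only vertically, i.e.\ it is either pulled up or pulled down. This finiteness is what turns the statement into a bounded case check.

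For the positive OR gadget, the three auxiliary variables are attached through PULL-DOWN-IF-MINUS gadgets, in the variant of \Cref{fig:coverORPosCon} that places each pull column on the right side of its pyramid (the local inspection behind \Cref{lem:impliesCovering} applies verbatim to this variant). Suppose for contradiction that all three auxiliary variables are minus. Then \Cref{lem:impliesCovering} forces all three pull squares entering the gadget to be pulled down, and I would check directly -- this is the configuration in the top-left of \Cref{fig:coverORPos} -- that no admissible square can then cover a certain unit cell near the top of the gadget, contradicting that $\mathcal S$ covers $P$. Hence some pull square is pulled up, so by the contrapositive of \Cref{lem:impliesCovering} the corresponding auxiliary variable is plus. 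The negative OR gadget is symmetric: its auxiliary variables are attached via PULL-UP-IF-PLUS gadgets (\Cref{fig:coverORNegCon}), so if all three were plus then all three pull squares would be pulled up, leaving a unit cell near the bottom uncovered (\Cref{fig:coverORNeg}); hence at least one of them is minus.

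I do not expect a real obstacle here. The only point needing care -- the covering analogue of the packing fact that the OR room cannot fit its quota of squares -- is verifying that the ``all pull squares down'' (resp.\ ``all up'') configuration genuinely leaves a cell uncovered in the true geometry of the gadget, i.e.\ that the drawings in \Cref{fig:coverORPos} and \Cref{fig:coverORNeg} are faithful. Checking this amounts to enumerating, for the finitely many reference centers in the OR region, the at most four candidate squares and confirming that the boundary cell in question is missed; it is exactly the kind of inspection the figures are meant to certify.
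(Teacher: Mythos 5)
Your proposal is correct and follows essentially the same route as the paper: the paper establishes \Cref{lem:orCover} purely ``by inspection'' of \Cref{fig:coverORPos,fig:coverORPosCon,fig:coverORNeg,fig:coverORNegCon}, and your argument just makes that inspection explicit (the all-pull-down, resp.\ all-pull-up, configuration leaves a unit cell uncovered) and combines it with the contrapositive of \Cref{lem:impliesCovering}, exactly as the packing analogue \Cref{lem:or} is obtained. No gap; your version is simply a more spelled-out form of the paper's one-line justification.
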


We conclude with the following analogue of \Cref{lem:verifySimple} (which has an analogous proof).

\begin{lemma}\label{lem:verifySimpleCover}
If there exists a perfect covering of $P$, then $\Phi$ is satisfied.
\end{lemma}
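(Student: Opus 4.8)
The statement to prove is \Cref{lem:verifySimpleCover}: if there exists a perfect covering of $P$, then $\Phi$ is satisfied. Since the excerpt explicitly says this has "an analogous proof" to \Cref{lem:verifySimple}, the plan is to mirror the packing argument, substituting the covering lemmas for their packing counterparts.

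\begin{proof}
The argument parallels that of \Cref{lem:verifySimple}. Suppose a perfect covering $\mathcal S$ of $P$ exists. By \Cref{lem:variableCover}, in each variable component one row pulls left and the other pulls right; this dichotomy lets us read off a value (\emph{plus} or \emph{minus}) for each of the original variables $x_i$ and each auxiliary variable $y_i$ introduced for an OR gadget, exactly as the two rotational positions encoded a truth value in the packing construction. Now fix a clause of $\Phi$; say it is the positive clause $C(x_i,x_j,x_k)=(x_i\vee x_j\vee x_k)$ (the negative case is symmetric, using the negative OR gadget and \Cref{eq:xyneg}). The three auxiliary variables $y_i,y_j,y_k$ for this clause are connected to a positive OR gadget via PULL-DOWN-IF-MINUS gadgets. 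By \Cref{lem:orCover}, at least one of $y_i,y_j,y_k$ is plus; without loss of generality $y_i$ is plus. By \Cref{lem:impliesCovering}, the implication $y_i\implies x_i$ is enforced by the combination of the PULL-UP-IF-PLUS gadget on $x_i$ and the PULL-DOWN-IF-MINUS gadget linking it to $y_i$, so $x_i$ is plus as well, i.e., $x_i$ is true under the encoded assignment, and $C(x_i,x_j,x_k)$ is satisfied. Since the clause was arbitrary, the encoded assignment satisfies every clause of $\Phi$, hence $\Phi$ is satisfiable.
\end{proof}

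The only subtlety, and the step I expect to need the most care, is making sure the bookkeeping of the implications goes in the right direction for negative clauses: there the equivalence \Cref{eq:xyneg} turns the clause constraint into $(\neg y_i\implies\neg x_i)\wedge\cdots\wedge(\neg y_i\vee\neg y_j\vee\neg y_k)$, so one must check that \Cref{lem:orCover} delivers a \emph{minus} auxiliary variable and that \Cref{lem:impliesCovering} then forces the corresponding original variable to be minus, making the negative literal true. Everything else is a direct transcription of the packing proof with "push" replaced by "pull" and the packing lemmas replaced by their covering analogues (\Cref{lem:variableCover}, \Cref{lem:impliesCovering}, \Cref{lem:orCover}); no new geometric reasoning is required beyond what those lemmas already supply.
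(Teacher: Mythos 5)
Your proof is correct and takes essentially the same route as the paper, which only remarks that the argument is analogous to \Cref{lem:verifySimple}: read off truth values via \Cref{lem:variableCover}, obtain a satisfied auxiliary clause from \Cref{lem:orCover}, and transfer it to the original variables via \Cref{lem:impliesCovering}. (One cosmetic slip: for a positive clause the PULL-UP-IF-PLUS gadget sits on the auxiliary variable $y_i$, which lies above, and the PULL-DOWN-IF-MINUS gadget on $x_i$ below, not the other way around; this does not affect the logic.)
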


We have then shown \Cref{thm:covering}.
Note that if a perfect cover exists, then so does a perfect cover where the intersection of $P$ with any of the squares $S$ is either $S$ or three of the quadrants of $S$.
Hence, we get \Cref{thm:partition} from \Cref{lem:coverpartition}.

\section{Packing in orthogonally convex polygons}\label{sec:orthconvexpack}

Finally, we show that \USPack for orthogonally convex grid polygons is also NP-hard.
This is a strictly stronger result than \Cref{thm:simplepack}, but the proof is so much more difficult that it seems better to give it separately.
Recall that a polygon $P$ is \emph{orthogonally convex} if for any horizontal or vertical line $\ell$, the intersection $P\cap\ell$ is connected.

\subsection{Overview of the construction}

The idea is again to convert the incidence graph of an instance of \TSAT\ into a schematic, which then forms the structure of the polygon. The schematics that we use in \Cref{sec:schematic} are usually not going to have an orthogonally convex shape. We refine \PTSAT in such a way that the schematics generated \emph{are} orthogonally convex. We call this new problem \emph{\CTSAT}. This is described in \Cref{sec:Clover3SAT}. 

In \Cref{sec:ConvexVariableGadgets}, we show that we can make variable components in the orthogonally convex setting. The idea is to form parts of the ``boundary'' of the gadgets with squares in the packing rather than with the actual polygon boundary. Unsurprisingly, this make the verification of the variable components much more complicated. One key idea is that we can add redundancy to the constraints that form the variable component, so that one variable component can be verified without having already verified all the others.

Instead of having isolated OR gadgets, we use much more complicated \emph{clause components}. The clause components are described in \Cref{sec:ClauseGadgets}. The clause components create a set of constraints that are not obviously equivalent to the \TSAT instance. Verifying the clause gadgets requires a detailed algebraic analysis, which is given in \Cref{sec:ClauseGadgetVerification}.

There will be two sections of border between the clause and variable components. These are formed by static rows that we call the \emph{membrane rows}. The clause and variable components are each verified independently, with any interactions controlled by some simple properties of squares in the membrane rows.

\subsection{\CTSAT}\label{sec:Clover3SAT}

The first step in our construction in \Cref{sec:simplepacking} for simple polygons was to convert a planar graph to a schematic that has a row for each variable and clause and a column for each edge, as shown in \Cref{fig:ConversionSchematic2}. In general, it isn't possible to convert an instance of \PTSAT\ into a schematic that is suitable for generating an orthogonally convex polygon. Instead, we will have to consider a further restriction of \PTSAT. In this section, we define the problem \CTSAT, show that it is NP-hard, and show that an instance can be converted into a schematic that has an orthogonally convex shape.

An instance of \emph{\CTSAT} consists of the following:

\noindent
\textbf{Input:}
An instance $\Phi$ of 3SAT containing variables $x_1,\dots, x_n$, two sets of clauses $c_1, \dots, c_p$ and $d_1, \dots, d_q$, and a planar embedding of the graph $G$ that contains a vertex for each $x_i$, $c_i$ or $d_i$ and:

\begin{itemize}
    \item An edge $(x_i, c_j)$ whenever $x_i$ or $\neg x_i$ appears in $c_j$, and similarly for the clauses $d_j$.
    \item Edges $(x_i, x_{i+1})$ for $i=1,\dots, n-1$, $(c_i, c_{i+1})$ for $i=1,\dots, p-1$, and $(d_i, d_{i+1})$ for $i=1,\dots, q-1$. 
    \item Edges $(x_1, c_1)$, $(c_1, x_n)$, $(x_n, d_1)$, and $(d_1, x_1)$.
\end{itemize}

\noindent
\textbf{Question:}
Is $\Phi$ satisfiable?

We say that an instance of \CTSAT is \emph{monotone} if each clause $c_i$ is positive (i.e.,~it is of form $x_j \vee x_k \vee x_\ell$) and each clause $d_i$ is negative (i.e.,~it is of form $\neg x_j\vee \neg x_k \vee \neg v_\ell$).
The problem \emph{\MCTSAT} is \CTSAT restricted to monotone instances. 

\CTSAT is a restriction of (variable-linked) \PTSAT\ with an additional constraint that it should also be possible to link some of the clauses as well. Figure \ref{fig:Clover3SAT} shows an example of the ``clover'' graph that appears in this definition.

\begin{figure}
\centering  
\includegraphics[page=1]{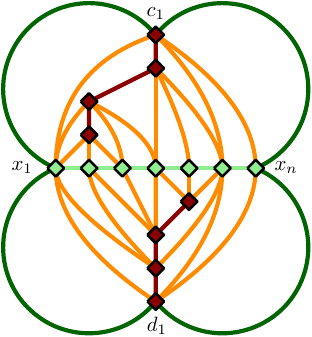}
\caption{A graph coming from an instance of \CTSAT. The $4$ outer edges $(x_1, c_1)$, $(c_1, x_n)$, $(x_n, d_1)$, and $(d_1, x_1)$ exist to prevent constraint edges from wrapping around to the other side of the variables. When drawn like this, all the edges for $c$ constraints hit the variable vertices from above and all the edges for $d$ constraints hit variable vertices from below.
}
\label{fig:Clover3SAT}
\end{figure}

In 2018, Pilz \cite{PilzLayeredPlanar3SAT} showed that \PTSAT\ remains hard if there is a cycle that passes through all the clause vertices and then all of the variable vertices. This is called \textsc{Variable-Clause-Linked-Planar-3SAT}.
The reduction is from \PTSAT, but unlike either variable-linked or clause-linked \PTSAT, the reduction requires modifying the original graph. By reducing instead from \MPTSAT, Pilz also shows that this problem is hard when all the edges inside the cycle represent positive literals and all the edges outside the cycle represent negative literals.

\textsc{Variable-Clause-Linked-Planar-3SAT} is closely related to \CTSAT, and the proof of hardness follows Pilz \cite{PilzLayeredPlanar3SAT} closely. In particular, both proofs use something like what we call \LPTSAT\ as an intermediate step.
An instance of \LPTSAT\ consists of:

\noindent
\textbf{Input:} An instance $\Phi$ of \TSAT\ and a planar, integer-coordinate straight line embedding of the incidence graph $G$ where the variable vertices have even $y$ coordinates, the clause vertices have odd $y$ coordinates, and each edge is between a pair of vertices whose $y$-coordinates differ by $1$.

\noindent
\textbf{Question:} Is $\Phi$ satisfiable?

We call a problem instance \emph{monotone} if clause vertices with $y$ coordinates of the form $4k+1$ are positive and clause vertices with $y$ coordinates of the form $4k-1$ are negative. The problem \LPTSAT\ restricted to monotone instances is called \emph{\MLPTSAT}. 

\begin{lemma}
\MLPTSAT\ is NP-hard.
\end{lemma}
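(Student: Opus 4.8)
The plan is to reduce from \MPTSAT, which is NP-complete by the lemma of de Berg and Khosravi quoted earlier in the excerpt. Recall that an instance $\Phi=(F,G)$ of \MPTSAT\ comes with a plane graph $G$ on the variables and clauses, containing the Hamiltonian-type cycle $C=\{x_1x_2,\ldots,x_{k-1}x_k,x_kx_1\}$ on the variables, with all positive clauses drawn on one side of $C$ and all negative clauses on the other. The target is to produce, in polynomial time, an equivalent monotone instance of \LPTSAT, i.e.\ one whose incidence graph admits a straight-line integer ``layered'' embedding: variable vertices on even $y$-levels, clause vertices on odd $y$-levels, every edge connecting consecutive levels, and the monotonicity condition (clauses at level $4k+1$ positive, at level $4k-1$ negative).

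First I would take a plane embedding of $G$ and realize the cycle $C$ as a horizontal ``spine'': all variable vertices $x_1,\ldots,x_k$ are placed on the level $y=0$, in the cyclic order in which $C$ visits them, and the positive clauses end up in the upper half-plane, the negative clauses in the lower half-plane. This uses the topological structure guaranteed by \MPTSAT. Next, the nesting partial order on clauses (a clause $c_i$ is nested inside $c_j$ if the region bounded by $c_j$ together with the spine contains $c_i$), already discussed in \Cref{sec:schematic}, tells us how far from the spine each clause must sit: the more deeply nested, the farther out. I would route each edge $x_ic_j$ monotonically away from the spine, subdividing it so that it passes through alternating dummy-variable and dummy-clause vertices on the intervening levels. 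Here is where the monotonicity is handled cleanly: an edge going up, say from a variable to a positive clause $k$ levels above, is replaced by a path through $k-1$ newly introduced vertices; to keep all clause-levels carrying clauses of the right sign, each ``dummy clause'' $z$ on the path is given the clause $z=(a\vee\neg b)$ together with the forced literals that make it equivalent to the identity $a\iff b$ on its two neighbouring path-variables — and symmetrically on the negative side, so that the level-parity condition ($4k+1$ positive, $4k-1$ negative) is met by construction. This is exactly the subdivision trick underlying Pilz's argument, and I would follow \cite{PilzLayeredPlanar3SAT} for the bookkeeping.

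With the combinatorial embedding fixed, the remaining task is purely geometric: convert the plane topological embedding into a straight-line integer-coordinate one respecting the prescribed $y$-levels. This is a standard consequence of the fact that a leveled planar graph (each vertex assigned a level, each edge between consecutive levels, with a compatible planar rotation system) admits a straight-line leveled drawing on a polynomial-size integer grid — one can invoke a leveled-planarity drawing result, or simply produce the coordinates greedily left-to-right on each level while respecting the rotation system. One finally checks the two directions of the reduction: satisfying assignments are preserved because each subdivision gadget forces its chain of path-variables to a common value, so the dummy clauses are automatically satisfied and the original clause is satisfied iff the terminal clause vertex is; conversely any satisfying assignment of the original formula lifts to one of the subdivided formula by propagating values along the paths.

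The main obstacle I expect is the geometric realization step together with the simultaneous enforcement of the level-parity/monotonicity condition: one must choose the number of subdivision vertices on each edge so that \emph{all} edges become single-level, \emph{and} every odd level ends up populated only by clauses of the correct sign, \emph{and} the whole thing still embeds with straight lines on a grid of polynomial size. Getting these three constraints to coexist — essentially, showing that the ``parity'' of how many levels one must traverse can always be adjusted by inserting an even number of extra identity-gadget vertices without destroying planarity — is the delicate part, and it is precisely the content that is imported from Pilz's construction; the rest is routine.
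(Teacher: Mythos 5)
Your overall strategy (reduce from \MPTSAT, put the variables on the level $y=0$ with positive clauses above and negative below, then subdivide long edges with dummy variables and dummy clauses on the intermediate levels) is the same as the paper's, but the gadget you actually specify breaks at exactly the point that makes this lemma nontrivial. You propagate the \emph{identity} along each subdivided edge, using a dummy clause of the form $(a\vee\neg b)$ (plus, implicitly, its converse if you really want $a\iff b$). Such a clause contains one positive and one negative literal, so it is not monotone, and hence the instance you produce is not an instance of \MLPTSAT\ at all: the monotonicity condition requires \emph{every} clause at a level $4k+1$ to be all-positive and every clause at a level $4k-1$ to be all-negative, including the subdivision clauses. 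No identity-propagating two-literal gadget can satisfy this, so the claim that the level-parity condition is ``met by construction'' does not hold for the construction you describe, and deferring the ``delicate part'' to Pilz does not repair it, because the correct construction is incompatible with identity propagation.

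The missing idea, which is the heart of the paper's proof, is to \emph{flip the sign} of the propagated literal at each new variable level rather than copy it: when splitting an edge from a variable $v$ at level $i$ to a clause $c$ further away, one introduces a fresh variable $u$ two levels up (or down), replaces the occurrence of $v$ in $c$ by $\neg u$, and adds the implication as the clause $(v\vee u)$ or $(\neg v\vee\neg u)$, whichever is monotone at that level; one then maintains the invariant that a variable at level $i\equiv 0\pmod 4$ occurs positively in clauses above it and negatively in clauses below it, and the reverse for $i\equiv 2\pmod 4$. With this invariant, once all edges join adjacent levels, every clause at level $4k+1$ is automatically all-positive and every clause at level $4k-1$ all-negative, and correctness follows because each added two-literal clause is a one-way implication in the direction needed (satisfying the modified clause forces the original literal), with the converse direction handled by extending a satisfying assignment along the chain. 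Note also that only this one-way implication is needed; insisting on $a\iff b$ is both unnecessary and, as above, fatal to monotonicity. Your concern about the straight-line integer-grid realization is comparatively minor: the paper simply draws the incidence graph with the variables on $y=0$ and polynomial-size coordinates before splitting, and this is not where the difficulty lies.
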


\begin{proof}
Our proof closely follows Pilz~\cite{PilzLayeredPlanar3SAT}. We reduce from an instance $\Phi$ of \MPTSAT.

We draw the incidence graph with integer-coordinate vertices, straight edges, and no crossings.
Since we have a cycle through the variable-vertices, we can draw this graph in such a way that the variable-vertices all have $y$-coordinate $0$, the clause vertices have odd $y$-coordinates, and the positive clauses have positive $y$-coordinates while the negative clauses have negative $y$-coordinates.
The size of the coordinates needed is no more than polynomial in the size of $\Phi$.

Next, we split each edge whenever it crosses a layer. The process of splitting edges will (temporarily) cause some clauses to contain both negated and non-negated literals. In order to end up with a monotone instance, the edge splitting process will preserve the following properties:

\begin{itemize}
    \item Variable-vertices have even $y$-coordinates
    \item Clause-vertices have odd $y$-coordinates
    \item If a variable-vertex has $y$ coordinate $i\equiv 0\pmod 4$ and is connected to a clause vertex with $y$-coordinate $k$, then the sign of that variable in that clause is $\text{sign}(k-i)$
    \item If instead the variable vertex has coordinate $i\equiv 2 \pmod 4$, then the sign of that variable in that clause is $\text{sign}(i-k)$
\end{itemize}

This is satisfied by the initial configuration since the variable-vertices initially all have $y$-coordinate $0$. Whenever an edge spans a $y$-distance of more than $1$, we can split this edge in a way that preserves this property. \Cref{fig:EdgeSplitting} shows the $4$ cases that can occur depending on the $y$-coordinate of the variable vertex and whether the clause vertex is above or below it. For example, suppose a variable-vertex $v$ has $y$-coordinate $i\equiv 0 \pmod 4$ and is connected to a clause-vertex $c$ with $y$-coordinate $k>i+1$. We add a variable $u$ with $y$-coordinate $i+2$ and a clause $d=(v\vee u)$ with $y$ coordinate $i+1$. Replace $v$ in $c$ with $\neg u$. The new clause $d$ is equivalent to $\neg u\implies v$, so the new \TSAT\ instance is equivalent, and we have reduced the the total number of times that an edges crosses a layer by $2$. The other cases are shown in \Cref{fig:EdgeSplitting}.

\begin{figure}
\centering
\includegraphics[page=3]{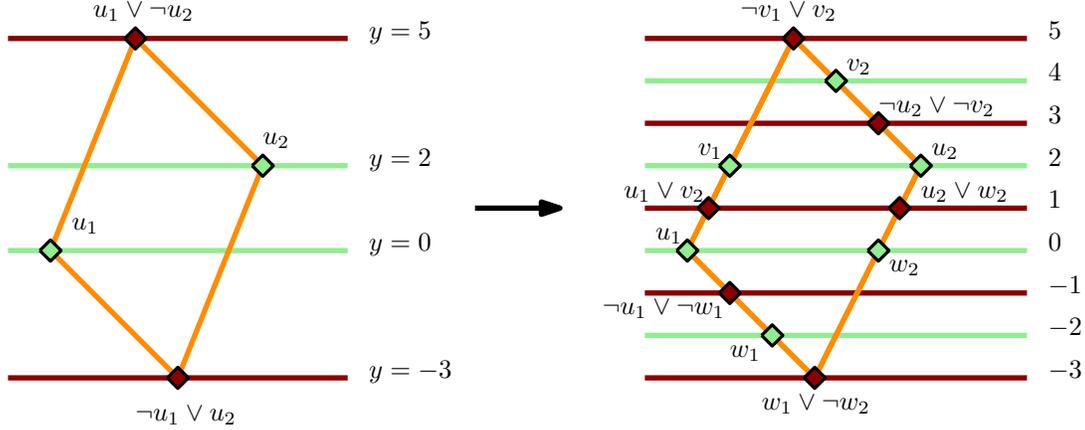}
\caption{Splitting edges to form a layered planar graph. For our purposes, a \TSAT\ instance has \emph{at most} 3 vertices per clause.}
\label{fig:EdgeSplitting}
\end{figure}

We repeat this process until all edges are between adjacent layers. If a clause has $y$-coordinates of form $4k+1$ then variables from both adjacent layers must appear non-negated, and if a clause has $y$-coordinates of form $4k-1$ then variables from both adjacent layers appear negated. So the result is indeed a monotone instance.
\end{proof}

These layered planar graphs appear in the proof by Pilz~\cite{PilzLayeredPlanar3SAT}. They are then wrapped up in a spiral to obtain a graph with the variable-clause-linked property. To obtain a clover graph, we wrap a layered planar graph in a spiral in a slightly different way.

\begin{lemma}
\MCTSAT\ is NP-hard. 
\end{lemma}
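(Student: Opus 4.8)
The plan is to mimic Pilz's spiral construction, but wrap the layered planar instance from the previous lemma into a ``clover'' shape rather than a single spiral. Starting from an instance of \MLPTSAT\ with its layered embedding, I would first record that the variable vertices all lie on the layer $y=0$ (after the reduction in the previous lemma we may assume this, or at worst the variables occupy layers $y\equiv 0\pmod 2$; I would massage the embedding so that a single horizontal ``variable line'' $y=0$ carries all variable vertices, pushing positive clause layers to $y>0$ and negative clause layers to $y<0$). The key structural fact is that in such an embedding the variable vertices already come in a left-to-right order along the line $y=0$, so the path edges $(x_i,x_{i+1})$ can be drawn along that line; the layered structure then gives us, for each clause, an edge reaching strictly up (positive clause) or strictly down (negative clause) to its variables.

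Next I would build the clover. Take the half-plane $y\ge 0$ part of the drawing (containing the positive clauses) and ``roll it up'' to the right, and take the half-plane $y\le 0$ part (the negative clauses) and roll it up to the left, so that the two spirals sit on opposite sides of the variable line. Concretely, this is the standard trick: replace the straight variable line by a long path, place the positive-clause spiral so that consecutive positive clauses $c_1,\dots,c_p$ appear in order as we spiral outward and consecutive clauses are adjacent (giving the edges $(c_i,c_{i+1})$), and symmetrically for the negative clauses $d_1,\dots,d_q$. Because each clause's edges all go to the same side, no clause edge ever needs to wrap around the variables; the four ``frame'' edges $(x_1,c_1),(c_1,x_n),(x_n,d_1),(d_1,x_1)$ are added as the outermost boundary of the drawing precisely to block any edge from crossing to the far side of the variable path, exactly as described in the caption of \Cref{fig:Clover3SAT}. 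I would verify that the resulting graph is planar (the two spirals are drawn in disjoint regions glued along the variable path, and within each region planarity is inherited from the layered planar embedding after the rolling-up homeomorphism) and that it has exactly the edge set demanded by the definition of \CTSAT. Monotonicity is immediate: positive clauses come from the upper half-plane (layers $4k+1$), negative clauses from the lower half-plane (layers $4k-1$), matching the \MLPTSAT\ monotonicity convention.

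Finally, since the transformation is a planar redrawing plus the addition of $O(1)$ frame edges and changes nothing about the formula $\Phi$ itself, satisfiability is preserved and the reduction is clearly polynomial-time. Combined with the NP-hardness of \MLPTSAT\ from the previous lemma, this shows \MCTSAT\ is NP-hard.

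The main obstacle I anticipate is the bookkeeping in the ``rolling up'' step: one must be careful that the cyclic order of clauses along each spiral is consistent with the cyclic (rotational) order in which their edges leave the variable vertices, so that the edges $(c_i,c_{i+1})$ and $(d_i,d_{i+1})$ can be added without forcing crossings. Pilz handles this for a single spiral; the only new wrinkle here is doing it twice, on the two sides of the variable path, and checking that the two halves do not interfere — which is where the four frame edges do their work. I would present this carefully with a figure analogous to \Cref{fig:Clover3SAT}, and otherwise lean on Pilz~\cite{PilzLayeredPlanar3SAT} for the details of converting a layered planar instance into a spiral.
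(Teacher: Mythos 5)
There is a genuine gap, and it is located exactly at your first step: the assumption that the \MLPTSAT\ instance can be ``massaged'' so that all variable vertices lie on a single line $y=0$ with the positive clauses in the upper half-plane and the negative clauses in the lower half-plane. The instance produced by the previous lemma does not have this form and cannot be brought into it: the edge-splitting that creates the layered structure necessarily introduces auxiliary variables on layers $\pm 2,\pm 4,\dots$, each wedged between two clause layers it is adjacent to, and collapsing all variables onto one line while keeping each clause's edges on one side would simply recreate the original \MPTSAT\ drawing with nested clauses. But that is precisely the situation in which the clause-linking paths \emph{cannot} be added planarly (a clause nested inside another cannot be joined to its successor without crossing the outer clause's legs); this obstruction is the whole reason Pilz's reduction, and the paper's, passes through the layered version at all. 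Your monotonicity claim inherits the same misconception: in \MLPTSAT\ the positive clauses are those on layers $y\equiv 1\pmod 4$ and the negative ones on layers $y\equiv 3\pmod 4$, so positive and negative clause layers alternate both above \emph{and} below $y=0$; ``positive $=$ upper half-plane'' is simply false for the layered instance.

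Consequently the two-spiral picture you build on top of this premise does not get off the ground, and ``rolling up'' each half-plane does not by itself make nested clauses consecutive along a path. The paper's proof instead keeps the full multi-layer drawing and wraps \emph{all} the layers around a center in a single spiral (\Cref{fig:CloverWrapping1}): since the layer pattern has period four (variable, positive-clause, variable, negative-clause), the wrapping brings all layers $\equiv 1 \pmod 4$ together so they can be threaded by one clause path, all layers $\equiv 3\pmod 4$ together for the other clause path, and the variable layers into a single variable path, after which the four frame edges can be added on the outside; the formula is untouched, so satisfiability is preserved. If you want to salvage your write-up, replace the flattening step by this single-spiral wrapping of the layered drawing (equivalently, argue directly in the layered drawing that with a suitable left-to-right ordering of the vertices in each layer the three linking paths and the four frame edges can be routed around the layer ends without crossings).
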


\begin{proof}
We can add edges to a layered planar graph from an instance of \MLPTSAT\ to turn it into a clover graph, as shown in \Cref{fig:CloverWrapping1} (left). This can be illustrated more clearly by wrapping the layered planar graph graph around in a spiral, as shown in \Cref{fig:CloverWrapping1} (right). One of the clause paths goes through all the positive clause vertices and the other goes through the negative clause vertices, so this produces a monotone instance of \CTSAT.
\end{proof}

\begin{figure}
\centering
\includegraphics[page=2]{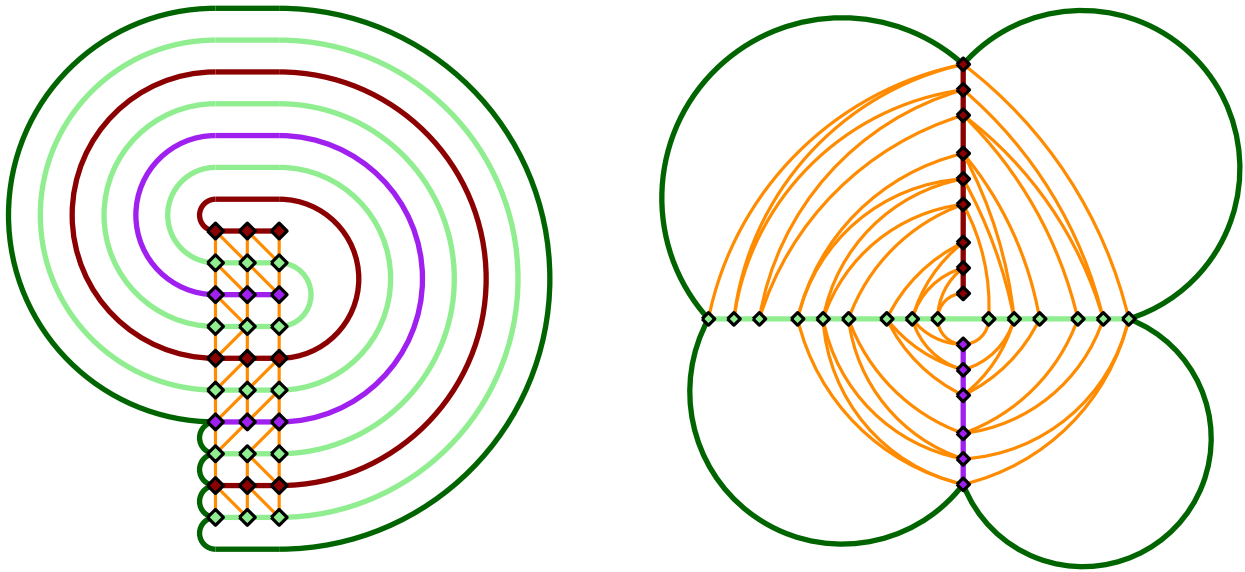}
\caption{Left: If we order the vertices appropriately, then we can add the extra edges to convert a layered-planar graph into a clover graph. The positive clauses are shown in red and the negative clauses are shown in purple.
Right: The same graph can be put into a form like that in Figure \ref{fig:Clover3SAT}. The layered planar graph is wrapped around the center in a spiral.
}
\label{fig:CloverWrapping1}
\end{figure}

The planar embedding of the clover graph does not appear directly in our construction. Instead, we use the embedding to extract some combinatorial structure from the graph. The conversion from a clover graph to a schematic for the orthogonally convex construction is shown in \Cref{fig:ConvexSchematic}. \Cref{lem:ConvexConversion} describes the properties that we will need.

\begin{figure}
\centering
\includegraphics[page=1]{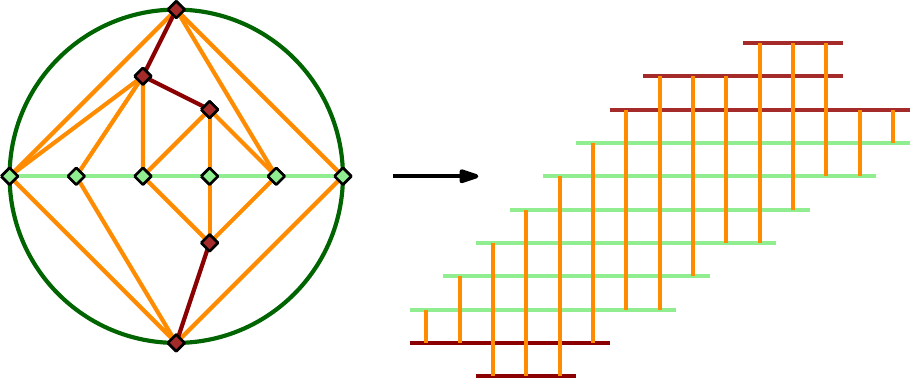}
\caption{Converting a clover graph to a schematic. Note that the order of the clauses is reversed---the innermost clauses in the graph become the outermost clauses in the schematic.}
\label{fig:ConvexSchematic}
\end{figure}

\begin{lemma}\label{lem:ConvexConversion}
An instance of \CTSAT\ can be used to produce a schematic that is orthogonally convex in the following sense:
each $y$-coordinate represents a variable or clause row, with variable rows in the middle and sets of clause rows above and below the variable rows.
Each of the upper clause rows covers a strictly smaller range of $x$-coordinates than the row below it, and each of the lower clause rows covers a strictly smaller set of $x$-coordinates than the row above it.
If a variable row spans the $x$ coordinates $[a, b]$, then the one above it spans $x$-coordinates $[c, d]$ with $a < c < b < d$.

If the instance is monotone, then the lower set of clause rows represent negative clauses and the upper set of clause rows represent positive clauses. Furthermore, we have the following property: 

Suppose there is a clause $y_i\vee y_j\vee y_k$ and the literal column for $y_i$ passes through the row for another clause $c$, then the literal columns for $y_j$ and $y_k$ must also pass through the clause row for $c$ (here the $y_i$ are literals that could be of the form $x_j$ or $\neg x_j$).
\end{lemma}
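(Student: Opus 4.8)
The plan is to read the required schematic off the planar embedding of the clover graph $G$, in the same spirit as the inductive construction of the simple schematics in \Cref{sec:schematic}, but taking care to produce \emph{strictly nested} rows rather than merely overlapping ones. The first step is to extract the combinatorial data we need from the embedding: for each variable $x_i$, the left-to-right order in which its incident literal edges leave it, noting that the edges going up to the $c$-clauses form one contiguous block and those going down to the $d$-clauses another, since the four outer edges $(x_1,c_1),(c_1,x_n),(x_n,d_1),(d_1,x_1)$ confine all $c$-edges above and all $d$-edges below the variable path; and the \emph{nesting partial order} of the $c$-clauses among themselves and of the $d$-clauses among themselves, where clause $A$ is nested below clause $B$ if $A$ together with all its literal edges lies in a single face cut off by $B$, two consecutive literal edges of $B$, and part of the variable path. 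Planarity guarantees that this really is a partial order, and, more importantly, that whenever $A$ is nested below $B$ all three literal edges of $A$ leave the variable path inside a single gap between consecutive edges of $B$; this is the property that ultimately makes the key statement of the lemma true.

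Next I would fix coordinates. The $n$ variable rows get consecutive $y$-coordinates in the middle, $x_1$ at the bottom and $x_n$ at the top, with $x$-spans $[a_i,b_i]$ chosen so that $a_1<\dots<a_n<b_1<\dots<b_n$; consecutive variable rows then satisfy the staircase condition $a<c<b<d$ and all variable rows share the common overlap $[a_n,b_1]$. The spans are taken wide enough that each variable row has room, inside the window where all rows above or below it that its columns must traverse overlap, to host its literal columns in the prescribed order. I would then reindex the $c$-clauses along a linear extension of the reverse nesting order, so that a clause's row lies above the rows of all clauses it is nested below, place them as the successive rows above the variable block, and give the $t$-th such row an $x$-span that is strictly contained in the span of the row just below it and equal to a slight thickening of the smallest interval containing all literal columns of the clauses at level $t$ and above; the $d$-clauses are placed symmetrically below. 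Since the columns of a deeper clause form a subset of those of a shallower one, these clause spans are automatically nested; the only thing to check is that the shallowest $c$-clause row fits strictly inside $x_n$'s span, which holds because every $c$-side column lies in the common overlap window, and this window can be made as wide as needed. Finally, every literal edge of $G$ is realized as a single vertical segment at the $x$-coordinate fixed above, joining its variable row to its clause row.

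It then remains to verify the three assertions. For orthogonal convexity: a horizontal line meets a single row, hence an interval; a vertical line $x=\xi$ meets exactly the rows whose span contains $\xi$, and by the shifted-nested shape of the variable spans and the strictly nested shape of the clause spans above and below them, this is a contiguous block of $y$-coordinates. The monotone statement is immediate: in an instance of \MCTSAT\ the clauses $c_i$ are positive and become the upper rows, while the $d_i$ are negative and become the lower rows. For the key property, a literal column of a clause $A$ reaches exactly the clause rows sitting between the variable block and $A$'s own row; every such row lies below $A$ in the stack, so by construction its span contains all the literal columns of clauses at $A$'s level or above, in particular all three columns of $A$ --- hence the three columns of $A$ pass through precisely the same clause rows, which is (more than) what is asserted.

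The main obstacle is the coordinate bookkeeping in the middle step: simultaneously making every clause row strictly narrower than the one below it, keeping the variable rows in the $a<c<b<d$ staircase, and still leaving room in the various overlap windows to place all literal columns in the order dictated by the embedding. This is exactly where planarity of the clover graph is essential: it forces the nesting structure of same-side clauses to be a forest and the per-variable edge orders to be consistent with it, and, granted that structure, turning it into concrete (polynomially bounded) coordinates satisfying all the strict inequalities is a routine induction on the number of clauses, parallel to the argument already used for the simple schematics.
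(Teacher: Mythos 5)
Your overall plan---staircase variable rows with nested clause rows above and below, then checking orthogonal convexity and the pass-through property---matches the paper's, but the way you obtain the nesting misses the key idea, and this is a genuine gap. The paper's proof rests entirely on the clause-link edges $(c_i,c_{i+1})$ that define \CTSAT: by induction, the three legs of $c_i$ together with the link from $c_{i-1}$ cut the current region $R_i$ into four subregions, and since all later clauses are joined to $c_{i+1}$ by the clause path, they all lie in a single one of these subregions. This forces a \emph{chain} of nested variable intervals $I_1\supset I_2\supset\cdots$ and, crucially, the extra property that each variable of $c_i$ is either not in $I_{i+1}$ or an endpoint of it. Your proposal never uses the clause links at all: you work with a ``nesting partial order'' of same-side clauses of a general planar embedding, take an arbitrary linear extension, and then make the clause-row spans nested \emph{by definition}, as thickened hulls of all literal columns at that level and above. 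Your justification ``the columns of a deeper clause form a subset of those of a shallower one'' is false for incomparable clauses under such a linear extension; the nesting in your construction holds only because of how you defined the spans, not because of any structure of the instance.

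This artificial nesting is why the argument looks easy: it would apply verbatim to any monotone 3SAT instance with any clause order, without ever invoking planarity or the clover structure, which is a strong sign that it does not produce the schematic the lemma is about. Concretely, what is lost is the property (delivered by the paper's endpoint condition on the $I_i$) that each clause's three literal columns sit at or beyond the ends of the spans of all clause rows farther out; this is what later allows the SWITCH gadgets that create those literal rows and columns to sit on the polygon boundary at the two ends of the clause's rows (\Cref{fig:ConvexSchematic}, \Cref{fig:ConvexORSchematic}). With hull-based spans and freely chosen column coordinates inside a common overlap window, a clause's columns can end up strictly in the interior of the spans of the rows above it, so the schematic cannot be realized by the clause components. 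A corrected proof along your lines would have to replace the ``routine induction'' you defer to at the end by exactly the four-region induction along the clause path, deriving the nested intervals $I_i$ and the endpoint condition from planarity together with the links---which is precisely the content of the paper's argument.
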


\begin{proof}
Start with an instance of \CTSAT\ with vertices representing variables $x_1,\dots, x_n$ and clauses $c_1, \dots, c_p$ and $d_1, \dots, d_q$. The $i$th variable row from the bottom represents the variable $x_i$. The structure of the variable rows is straightforward, but we should check that it is possible to arrange the clause rows appropriately. 

In our schematic, the row for $c_{i+1}$ is directly above the row for clause $c_{i}$ (and similarly $d_{i+1}$ has a row below the row for $d_{i}$). We would like to show that it is possible to define, for a clause $c_i$, an interval $I_i\subseteq \{1,\dots, n\}$ such that:

\begin{itemize}
    \item $I_{i+1}\subset I_i$
    \item If $c_i$ has  variables $x_j, x_k$ and $x_\ell$, then $j,k,\ell\in I_i$ and each of $j,k,\ell$ is either not in $I_{i+1}$ or is an endpoint of $I_{i+1}$
\end{itemize}

We show this by induction on $i$. First, we set $I_1=\{1,\dots, n\}$. We will also inductively define regions $R_i$ in the plane. Set $R_1$ to be the region bounded by the path through the variable vertices in $G$ and the edges $(x_1, c_1)$ and $(c_1, x_n)$.

Let $I_i=[a, b]$. Say the variables in $c_i$ are $x_j, x_k$ and $x_\ell$ with $j\le k\le \ell$. The indices $i, j$ and $k$ are in $I_i$ by inductive assumption. In the graph $G$, $c_i$ is represented by a vertex that has three ``legs''. These legs and the link from $c_{i-1}$ divide $R_{i}$ into four subregions (if $i=1$, then $c_i$ is already on the boundary of $R_i$, so the $3$ legs are enough to produce $4$ regions). The clause $c_{i+1}$ must be in one of these regions, along with all the further clauses (see \Cref{fig:CloverRegions}). Each of these regions is adjacent to a variable vertices with indices in one of the intervals:

\[[a, j], [j, k], [k, \ell], [\ell, b]\]

\begin{figure}
\centering
\includegraphics[page=4]{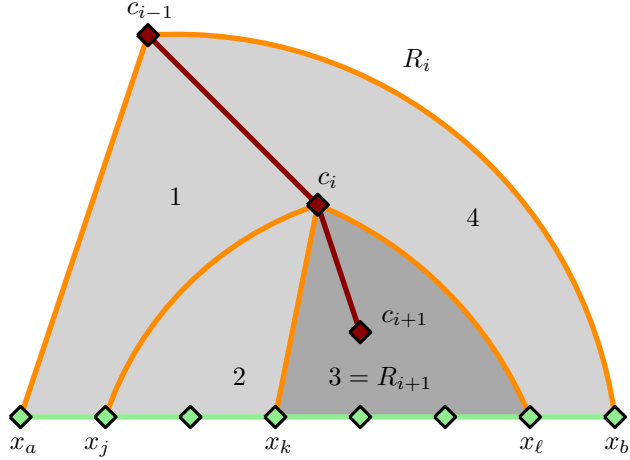}
\caption{The region $R_i$ is separated into $4$ smaller regions, one of which is $R_{i+1}$.}
\label{fig:CloverRegions}
\end{figure}

So define $R_{i+1}$ to be the region that contains $c_{i+1}$ and define $I_{i+1}$ to be the corresponding interval. The indices $i$, $j$, and $k$ are all either outside of $I_{i+1}$ or an endpoint of it. Since the edges corresponding to the literals in $c_{i+1}$ can't cross boundary of the region $R_{i+1}$, we see that $I_{i+1}$ contains the indices of all the variables that appear in $c_{i+1}$.

These intervals can then be used to draw the clauses $c_i$ in our schematic. Since the intervals are nested, this schematic is orthogonally convex. We can do the same thing for the $d_i$ (only upside-down).

In a monotone instance, the $c_i$ are positive and the $d_i$ are negative, so all the upper clauses are positive and all the lower clauses are negative.

For the last part, note that all the literal columns for clause $c_i$ pass through the row for clause $c_j$ for each $j<i$ and don't intersect the row for clause $c_{k}$ when $k$ is $>i$ (similarly for the $d_i$). Also, a literal column for a clause $c_i$ can never intersect the row for a clause $d_k$ (and visa-versa).
\end{proof}

In analogy with the common definition of \MPTSAT, we have so far put the positive clauses above the variables and the negative clauses below the variables. This is of course arbitrary. There are several choices like this involved in our construction, and it seems to us that it is not possible to simultaneously take the most natural choices for all of them. For this reason, we will actually use rotated schematics where the negative clauses are above the variables and the negative clauses are below.

\subsection{Stacks, reference centers, and crossings}

In the reduction of \Cref{sec:simplepacking}, the reference centers formed a regular grid.
In the orthogonally convex case, a few rows will have reference centers of form $[2k, 2k+1]\times [2\ell-1, 2\ell]$ instead of $[2k-1, 2k]\times [2\ell-1, 2\ell]$.
The following lemma generalizes \Cref{lem:referenceCenters}.

\begin{lemma}\label{lem:ConvexReferenceCenters}
Suppose that $P$ is grid polygon, $I\subset \mathbb{Z}$ and let the reference centers be squares in $P$ of form $[2k-1, 2k]\times [2\ell-1, 2\ell]$ for $\ell\in I$ and $[2k, 2k+1]\times [2\ell-1, 2\ell]$ for $\ell\notin I$.
Then any $2\times 2$ square in $P$ that has integer coordinates contains exactly one reference center.
\end{lemma}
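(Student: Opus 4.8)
The statement is a direct generalization of \Cref{lem:referenceCenters}, and the proof strategy is the same: reduce everything to a parity check on the coordinates of the lower-left corner of a $2\times 2$ square. A $2\times 2$ square with integer coordinates is $[a, a+2]\times[b, b+2]$ for integers $a,b$. Its interior contains exactly four unit cells of the integer grid, namely $[a,a+1]\times[b,b+1]$, $[a+1,a+2]\times[b,b+1]$, $[a,a+1]\times[b+1,b+2]$, and $[a+1,a+2]\times[b+1,b+2]$. Exactly one of these four cells has both coordinates in the ``even-starting'' pattern $[2k-1,2k]\times[2\ell-1,2\ell]$ — this is just the observation that among $\{a, a+1\}$ exactly one is of the form $2k-1$ and among $\{b,b+1\}$ exactly one is of the form $2\ell-1$ — and similarly exactly one has both coordinates in the ``shifted'' pattern $[2k,2k+1]\times[2\ell-1,2\ell]$, again because exactly one of $\{a,a+1\}$ is of the form $2k$.

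First I would fix the square $S = [a,a+2]\times[b,b+2]\subseteq P$ and fix a row index $\ell$ with $[b,b+2]$ meeting the horizontal band at height $\{2\ell-1,2\ell+1\}$; more concretely, I would split on the parity of $b$. The vertical pattern is the same in both the $I$ and the non-$I$ case ($[2\ell-1,2\ell]$), so exactly one of the two cells $[\cdot,\cdot]\times[b,b+1]$ and $[\cdot,\cdot]\times[b+1,b+2]$ stacked inside $S$ has its $y$-interval of the form $[2\ell-1,2\ell]$; call its $y$-range $Y$. Now within that horizontal slab, I would check the two sub-cases: if the $\ell$ governing $Y$ lies in $I$, the reference cells in that row are $[2k-1,2k]\times Y$, and exactly one of the two cells $[a,a+1]\times Y$, $[a+1,a+2]\times Y$ has its $x$-interval of that form; if $\ell\notin I$, the reference cells are $[2k,2k+1]\times Y$, and again exactly one of the two candidate cells qualifies. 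In either case there is exactly one reference center contained in $S$, since a reference center is by definition a unit cell that both matches the pattern and lies in $P$, and all four unit subcells of $S$ lie in $P$ because $S\subseteq P$ and $P$ is closed.

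The one genuinely new wrinkle compared to \Cref{lem:referenceCenters} is that the pattern for the row containing the relevant cell depends on whether $\ell\in I$, but since both patterns have the same vertical part and differ only by a horizontal shift of $1$, the counting of the $x$-coordinate is structurally identical in the two cases, so no case analysis beyond the one above is needed. I do not expect any real obstacle here; the only thing to be careful about is that the reference centers in different rows can use different patterns, but since a fixed $2\times 2$ square with integer coordinates occupies exactly one ``odd/even'' pair of rows — i.e. exactly one value of $\ell$ is relevant to it — there is no interaction between rows. Hence the proof is a finite inspection of parities exactly as in \Cref{lem:referenceCenters}, now carried out separately for $\ell\in I$ and $\ell\notin I$, and in both subcases it yields a unique reference center.
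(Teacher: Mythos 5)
Your proof is correct and follows essentially the same route as the paper's (much terser) argument: the $y$-span of the $2\times 2$ square meets exactly one row of reference centers, and since the square has width $2$, that row contributes exactly one reference center whether its horizontal pattern is $[2k-1,2k]$ or $[2k,2k+1]$. Your explicit case split on $\ell\in I$ versus $\ell\notin I$ and the remark that the cell lies in $P$ because $S\subseteq P$ are just spelled-out versions of what the paper leaves implicit.
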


\begin{proof}
A $2\times 2$ square with integer coordinates has a span of $y$-coordinates that overlaps with one row containing reference centers.
As the square has width $2$, it contains a single reference center in that row.
\end{proof}

The color scheme used in \Cref{sec:simplepacking} is not so useful for the orthogonally convex gadgets. The figures in this section will use a looser color-coding of the reference centers based on the function that each square serves in the construction.

Next, we introduce the concept of a \emph{stack}. This is a generalization of the pyramids from \Cref{sec:simplepacking}. 

A \emph{vertical stack} is a maximal connected set of squares that share their vertical alignment. A \emph{horizontal stack} is a maximal connected set of squares sharing horizontal alignment. We say that a vertical stack \emph{pushes up} if all the squares in the stack push up, and \emph{pushes down} otherwise. Similarly, a horizontal stack \emph{pushes right} if all the squares push right, and \emph{pushes left} otherwise.

Most of our reference centers are of the form $[2k-1, 2k]\times [2\ell-1, 2\ell]$, so in general a vertical stack always pushes either up or down and a horizontal stack pushes either right or left. However, each variable component will require exactly one row where the reference centers are of the form $[2k, 2k+1]\times [2\ell-1, 2\ell]$. A horizontal stack containing these squares can't be given a left/right direction in general. We will avoid discussing horizontal stacks that contain these squares.

A \emph{static row} is a row of squares where an edge of the polygon pushes the rightmost square to the left or pushes the leftmost square to the right.
All the static rows will have reference centers of the form $[2k-1, 2k]\times [2\ell-1, 2\ell]$.
A \emph{left static row} has squares that push left and a \emph{right static row} has squares that push right.
If a pair of neighboring static rows have opposite horizontal alignments, we call this a horizontal \emph{static shift}. 

In \Cref{sec:simplepacking}, we saw that a pyramid (i.e., a vertical stack) always grew when passing between two unaligned rows.
The following lemma makes a similar claim.

\begin{lemma}\label{lem:StaticShift}
Any vertical stack passing through a static shift grows in width by at least $1$ square.
\end{lemma}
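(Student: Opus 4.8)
The plan is to adapt the three–line argument used for Lemma~\ref{lem:PyramidGrowth} to the setting of a single static shift. The two facts I would use about static rows are that, in a perfect packing, such a row is completely filled (every column carries exactly one square, since the row is terminated by an edge of the polygon), and that all of its squares share the same horizontal alignment, so the only remaining freedom is whether each individual square pushes up or down. I would first reduce to a single configuration: a static shift consists of two neighbouring static rows with opposite horizontal alignments, and by applying a horizontal and/or a vertical reflection of the polygon — which preserves the notion of a vertical stack, as well as the quantity ``number of squares of $V$ in one of the two rows minus the number in the other'' — I may assume that the lower row $R$ is left–aligned, the upper row $R'$ is right–aligned, and $V$ pushes up. The other three combinations of alignment and push–direction are images of this one under such reflections. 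Fix coordinates so the reference centres of $R$ are the cells $[2c-1,2c]\times[2\ell-1,2\ell]$; then the column–$c$ square of $R$, when it pushes up, is $[2c-2,2c]\times[2\ell-1,2\ell+1]$, while the column–$c$ square of $R'$, being right–aligned, is $[2c-1,2c+1]\times[\cdot,\cdot]$.

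The heart of the argument is the following local claim. Let $S$ be the set of columns $c$ for which the column–$c$ square of $R$ lies in $V$, so that $|V\cap R|=|S|$. I claim that for every $c\in S$ the column–$c$ square of $R'$ lies in $V$, and in addition the column–$(\min S-1)$ square of $R'$ lies in $V$. For the first part, write $s$ for the column–$c$ square of $R$; since $c\in S$ and $V$ pushes up, $s=[2c-2,2c]\times[2\ell-1,2\ell+1]$. The column–$c$ square $t$ of $R'$ has $x$–range $[2c-1,2c+1]$, which meets the $x$–range of $s$ in the interval $[2c-1,2c]$ of positive length; hence $t$ cannot push down without overlapping $s$, so $t$ pushes up and sits flush on top of $s$ along a segment of length $1$. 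Thus $t$ is connected to $s$ and has the same vertical alignment, so $t\in V$. For the second part the same reasoning applies to the column–$(\min S-1)$ square $t'$ of $R'$: its $x$–range $[2\min S-3,\,2\min S-1]$ meets the $x$–range $[2\min S-2,\,2\min S]$ of the (push–up) column–$(\min S)$ square of $R$ in the interval $[2\min S-2,\,2\min S-1]$ of positive length, so $t'$ must push up and is flush on part of the top of that square, hence $t'\in V$. (That the column $\min S-1$ of $R'$ exists at all uses the hypothesis that $V$ passes through the shift, so the two static rows overlap on every column that $V$ visits.)

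It remains to put this together. Identifying each square of $V$ in $R'$ with the column it occupies, the resulting set of columns contains $S\cup\{\min S-1\}$, which has $|S|+1$ elements because $\min S-1\notin S$. Therefore $|V\cap R'|\ge |S|+1=|V\cap R|+1$, i.e.\ the width of $V$ in the upper of the two static rows exceeds its width in the lower by at least one, which is the assertion of Lemma~\ref{lem:StaticShift}.

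The step I expect to be most delicate is not any individual computation but the bookkeeping in the local claim. One must make sure the conclusion does not secretly assume that $V$ occupies a contiguous block of columns in each row — it need not, because the polygon need not be filled in the strip between two adjacent rows — and that the relevant column of the upper static row is really present. Focusing on the single column $\min S$ of $V\cap R$ (rather than trying to track the whole column interval) is what lets the argument go through without any contiguity hypothesis, and the fact that $V$ genuinely passes through the shift is exactly what supplies the extra boundary column.
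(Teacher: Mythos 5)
Your proof is correct and follows essentially the same route as the paper: the paper's proof simply says that a push-up stack of width $k$ below the shift ``clearly'' forces $k+1$ squares above to push up, and your overlap computation (each square of the stack forces the square above it in the opposite-aligned row to push up, plus one extra square at the boundary column) is exactly the detail behind that ``clearly'', with the added care about non-contiguous columns being a harmless refinement.
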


\begin{proof}
Without loss of generality, we consider a vertical stack that pushes up.
If the stack has width $k$ in the row just below the static shift, then clearly, the stack pushes up $k+1$ squares in the row just above the static shift.
\end{proof}

We also need to understand the situation where a vertical stack crosses a horizontal stack.
\Cref{fig:StackCrossing} shows $4$ different ways in which this can happen.
As expressed by the following lemma, such a crossing always makes the stacks grow in width by at least $2$ squares in total.

\begin{figure}
\centering
\includegraphics[page=1,angle=90,width=\textwidth]{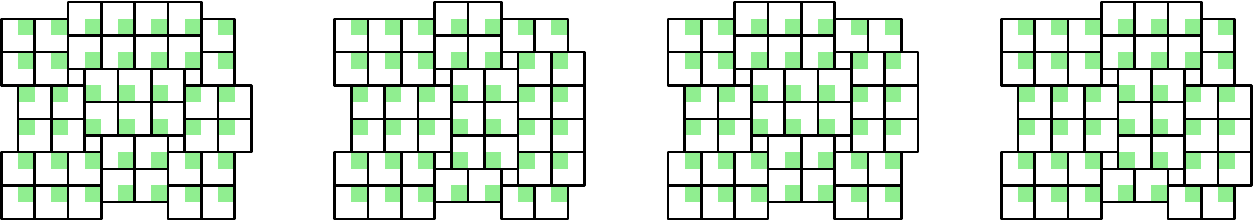}
\caption{
A vertical stack crossing a horizontal stack.
First: The horizontal stack maintains the same width, while the vertical stack increases in width by $2$ squares.
Second: The vertical stack maintains the same width while the horizontal stack grows by $2$ squares.
Third:
Each stack grows by $1$ square.
Fourth:
A different way for each stack to grow by $1$ square.
}
\label{fig:StackCrossing}
\end{figure}

\begin{lemma}\label{lem:StackGrowth}
Consider a set of reference centers with upper right corners $(2x,2y)$ for $x=0,\ldots,x_0+1$ and $y=0,\ldots,y_0+1$.
Consider a packing of $2\times 2$ squares where each of these reference centers is covered by a square, and let $\langle x,y\rangle$ denote the square covering the reference center with upper right corner $(2x,2y)$.
Suppose that
\begin{itemize}
\item $\langle 1,0\rangle,\ldots, \langle x_0,0\rangle$ push up, and

\item $\langle 0,1\rangle,\ldots, \langle 0,y_0\rangle$ push right.
\end{itemize}
Then it holds that
\begin{enumerate}
\item $\langle 1,y_0+1\rangle,\ldots, \langle x_0,y_0+1\rangle$ push up, \label{enum:1}

\item $\langle x_0+1,1\rangle,\ldots, \langle x_0+1,y_0\rangle$ push right, \label{enum:2}

\item $\langle x_0+1,y_0+1\rangle$ pushes up or right, and \label{enum:3}

\item $\langle 0,y_0+1\rangle$ pushes up or $\langle x_0+1,0\rangle$ pushes right. \label{enum:4}
\end{enumerate}

In conclusion, when a vertical stack crosses a horizontal stack in a region where the reference centers all have the form $[2k-1, 2k]\times [2\ell-1, 2\ell]$, then the two stacks together grow in width by at least $2$ squares.
\end{lemma}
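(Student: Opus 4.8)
The plan is to prove the four numbered claims essentially by a row-by-row and column-by-column propagation argument, mirroring the proof of \Cref{lem:PyramidGrowth} but carried out carefully in two dimensions. The key local fact I will use repeatedly is the following "monotonicity" observation about adjacent reference centers: if all the reference centers under consideration have the standard form $[2k-1,2k]\times[2\ell-1,2\ell]$, then two horizontally adjacent squares $\langle x,y\rangle$ and $\langle x+1,y\rangle$ that both cover their reference centers cannot overlap, and a short case analysis on whether each pushes up or down shows that if $\langle x,y\rangle$ pushes up then $\langle x+1,y\rangle$ pushes up unless it is the rightmost square of its up-pushing block; symmetrically in the vertical direction. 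Concretely, I would isolate this as a sub-observation: \emph{if $\langle x,y\rangle$ and $\langle x-1,y\rangle$ both push up, or $\langle x,y\rangle$ and $\langle x,y-1\rangle$ both push right, then no overlap constraint is violated, and conversely, a square directly above an up-pushing square, or directly right of a right-pushing square, is forced to cover its reference center in a compatible way.} This is exactly the kind of statement that follows "by inspection of the four combinations of parity," as in \Cref{lem:referenceCenters}.

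With that tool in hand, the argument proceeds in stages. First I would prove claim \ref{enum:1}: fix a column index $x\in\{1,\dots,x_0\}$ and induct on $y$ from $0$ up to $y_0+1$, showing $\langle x,y\rangle$ pushes up for every $y$; the base case is the hypothesis, and the inductive step is the vertical half of the local observation — a square sitting directly above an up-pushing square must itself push up, because otherwise the two squares would overlap. The same argument, run on rows instead of columns, gives claim \ref{enum:2}. Claim \ref{enum:3} then follows because $\langle x_0+1,y_0+1\rangle$ sits directly above $\langle x_0+1,y_0\rangle$, which pushes right by claim \ref{enum:2}, and directly to the right of $\langle x_0,y_0+1\rangle$, which pushes up by claim \ref{enum:1}; combining the two constraints forces it to push up or to push right. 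Claim \ref{enum:4} is the genuinely two-dimensional one: $\langle 0,y_0+1\rangle$ lies above $\langle 0,y_0\rangle$ (which pushes right) and $\langle x_0+1,0\rangle$ lies to the right of $\langle x_0,0\rangle$ (which pushes up); I would argue that if $\langle 0,y_0+1\rangle$ failed to push up and $\langle x_0+1,0\rangle$ failed to push right, one can trace a chain of forced placements around the corner of the crossing region that produces an overlap, so at least one of them must push the claimed way.

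Finally, the concluding width statement is a bookkeeping corollary of claims \ref{enum:1}--\ref{enum:4}. A vertical stack crossing the region enters with some width, say $w_v$, at the bottom (the up-pushing squares $\langle 1,0\rangle,\dots,\langle x_0,0\rangle$ contribute $x_0$ of this), and a horizontal stack enters with width $w_h$ from the left. Claims \ref{enum:1} and \ref{enum:2} show the "core" of each stack survives the crossing, while claims \ref{enum:3} and \ref{enum:4} show the corner squares force the combined boundary of the region to include two extra up/right-pushing squares beyond what entered; tallying which squares are forced up versus right on the far sides shows the total width of the two stacks above-and-to-the-right of the crossing exceeds $w_v+w_h$ by at least $2$. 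The main obstacle I anticipate is claim \ref{enum:4} together with making the final width count airtight: unlike \ref{enum:1}--\ref{enum:3}, which are clean one-dimensional inductions, \ref{enum:4} requires reasoning about the interaction of the two stacks at the opposite corner and ruling out a "diagonal leak," and the width bookkeeping must correctly attribute each forced square to exactly one of the two outgoing stacks so as not to double-count or undercount. I would handle this by appealing to the explicit enumeration of the four crossing patterns in \Cref{fig:StackCrossing} and checking the count in each.
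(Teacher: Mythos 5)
Your overall route---vertical/horizontal propagation for statements \ref{enum:1} and \ref{enum:2}, a corner analysis for \ref{enum:3} and \ref{enum:4}, and an informal width tally at the end---is essentially the paper's proof, and your treatment of \ref{enum:4} (assume $\langle 0,y_0+1\rangle$ pushes down and $\langle x_0+1,0\rangle$ pushes left, propagate inward, and get an overlap of $\langle 0,1\rangle$ and $\langle 1,0\rangle$ at the bottom-left corner) is just the contrapositive of the paper's argument and is fine. However, your justification of statement \ref{enum:3} has a genuine gap: the two constraints you cite do not force the conclusion. Suppose $\langle x_0+1,y_0+1\rangle$ pushes down and left, so it occupies $[2x_0,2x_0+2]\times[2y_0,2y_0+2]$. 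The square below it, $\langle x_0+1,y_0\rangle$, pushes right by \ref{enum:2}, but if it also pushes \emph{down} it occupies $[2x_0+1,2x_0+3]\times[2y_0-2,2y_0]$; the square to its left, $\langle x_0,y_0+1\rangle$, pushes up by \ref{enum:1}, but if it also pushes \emph{left} it occupies $[2x_0-2,2x_0]\times[2y_0+1,2y_0+3]$. These three placements are pairwise interior-disjoint, so "directly above a right-pushing square and directly right of an up-pushing square" does not rule out the down-left placement of $\langle x_0+1,y_0+1\rangle$.

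The missing ingredient is the \emph{diagonal} neighbor $\langle x_0,y_0\rangle$: by the same propagations used for \ref{enum:1} and \ref{enum:2}, it pushes up (column $x_0$ propagates up from $\langle x_0,0\rangle$) \emph{and} right (row $y_0$ propagates right from $\langle 0,y_0\rangle$), hence occupies $[2x_0-1,2x_0+1]\times[2y_0-1,2y_0+1]$, which overlaps the down-left placement of $\langle x_0+1,y_0+1\rangle$ in $[2x_0,2x_0+1]\times[2y_0,2y_0+1]$. This is exactly what \Cref{fig:CrossingLemma} (left) depicts, and with this substitution your claim \ref{enum:3} goes through. One further caution on your closing paragraph: the four configurations in \Cref{fig:StackCrossing} are illustrative examples, not an exhaustive enumeration of how a crossing can be packed, so "checking the count in each" is not by itself a proof of the final width statement; that statement should instead be read off from \ref{enum:1}--\ref{enum:4} directly, as the paper does.
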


\begin{figure}
\centering
\includegraphics[page=2]{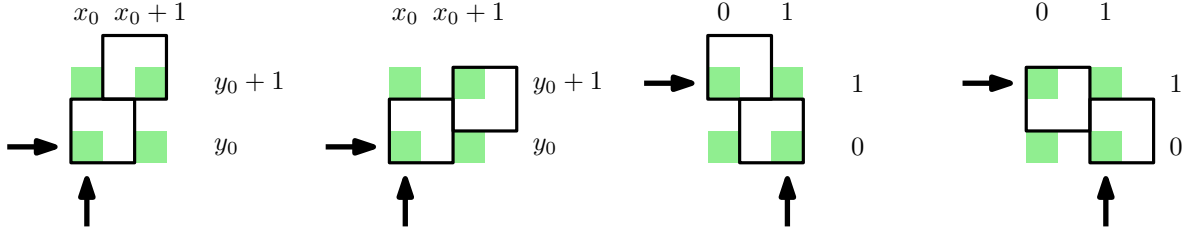}
\caption{Two left figures: The square $\langle x_0+1,y_0+1\rangle$ pushes up or right.
Two right figures: The square $\langle 0,1\rangle$ pushes up or $\langle 1,0\rangle$ pushes right.}
\label{fig:CrossingLemma}
\end{figure}

\begin{proof}
Statements \ref{enum:1} and \ref{enum:2} follow since the reference centers form a regular grid.
\Cref{fig:CrossingLemma} (left) shows that \ref{enum:3} follows.
\Cref{fig:CrossingLemma} (right) shows that $\langle 0, 1\rangle$ pushes up or $\langle 1,0\rangle$ pushes right.
Then statement \ref{enum:4} also follows.
\end{proof}

The concept of a stack was inspired by the work by El-Khechen, Dulieu, Iacono and van Omme \cite{DBLP:conf/cccg/El-KhechenDIO09} showing that these packing problems are in NP even when the coordinates of the polygon can be exponentially large. The methods in \cite{DBLP:conf/cccg/El-KhechenDIO09} seem to suggest that there could be several other types of crossings, but all of these would leave a reference center uncovered. These can be disregarded as long as the reference tiling is a regular grid. We will have to be more careful for stacks that pass through rows with differently-aligned reference centers.

\subsection{Verification order and membrane rows}

\begin{figure}
\centering
\includegraphics[page=1]{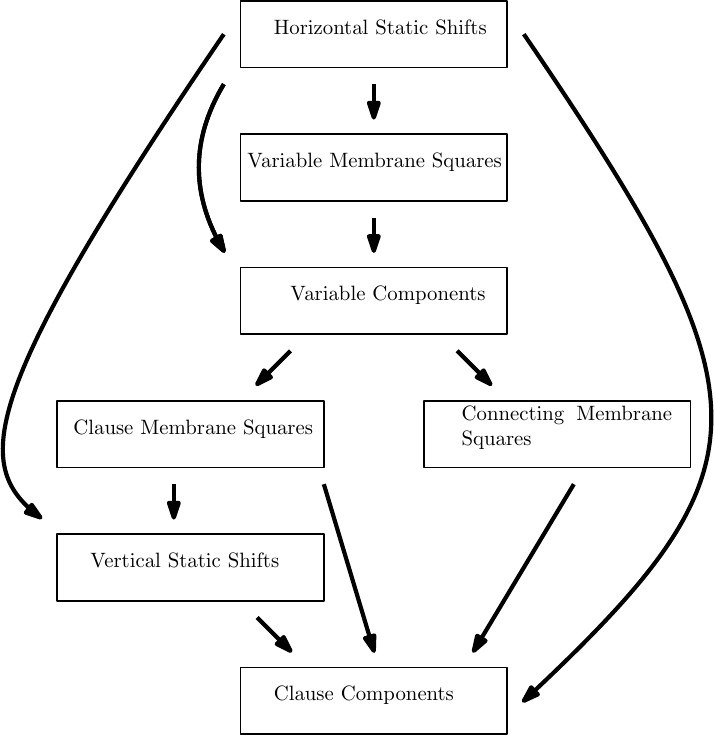}
\caption{Logical dependence of the different parts of the construction.}
\label{fig:Flowchart}
\end{figure}

We are given a formula $\Phi$ of \MCTSAT and describe how to construct an equivalent instance of \USPack with an orthogonally convex polygon.
This boils down to describing variable and clause components.
However, it won't be possible to fully verify the clause components until after the variable components are verified. Verifying the variable components, on the other hand, requires determining a few details about packings of squares in the clause components. The purpose of this section is to explain how the variable components interact with the clause components, allowing us to isolate the verification of the variable component from the verification of the clause component.

The variable components are separated from the clause components by two special static rows that we call the \emph{membrane rows}.
There are two such rows, one above and one below the variable components.
These separate the variable components from the upper and lower clause components, respectively. 
All the dependence of the variable components on the clause components is through these rows---the variable components don't (directly) depend on squares in the clause components.
Similarly, all the dependence of the clause components on the variable components is through these rows.
We say a square in one of the membrane rows \emph{pushes in} if it pushes towards the variable components---that is, it is in the upper membrane row and pushes down or is in the lower membrane row and pushes up.
A square in one of the membrane rows \emph{pushes out} if it does the opposite---that is, it pushes \emph{away} from the variable components.
We divide the squares in the membrane rows into $4$ types depending on the role that they play in the construction.
Each square in either of the two membrane rows will be one of these types.
The types are as follows:

\begin{itemize}
    \item \emph{Variable membrane squares} are squares that need to always push in order to verify the variable components.
    \item \emph{Connecting membrane squares} are squares that may or may not be allowed to push in depending on the position of a variable component. 
    \item \emph{Clause membrane squares} are squares that need to push out in order to verify the clause components.
    \item \emph{Spacing membrane squares} are all the other squares on the membrane rows. We will describe satisfying assignments that have these pushing out, but the verification doesn't depend on this; there could be some slack in these squares.
\end{itemize}

The variable membrane squares and connecting membrane squares are organized into \emph{blocks}, each containing some number of adjacent squares. The clause membrane squares are exactly the squares directly adjacent to a block of variable membrane or connecting membrane squares.

We will verify the variable components first, then verify the clause components (see \Cref{fig:Flowchart}). In particular, we will only know that the clause membrane squares push out \emph{after} verifying the variable components. However, we can show that the variable membrane squares push in already.
Each block of variable membrane squares is part of a vertical stack created in a clause component above or below the variable components.
An edge of the polygon pushes a square in, which creates the stack.
By \Cref{lem:StaticShift}, the vertical stack grows each time it passes through a static shift. By the time the stack reaches the membrane row, it has grown to the full width of the block of variable membrane squares, and so all those squares push in. \Cref{fig:VariableGadgetSquares} (left) shows a simplified example of this.
In order to be able to make this construction, we just need to make sure that each block of variable membrane squares has a number of squares equal to $1$ plus the number of static shifts above it (or below it for a block of variable membrane squares in the lower membrane row).

\begin{figure}
\centering
\includegraphics[page=1]{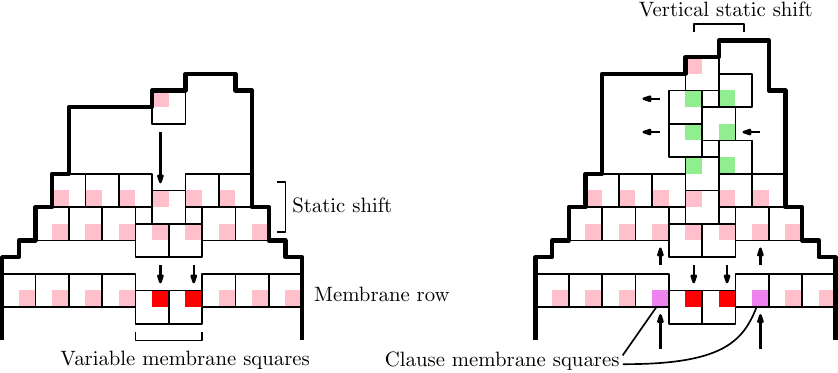}
\caption{Left: The variable membrane squares always push down.
Right: Each block of variable membrane squares has clause membrane squares on either side, creating a vertical static shift on each side.}
\label{fig:VariableGadgetSquares}
\end{figure}

\begin{lemma}\label{lem:VariableGadgetSquares}
The variable membrane squares push in.
\end{lemma}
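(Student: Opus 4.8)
The plan is to turn the informal argument in the paragraph preceding the lemma into a proof. Fix a block $B$ of variable membrane squares; by the up--down symmetry of the two membrane rows we may assume $B$ lies in the upper membrane row, so that ``pushing in'' for the squares of $B$ means pushing down. The first step is to exhibit the vertical stack responsible for $B$. By the way the construction is laid out, $B$ is the bottom end, at the membrane row, of a vertical stack $\sigma$ created higher up inside the clause component sitting above the variable components: there a single edge of the polygon forces one square to push down, and since a vertical stack is a \emph{maximal} connected set of squares all sharing the same vertical alignment, every square of $\sigma$ pushes down.

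The second step is to bound the width of $\sigma$ from below along its descent to the membrane row. At its top $\sigma$ has width $1$. On the way down it meets only static rows of the upper clause component, with a static shift at each change of horizontal alignment, and it meets no row whose reference centers have the exceptional form $[2k,2k+1]\times[2\ell-1,2\ell]$, since such rows occur only inside the variable components, which lie below the upper membrane row. By \Cref{lem:StaticShift}, passing each static shift increases the width of $\sigma$ by at least one, while at every other row the width of the downward-pushing stack cannot decrease (a down-pushing block of width $w$ in one static row forces a down-pushing block of width at least $w$ directly below it, by the same local reasoning as in \Cref{lem:PyramidGrowth}); hence the width of $\sigma$ is non-decreasing, and upon reaching the membrane row it is at least $1$ plus the number of static shifts above $B$. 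By construction each block of variable membrane squares is chosen to have exactly $1$ plus that many squares, so the portion of $\sigma$ in the membrane row contains all of $B$.

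The conclusion is then immediate: every square of $B$ belongs to $\sigma$, which pushes down, so every square of $B$ pushes down, i.e.\ pushes in. Exchanging ``up'' and ``down'' gives the same statement for a block in the lower membrane row, and since every variable membrane square lies in one of these blocks, the lemma follows.

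The step I expect to require the most care is the width bound: one must be sure that the width of $\sigma$ really is monotone non-decreasing all the way from its creating edge to the membrane row, so that the contributions of \Cref{lem:StaticShift} add up to a genuine lower bound rather than a heuristic. This rests on knowing exactly what $\sigma$ can meet along the way---only static rows of the clause component, with static shifts at the alignment changes, and nothing that could truncate $\sigma$ or shrink its width---which is precisely what the layout of the clause components in this section is designed to guarantee; \Cref{fig:VariableGadgetSquares} (left) shows the intended picture.
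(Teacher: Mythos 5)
Your proof is correct and follows essentially the same route as the paper's: a boundary edge in the clause component forces one square of the stack to push in, \Cref{lem:StaticShift} gives growth by one at each static shift, and the block of variable membrane squares is sized to exactly $1$ plus the number of shifts, so the stack covers the whole block at the membrane row. The only difference is that you spell out the (implicit in the paper) monotonicity of the stack's width between shifts, which is a harmless elaboration rather than a different argument.
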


\begin{proof}
Each block of variable membrane squares in the upper membrane row has width equal to $1$ plus the number of static shifts above it. Each block of variable membrane squares in the lower membrane row has width equal to $1$ plus the number of static shifts below it. 
At the top or bottom of the polygon, a section of polygon boundary forces one square to push in for each block of variable membrane squares. So by \Cref{lem:StaticShift}, the variable membrane squares push in.
\end{proof}

Each block of variable membrane squares will have a pair of clause membrane squares on either side.
This creates \emph{vertical static shifts} between the stack containing the variable membrane squares and the squares next to it.
\Cref{fig:VariableGadgetSquares} (right) shows a simplified example of this.

\begin{lemma}\label{lem:VerticalStaticShift}
If the clause membrane squares push out, then any horizontal stack that doesn't contain a static row and passes through a vertical static shift grows in size by $1$ square.
\end{lemma}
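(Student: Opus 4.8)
The plan is to prove this as the ``horizontal--vertical transpose'' of \Cref{lem:StaticShift}. That lemma says a vertical stack crossing a horizontal static shift (the interface between two oppositely horizontally-aligned static rows) gains at least one square; here we want the mirror statement, that a horizontal stack crossing a vertical static shift (the interface between two oppositely \emph{vertically}-aligned vertical runs of squares) gains at least one square. The genuinely new content, compared to an ordinary static shift where both sides are pinned by the polygon boundary, is that here only one side of the interface is directly pinned by the boundary; the other side is pinned only under the hypothesis of the lemma.

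First I would fix notation and reduce to one case. Flipping the construction vertically, we may assume the vertical static shift lies on the upper membrane row, at the left or right edge of some block $B$ of variable membrane squares, with the adjacent clause membrane square $s$ immediately outside $B$. By \Cref{lem:VariableGadgetSquares} every square of $B$ pushes in, i.e.\ down, and $B$ is the lower end of the vertical stack $A$ created in the clause component that forced it down; hence the columns of $B$ carry down-pushing squares over the whole band of rows spanned by $A$. On the other side, the hypothesis says $s$ pushes out, i.e.\ up; together with the structure of the clause component (where $s$ sits at the bottom of the vertical stack that forces it out), the column through $s$, and the neighbouring columns on that side, carry up-pushing squares over the corresponding band of rows. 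So along the vertical line $x=x_0$ separating $B$ from $s$ we have, over a band of rows, down-pushing squares on the $B$-side and up-pushing squares on the $s$-side: this is exactly a vertical static shift, cf.\ \Cref{fig:VariableGadgetSquares} (right). Note that a down-pushing square and an up-pushing square whose reference centers lie in the same row have bottom edges offset by one unit vertically, which is the feature that will drive the growth.

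Next I would run the growth argument. Let $H$ be a horizontal stack containing no static row, with squares on both sides of $x=x_0$; without loss of generality $H$ pushes right, and $H$ lies within the band of rows over which the shift is present. If $H$ has $k$ squares in the column of $A$ immediately on the down-pushing side of $x_0$, then, because the squares bounding $H$ above and below are offset by one unit as one crosses $x_0$ (down-pushing squares on one side, up-pushing on the other), $H$ has at least $k+1$ squares in its column immediately on the up-pushing side. This is the verbatim transpose of the one-line computation in the proof of \Cref{lem:StaticShift}, with ``width'' replaced by ``height'' and ``row'' by ``column''; the parity bookkeeping is the regular-grid case of reference centers of the form $[2k-1,2k]\times[2\ell-1,2\ell]$, exactly the regime of \Cref{lem:StackGrowth}. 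Hence $H$ grows by at least one square across the shift. The hypothesis that $H$ contains no static row is used precisely to rule out the degenerate case where $H$ is itself held flat (one square tall) by an edge of the polygon, in which case the offset would be absorbed by the boundary rather than by $H$ growing.

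The step I expect to be the main obstacle is the first one: making the vertical static shift ``robust'' enough---showing that the up-pushing run on the clause side really does extend over all the rows in which $H$ could cross $x=x_0$, and not merely over the single membrane row. This is the one place that must reach into the specific geometry of the clause component (the clause membrane square is the base of an out-pushing vertical stack there) and combine it with \Cref{lem:StaticShift} applied to that stack. Everything after that is a transpose of an argument already carried out, so the remaining work is routine.
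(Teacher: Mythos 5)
Your proposal is correct and follows essentially the same route as the paper: both arguments establish that the two sides of the interface have opposite vertical alignments over the relevant band of rows---the in-pushing stack containing the variable membrane squares on one side, and the out-pushing stacks grown from the clause membrane squares (propagated via \Cref{lem:StaticShift}, which is exactly how the paper handles the ``robustness'' step you flag) on the other---and then conclude by the transposed version of the growth argument in \Cref{lem:StaticShift}. No substantive difference in approach.
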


\begin{proof}
By \Cref{lem:StaticShift}, the stack containing the variable membrane squares grows at each static shift (moving inwards). By the same lemma, the two adjacent stacks created by the clause membrane squares grow at each static shift when \emph{moving out}. So the squares on opposite sides of a vertical static shift must have opposite vertical alignments, and the result follows by the same proof as \Cref{lem:StaticShift}.
\end{proof}

We are now ready to describe the variable components. 

\subsection{Variable components}\label{sec:ConvexVariableGadgets}

The purpose of this section is to describe the variable components and show how they can be packed in a satisfying assignment. In \Cref{sec:VariableGadgetVerification}, we will verify that any packing of the variable components must correspond to an assignment of variables in the \MCTSAT\ instance.

The variable components used in \Cref{sec:simplepacking} involved two rows that always have alternate alignment, sandwiched between two static rows that are aligned with each other.
Here, we instead use variable components where the two rows always have the same alignment; as shown in \Cref{fig:ConvexVar1}. In order for this to work, the bottom row of reference centers needs to have a different horizontal alignment than all the other reference centers.

\begin{figure}
\centering
\includegraphics[page=1]{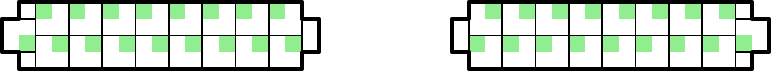}
\caption{The variable rows for orthogonally convex polygons.
The left figure shows what we call the \emph{minus} position and the right figure shows what we call the \emph{plus} position.}
\label{fig:ConvexVar1}
\end{figure}

Similarly as in \Cref{sec:simplepacking}, we designate the two positions of the variable components by plus and minus, with plus corresponding to squares pushing in a counterclockwise direction around the gadget and minus corresponding to squares pushing in a clockwise direction. The advantage of this type of variable component is that parts of its boundary can be formed by other static squares, allowing the boundary of the polygon to remain orthogonally convex. This is shown in \Cref{fig:ConvexVar2}. In order for these to be the only packings of the gadget, we need to know that the two marked squares push in. We call these the \emph{pinch squares}.

\begin{figure}
\centering
\includegraphics[page=2]{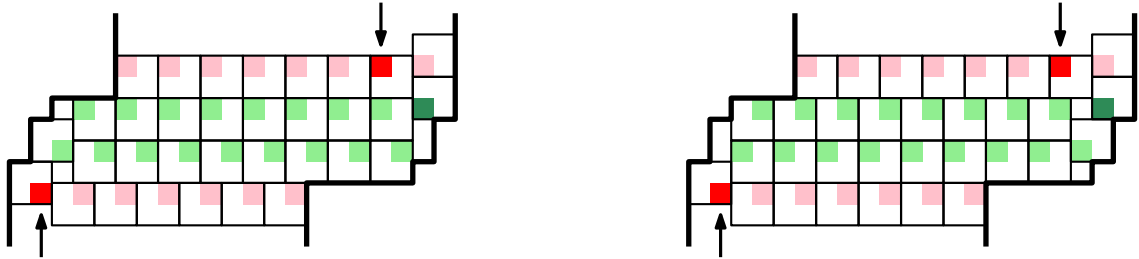}
\caption{The squares with red reference centers are the \emph{pinch} squares. These push in to form part of the boundary of the variable rows.}
\label{fig:ConvexVar2}
\end{figure}

The static rows above and below the variable component have different horizontal alignments, with the upper static rows pushing right and the lower static rows pushing left. As long as the pinch squares push in, the variable component will always be aligned with either the top or bottom row of static squares, so a stack passing through it always grows once, see \Cref{fig:ConvexVar3}.

\begin{figure}
\centering
\includegraphics[page=3]{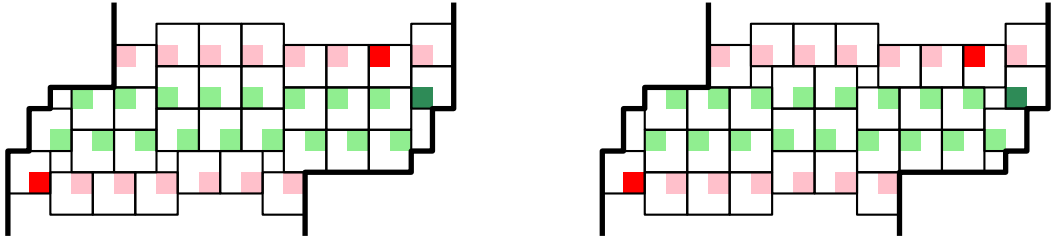}
\caption{A stack grows once when it passes through the variable component.}
\label{fig:ConvexVar3}
\end{figure}

The width of a block of pinch squares depends on the number of static shifts and variable components above it, so we should make sure that these blocks can be exactly as large as they need to be. In order to make sure the clause membrane squares push out, we should also make sure that the squares on either side of each block of pinch squares push out. \Cref{fig:ConvexVar4} shows how these can be accomplished.

\begin{figure}
\centering
\includegraphics[page=4]{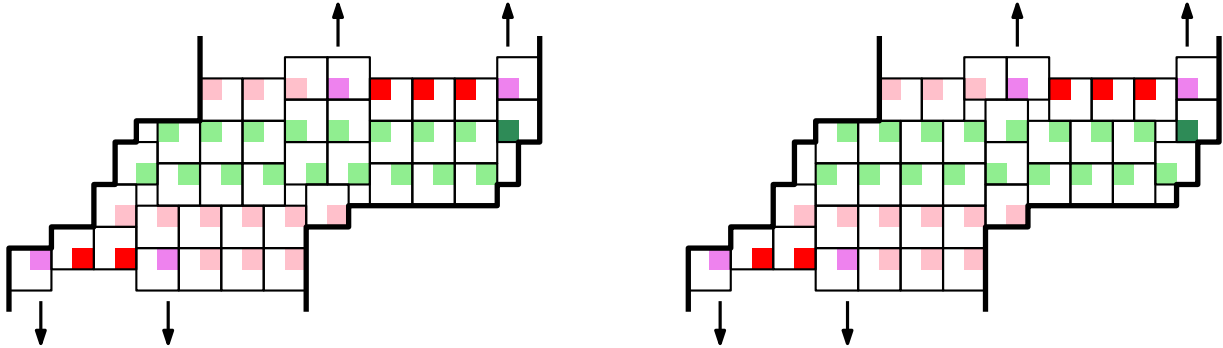}
\caption{The blocks of pinch squares can be arbitrarily wide, and we can ensure that the squares adjacent to each block of pinch squares push out.}
\label{fig:ConvexVar4}
\end{figure}

\subsubsection{Helper rows}\label{sec:HelperRows}

A few rows immediately above and below the variable rows in a variable component are \emph{helper rows}, and these will be necessary for our PUSH gadgets to work.
The PUSH gadgets will be described later in \Cref{sec:ConvexPushers}.
The \emph{upper helper rows} are immediately above the variable rows, and the \emph{lower variable rows} are immediately below the variable rows. A helper row is a row that is sometimes forced to split, with squares on the left of the split pushing left and squares on the right of the split pushing right. The squares in the helper rows are shown with light blue and dark green reference centers in our diagrams. The helper rows will satisfy the follow:

\begin{itemize}
    \item If the variable component is plus, then the squares in the upper helper rows push away from the helper row gadgets. That is, the squares in the upper helper rows that are to the right of the helper row gadgets (dark green reference centers) push right, and the squares in upper helper rows that are to the left of the helper row gadgets (light blue reference centers) push left.
    \item If the variable component is minus, then the squares in the lower helper rows push away from the helper row gadgets. That is, the squares in the lower helper rows that are to the right of the helper row gadgets (light blue reference centers) push right, and the squares in upper lower rows that are to the left of the helper row gadgets (dark green reference centers) push left.
\end{itemize}

When the helper rows don't split, they can have slack, but we will carefully design the variable components so that this slack can't propagate outside the helper rows. \Cref{fig:ConvexHelpers} shows how the helper rows are created. Each helper row gadget creates both an upper and a lower helper row, so the number of upper helper rows is the same as the number of lower helper rows.

In order for the helper rows to be created, the squares marked in \Cref{fig:ConvexHelpers} need to push down. We call these the \emph{helper row creation squares}.
The blocks of helper row creation squares can be any size, so we can always create the helper row gadget no matter how many static shifts and variable components are above. 

Since the polygon should be orthogonally convex, each helper row gadget requires adding an extra static row below it.

\begin{figure}
\centering
\includegraphics[page=5,width=\textwidth]{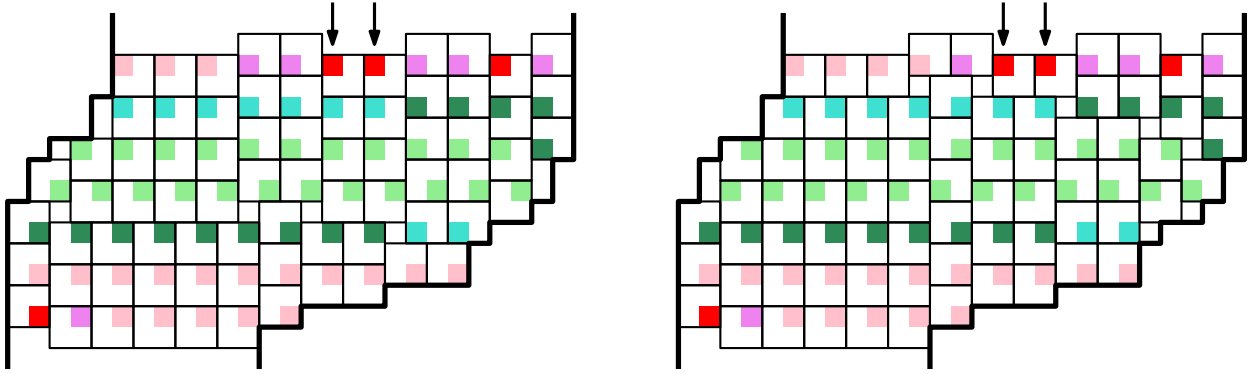}
\caption{The marked squares are the \emph{helper row creation squares}. The rows directly above and below the variable rows are the \emph{helper rows}. We may need a large number of helper rows, but there will always be the same number of upper helper rows as lower helper rows.
}
\label{fig:ConvexHelpers}
\end{figure}

\subsubsection{PUSH gadgets}\label{sec:ConvexPushers}

The variable components should control the blocks of connecting membrane squares in the membrane row.
Each block of connecting membrane squares forms part of a stack that terminates at a \emph{PUSH gadget} in one of the variable components. Some of the blocks of connecting membrane squares should depend on the variable components, but others should always be allowed to push down. So we should create $4$ types of PUSH gadget---these are the PUSH-UP-IF-PLUS, PUSH-UP-NEVER, PUSH-DOWN-IF-MINUS, and PUSH-DOWN-NEVER gadgets. 

The PUSH-UP-IF-PLUS gadget is shown in \Cref{fig:ConvexPushUp}. If the variable component is plus, then it isn't possible for \emph{all} the marked squares to push down. This will prevent all the squares in some block of connecting membrane squares in the upper membrane row from pushing down.

\begin{figure}
\centering
\includegraphics[page=16]{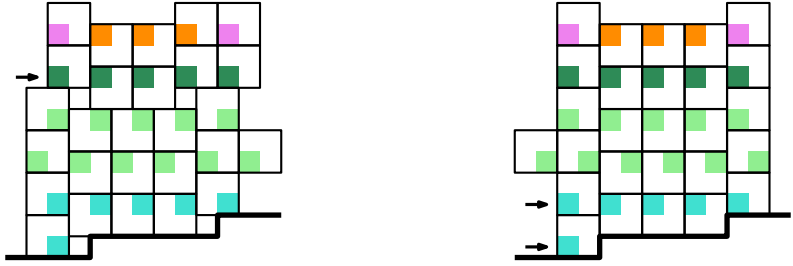}
\caption{The PUSH-UP-IF-PLUS gadget.}
\label{fig:ConvexPushUp}
\end{figure}

Since the polygon is orthogonally convex, we need to add an extra row below the gadget for spacing. The part of the PUSH gadget that interacts with the polygon boundary should carry some information about the position of the variable component, so it isn't possible to use a static row for this purpose. This is why we need the helper rows.
There will be helper rows on either side of the variable component, with the lower helper rows ``helping'' to space out the PUSH-UP gadgets and the upper helper rows helping with the PUSH-DOWN gadgets. The squares in the helper rows are shown with light blue and dark green reference centers in our figures.

The PUSH-DOWN-IF-MINUS gadget is just a mirror image of the PUSH-UP-IF-PLUS, and is shown in \Cref{fig:ConvexPushDown}. If the variable component is minus, then the connecting membrane squares can't all push up.

\begin{figure}
\centering
\includegraphics[page=17]{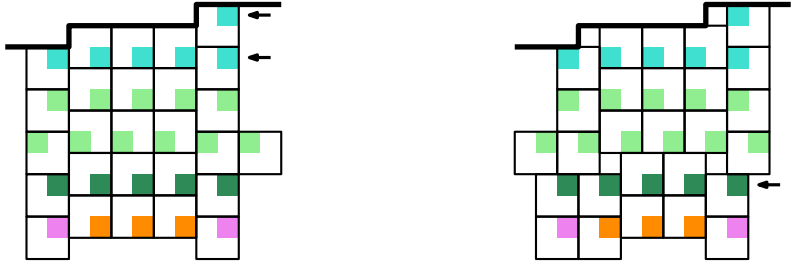}
\caption{The PUSH-DOWN-IF-MINUS gadget.}
\label{fig:ConvexPushDown}
\end{figure}

Some of the blocks of connecting membrane squares not supposed to depend on variable components (see \Cref{sec:ClauseGadgets}).
The squares in these blocks should all be able to push in for either position of the variable gadget. 
The clause membrane squares adjacent to this block still need to push out.
This is accomplished with a PUSH-NEVER gadget.
The PUSH-UP-NEVER gadget is shown in \Cref{fig:ConvexPushNever}.
Like the true PUSH gadgets, these gadgets each consume a helper row. Note that a PUSH-NEVER gadget may have true PUSH gadgets on both its right and left. The PUSH-UP-NEVER gadget is almost identical to the PUSH-UP-IF-PLUS gadget, but there is an extra $1\times 1$ square of space that allows the gadget to not push when the variable component is plus.

\begin{figure}
\centering
\includegraphics[page=2]{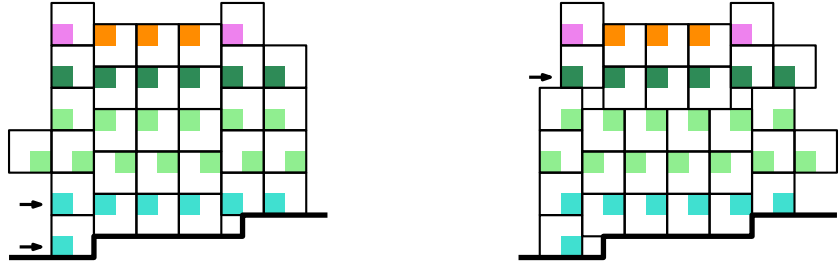}
\caption{The PUSH-UP-NEVER gadget. The PUSH-DOWN-NEVER gadget is a mirror image.}
\label{fig:ConvexPushNever}
\end{figure}

\subsubsection{Redundancy columns}

Each variable component relies on some squares from below pushing up and some squares from above pushing down.
The main part of the verification of the variable components is done working from top-to-bottom, so a variable component may depend on the variable components above it, but shouldn't depend on the ones below it.

The first step in verifying a single variable component is to ensure that the two variable rows share their alignment. This is the only step that requires some squares from below to push up. Observe that there are two ``bad'' cases where the rows do not share an alignment. These are shown in \Cref{fig:ConvexVar5}.

\begin{figure}
\centering
\includegraphics[page=15]{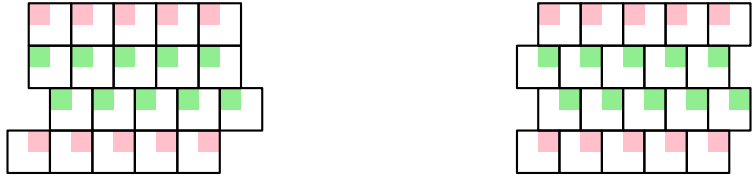}
\caption{The variable rows should always be aligned with each other. These are the two cases that need to be ruled out by some gadgets.}
\label{fig:ConvexVar5}
\end{figure}

These are not symmetric.
The case where both push right is excluded by knowledge about the squares packed in parts of the polygon above this variable component.
The case where both push left must be ruled out by other means.
Here, we use that a vertical stack crossing the unaligned variable rows would grow more in width than if the rows were aligned.
We utilize this insight by introducing vertical stacks called \emph{redundancy columns}.
This is shown in \Cref{fig:RedundancyColumns1,fig:RedundancyColumns2}.

\begin{figure}
\centering
\includegraphics[page=8]{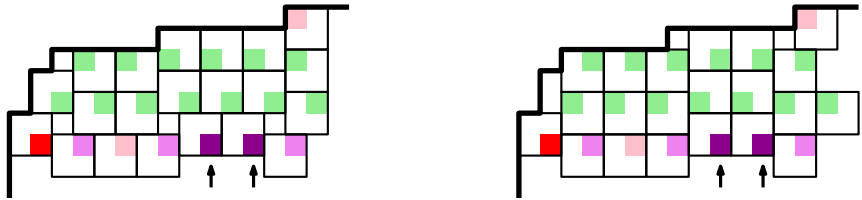}
\caption{The dark purple squares are \emph{redundancy squares}, which create a \emph{redundancy column}. The redundancy column prevents both variable rows from pushing left.}
\label{fig:RedundancyColumns1}
\end{figure}

\begin{figure}
\centering
\includegraphics[page=9]{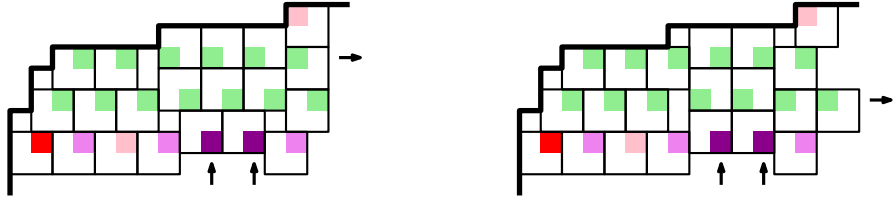}
\caption{Even if the left pinch squares fail to push up, at least one of the variable rows still has to push right because of the redundancy column.}
\label{fig:RedundancyColumns2}
\end{figure}

The redundancy columns need to pass through rows that haven't been verified yet, so we cannot be sure about its width as it reaches this variable component. We \emph{can} have a large number of redundancy columns.
As long as one of these columns has full width by the time it hits the variable component, it will enforce the necessary constraint. In \Cref{sec:VariableGadgetVerification}, we will show that if the total number of redundancy columns is larger than the number of helper rows and variables rows below the variable component, then at least one of the redundancy columns needs to reach full width. 

\Cref{fig:RedundancyColumns3} shows how to pack the polygon when there are multiple redundancy columns and some helper rows.
Each redundancy column requires adding an extra static row above the upper helper rows for spacing reasons.

\begin{figure}
\centering
\includegraphics[page=10]{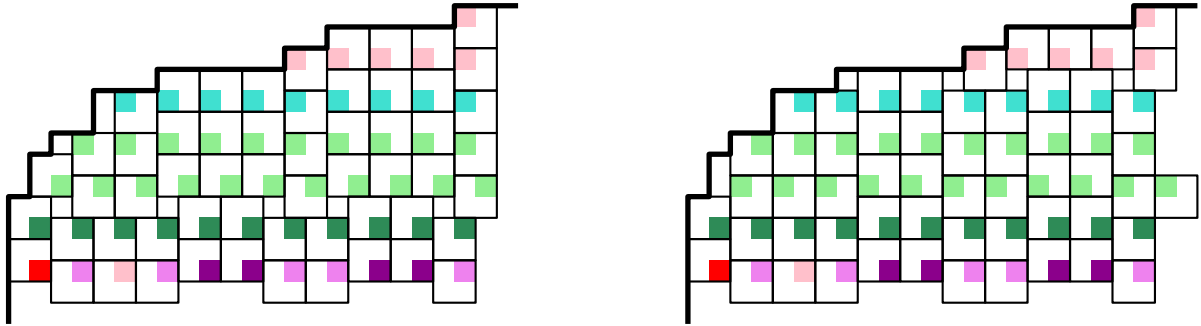}
\caption{Multiple redundancy columns.}
\label{fig:RedundancyColumns3}
\end{figure}

The redundancy columns only help to verify the part of the variable component that is to the right of the redundancy columns. The PUSH-DOWN gadgets will be to the \emph{left} of the redundancy columns, so we do need to eventually verify that the pinch square on the left pushes up. The left pinch squares are verified \emph{after} the right parts of \emph{all} the variable components have been verified.

In summary, the strategy for verifying the variable components is to first verify the parts of each gadget that are \emph{right} of the redundancy columns working from top-to-bottom, and then to verify the remaining parts of each variable component working from bottom-to-top.

\subsubsection{The complete variable component}

\Cref{fig:VariableGadgetSchematic} shows a schematic of a full variable component. Each variable component has two variable rows and some number of helper rows above and below the variable rows. These are flanked by static rows.
Because the polygon needs to be orthogonally convex, we need several static rows on each side for spacing reasons.
The upper static rows are created in order to make redundancy columns, and the lower are created in order to make helper rows.
The lower static rows push left and the upper static rows push right.
There is a static shift between the lower static rows for this variable component (which push left) and the upper static rows for the next variable component (which push right).

\begin{figure}
\centering
\includegraphics[page=7,width=\textwidth]{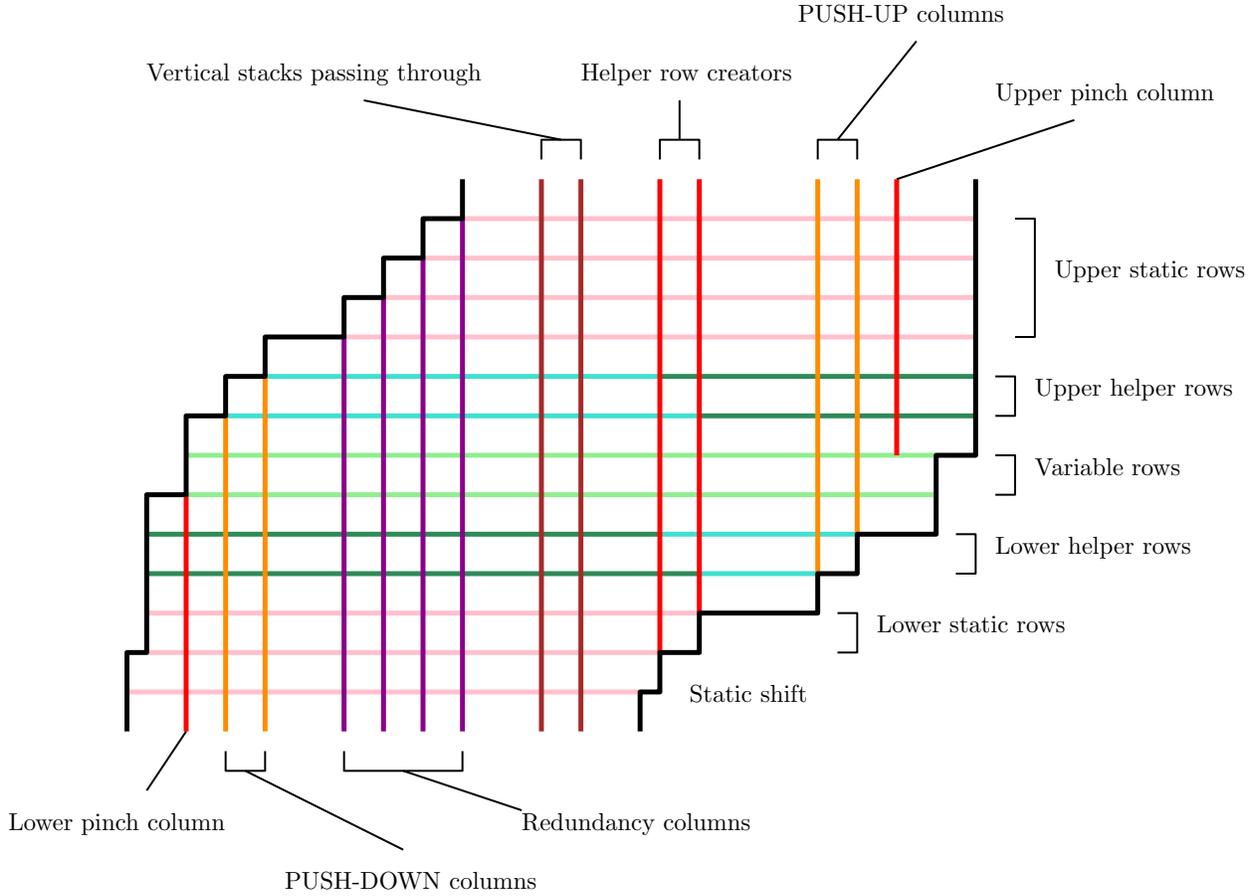}
\caption{A schematic of a variable component.
}
\label{fig:VariableGadgetSchematic}
\end{figure}

There are three types of vertical stacks that are used to verify the variable components. These are pinch columns, helper row creators, and redundancy columns. The upper pinch column is always near the right edge of the gadget and the lower pinch column is at the left edge. We place the redundancy columns as far right as possible and place the helper row creators as far left as possible. So all the PUSH-UP columns are between the helper row creators and the upper pinch column, while the PUSH-DOWN columns are between the lower pinch column and the redundancy columns. 

\Cref{fig:FullVarGadget} shows a complete variable component with helper rows, a PUSH gadget on each side, and one redundancy column.
It is not hard to check from the construction that there exists a packing corresponding to any assignment of variables in the formula $\Phi$, as described by the following lemma.

\begin{figure}
\centering
\includegraphics[page=11,width=\textwidth]{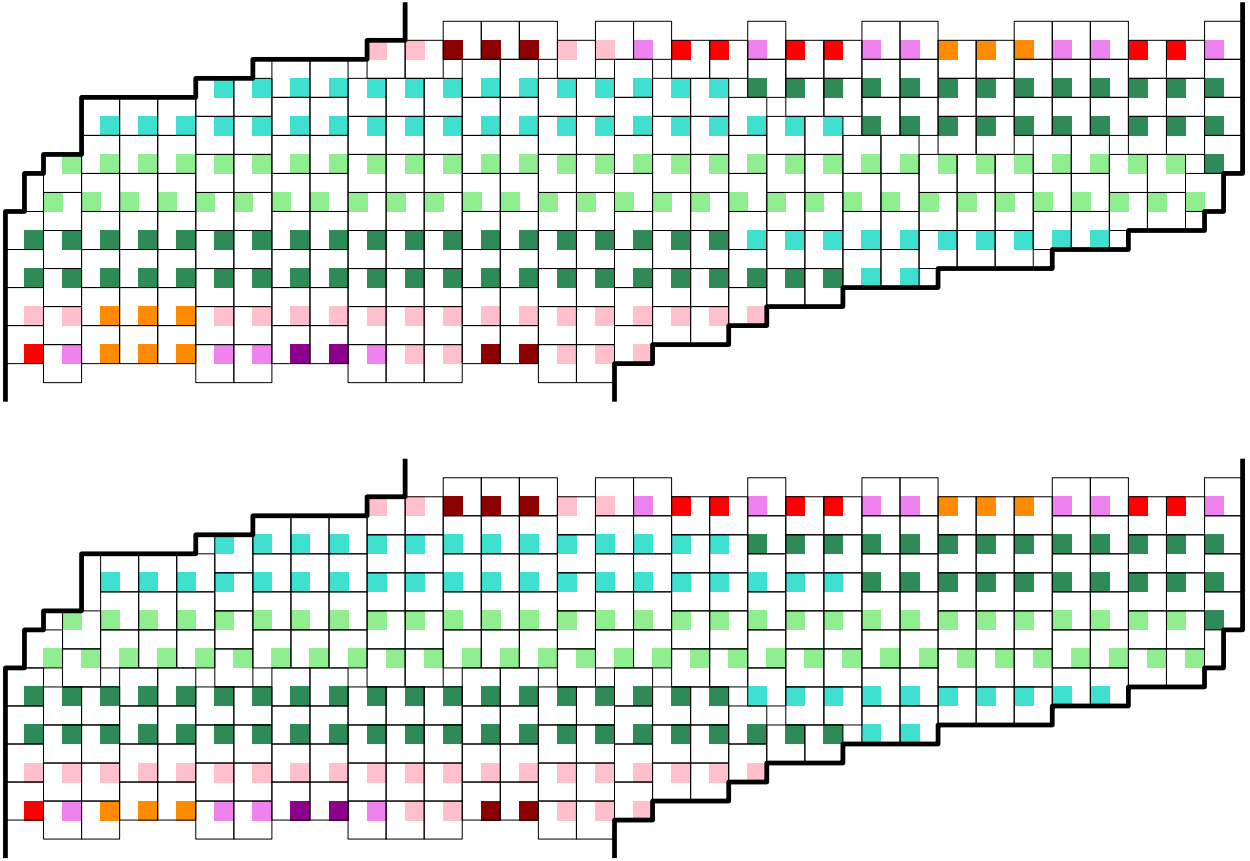}
\caption{A variable component in the plus (top) and minus (bottom) position. There are two pairs of helper rows, one PUSH-DOWN-IF-MINUS gadget, one PUSH-UP-IF-PLUS gadget, one redundancy column, and one vertical stack passing through.}
\label{fig:FullVarGadget}
\end{figure}

\begin{lemma}\label{lem:VariablePackingsExist}
For any assignment of variables, there is a packing of the variable components where the variable membrane squares push in, and:

\begin{itemize}
\item Each square in a block of connecting membrane squares in the upper (resp.~lower) membrane row that is connected to a variable in the minus (resp.~plus) position pushes in.
\item Each square in a block of connecting membrane squares connected to a PUSH-NEVER gadget pushes in.
\end{itemize}
\end{lemma}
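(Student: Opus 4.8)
The plan is to prove the lemma constructively: for a fixed (not necessarily satisfying) assignment of $x_1,\dots,x_n$ I exhibit one packing of the variable-component region, built by pasting together the local packings drawn in the figures, and then simply read off the behaviour of the membrane squares. Since orthogonal convexity has already been engineered into the shape of $P$, essentially all the work is to check that the local packings are mutually compatible — that they overlap nowhere and leave no reference center uncovered.

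For the construction, I choose for each variable component the plus packing of \Cref{fig:ConvexVar1} and \Cref{fig:FullVarGadget} if the variable is true and the minus packing otherwise; in both cases the two variable rows share their horizontal alignment, the pinch squares push in, the flanking static rows are packed in their prescribed directions (lower static rows left, upper static rows right), and there is a static shift between consecutive variable components, matching \Cref{fig:VariableGadgetSchematic}. For each helper row gadget I pack the helper-row-creation squares pushing down, so the upper and lower helper rows are created (\Cref{fig:ConvexHelpers}); when the variable is plus I split the upper helper rows so that squares to the right of the gadget push right and those to the left push left, leaving the lower helper rows unsplit (any slack is then confined to the helper rows), and I do the mirror image for minus. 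For the PUSH gadgets: a PUSH-UP-IF-PLUS gadget on a true variable is forced to push up (\Cref{fig:ConvexPushUp}), and the resulting vertical stack, widening at each static shift (\Cref{lem:StaticShift}) and at the variable components it crosses (\Cref{fig:ConvexVar3}), reaches the upper membrane row exactly filling its block of connecting membrane squares; on a false variable I instead pack the gadget with its slack so it does not push, and symmetrically for PUSH-DOWN-IF-MINUS with the roles of plus and minus interchanged. For a PUSH-NEVER gadget I use the extra $1\times 1$ square of free space (\Cref{fig:ConvexPushNever}) so the gadget never pushes while still leaving the squares flanking its block pushing out. Finally I pack every pinch column, redundancy column, and PUSH-gadget stack at full width: there is room because the construction reserves a dedicated spacing static row above the upper helper rows for each redundancy column (\Cref{fig:RedundancyColumns3}) and a spacing row for each helper row and PUSH gadget, and the columns sit in the fixed left-to-right order described in \Cref{sec:ConvexVariableGadgets}, so no two of them collide.

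It then remains to read off the membrane rows. The variable membrane squares push in by \Cref{lem:VariableGadgetSquares}, independently of the assignment. The clause membrane squares — those adjacent to a block of variable-membrane or connecting-membrane squares — push out because the construction (\Cref{fig:ConvexVar4}, \Cref{fig:ConvexPushNever}) arranges the squares flanking every such block to push out. For the connecting membrane squares, the block attached to a PUSH gadget pushes in exactly when that gadget does not push, which by the previous paragraph is when the attached variable is minus for a block in the upper membrane row and plus for a block in the lower membrane row; and every PUSH-NEVER block pushes in by construction. This gives the two bullet points. The main obstacle is the bookkeeping hidden in the ``full width'' step: one must check that all the vertical stacks (pinch columns, helper-row creators, redundancy columns, and the PUSH-gadget stacks running to the membrane rows), each of which grows at every static shift and every variable component it crosses, can be laid out simultaneously without overlap. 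This is precisely the purpose of the dedicated spacing rows and the fixed left-to-right ordering of the columns, so the verification reduces to a finite inspection of the figures in \Cref{sec:ConvexVariableGadgets}.
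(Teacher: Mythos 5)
Your proposal is correct and follows essentially the same route as the paper, which states this lemma without a formal proof ("it is not hard to check from the construction...") and relies on exactly the constructive inspection you spell out: pick the plus/minus packing per component from the figures, pack the pinch/helper/redundancy/PUSH columns as drawn, and read off the membrane rows. Your write-up just makes explicit the compatibility bookkeeping that the paper delegates to its diagrams (e.g.\ \Cref{fig:FullVarGadget,fig:RedundancyColumns3,fig:ConvexVar4}), so no substantive difference or gap.
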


\subsection{Verification of the variable components}\label{sec:VariableGadgetVerification}

We now give the verification of the various properties of a variable component. Throughout this section, the square $\langle x, y\rangle$ refers to the the square containing the reference center with upper right corner $(2x, 2y)$.
We choose coordinates so that the squares in the upper variable row are written $\langle x, 0\rangle$ for an integer $x$.
Squares in the lower variable row then have the form $\langle x+\frac 12, -1\rangle$.
We also suppose that there are $n$ helper rows, so the static rows on either side of the variable row are written $\langle x, n+1\rangle$ and $\langle x, -n-2\rangle$.

In order to verify a variable component, we need to be able to show that at least one of the redundancy columns reaches full width. We use the following observation:

\begin{lemma}\label{lem:Redundancy}
Consider two rows of squares with $k$ rows between them.
Let $L$ be the set of squares in the lower row that push up, and let $x_s$ be the $x$-coordinate of the left corners of each $s\in L$.
Let $U$ be the squares of the top row where the $x$-coordinate of the left corners is $x_s-1$, $x_s$ or $x_s+1$ for some $s\in L$.
Then at most $k$ squares in $U$ push down.
\end{lemma}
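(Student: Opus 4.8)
The plan is to reduce everything to the single local constraint between vertically adjacent squares: if a square $u$ in some row pushes up and a square $v$ in the row immediately above it pushes down, then $u$ and $v$ have overlapping vertical extents of length $1$, so their horizontal extents must be interior-disjoint. Equivalently, in any packing, if $u$ pushes up then every square of the row directly above whose open horizontal extent meets that of $u$ also pushes up; symmetrically, ``pushing down'' propagates from a row to the row directly below. I would combine this with the observation that in a single row the left-pushing squares form a prefix and the right-pushing squares a suffix (a right-pusher immediately to the left of a left-pusher would overlap it), so the part of a row left uncovered by the packing is contained in a single unit interval, its \emph{gap} (possibly empty); here one uses that the packing is perfect, so every reference center is covered. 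Intuitively, the ``up-region'' originating from $L$ can lose only this one unit interval when passing from a row to the next, so after the $k$ intermediate rows it misses at most $k$ units of horizontal space, and a square of $U$ that pushes down must sit in that missed space.

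Concretely, I would put the lower row at height $0$ (containing $L$), the upper row at height $k+1$, and let $g_1,\dots,g_k$ be the gaps of the $k$ rows in between. Fix $t\in U$ that pushes down, choose $s\in L$ with $x_t\in\{x_s-1,x_s,x_s+1\}$, write $E_t=[x_t,x_t+2]$ and $E_s=[x_s,x_s+2]$ for their horizontal extents, and set $I_t=\operatorname{int}(E_t)\cap\operatorname{int}(E_s)$; since $x_t$ is within $1$ of $x_s$, this is a nonempty open interval of length at least $1$ (length $2$ if $x_t=x_s$, length $1$ otherwise). The key claim is $I_t\subseteq g_1\cup\dots\cup g_k$: take $y\in I_t$ generic (avoiding the finitely many square-boundary coordinates) and suppose $y\notin g_1\cup\dots\cup g_k$; then for each $i\in\{1,\dots,k\}$ there is a unique square $v_i$ of the $i$-th row containing $y$ in its open horizontal extent. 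Propagating downwards from $t$ using $y\in\operatorname{int}(E_t)$ forces $v_k,v_{k-1},\dots,v_1$ all to push down, while propagating upwards from $s$ using $y\in\operatorname{int}(E_s)$ and the fact that $s$ pushes up forces $v_1$ to push up; this contradiction proves the claim.

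To finish, observe that for two distinct squares $t,t'\in U$ that push down, $t$ and $t'$ are distinct squares of the same top row of a packing, so $\operatorname{int}(E_t)\cap\operatorname{int}(E_{t'})=\varnothing$ and hence $I_t\cap I_{t'}=\varnothing$. Thus the intervals $I_t$ over all $t\in U$ that push down are pairwise disjoint, each of length at least $1$, and all contained in $g_1\cup\dots\cup g_k$, a set of Lebesgue measure at most $k$; comparing measures gives at most $k$ such $t$. The step I expect to require the most care is the rigorous bookkeeping of the two propagation directions at the level of individual squares, in particular across the special rows whose reference centers have the shifted form $[2m,2m+1]\times[2\ell-1,2\ell]$ --- there the vertical extents of up- and down-pushing squares are unchanged, so the same length-$1$ overlap argument still applies --- together with the ``single gap per row'' fact, whose verification relies on the prefix/suffix structure of a row in a perfect packing.
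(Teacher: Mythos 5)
Your proof is correct and follows essentially the same route as the paper's: both rest on the observation that in each intermediate row the left-pushers form a prefix and the right-pushers a suffix, leaving at most one width-$1$ split/gap per row, and that ``push up'' propagates upward through horizontally overlapping squares unless blocked by such a gap. Your measure-theoretic accounting with the pairwise disjoint unit intervals $I_t$ is just a cleaner bookkeeping of the paper's count of splits (and if anything makes the final ``at most $k$'' step more airtight), so no further changes are needed.
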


\begin{proof}
For each row of squares, there are three cases:

\begin{itemize}
    \item All the squares push left. 
    \item All the squares push right. 
    \item Some of the squares on the left push left and the rest push right. 
    There is a \emph{split} separating the two sets of squares. 
\end{itemize}

Each split has width $1$, and the squares (and hence also the splits) have integer coordinates, so each split is directly above at most one of the squares in the bottom row.

Consider the square $s\in L$, covering the range of $x$-coordinates $[x_s, x_s+2]$.
If there is no split with the range $[x_s, x_s+1]$, then all squares covering that range push up, and similarly if there is no split in $[x_s+1, x_s+2]$.
There can be a split in the range $[x_s, x_s+2]$ for at most $k$ squares $s\in L$, one for each of the intermediate layers.
Hence, all but at most $k$ squares in $U$ must push up.
\end{proof}

\begin{corollary}\label{lem:ShiftRedundancy}
Suppose there is a left static row of squares of the form $\langle x, 0\rangle$ and $k+1$ rows above it there is a right static row of squares of the form $\langle x, k+1\rangle$.
Suppose that there are blocks of squares $\langle a_i, 0\rangle$ through $\langle b_i, 0\rangle$ for $i\in \{1, \dots, j\}$ that all push up.
Then for all but $k$ indices, all of the squares $\langle a_i-1, k+1\rangle$ through $\langle b_i, k+1\rangle$ push up.
\end{corollary}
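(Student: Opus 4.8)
The plan is to derive this directly from Lemma~\ref{lem:Redundancy}, applied to the \emph{lower row} being the left static row at $y=0$ and the \emph{upper row} being the right static row at $y=k+1$; since there are exactly $k$ rows (at $y=1,\dots,k$) strictly between them, the hypothesis of that lemma is met. The set $L$ in Lemma~\ref{lem:Redundancy} then contains all of $\langle a_i,0\rangle,\dots,\langle b_i,0\rangle$, because these are assumed to push up. The only real content is a parity computation showing that the shifted blocks $\langle a_i-1,k+1\rangle,\dots,\langle b_i,k+1\rangle$ in the statement are precisely the squares of the top row that Lemma~\ref{lem:Redundancy} collects into $U$ from these columns.

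To do this I would first record the corner positions. A square $\langle x,0\rangle$ in a \emph{left} static row pushes left, so its reference center lies in its right half and its left edge has $x$-coordinate $2x-2$; a square $\langle x',k+1\rangle$ in a \emph{right} static row pushes right, so its reference center lies in its left half and its left edge has $x$-coordinate $2x'-1$, which is always odd. Hence for $s=\langle x,0\rangle\in L$ we have $x_s=2x-2$, and among the three values $x_s-1,x_s,x_s+1$, i.e.\ $2x-3,2x-2,2x-1$, only the two odd ones can be left edges of top-row squares, namely the left edges of $\langle x-1,k+1\rangle$ and $\langle x,k+1\rangle$. Taking the union of these contributions over a block of consecutive up-pushing squares $\langle a_i,0\rangle,\dots,\langle b_i,0\rangle$ yields exactly $\langle a_i-1,k+1\rangle,\dots,\langle b_i,k+1\rangle$, so $U\supseteq\bigcup_i\{\langle a_i-1,k+1\rangle,\dots,\langle b_i,k+1\rangle\}$.

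Now Lemma~\ref{lem:Redundancy} gives that at most $k$ squares of $U$ push down. Since the blocks $\{\langle a_i,0\rangle,\dots,\langle b_i,0\rangle\}$ are pairwise separated (they come from distinct redundancy columns with at least one intervening column of squares), the shifted blocks $\{\langle a_i-1,k+1\rangle,\dots,\langle b_i,k+1\rangle\}$ are pairwise disjoint. So if more than $k$ indices $i$ were such that some square of the $i$-th shifted block pushes down, we would obtain more than $k$ distinct down-pushing squares in $U$, a contradiction. Therefore for all but at most $k$ indices every square of $\langle a_i-1,k+1\rangle,\dots,\langle b_i,k+1\rangle$ pushes up, as claimed.

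The step I expect to need the most care is the disjointness bookkeeping in the last paragraph. If two of the original blocks were merely adjacent rather than separated, their shifted versions would overlap in one square, and a single down-pushing square there could be charged to two indices at once, weakening the bound to $2k$. So I would be careful either to state in the corollary, or to invoke from the construction, that the blocks in question are genuinely separated by at least one column — which is the relevant situation for the redundancy columns where this corollary is used.
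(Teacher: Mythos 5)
Correct, and this is the paper's intended route: the corollary is stated there without proof as an immediate consequence of \Cref{lem:Redundancy}, and your application of that lemma with the parity computation identifying $U$ with the shifted blocks $\langle a_i-1,k+1\rangle,\dots,\langle b_i,k+1\rangle$ is exactly the intended derivation. Your disjointness caveat is legitimate at the level of using \Cref{lem:Redundancy} as a black box, and it is harmless here because in the only place the corollary is invoked the blocks are distinct maximal vertical stacks, hence separated by at least one column (alternatively, one can open up the proof of \Cref{lem:Redundancy} and charge each spoiled block to a distinct split, which yields the bound $k$ even for adjacent blocks).
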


This is illustrated in \Cref{fig:ShiftRedundancy}. Note that \Cref{lem:StaticShift} is a special case of \Cref{lem:ShiftRedundancy} when $k=0$.

\begin{lemma}\label{lem:VariableRedundancy}
Suppose that a variable component (which has possibly not yet been verified) has $m$ helper rows in total. Suppose there are $k$ stacks below the variable component pushing up. Then at least $k-m-2$ of those stacks grow in size by $2$ squares by the time they reach the next variable component above this one.
\end{lemma}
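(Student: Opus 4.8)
The plan is to read off the required width growth from two applications of \Cref{lem:ShiftRedundancy} --- one across the variable component itself and one across the static shift above it --- letting the loss term in that corollary absorb the fact that the component need not be verified.

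Fix coordinates as in \Cref{sec:VariableGadgetVerification}, write $V$ for the variable component in the statement and $W$ for the next one above it, and (passing through intervening static rows by \Cref{lem:StaticShift} if necessary) assume the $k$ stacks already push up at the lower static rows of $V$, which are left static rows. Reading upward, the band from there to the lower static rows of $W$ consists of: the lower static rows of $V$ (push left); the $m$ helper rows and $2$ variable rows of $V$, interspersed with spacing static rows; the upper static rows of $V$ (push right); a horizontal static shift; and the lower static rows of $W$ (push left). The only rows here that can \emph{split}, i.e.\ carry squares of both horizontal alignments, are the $m$ helper rows and the $2$ variable rows of $V$; every static row is a single horizontal stack of fixed alignment, and, as the proof of \Cref{lem:Redundancy} shows, such rows contribute nothing to its loss bound.

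Now apply \Cref{lem:ShiftRedundancy} from the lower static rows of $V$ (left) up to the upper static rows of $V$ (right). Since only the $m+2$ splittable rows are counted, for all but at most $m+2$ of the $k$ stacks the corresponding block of up-pushing squares survives to the upper static rows of $V$ and has grown in width by at least $1$ (the conclusion of \Cref{lem:ShiftRedundancy} produces a block starting one square further left). Then apply the mirror image of \Cref{lem:ShiftRedundancy} from the upper static rows of $V$ (right) up to the lower static rows of $W$ (left): there are no splittable rows in between, so \emph{every} surviving block carries through and widens by at least $1$ more. Hence at least $k-m-2$ of the $k$ stacks have grown in width by at least $2$ by the time they reach $W$; as a byproduct they are still up-pushing stacks there, which is what makes the lemma iterable down the whole column of variable components.

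The delicate part is the bookkeeping inside the first application: the block indices and their leftward drift must be set up so that each split in one of the $m+2$ rows is charged to at most one of the tracked stacks --- this is the content of \Cref{lem:Redundancy}, but chaining it row by row while re-indexing the widening blocks takes care. One also has to treat the lower variable row separately, since its reference centers have the shifted form $[2k,2k+1]\times[2\ell-1,2\ell]$ and so it has no globally defined horizontal alignment; this is exactly why it, together with the upper variable row, is charged $2$ rather than $1$ to the loss. Beyond that the statement is a direct consequence of \Cref{lem:ShiftRedundancy,lem:StaticShift}.
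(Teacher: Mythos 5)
Your proposal is correct and follows essentially the same route as the paper: the paper's (very terse) proof likewise applies \Cref{lem:ShiftRedundancy} across the band of $m+2$ rows between the left static row below the lowest helper row and the right static row above the highest helper row, and then uses the static shift to the next variable component (the $k=0$ case, i.e.\ \Cref{lem:StaticShift}) to obtain the second unit of growth, giving $k-m-2$ stacks that grow by $2$. Your extra bookkeeping remarks (static rows cannot absorb splits; the shifted reference centers of the lower variable row) are consistent refinements of the same argument rather than a different approach.
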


\begin{proof}
There is a left static row below the lowest helper row and a right static row above the highest helper row, with $m+2$ rows in between them. There is a static shift between this variable component and the next one above. So the result follows by \Cref{lem:ShiftRedundancy} and \Cref{lem:StaticShift}.
\end{proof}

We are now ready to begin the verification of a single variable component.
We first show that the variable rows share their alignment to the right of the rightmost redundancy column, as expressed by the following lemma.

\begin{figure}
\centering
\includegraphics[page=14]{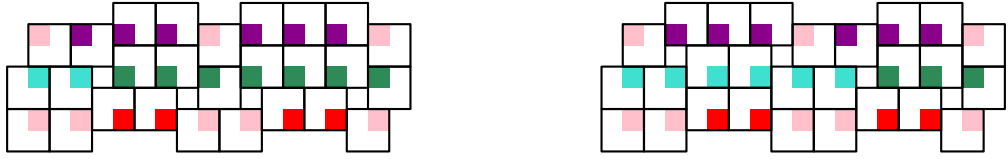}
\caption{There is $1$ slack row between two static rows, so at least one of the vertical stacks needs to grow by $1$ square.}
\label{fig:ShiftRedundancy}
\end{figure}

\begin{lemma}\label{lem:ConvexVar1}
Suppose that the rightmost square in the upper block of pinch squares for a variable component pushes down.
Then the squares in the variable rows that are to the right of the rightmost redundancy column share their horizontal alignment. Precisely, each square in the upper row spans the same $x$-coordinates as a square in the lower row and each square in the lower row spanes the same $x$-coordinates as a square in the upper row (except for the rightmost squares at the end of the variable gadget).
\end{lemma}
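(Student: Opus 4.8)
The plan is to rule out, to the right of the rightmost redundancy column, the only two ways in which the two variable rows can fail to be aligned, which (see \Cref{fig:ConvexVar5}) are the ``both push right'' pattern and the ``both push left'' pattern. I would exclude the first pattern using the hypothesis that the rightmost upper pinch square pushes down, and the second using the fact that at least one redundancy column reaches its full width by the time it crosses the variable rows. This split of the argument mirrors the earlier remark that the ``both push right'' case is handled by information coming from above while the ``both push left'' case needs the redundancy columns.

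First I would show that a full-width redundancy column exists. The redundancy columns of this variable component are created below it by an edge of the polygon that forces a block of squares to push up, and they then rise through the lower static rows, the lower helper rows, and the static shifts beneath the variable rows. Applying \Cref{lem:ShiftRedundancy} (equivalently, iterating \Cref{lem:StaticShift} and \Cref{lem:StackGrowth}) to this bounded stack of intermediate rows shows that all but a bounded number $t$ of the redundancy columns grow by the full amount at every step, where $t$ is at most the number of helper rows and variable rows below the variable rows plus a constant for the static shifts. Since the construction uses strictly more than $t$ redundancy columns, at least one of them, say $R$, reaches the variable rows with no slack. In particular $R$ is not to the right of the rightmost redundancy column, so every column that \Cref{lem:ConvexVar1} speaks about lies (weakly) to the right of $R$.

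Next I would analyze the region to the right of the rightmost redundancy column. By the chosen layout of the gadget, the only features there are the right end of the variable component and the upper pinch column; all PUSH gadgets and all other redundancy columns lie further left. Starting at the right end, the polygon boundary together with the hypothesis that the rightmost upper pinch square pushes down pins down the rightmost squares of the two variable rows into a single aligned configuration, and reading leftward I would check against the figures of \Cref{sec:ConvexVariableGadgets} that the upper pinch column cannot introduce a ``both push right'' mismatch — this is exactly the role of the hypothesis, since without it the rightmost pinch square would push up and force such a mismatch. Hence no ``both push right'' pattern occurs to the right of $R$. For the ``both push left'' pattern, note that such a misalignment, once present at some column to the right of the rightmost redundancy column, persists as one reads leftward (the only features between that column and $R$ are the passive vertical redundancy columns, which cannot realign the rows), so in particular the two variable rows would be misaligned where $R$ crosses them; but then $R$ must grow by one extra square, by the same parity argument as in \Cref{lem:StaticShift}, contradicting that $R$ has no slack. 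Combining the two exclusions, the variable rows are aligned throughout the region to the right of the rightmost redundancy column, which is the statement (the rightmost squares at the end of the gadget being the usual exception).

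I expect the main obstacle to be the geometric bookkeeping in the third paragraph: verifying against the exact layout of the variable component that nothing to the right of the rightmost redundancy column other than the controlled upper pinch column can perturb the alignment of the variable rows, that the hypothesis genuinely excludes the ``both push right'' pattern at the pinch column, and that a ``both push left'' configuration to the right of $R$ really does propagate onto $R$'s columns and force the claimed extra growth. Everything else is a routine application of \Cref{lem:Redundancy}, \Cref{lem:ShiftRedundancy} and \Cref{lem:StaticShift} together with the parity analysis of reference centers from \Cref{lem:ConvexReferenceCenters}.
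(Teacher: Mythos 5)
Your proposal matches the paper's proof in all essentials: the right end of the gadget together with the down-pushing pinch square rules out the both-push-right pattern, a counting argument via \Cref{lem:ShiftRedundancy} (the paper packages it as \Cref{lem:VariableRedundancy}) yields a redundancy column of full width, and the fact that this column cannot grow further past the boundary-forced down-pushing column beside it rules out the both-push-left pattern --- the paper states this last step directly (at least one row must push right to the right of the full-width column, \Cref{fig:VarRedundancy}) rather than via your contradiction, but it is the same mechanism. The only blemishes are non-load-bearing: the right end does not by itself pin both rows into an aligned configuration (it only forces one row to push left), and the region right of the redundancy columns may contain other vertical stacks (e.g.\ PUSH-UP columns and stacks passing through), though neither point is needed, since left-pushing propagates leftward and right-pushing propagates rightward within a variable row regardless of what crosses it.
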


\begin{proof}
Together with the pinch squares, the edges of the polygon at the right end of the variable row force the squares in one of the rows to push left, as shown in \Cref{fig:VarRightSide}.

\begin{figure}
\centering
\includegraphics[page=12]{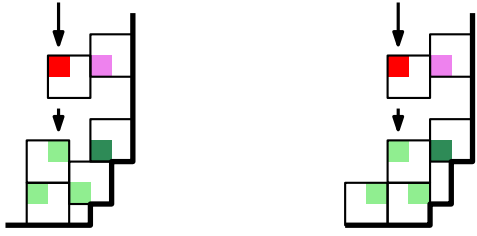}
\caption{If rightmost of the upper pinch squares pushes down, then one of the rows in the variable component pushes left.}
\label{fig:VarRightSide}
\end{figure}

Recall that the number of redundancy columns is larger than the number of variable rows and helper rows in all the gadget below it. By \Cref{lem:VariableRedundancy}, the number of redundancy columns that have full width decreases by $m+2$ when crossing a variable component with $m$ helper rows. So at least one of the redundancy columns must grow by its full size (growing twice at each of the lower variable components, once at the static shift and once when crossing the variable rows).

Let $\langle x, -2-n\rangle$ be the rightmost redundancy square in a redundancy column that reaches full width. So $\langle x, -2-n\rangle$ pushes up, and therefore $\langle x, -2\rangle$ pushes up. The boundary of the polygon forces squares in column $x+1$ push down, so square $\langle x+1, 0\rangle$ pushes down. So one of $\langle x+1, 0\rangle$ or $\langle x+\frac12, -1\rangle$ pushes left, as shown in \Cref{fig:VarRedundancy}.

\begin{figure}
\centering
\includegraphics[page=13]{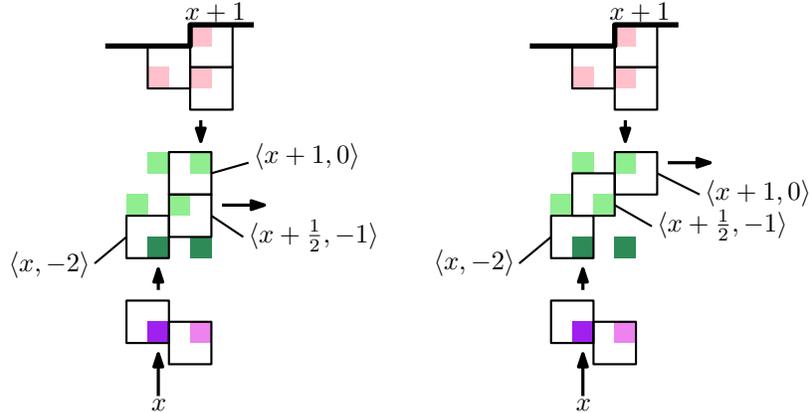}
\caption{If the rightmost square in any block of redundancy squares pushes up, then at least one of the rows in the variable component pushes right (for squares to the right of that square).}
\label{fig:VarRedundancy}
\end{figure}

The conclusion is that, for squares to the right of the rightmost redundancy column, one of the variable rows pushes left and the other pushes right. Since these rows have differently-aligned reference centers, the two rows are aligned with each other.
\end{proof}

When the reference centers form a regular grid and a square pushes up, all the squares in the column above that square must also push up. This isn't \emph{a priori} true for a column passing through a variable component since the reference centers don't form a regular grid. \Cref{fig:ConvexVar3} shows that a similar conclusion \emph{does} still hold as long as the variable rows are aligned.
We formulate this observation as the following lemma:

\begin{lemma}\label{lem:ConvexVar2}
Suppose that a square $\langle x, 0\rangle$ in the upper row of a variable component is aligned with a square (either $\langle x+\frac12, -1\rangle$ or $\langle x-\frac12, -1\rangle$) in the lower variable row. 
If the square $\langle x, 0\rangle$ pushes down, then the square $\langle x, -2\rangle$ also pushes down. If the $\langle x, -2\rangle$ pushes up, then the square $\langle x, 0\rangle$ also pushes up.
\end{lemma}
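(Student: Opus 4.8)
The plan is to argue directly from the non‑overlap condition for the $2\times 2$ squares. By \Cref{Finiteness} and \Cref{lem:ConvexReferenceCenters}, under the standing perfect‑packing assumption every square has integer coordinates and covers exactly one reference center, so a square's position is pinned down by its two push directions. The bookkeeping I will use: a square covering the reference center $[2a-1,2a]\times[2b-1,2b]$ occupies the horizontal slab $y\in[2b-2,2b]$ if it pushes down and $y\in[2b-1,2b+1]$ if it pushes up (with the analogous statement for the $x$‑span and left/right), and a square covering the offset reference center $[2a,2a+1]\times[2b-1,2b]$ of the lower variable row occupies $y\in[2b-2,2b]$ if it pushes down and $y\in[2b-1,2b+1]$ if it pushes up.

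First I would record what the hypothesis ``$\langle x,0\rangle$ is aligned with a square in the lower variable row'' buys us. Because the reference centers of the lower variable row are horizontally offset by one unit from those at levels $0$ and $-2$, the only $2\times 2$ squares in the lower variable row whose $x$‑span can coincide with that of $\langle x,0\rangle$ are exactly $\langle x\pm\tfrac12,-1\rangle$; so the hypothesis says precisely that the lower‑variable square $T$ sharing a column range with $\langle x,0\rangle$ has $x$‑span equal to that of $\langle x,0\rangle$, which is $[2x-1,2x+1]$ (if $\langle x,0\rangle$ pushes right) or $[2x-2,2x]$ (if it pushes left).

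Now suppose $\langle x,0\rangle$ pushes down; then it occupies the slab $y\in[-2,0]$. The square $T$ covers a reference center in the row $[-3,-2]$, so $T$ occupies $y\in[-3,-1]$ if it pushes up and $y\in[-4,-2]$ if it pushes down. In the former case $T$ and $\langle x,0\rangle$ have the same $x$‑span and their $y$‑slabs overlap in $[-2,-1]$, so they overlap in a region of positive area, contradicting disjointness; hence $T$ pushes down and occupies $y\in[-4,-2]$. Finally, assume for contradiction that $\langle x,-2\rangle$ pushes up. It covers a reference center in the row $[-5,-4]$, so it occupies $y\in[-5,-3]$, which overlaps $T$'s slab $y\in[-4,-2]$ in $[-4,-3]$. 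Its $x$‑span is $[2x-1,2x+1]$ or $[2x-2,2x]$, and in each of the four combinations with $T$'s $x$‑span the intersection contains at least the unit interval $[2x-1,2x]$; hence $\langle x,-2\rangle$ and $T$ overlap in a region of positive area, a contradiction. Therefore $\langle x,-2\rangle$ pushes down. The second assertion is the contrapositive of the first, since in a perfect packing each square pushes either up or down; see also \Cref{fig:ConvexVar3} for the picture.

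I do not expect a genuine obstacle: the argument is a finite case check. The one point to be careful about is the offset of the lower‑variable reference centers, which is exactly why the ``aligned'' hypothesis is needed — it forces the column ranges of $\langle x,0\rangle$ and $T$ to coincide, and this in turn is what makes both overlap arguments go through regardless of the left/right pushes of $\langle x,0\rangle$, $T$, and $\langle x,-2\rangle$. A minor secondary point is to note that all three squares are actually present, i.e., their reference centers lie in $P$ and hence are covered, which holds by the perfect‑packing hypothesis.
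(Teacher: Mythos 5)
Your proposal is correct, and it fills in exactly the kind of elementary overlap/case check that the paper leaves implicit (its proof of this lemma is just ``Straightforward,'' pointing to the picture in \Cref{fig:ConvexVar3}). Your bookkeeping of the offset reference centers in the lower variable row, the use of the alignment hypothesis to pin down the shared $x$-span, and the contrapositive for the second assertion are all accurate, so this is a faithful expansion of the paper's intended argument rather than a different route.
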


\begin{proof}
Straightforward.
\end{proof}

When the upper pinch squares push down, \Cref{lem:ConvexVar1} lets us define the plus/minus position of the two variable rows as the position matched by the squares to the right of the rightmost redundancy column.
The following lemma expresses what we need to know about the alignments of the helper rows.

\begin{lemma}\label{lem:ConvexVar3}
Suppose that the upper pinch squares push down and all the helper row creation squares for the variable component push down. The if the variable component is in the plus (resp.~minus) position, then all the squares in the upper (resp.~lower) helper rows push left if they are left of the helper row squares and push right if they are right of the helper row squares.
\end{lemma}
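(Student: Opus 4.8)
The plan is to analyze one helper row gadget at a time, as drawn in Figure~\ref{fig:ConvexHelpers}, and read off from the two hypotheses the configuration that the packing is forced into on the helper rows immediately above and below the variable rows. The core of the argument is the same local case analysis as in the simple construction; the hypotheses of the lemma are exactly what is needed to make that local analysis go through.

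First I would record what the hypotheses buy us. Since the upper pinch squares push down, Lemma~\ref{lem:ConvexVar1} applies, so the two variable rows are aligned and the plus/minus position of the component is well defined, and this position is the alignment pattern shared by the variable rows across the $x$-coordinates where the helper row gadgets sit. (Here one uses the trichotomy from the proof of Lemma~\ref{lem:Redundancy}: in a row whose reference centres form a regular grid, the squares that push left occupy a prefix of the row and the squares that push right occupy a suffix, which is how the alignment propagates from the portion handled in Lemma~\ref{lem:ConvexVar1}.) Second, by hypothesis every block of helper row creation squares pushes down, which is what ``opens'' the vertical room that the helper rows occupy, and the extra static row placed just below each helper row gadget is a left static row and hence pinned.

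Now fix a helper row gadget and suppose the variable component is plus, so the upper variable row pushes right. I would look at the vertical slab of the polygon occupied by the gadget: the helper row creation squares push down, the static row below is pinned, and the upper variable row is fixed by its plus alignment. Tracing the only way to cover the reference centres in this slab, and using Lemma~\ref{lem:ConvexVar2} to carry the forced ``down'' of the upper variable row across to the row $\langle x,-2\rangle$ where that is needed, one finds that in each of the upper helper rows the square immediately to the left of the gadget is forced to push left and the square immediately to the right is forced to push right. By the prefix/suffix trichotomy again, having a left-pusher immediately left of the gadget and a right-pusher immediately right of it forces that helper row to split exactly at the gadget, i.e.\ every square left of the gadget pushes left and every square right of it pushes right. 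The minus case is the mirror image: the lower variable row pushes left, the same slab analysis applied below the variable rows forces each lower helper row to split at the gadget, and the upper helper rows are left with a unit of slack and need not split.

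The main obstacle I expect is bookkeeping rather than a new idea. The helper row gadget has to be buildable with an arbitrary number of upper and lower helper rows and with arbitrarily wide blocks of helper row creation squares (so that it can be spaced past all the static shifts and variable components lying above it), so the slab analysis above must be carried out uniformly in these sizes; and one must check that the unit of slack in a non-splitting helper row genuinely stays trapped inside that row and cannot leak into the variable rows or into the static rows bounding the component — this is exactly what the extra static rows below each helper row gadget, together with the alignment facts from Lemmas~\ref{lem:ConvexVar1} and~\ref{lem:ConvexVar2}, are there to guarantee. A secondary point to be careful about is that the helper row gadgets lie to the left of the redundancy columns, so the position information provided by Lemma~\ref{lem:ConvexVar1} has to be transported to their $x$-coordinates before the plus/minus dichotomy can be invoked there.
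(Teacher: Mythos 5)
There is a genuine gap at the heart of your argument: the forcing is not local to a single gadget's ``slab'', and the step where you claim that ``tracing the only way to cover the reference centres in this slab'' forces, in \emph{each} upper helper row, a left-pusher immediately left of the gadget and a right-pusher immediately right of it, is both unjustified and false as stated. In the construction the helper row gadgets are staggered, and each upper helper row splits at \emph{its own} gadget only: the $k$th upper helper row splits at the $k$th gadget from the right, so at the $x$-range of a different gadget that row does not split at all (its squares there all push the same way). Consequently no single-gadget analysis can force the split; the paper's proof is a two-directional induction \emph{across} the helper rows. The left-pushing halves are propagated upward starting from the variable row as the base case: the creation column of gadget $k$ pushes $\langle x_k,k\rangle$ down, the square $\langle x_k+1,k-1\rangle$ in the row below pushes left by the inductive hypothesis (its row splits further to the right) and pushes up because of the polygon boundary below, with \Cref{lem:ConvexVar2} needed to carry that upward push across the differently-aligned variable rows; together these force $\langle x_k,k\rangle$ left. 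The right-pushing halves are propagated \emph{downward} starting from the right-pushing upper static row above the helper rows (treated as the $(n+1)$st helper row), again using the downward-pushing creation columns at each step (see \Cref{fig:HelperVerification1}). Your proposal never invokes the upper static row at all, yet without it there is nothing anchoring the right-pushing side; the pinned static row \emph{below} the gadget and the variable-row alignment alone do not do it.

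Two smaller points. First, your use of \Cref{lem:ConvexVar2} goes the wrong way: in the paper it transports an \emph{upward} push from below the variable rows into the helper rows (for the column just right of a creation column), not a downward push from the upper variable row to $\langle x,-2\rangle$. Second, your reading of the plus position (``the upper variable row pushes right'') is opposite to what the base case of the left-push induction needs, namely that in the plus position the upper variable row supplies the left-pushing anchor; with your convention the anchoring of the left halves would fail. The ingredients you cite (\Cref{lem:ConvexVar1} for well-definedness of the position, the prefix/suffix structure of a row, \Cref{lem:ConvexVar2} for crossing the variable rows) are the right ones, but the missing idea is the staggered, row-by-row induction in both directions, which is exactly where the work of this lemma lies.
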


\begin{proof}
We show the claim about the upper helper rows when the squares in the variable rows are plus.
The claim about the lower helper rows for the minus position follows by a similar argument.

Suppose that the variable rows are plus.
We first show the claim that the squares in the helper rows to the left of the leftmost set of helper row creation squares push left.
We proceed by induction, showing that the $k$th upper helper row that are to the left of the $k$th helper row gadget from the right must push left.
The base case is $k=0$---we think of the ``zeroth'' helper row as being the top row of the variable component.

The inductive step is shown in \Cref{fig:HelperVerification1} (left).
Let $\langle x_k, n+1\rangle$ be the rightmost square in the $k$th block of helper row creation squares, which pushes down by assumption.
So the square $\langle x_k, k\rangle$ pushes down also. By inductive assumption, the square $\langle x_k+1, k-1\rangle$ pushes left. This square also pushes up because of the boundary of the polygon (\Cref{lem:ConvexVar2} is used here since this column crosses over the row with differently-aligned reference centers). This means that the square $\langle x_k, k\rangle$ must push to the left, so all the squares in the $k$th helper row to the left of $\langle x_k, k\rangle$ push to the left.
So for $x$-coordinates to the left of the leftmost of the helper row creation squares, the squares in all the upper helper rows push left.

\begin{figure}
\centering
\includegraphics[page=6]{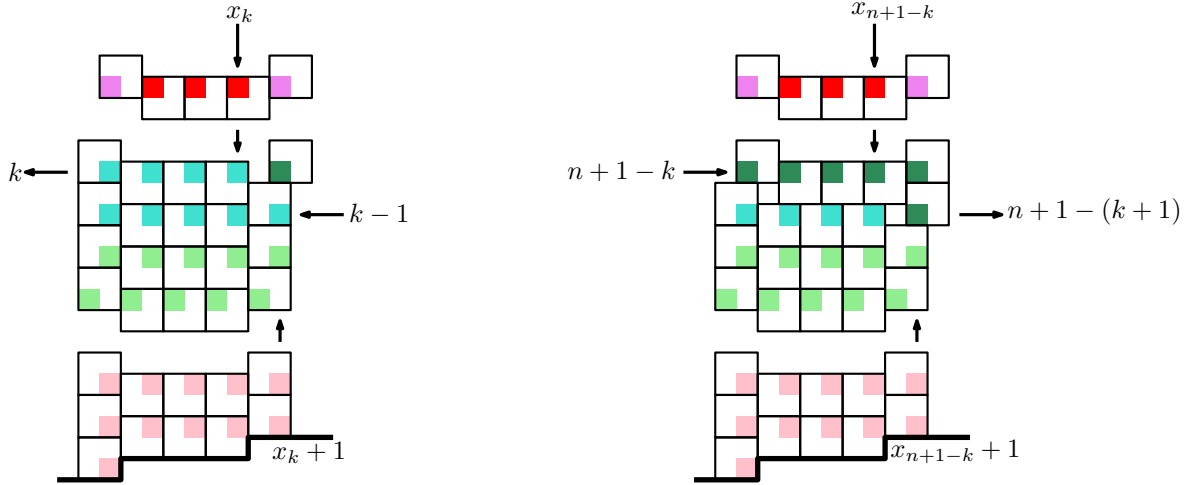}
\caption{Left: The square $\langle x_k, k\rangle$ pushes left.
Right: The square $\langle x_{n+1-k}+1, n+1-(k+1)\rangle$ pushes right.}
\label{fig:HelperVerification1}
\end{figure}

By induction, we now show that the squares in the $(n+1-(k+1))$th helper row that are to the right of the $(n+1-k)$th helper row gadget from the right push to the right. We think of the upper static row as being the $(n+1)$st helper row, so the case for $k=0$ is clear.

The inductive step is shown in \Cref{fig:HelperVerification1} (right). We label the reference centers as before. Now square $\langle x_{n+1-k}, n+1-k\rangle$ pushes down and right, and square $\langle x_{n+1-k}+1, n+1-(k+1)\rangle$ pushes up. So square $\langle x_{n+1-k}+1, n+1-(k+1)\rangle$ pushes right, and all the squares in the $n+1-(k+1)$ helper row that are to the right of $\langle x_{n+1-k}+1, n+1-(k+1)\rangle$ push to the right.

We have now determined the horizontal positions of squares in the upper helper rows when the variable component is plus.
\end{proof}

We are now ready to prove a claim made early in this section and shown in \Cref{fig:ConvexVar3} that a vertical passing through a variable component grows in width by $1$.

\begin{lemma}\label{lem:ConvexVar4}
If all the helper row creation squares and upper pinch squares push down, then any vertical stack passing through the variable component grows in width by $1$.
\end{lemma}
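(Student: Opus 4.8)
The plan is to reduce to one case and then track the width of a vertical stack row by row as it climbs through the component. By the mirror symmetry of the variable component we may assume it is in the plus position; the minus case is symmetric, with the single growth ending up in the upper half of the component rather than the lower. We may also assume the stack pushes up (a stack pushing down is handled the same way read from top to bottom). Fix such a stack and let $X$ be the horizontal slot it occupies; since the lower pinch column and every helper-row gadget sit at the left end of the component, $X$ may be taken to lie to the right of all of them, which is where all the stacks relevant to the verification lie.

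First I would treat everything from the variable rows upward. By the hypothesis that the upper pinch squares push down, \Cref{lem:ConvexVar1} gives that the two variable rows are aligned, so by \Cref{lem:ConvexVar2} the stack does not change width in passing from just below the variable rows up through both of them --- the one-unit offset of the reference centres in the lower variable row is exactly absorbed by the two rows being flush. By the hypothesis that the helper-row creation squares push down, \Cref{lem:ConvexVar3} says that in the plus position every square of the upper helper rows lying to the right of the helper-row gadgets, in particular at $X$, pushes right, just like the squares of the upper variable row and of the upper static rows; all of these rows carry the standard reference-centre pattern, so a stack crossing them keeps its width. Hence the stack has the same width just below the variable rows as it does above the top static rows of the component.

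It remains to pass from the bottom static rows, through the lower helper rows, up to the point just below the variable rows. The bottom static rows push left with the standard reference-centre pattern, while the lower variable row carries the shifted pattern $[2k,2k+1]\times[2\ell-1,2\ell]$, and this single shift of the reference-centre lattice forces the stack to gain exactly one square there, exactly as in \Cref{lem:StaticShift}. The lower helper rows sit in between and must be shown to contribute no further growth: in the plus position they are not split by the helper-row gadgets (which split the upper helper rows instead), and the auxiliary static rows inserted directly below the helper-row gadgets keep each lower helper row flush with the bottom static rows at every horizontal position in $X$, so the stack keeps its width while crossing them. Combining the two parts, the stack grows by exactly one square across the component. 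The delicate point --- and the one I expect to be the main obstacle --- is this last step: ruling out extra growth from the lower helper rows, which carry genuine slack since they are not pinned by \Cref{lem:ConvexVar3}. This is exactly where one needs the careful placement of the auxiliary static rows beneath the helper-row gadgets, which is what prevents the slack in a helper row from turning into a width change of a crossing stack.
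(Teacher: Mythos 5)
There is a genuine gap, and it sits exactly at the point you flag as delicate. First, the reduction to one position by ``mirror symmetry'' is not available: the component is not symmetric under a vertical flip --- only the lower variable row has the shifted reference centres, the upper static rows push right while the lower ones push left, and \Cref{lem:ConvexVar3} pins the upper helper rows in one position but the lower helper rows in the other. Consequently, whichever position you label plus, one half of your row-by-row tracking rests on an alignment claim about the helper rows on the side that \Cref{lem:ConvexVar3} does not control in that position. Concretely: if plus is the position in which \Cref{lem:ConvexVar3} pins the upper helper rows (the paper's convention), then the variable rows are flush with the lower static rows, the upper variable row pushes left (not right, as you assert), and the forced widening happens in the upper half of the component; no widening is forced at the lower variable row, because the shifted reference centres do not by themselves force growth in the manner of \Cref{lem:StaticShift} --- growth is forced only where two adjacent rows of known, opposite horizontal alignment meet, and when the variable rows are flush with the rows below them a stack crosses the shifted-centre row without growing (that is precisely the content of \Cref{lem:ConvexVar2}). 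If instead your plus is the other position, then your appeal to \Cref{lem:ConvexVar3} for the upper helper rows is invalid (they are the unpinned ones there), and your claim that the lower helper rows stay flush with the bottom static rows contradicts \Cref{lem:ConvexVar3}, which then pins them to the opposite alignment at $X$.

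Second, the assertion that ``the auxiliary static rows inserted directly below the helper-row gadgets keep each lower helper row flush with the bottom static rows'' is exactly the fact your proof needs, stated without argument: those extra static rows exist for spacing and orthogonal convexity, and the paper explicitly allows the unpinned helper rows to have slack. The paper's proof is designed never to need the alignment of the unpinned helper rows: fixing a column $x$ in which the square of the lower static row pushes up (and reducing to $x$ right of the redundancy columns), the square directly above it in the upper static row pushes up by the regular grid together with \Cref{lem:ConvexVar1,lem:ConvexVar2}; the additional neighbouring square in the upper static row is then forced up using only a pinned row of known alignment --- the topmost upper helper row against the oppositely aligned upper static row in the plus case, and the topmost lower helper row against the lower variable row, carried up through the aligned variable rows via \Cref{lem:ConvexVar1}, in the minus case. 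Note also that in rows with the standard reference-centre pattern a square pushing up forces the square directly above it to push up irrespective of any horizontal slack, so the nontrivial content of the lemma is (i) forcing the one extra column somewhere and (ii) crossing the shifted-centre row without losing it; your proposal essentially assumes (i) and establishes only (ii).
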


\begin{proof}
Suppose that square $\langle x, -2-n\rangle$ in the below static row pushes up. Any stack that is the left of the redundancy columns will hit the top wall of the polygon instead of passing through the variable component, so we can assume that this square is to the right of the redundancy columns and so the conclusions of \Cref{lem:ConvexVar1,lem:ConvexVar2} can be used. The square $\langle x, n+1\rangle$ pushes up (by \Cref{lem:ConvexVar2}), and we want to show that the square $\langle x-1, n+1\rangle$ also pushes up. There are two cases to consider. 

If the variable component is plus, then the square $\langle x, n\rangle$ pushes left by \Cref{lem:ConvexVar3}. $\langle x, n\rangle$ pushes up and the square $\langle x-1, n+1\rangle$ pushes right, so $\langle x-1, n+1\rangle$ must also push up.

If the variable component is minus, then $\langle x, -2\rangle$ pushes left by \Cref{lem:ConvexVar3}. $\langle x, -2\rangle$ also pushes up, and so the squares $\langle x+\frac12, -1\rangle$ and $\langle x-\frac12, -1\rangle$ must also push up (these squares overlap the $x$-coordinates of $\langle x, n\rangle$ by \Cref{lem:ConvexVar1}). Again by \Cref{lem:ConvexVar1}, this means that $\langle x, 0\rangle$ and $\langle x-1, 0\rangle$ push up, so $\langle x-1, n+1\rangle$ pushes up.

So if $\langle x, -2-n\rangle$ pushes up, then $\langle x, n+1\rangle$ and $\langle x-1, n+1\rangle$ push up. Similarly, it can be shown that if $\langle x, n+1\rangle$ pushes down, then $\langle x, -2-n\rangle$ and $\langle x+1, -2-n\rangle$ push down. This gives the required result.
\end{proof}

\begin{corollary}\label{lem:ConvexStackGrowth}
Each vertical stack grows by $2$ squares for each variable component that it fully passes through.
\end{corollary}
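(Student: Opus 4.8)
The plan is to derive the corollary by summing the two width-growth events that a vertical stack encounters while traversing a single variable component. First I would pin down where horizontal static shifts occur near a variable component, using the layout described for the complete variable component (see \Cref{fig:VariableGadgetSchematic}). Listing the rows of a variable component from bottom to top --- several lower static rows (including the extra ones forced by the helper-row gadgets), the lower helper rows, the lower variable row, the upper variable row, the upper helper rows, and several upper static rows (including the extra ones forced by the redundancy columns) --- the key observation is that every lower static row of the component pushes left and every upper static row pushes right, so among static rows the only change of alignment occurs at the junction with the neighbouring component: the lower static rows of one variable component (pushing left) lie just above the upper static rows of the component below (pushing right). Hence between any two consecutive variable components there is exactly one horizontal static shift, and by \Cref{lem:StaticShift} a vertical stack crossing it grows in width by at least one square.

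Second, I would account for the growth inside the variable component itself by invoking \Cref{lem:ConvexVar4}: under the hypothesis that the component's upper pinch squares and helper-row creation squares push down --- which is part of what we establish for each variable component --- any vertical stack that continues up through the helper rows and the two variable rows gains at least one further square of width (this is exactly where the half-unit offset of the lower variable row's reference centers is used, keeping this transition at $+1$ rather than $+2$; but that is already internal to \Cref{lem:ConvexVar4}, so nothing new is needed here). Combining the two steps, a stack that fully passes through a variable component --- entering below the static shift that separates it from the component below and exiting above its upper static rows --- grows in width by at least two squares, and by tightness this is exactly two in the packings realizing satisfying assignments.

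The proof is short; the only real care needed is the bookkeeping of what ``one variable component'' comprises --- namely that the single static shift at the junction below it is charged to it, and that the several spacing static rows inserted on each side (for the helper-row gadgets below and the redundancy columns above) do not hide any additional static shift. I would make this explicit via the row-by-row list above together with the remark that all of a component's lower static rows share their (left) alignment and all of its upper static rows share their (right) alignment, so that \Cref{lem:ConvexVar4} already absorbs every transition strictly inside the component and \Cref{lem:StaticShift} absorbs the one at the junction.
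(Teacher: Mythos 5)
Your proof is correct and follows essentially the same route as the paper: one square of growth from the single static shift between consecutive variable components (\Cref{lem:StaticShift}) plus one square from crossing the variable rows themselves (\Cref{lem:ConvexVar4}), with the same implicit reliance on the pinch and helper-row creation squares pushing down. The extra bookkeeping you add about which static rows belong to which component is harmless but not needed beyond what the paper's one-line argument already uses.
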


\begin{proof}
There is a static shift between every pair of variable components, so this is clear by \Cref{lem:StaticShift,lem:ConvexVar4}.
\end{proof}

We can now verify that the push squares of the PUSH gadgets work as claimed.

\begin{lemma}\label{lem:ConvexPushGadgets}
If a variable component is plus, then the rightmost square of each of its PUSH-UP gadgets pushes up. If a variable component is minus, then the leftmost square of each of its PUSH-DOWN gadgets pushes down.
\end{lemma}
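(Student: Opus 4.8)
The plan is to reduce the statement to a constant-size local inspection of a single PUSH gadget, using the alignment information about the helper rows that we have already extracted. By the mirror symmetry between the PUSH-UP-IF-PLUS and PUSH-DOWN-IF-MINUS gadgets (\Cref{fig:ConvexPushUp,fig:ConvexPushDown}), it suffices to prove one of the two claims; I would handle the plus case and a PUSH-UP-IF-PLUS gadget. So suppose the variable component is plus. By the way the plus position is defined (via \Cref{lem:ConvexVar1}) its upper pinch squares push down, and --- since we reach this point only after verifying this component together with everything above it --- we may also assume all of its helper row creation squares push down. Hence \Cref{lem:ConvexVar3} applies and determines the horizontal alignment of the squares in the relevant helper rows: to the right of every helper row gadget they push right, and to the left they push left.

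Second, I would localize. Every PUSH-UP gadget lies strictly to the right of all helper row gadgets (the helper row creators are placed as far left as possible and the PUSH-UP columns lie between them and the upper pinch column). Fix such a gadget and consider the $O(1)$ reference centers in its immediate neighbourhood, as drawn in \Cref{fig:ConvexPushUp}. The square of the adjoining helper row that forms the effective ``floor'' of the gadget lies to the right of a helper row gadget, so by \Cref{lem:ConvexVar3} it pushes right. Feeding this fact, together with the two polygon edges bounding the gadget, into the finite case analysis on the four quadrant placements of each of the few squares involved, I would check that the only globally consistent configuration is the one in which the rightmost marked square of the gadget pushes up; wherever a column of this local picture crosses the variable rows (the single row of differently aligned reference centers) one invokes \Cref{lem:ConvexVar2} to carry the conclusion ``pushes up'' through that row. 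This establishes the plus case, and the mirrored argument --- now using the minus half of \Cref{lem:ConvexVar3} --- gives that the leftmost marked square of every PUSH-DOWN-IF-MINUS gadget pushes down when the component is minus. (For a PUSH-NEVER gadget there is nothing to prove: it is built with one extra unit of slack, \Cref{fig:ConvexPushNever}, precisely so that this forcing does not occur.)

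I expect the delicate part to be the local case analysis itself rather than any single clean step: the neighbourhood of a PUSH gadget contains the helper row, the variable rows with their shifted reference centers, the ``extra row below the gadget for spacing'', and the foot of the vertical stack that will propagate the signal to the membrane, and one must be careful both that this extra spacing row introduces no slack that would let the marked square escape and that \Cref{lem:ConvexVar2} is legitimately applicable at each crossing of the variable rows. A secondary point is the bookkeeping of the verification order: invoking \Cref{lem:ConvexVar1,lem:ConvexVar3} is justified here only because at this stage the current variable component and all the ones above it have already been processed, so the hypotheses ``upper pinch squares push down'' and ``helper row creation squares push down'' are in force.
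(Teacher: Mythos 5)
Your proposal matches the paper's own argument: the paper also proves this by noting that the PUSH-UP gadgets sit to the right of the helper row gadgets (and PUSH-DOWN gadgets to the other side), invoking \Cref{lem:ConvexVar3} to fix the horizontal alignment of the adjacent helper row squares, and then reading off from \Cref{fig:ConvexPushUp,fig:ConvexPushDown} that the rightmost (resp.\ leftmost) wire square must push up (resp.\ down). Your additional bookkeeping about the verification order and the use of \Cref{lem:ConvexVar2} is consistent with how the paper justifies the hypotheses of \Cref{lem:ConvexVar3}, so the approach is essentially identical, just spelled out in more detail.
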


\begin{proof}
The PUSH-UP gadgets are right of the helper row gadgets and the PUSH-DOWN gadgets are left of the helper row gadgets. So by \Cref{lem:ConvexVar3}, if the variable component is in the up position then the upper helper row squares in a PUSH-UP gadget push right, which means that the rightmost wire square pushes up (see \Cref{fig:ConvexPushUp}). Similarly, if the variable component is in the down position, then the lower helper row squares in a PUSH-DOWN gadget push left, so the leftmost wire square pushes down (see \Cref{fig:ConvexPushDown}).
\end{proof}

Put together, we can now verify all the properties of the variable components.

\begin{lemma}\label{lem:ConvexVariableGadgets}
Suppose that all of the variable membrane squares push in.
The our construction has the following properties:

\begin{itemize}
    \item All the clause membrane squares push out.
    \item If all the squares in a block of connecting membrane squares connected to a PUSH-UP-IF-PLUS gadget push down, then that variable component must be minus.
    \item If all the squares in a block of connecting membrane squares connected to a PUSH-DOWN-IF-MINUS gadget push up, then that variable component must be in the push position.
\end{itemize}
\end{lemma}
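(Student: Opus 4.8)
The plan is to run one top-to-bottom induction over the variable components that simultaneously establishes the hypotheses of \Cref{lem:ConvexVar1,lem:ConvexVar3,lem:ConvexVar4} (hence also \Cref{lem:ConvexStackGrowth}) for every variable component, and then to read off the three bullets. For the inductive step, consider the variable component $C$ currently being processed together with its blocks of upper pinch squares and of helper row creation squares. Each such block is the end of a vertical stack that some edge of the polygon above $C$ forces to have width $1$; following that stack down to $C$, \Cref{lem:StaticShift} widens it by at least $1$ at every static shift, \Cref{lem:ConvexStackGrowth} widens it by at least $2$ at every variable component strictly above $C$ (legitimate by the induction hypothesis applied to those components), and any crossing with a horizontal stack can only widen it further (\Cref{lem:StackGrowth}). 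Since each of these blocks is constructed to have exactly $1$ plus the number of static shifts above it plus twice the number of variable components above it, the block is contained in the end of this stack, so all of its squares push in. The hypotheses of \Cref{lem:ConvexVar4} now hold for $C$, and its conclusion, together with \Cref{lem:StaticShift} applied at the static shifts bordering $C$, yields \Cref{lem:ConvexStackGrowth} for $C$, closing the induction. Once it terminates, \Cref{lem:ConvexVar1} applies to each $C$ (its hypothesis, that the rightmost upper pinch square pushes down, is what we have just shown), so the two variable rows of $C$ are aligned to the right of the rightmost redundancy column and the plus/minus position of $C$ is defined; \Cref{lem:ConvexVar3} then fixes the helper-row alignments and \Cref{lem:ConvexPushGadgets} the behaviour of the wire squares of all PUSH gadgets of $C$. (The lower/left pinch squares, verified bottom-to-top, are not needed for the present lemma.)

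For the first bullet, recall that every clause membrane square is adjacent to a block of variable membrane squares or of connecting membrane squares. A block of variable membrane squares lies directly beneath (or above) a notch of the polygon boundary that forces it in (\Cref{lem:VariableGadgetSquares}); the boundary flanking that notch forces the neighbouring squares out, and these propagate to the membrane row as vertical stacks whose widths are pinned down exactly as in the induction, so the adjacent clause membrane squares push out. A block of connecting membrane squares is the end of a vertical stack that terminates at a PUSH or PUSH-NEVER gadget, and by construction (see \Cref{fig:ConvexVar4}) every block of pinch squares in such a gadget is flanked by squares that the polygon boundary forces out; propagating these to the membrane row in the same way shows the clause membrane squares bordering the connecting membrane block push out. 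Hence every clause membrane square pushes out.

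For the second bullet, take a PUSH-UP-IF-PLUS gadget in a variable component $C$ and suppose, for contradiction, that $C$ is plus while every square of the associated block of connecting membrane squares in the upper membrane row pushes down. By \Cref{lem:ConvexPushGadgets}, the rightmost wire square of the gadget pushes up (see \Cref{fig:ConvexPushUp}). The maximal vertical stack containing this square extends up to the upper membrane row, where it forms exactly that block of connecting membrane squares; since this stack only widens as it rises (\Cref{lem:StaticShift,lem:ConvexStackGrowth}) and all of its squares share the same, upward, vertical alignment, the entire block pushes up, contradicting the assumption. Therefore $C$ is minus. The third bullet is the mirror image: if every square of the block of connecting membrane squares attached to a PUSH-DOWN-IF-MINUS gadget of $C$ in the lower membrane row pushes up, then the leftmost wire square of that gadget pushes up, which by \Cref{lem:ConvexPushGadgets} is impossible when $C$ is minus, so $C$ is plus.

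The main obstacle is keeping the verification order acyclic. The inductive step for $C$ must derive the hypotheses of \Cref{lem:ConvexVar4}, and therefore \Cref{lem:ConvexStackGrowth} for $C$, using only the polygon boundary and \Cref{lem:ConvexStackGrowth} for the components strictly above $C$, never invoking $C$'s own position, its lower pinch squares, or anything below $C$; only after the induction closes may one bring in the redundancy-column counting (\Cref{lem:VariableRedundancy}, via \Cref{lem:ConvexVar1}) to certify that a full-width redundancy column reaches $C$. The bulk of the effort is the accompanying bookkeeping: checking that every pinch block, helper-row-creation block, connecting-membrane block, and variable-membrane block is built with exactly the width forced by the static shifts and variable components its stack passes through, so that the one-sided stack-growth bounds above are already tight enough to force every square in the block to push in, out, up, or down as required. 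Granting this, the three bullets follow as above.
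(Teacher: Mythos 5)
Your top-down induction (forcing the upper pinch and helper-row-creation blocks in via \Cref{lem:StaticShift} and \Cref{lem:ConvexStackGrowth} for the components already processed, then invoking \Cref{lem:ConvexVar1,lem:ConvexVar3,lem:ConvexVar4,lem:ConvexPushGadgets}) is exactly the first half of the paper's argument, and your treatment of the first and second bullets matches the paper's in substance. The genuine gap is your parenthetical claim that ``the lower/left pinch squares, verified bottom-to-top, are not needed for the present lemma.'' The paper needs precisely that step for your third bullet. After the top-down induction, \Cref{lem:ConvexVar1} only aligns the two variable rows \emph{to the right of the rightmost redundancy column}, and the plus/minus position is defined by the squares there; but the PUSH-DOWN-IF-MINUS gadgets sit to the \emph{left} of the redundancy columns. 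Without first concluding, from \Cref{lem:ConvexStackGrowth} (now available for all components) applied to the lower pinch columns rising from the lower membrane row, that all left/lower pinch squares push up --- so that the rows are aligned all the way to the left edge --- a variable row may split somewhere left of the redundancy columns. In that region the column-crossing steps behind \Cref{lem:ConvexVar3} (which go through \Cref{lem:ConvexVar2} and hence require the rows to be aligned at the relevant column) and therefore the PUSH-DOWN half of \Cref{lem:ConvexPushGadgets} are not justified, so a minus component need not force the leftmost wire square of its PUSH-DOWN gadget down, and the contrapositive you use for the third bullet collapses. The paper makes this dependence explicit: it deduces that all lower pinch squares push up and notes ``this is important because the PUSH-DOWN gadgets are left of redundancy gadgets.'' The fix is exactly the bottom-to-top phase you discarded, which costs nothing once the induction has closed.

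A secondary, repairable slip: in your first bullet you propagate the ``out''-pushing squares from the boundary notch that creates a block of variable membrane squares down to the membrane row. That has the geometry backwards --- a square at the upper membrane row is forced \emph{out} (up) only by a stack coming from below. The correct source, as in the paper, is the pair of out-pushing squares flanking the block of pinch, helper-row-creation, redundancy or PUSH (wire) squares \emph{inside the variable region} (guaranteed by the gadget construction, cf.\ \Cref{fig:ConvexVar4}); these propagate outward through the intervening variable components, widening by \Cref{lem:ConvexStackGrowth}, until they cover the clause membrane squares adjacent to the block. Your wording about ``every block of pinch squares in such a [PUSH] gadget'' should likewise refer to the squares flanking the PUSH column, not pinch squares.
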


\begin{proof}
The conclusions of \Cref{lem:ConvexVar1,lem:ConvexVar3} require only that squares in the static row above a variable component push down. If the conclusions of these lemmas hold for the first $k$ variable components from the top, then by \Cref{lem:ConvexStackGrowth} the upper pinch squares and helper row creation squares for $(k+1)$st variable component push down. By induction, we conclude that the results of \Cref{lem:ConvexVar1,lem:ConvexVar3} hold for all variable components. 

Again by \Cref{lem:ConvexStackGrowth}, we can now conclude that all lower pinch squares push up. So each pair of variable rows in a variable component are aligned going all the way to the left edge (this is important because the PUSH-DOWN gadgets are left of redundancy gadgets). 

Each set of variable membrane squares connects to a block of pinch, helper row creation, or redundancy squares. The two squares on either side of such a block push out by construction of the gadgets (see our diagrams). So by \Cref{lem:ConvexStackGrowth}, the squares adjacent to the block of variable membrane squares also push out. This is shown in \Cref{fig:MultipleConvexGrowth1}.

\begin{figure}
\centering
\includegraphics[page=8]{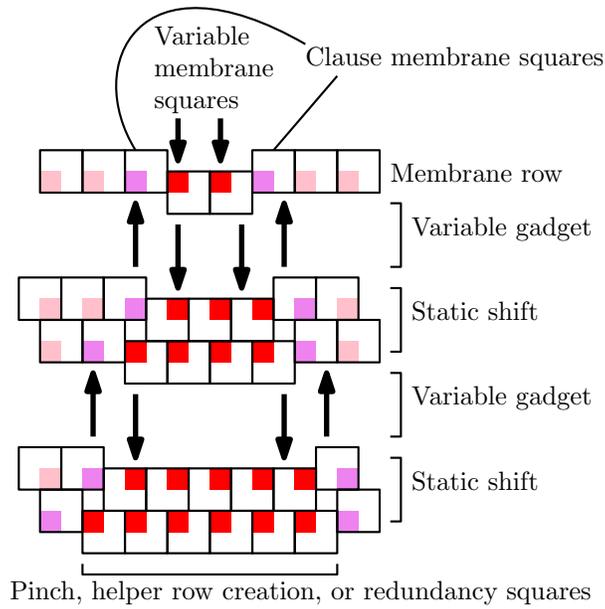}
\caption{A block of e.g.~pinch squares are created by a block of variable squares at the boundary. The squares adjacent to the block of pinch squares push out by construction of the gadget, so by \Cref{lem:ConvexStackGrowth} the squares adjacent to the set of variable membrane squares also push out.}
\label{fig:MultipleConvexGrowth1}
\end{figure}

By a similar principal and \Cref{lem:ConvexStackGrowth}, the connecting membrane squares behave as required. That is, the clause membrane squares adjacent to the block of connecting membrane squares push out, and it is only possible for all the squares in the block to push in if the variable component is in the appropriate position (see \Cref{lem:ConvexPushGadgets}).
This is shown in \Cref{fig:MultipleConvexGrowth2}.

\begin{figure}
\centering
\includegraphics[page=5]{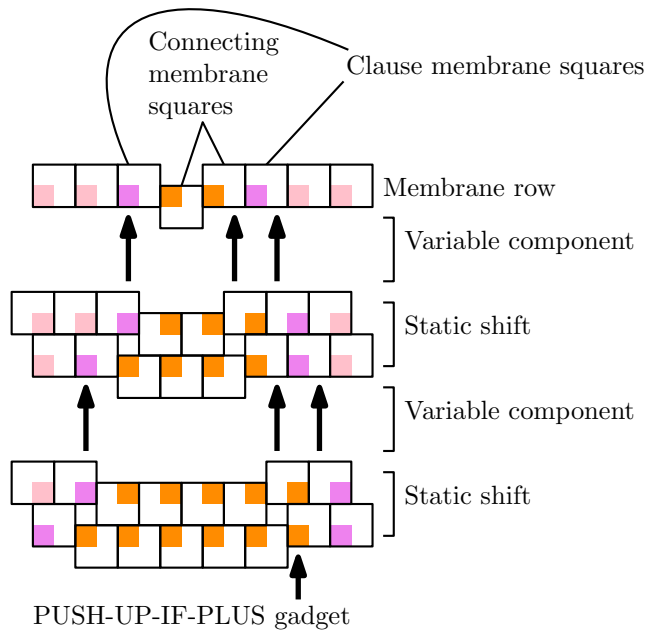}
\caption{When the push squares are up, the rightmost connecting membrane square is also up, is if all connecting membrane squares are down, the variable component containing the PUSH-UP-IF-PLUS gadget must be minus.}
\label{fig:MultipleConvexGrowth2}
\end{figure}
\end{proof}

\subsection{Clause components}\label{sec:ClauseGadgets}

We will need two sets of clause components, one above the variable components and one below. These are both created in the same way, so throughout this section we just consider the top set of components.
To create the bottom set of components, just exchange ``up'' and ``down''.
The clause components above the variable components represent negative clauses in the formula $\Phi$ that we reduce from, while the clause components below the variables represent positive clauses.

\subsubsection{Tester stacks and SWITCH gadgets}

Each set of clause components is formed by a large number of criss-crossing horizontal and vertical stacks. Each stack is limited in how wide it can grow.
\Cref{fig:StackStart} shows how the start of a stack is created.
On the other side of the polygon, the boundary prevents the stack from growing by more than a certain amount.
For a vertical stack, the membrane row prevents the stack from growing more than a certain amount, as shown in \Cref{fig:VerticalStackEnd}.

\begin{figure}
\centering
\includegraphics[page=4]{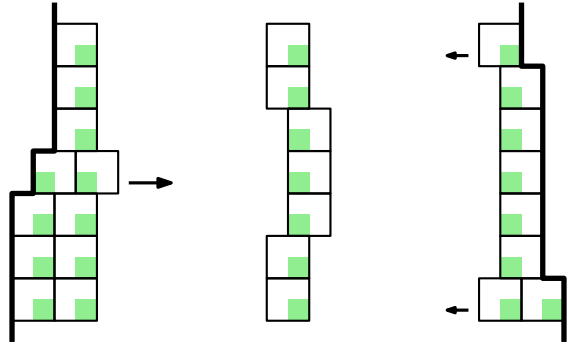}
\caption{The squares in the middle must form part of a horizontal stack that pushes to the right, due to the edge of the polygon at the left side that pushes a square to the right.
The adjacent squares could also be part of this stack.
A vertical stack or a stack pushing to the left could be created in a similar way.
For horizontal stacks, a section of the polygon boundary on the opposite side prevents the horizontal stack from growing more than a set amount, as shown here.
}
\label{fig:StackStart}
\end{figure}

\begin{figure}
\centering
\includegraphics[page=6]{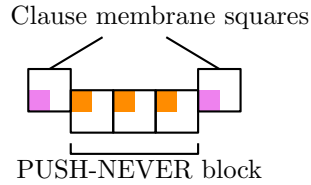}
\caption{A block of connecting membrane squares (which are attached to a PUSH-NEVER gadget in a variable component below) allow a vertical stack created in a clause component above to grow only by up to a certain amount and no more.}
\label{fig:VerticalStackEnd}
\end{figure}

An important ingredient is the \emph{SWITCH gadget}, which must create a vertical stack or a horizontal stack, but generally not both.
This is shown in \Cref{fig:RightAngle}.
Consider a negative clause $\lnot x_i\vee \lnot x_j\vee \lnot x_k$ of the formula $\Phi$ that we reduce from.
We make a corresponding clause component in which we have three SWITCH gadgets corresponding to $x_i,x_j,x_k$, respectively.
These define variables $y_i,y_j,y_k$, and for each of these $y_\ell$, we define $y_\ell=1$ if the SWITCH gadget creates a vertical stack, and otherwise we define $y_\ell=0$.
The clause component will enforce the constraint
$y_i+y_j+y_k=1$, i.e., exactly one SWITCH gadget will make a vertical stack.
When $y_\ell=1$ and the respective gadget creates a vertical stack, this stack hits a block of connecting membrane squares that is connected to a PUSH-UP-IF-PLUS gadget in the corresponding variable component $x_\ell$.
An example of this is shown in \Cref{fig:PushConnection}.
The PUSH-UP-IF-PLUS gadget only allows all these squares to push down if the variable component is minus.
This creates a constraint $y_\ell\implies \neg x_\ell$.
Since one of the SWITCH gadgets make a vertical stack, we conclude that the clause $\lnot x_i\vee \lnot x_j\vee \lnot x_k$ is satisfied.
This is only a one way implication---if $y_\ell$ is $0$ then there is no constraint on $x_\ell$, so even though just one SWITCH gadget makes a vertical stack, there can be more than one variable satisfying the clause.

\begin{figure}
\centering
\includegraphics[page=2]{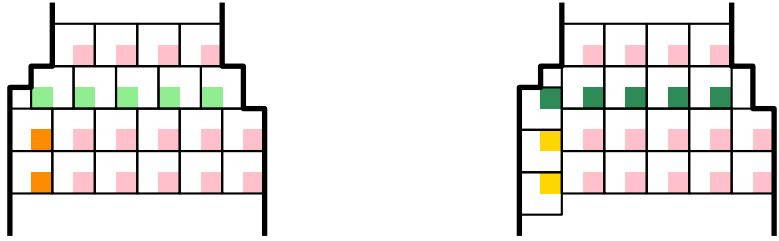}
\caption{A \emph{SWITCH gadget} that either creates a vertical stack or a horizontal stack.}
\label{fig:RightAngle}
\end{figure}

\begin{figure}
\centering
\includegraphics[page=3]{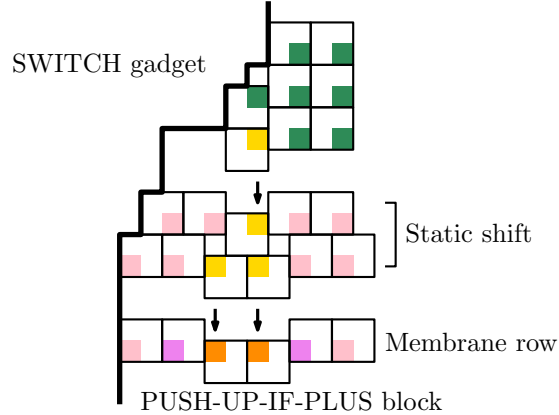}
\caption{If the SWITCH gadget creates a vertical stack, then all the squares in a block of connecting membrane squares push down. These connect to a PUSH-UP-IF-PLUS gadget, so this can only happen if the variable component is minus.}
\label{fig:PushConnection}
\end{figure}

The corresponding construction in the clause components below the variable components result in the implication $y_i\implies x_k$, since there we connect the SWITCH gadget to a PUSH-DOWN-IF-MINUS gadget.
Hence, we can also realize a positive clause $x_i\vee x_j\vee x_k$.

The rows and columns created by the SWITCH gadgets are called \emph{literal rows} and \emph{literal columns}.
We say that a literal row is \emph{aligned} if there is no stack present and \emph{unaligned} otherwise.
Note that the literal row created by a SWITCH gadget on the left side of the polygon should be unaligned when pushing right, but the literal row created by a SWITCH gadget on the right side of the polygon should be unaligned when pushing left.
Whenever we need a SWITCH gadget on the right side of the polygon, we will need a static shift above and below it to change the alignment of the static squares. This can be seen in \Cref{fig:FullOrGadget1}.

There are also some stacks that will always exist. These are called \emph{tester rows} and \emph{tester columns}.
Each tester row is constrained to grow at most a fixed number of times, a constraint that is easier to satisfy the fewer vertical stacks there are. The tester columns, on the other hand, create constraints that get easier to satisfy the fewer \emph{horizontal} stacks there are.

\subsubsection{Defining the clause components}

We are now ready to fully define the clause components. \Cref{fig:ConvexORSchematic} shows the setup at a schematic level. Due to the structure of the schematics produced from an instance of \MCTSAT, all the literal columns and tester columns from the $i$th clause from the top pass through the literal and testers rows from the $k$th clause if $k>i$. 

\begin{figure}
\centering
\includegraphics[page=3]{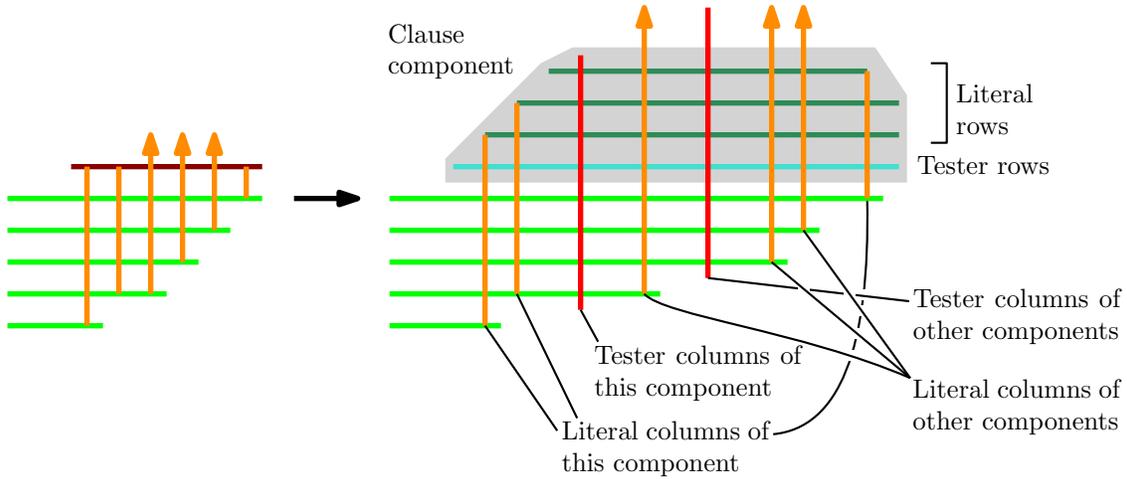}
\caption{Expanding the schematic described in \Cref{lem:ConvexConversion} to make space for the tester rows and tester columns. Each red or light blue line in this schematic represents multiple tester rows or columns.
}
\label{fig:ConvexORSchematic}
\end{figure}

Now we just need to specify the number of tester rows and tester columns for each clause and the amount by which each stack in the construction may grow. Here we mostly ignore horizontal static shifts (needed to change the alignment between left and right SWITCH gadgets) and vertical static shifts (needed to push down the variable membrane squares). By \Cref{lem:StaticShift,lem:VerticalStaticShift}, each stack grows exactly once when it passes through one of these shifts. When creating the final construction, the full amount that each stack can grow will be increased to account for this. The tester rows and tester columns are defined as follows:

\begin{itemize}
\item Starting with $i=0$, the $i$th clause component from the top has $3+i$ tester columns.
These start above the literal rows for this clause component and are created by a section of the top wall of the polygon.
Let $t_i=3(i+1)+\frac12i(i+1)$ be the total number of tester columns in clauses $0,\ldots,i$.

\item Each tester column is allowed to grow by at most two squares on each side, so by at most $4$ squares in total. Since the column starts with width $1$, this means that it should be allowed to reach a size of at most $5$ squares (plus the total number of static shifts that is passes through).

\item The $i$th clause component from the top has $1+2t_i$ tester rows directly below its literal rows.

\item A tester row below the $i$th clause component is allowed to grow by at most $i+1+t_i$ squares on each side, so by $2(i+1+t_i)$ squares in total.
\end{itemize}

An unaligned literal column should always push down on an entire block of connecting membrane squares, even if it only grows at static shifts. Since the clause membrane squares adjacent to the connecting membrane squares push up, this prevents an unaligned literal column from growing any further. 
So it just remains to specify how much the literal rows can grow.
Each clause component has $3$ literal rows.

\begin{itemize}
    \item The top literal row in the $i$th clause component from the top (again starting at $i=0$) can grow by at most $i+t_{i-1}$ squares on each side, so by at most $2(i+t_{i-1})$ squares in total.
    \item The lower two literal rows in the $i$th clause component can grow by an additional square on each side, so by at most $2(1+i+t_{i-1})$ in total.
\end{itemize}

This finishes the description of the clause component and thus of the entire polygon.
It remains to show that if $\Phi$ is satisfiable, then there is a perfect packing of the clause components and to verify that if there is a perfect packing, then $\Phi$ is satisfiable.

\subsubsection{Packing the clause components}

The following lemma describes how to pack the clause components using a satisfying assignment of $\Phi$.

\begin{lemma}\label{lem:ClausePackingsExist}
Suppose there is a satisfying assignment of the \MCTSAT\ formula $\Phi$. Then there is a packing of the upper clause components where:

\begin{itemize}
    \item The clause membrane squares push up.
    \item Each square in a block of connecting membrane squares connected to a PUSH-UP gadget pushes up unless the corresponding variable is $0$ in the satisfying assignment.
\end{itemize}
\end{lemma}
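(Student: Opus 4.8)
The plan is to turn the satisfying assignment into an explicit perfect packing of the clause‑component region, assembled from the local packings shown in the gadget figures (\Cref{fig:RightAngle}, \Cref{fig:ConvexORSchematic}, etc.). For each negative clause $\lnot x_i\vee\lnot x_j\vee\lnot x_k$, fix one literal that the assignment satisfies, say $\lnot x_\ell$, so $x_\ell$ is false; let that clause's SWITCH gadget for $x_\ell$ create a vertical literal column and let the other two SWITCH gadgets create horizontal literal rows. This realizes $y_i+y_j+y_k=1$ with $y_\ell=1$. The tester rows and tester columns have fixed positions and are always present. Since we will place exactly one $2\times 2$ square on each reference center (only choosing its quadrant), the resulting packing is automatically perfect; the substance of the lemma is that the squares can be placed without overlap, inside the polygon, and with the claimed membrane behavior.

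To complete the packing, place every square so that it pushes \emph{out} (away from the variable components) wherever there is any freedom, deviating only where a stack is forced to grow. Concretely: at each crossing of a vertical stack with a horizontal stack we use the configuration of \Cref{fig:StackCrossing} in which the horizontal stack absorbs the growth, so its width increases by $2$ while the vertical stack keeps its width (\Cref{lem:StackGrowth} guarantees this is a legal local configuration); at each horizontal static shift the crossing stack grows by $1$, which is unavoidable by \Cref{lem:StaticShift}; and the vertical static shifts at the membrane behave as in \Cref{lem:VerticalStaticShift}. Under this rule every vertical stack — every tester column and every unaligned literal column — grows \emph{only} at static shifts, while a horizontal stack (tester row or literal row) grows by $2$ for each vertical stack it crosses plus $1$ per static shift. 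The clause membrane squares are placed pushing out; this is locally consistent with our packing of the clause component above and with the variable‑membrane blocks below (which push in by \Cref{lem:VariableGadgetSquares}), the junctions being exactly the vertical static shifts just mentioned.

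It remains to check that no stack exceeds the width the polygon boundary allows it, i.e.\ the allowances fixed in \Cref{sec:ClauseGadgets}. A tester column starts with width $1$ and grows only at static shifts, so it reaches width $1+\sigma$ where $\sigma$ is the number of static shifts it passes; its allowance is $5+\sigma$. An unaligned literal column likewise reaches the membrane with width $1+\sigma$, and the block of connecting membrane squares at which it terminates was sized to be exactly that wide, so it fits. For a horizontal stack below the $i$‑th clause component one counts the vertical stacks crossing it: the $t_{i-1}$ tester columns created above it, at most one unaligned literal column per earlier clause (so at most $i$ more), plus $O(\sigma)$ static shifts; multiplying by $2$, this lies within the chosen allowance $2(i+1+t_i)$ for a tester row (and the slightly larger allowances for literal rows), the point being that $t_i=t_{i-1}+(3+i)$ grows fast enough to dominate — indeed for literal rows the bound is essentially tight. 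The structure of the schematic (\Cref{lem:ConvexConversion}) guarantees that a widened horizontal stack meets only the vertical stacks it is supposed to meet, so no unintended overlaps are created.

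Finally one reads off the membrane behavior. Each unaligned ($y_\ell=1$) literal column pushes down on the block of connecting membrane squares attached to a PUSH‑UP‑IF‑PLUS gadget in the variable component of $x_\ell$; since $x_\ell$ is false this is permitted, and it matches the packing of \Cref{lem:VariablePackingsExist}. A SWITCH gadget with $y=0$ creates a horizontal literal row and does not push on its connecting membrane block, so all those squares push up; hence a block of connecting membrane squares attached to a PUSH‑UP gadget fails to push up only when its variable was selected as some clause's satisfying literal, in which case that variable is $0$. The tester columns push down on the PUSH‑NEVER‑connected blocks, again matching \Cref{lem:VariablePackingsExist}, and the clause membrane squares push up by construction. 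The main obstacle is the bookkeeping of the previous paragraph: one must exhibit a single choice of the numbers of tester rows/columns and of the per‑stack growth allowances of \Cref{sec:ClauseGadgets} that is simultaneously loose enough for this packing to exist and tight enough for the forcing direction in \Cref{sec:ClauseGadgetVerification}, while also absorbing all the horizontal and vertical static shifts introduced for orthogonal convexity.
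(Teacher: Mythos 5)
There is a genuine gap in the packing rule you chose. You pack so that \emph{every} vertical stack keeps its width at every crossing and every horizontal stack absorbs the full growth of $2$. But the $3+i$ tester columns of the $i$th clause component start \emph{above} that component's own literal rows, so an unaligned literal row of clause $i$ crosses not only the $t_{i-1}$ tester columns and at most $i$ unaligned literal columns from earlier clauses (the count in your third paragraph) but also its own clause's $3+i$ tester columns. Under your rule the top literal row would then have to grow by at least $2(t_i+i)=2(i+t_{i-1})+2(3+i)$, exceeding its allowance $2(i+t_{i-1})$ fixed in \Cref{sec:ClauseGadgets}; the squares at its ends would overlap the boundary, so the packing you describe does not exist. (Your tester-row count has the same slip, $t_{i-1}$ instead of $t_i$, though there the allowance $2(i+1+t_i)$ happens to cover the correct count.)

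The paper's proof avoids this with a non-uniform priority scheme rather than a single ``verticals never grow'' rule: literal columns (priority $0$) indeed grow only at static shifts, but each tester column grows by $2$ at each of the two unaligned literal rows of its \emph{own} clause --- spending exactly its allowance of $4$ --- and nowhere else; literal rows of clause $i$ grow only at tester columns of earlier clauses and at unaligned literal columns from above, which is precisely what their allowances $2(i+t_{i-1})$ and $2(1+i+t_{i-1})$ accommodate; tester rows (priority $\infty$) absorb everything below. This tight accounting is not incidental bookkeeping to be fixed later, as your closing paragraph suggests: the allowances are already fixed in \Cref{sec:ClauseGadgets}, and making the tester columns spend their growth inside their own clause is what both makes the satisfying packing fit and keeps the slack propagation bounded for the converse direction in \Cref{sec:ClauseGadgetVerification}. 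Your choice of one vertical literal column per clause at a satisfied literal, and the final membrane argument (only columns for false variables push down on PUSH-UP-connected blocks), do match the paper.
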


\begin{proof}
Each literal has a corresponding SWITCH gadget. For each clause, choose a true literal in that clause. The SWITCH gadgets corresponding to those literals are set to create a vertical stack and not a horizontal stack. The remaining SWITCH gadgets each create a horizontal stack and not a vertical stack.

So each clause component has two unaligned literal rows and one unaligned literal column. This determines which rows and columns are aligned or unaligned. Now we just need to specify how the stacks grow at each crossing. This is done by assigning each stack a priority $p$. When two stacks cross, the stack with a larger value of $p$ grows by $2$ squares and the stack with the smaller value of $p$ does not grow. The priorities are assigned as follows:

\begin{itemize}
    \item The literal columns have priority $0$.
    \item The literal rows in the $i$th clause component have priority $2i+1$.
    \item The tester columns starting above the $i$th clause component have priority $2i+2$.
    \item The tester rows have priority $\infty$.
\end{itemize}

That is to say, the literal columns only grow at static shifts. Each tester column grows at the first two unaligned literal rows that it crosses, then doesn't grow any more. 

The unaligned literal rows in the $i$th clause component grow when they cross a tester column for one of the previous clause components, but do not grow at the tester columns for that column. The literal rows also grow at each of the unaligned literal columns. There are $t_{i-1}$ and $i$ unaligned literal columns coming from the clause components above. The lower of the two literal rows also cross some of the literal columns from \emph{this} variable component, so may need to grow $1+i+t_{i-1}$ times.

The tester rows grow at all the stacks that they pass through. Below the $i$th clause component, there are $t_i$ tester columns and $i+1$ unaligned literal columns (recall that $i$ starts at $0$).

\Cref{fig:BasicOR,fig:FirstORPackings,fig:SecondClause} show some schematics of what this looks like.

Recall that a block of connecting membrane squares that is connected to a PUSH-UP gadget below is connected to a literal column above. Recall also that the clause above contain only negative literals. The literal columns that push down all correspond to true literals, which correspond to variables with value $0$. So the only connecting membrane squares that push down are connected to variables that are $0$ in the satisfying assignment.
\end{proof}

\begin{figure}
\centering
\includegraphics[page=4]{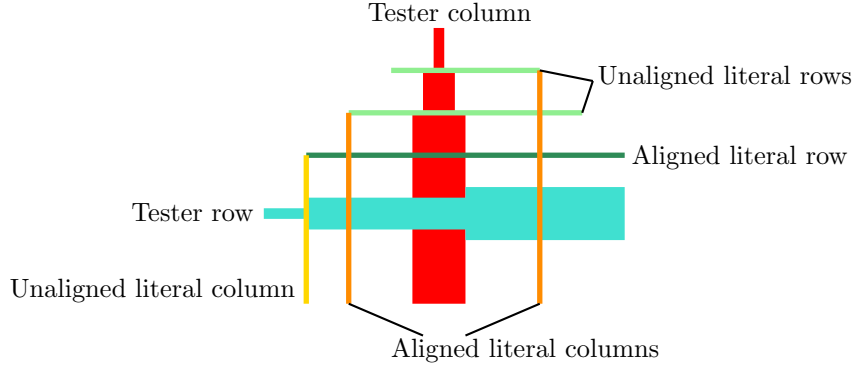}
\caption{A simplified clause component. The tester column (red) grows twice at the two unaligned literal rows (green). There are two unaligned literal rows (light green), which have corresponding aligned literal columns (orange). There is one aligned literal row (dark green), with a corresponding unaligned literal column (gold). The tester row (light blue) grows at the tester column and at the unaligned literal column.}
\label{fig:BasicOR}
\end{figure}

\begin{figure}
\centering
\includegraphics[page=5]{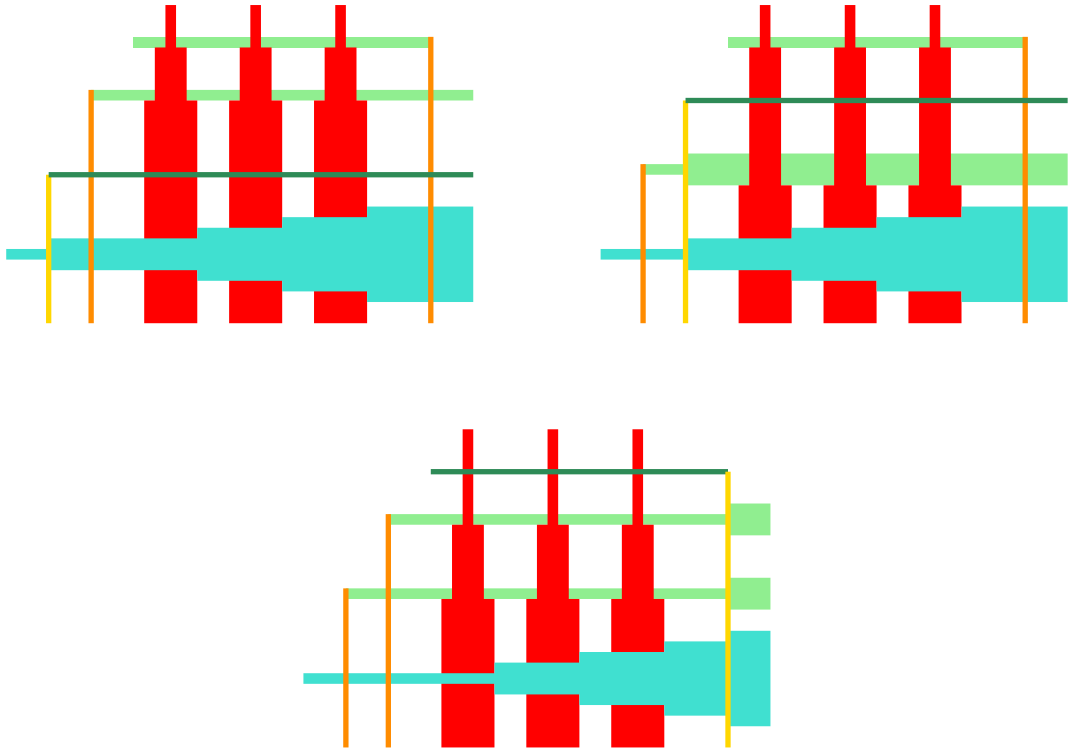}
\caption{The three ways to pack the first clause component, depending on which literal is true. Here only one tester row is pictured, in the actual construction there would be seven. Note that in the top left figure, the middle literal row doesn't grow to its full width.}
\label{fig:FirstORPackings}
\end{figure}

\begin{figure}
\centering
\includegraphics[page=6]{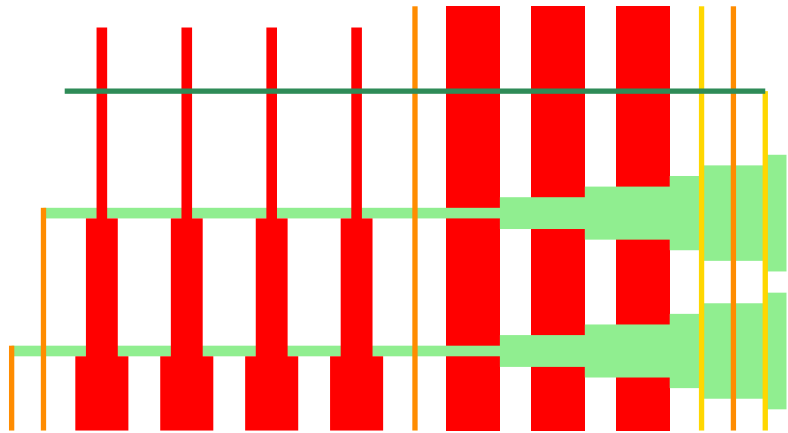}
\caption{The second clause component. The bottom two literal rows need to be able to grow $5$ times, once for the crossing with the third literal column, once for the one true literal from the first clause component, and once for each of the tester columns from the first clause component. The top literal row is only allowed to grow $4$ times. The tester rows are omitted. The second clause component has a $4$th tester column in case the first one sends some slack down.}
\label{fig:SecondClause}
\end{figure}

These are \emph{not} the only ways to pack the clause components. \Cref{fig:PartialGrowth} shows a schematic of a different packing of the first clause component.
That the rightmost tester column has not grown to its maximum width, creating some slack that can propagate down into the clause components below.
This extra slack is the reason that the number of testers needs to increase in the subsequent gadgets. This construction may seem unnecessarily complicated, but it is actually quite difficult to limit this slack propagation without needing exponentially many testers.

\begin{figure}
\centering
\includegraphics[page=7]{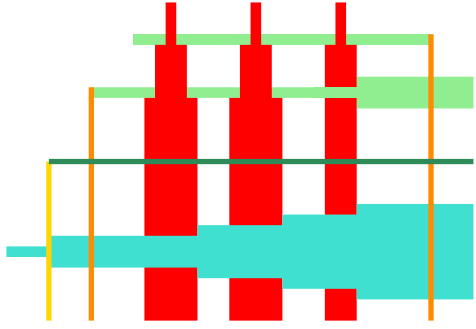}
\caption{If the lowest literal row is the one that is aligned, then some slack can propagate.}
\label{fig:PartialGrowth}
\end{figure}

\Cref{fig:FullOrGadget1,fig:FullOrGadget2,fig:FullOrGadget3,fig:FullOrGadget4} show how the first clause component is realized and how to pack it in a satisfying assignment.

\begin{figure}
\centering
\includegraphics[scale=0.8,page=8]{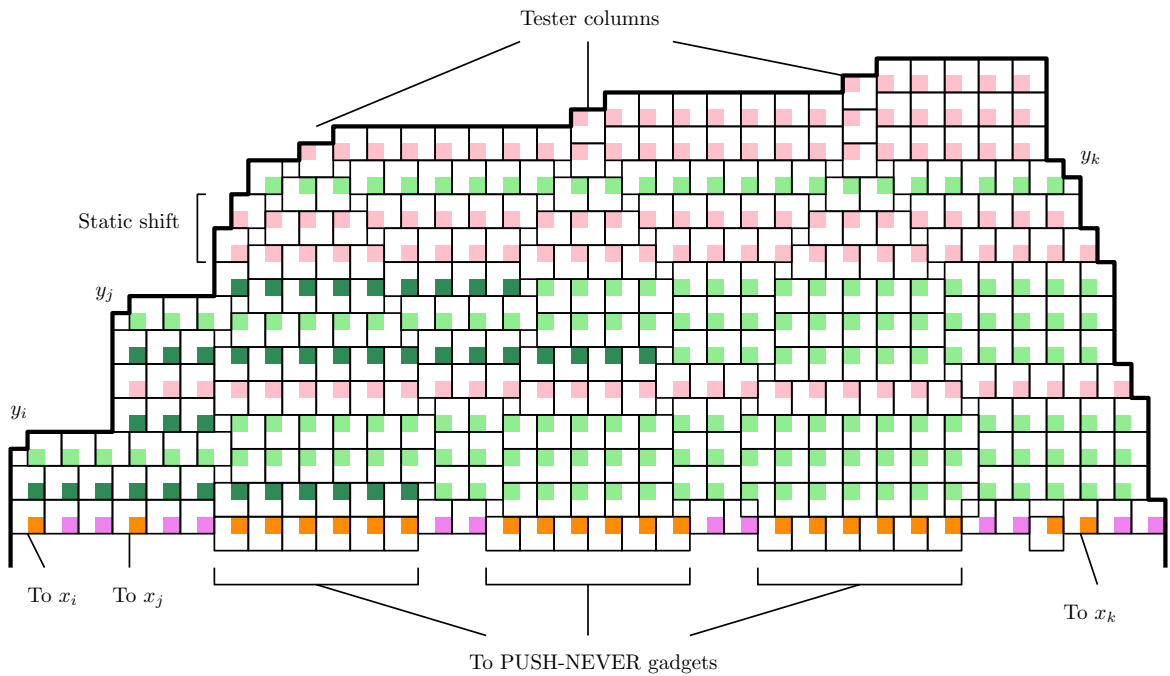}
\caption{The clause component, showing a state where none of the literals are true and so no perfect packing exists (two squares overlap).
}
\label{fig:FullOrGadget1}
\end{figure}

\begin{figure}
\centering
\includegraphics[width=\textwidth,page=9]{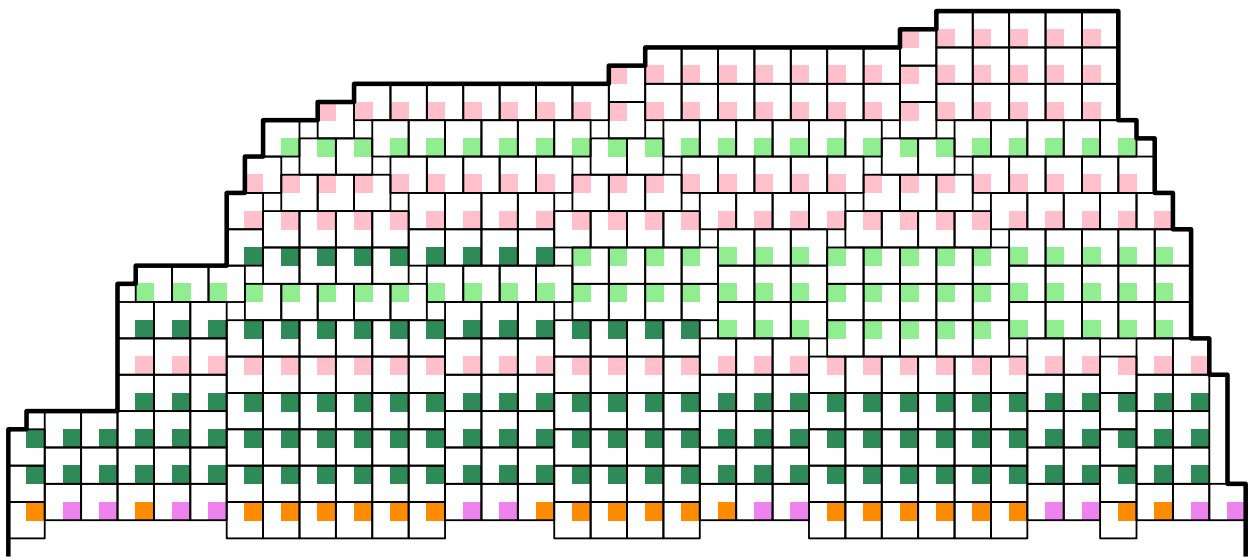}
\caption{A perfect packing \emph{does} exist if the first literal column is allowed to lower. The packing shown here is \emph{not} quite the same packing described in \Cref{lem:ClausePackingsExist}. Instead we show a case where the middle tester column has not grown to its maximum width in this component, creating some slack that can propagate down.}
\label{fig:FullOrGadget2}
\end{figure}

\begin{figure}
\centering
\includegraphics[width=\textwidth,page=10]{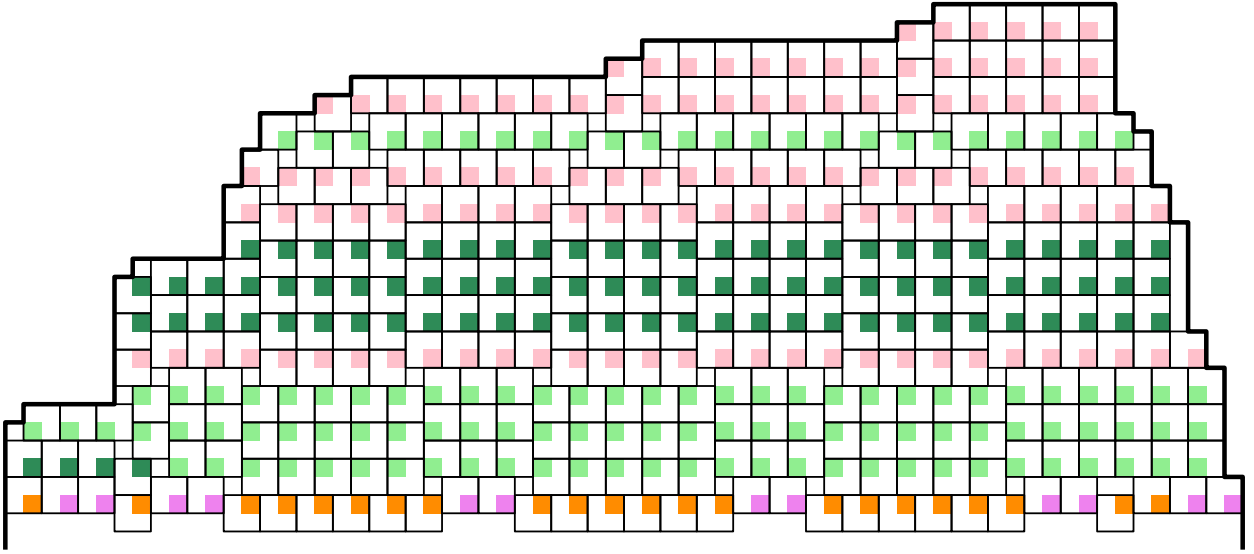}
\caption{A perfect packing when the second literal column is allowed to lower.}
\label{fig:FullOrGadget3}
\end{figure}

\begin{figure}
\centering
\includegraphics[width=\textwidth,page=11]{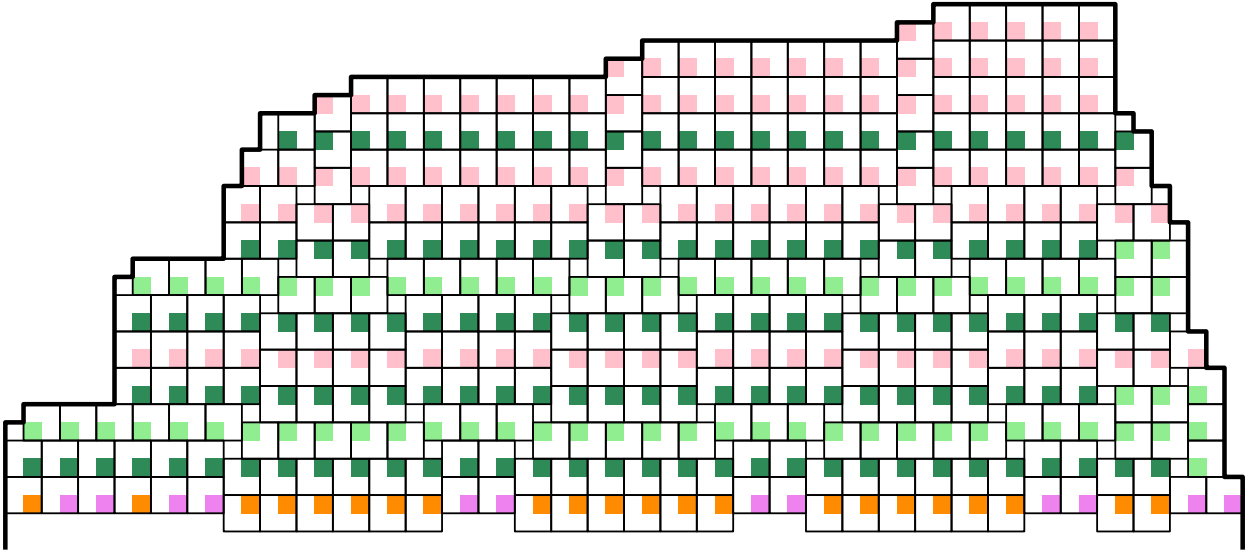}
\caption{A perfect packing when all the squares in the third literal column are allowed to lower.}
\label{fig:FullOrGadget4}
\end{figure}

\subsection{Verification of the clause components}\label{sec:ClauseGadgetVerification}

We now formally verify that any packing of the clause components requires one of the literal columns for each clause to push down. To carefully check this, we write some algebraic constraints that must be satisfied. Say there are $n$ clauses $c_i: y_{3i}+y_{3i+1}+y_{3i+2}=1$, with $c_0$ being at the top and $c_{n-1}$ being at the bottom.

For the $i$th SWITCH gadget (starting with $i=0$), we define a variable $y_i$ that takes values in $\{0, 1\}$. The variable $y_i$ is $1$ when the SWITCH gadget creates a vertical stack and $0$ otherwise.

We give a name to each stack in the construction that can grow. The row for the literal $y_i$ is $\ell_i$. The $j$th tester column for the clause $c_i$ is called $c_{i, j}$. The $j$th tester row for the clause $c_i$ is called $r_{i, j}$. 

Recall that when two stacks cross each other, combined they grow by a total of at least $2$ squares (\Cref{lem:StackGrowth}). For the crossing between a horizontal stack $a$ and a vertical stack $b$, we designate a variable $x(a, b)\in \{0, 1, 2\}$ that records how many times $a$ grows when passing through this crossing. The vertical stack $b$ then grows $2-x(a, b)$ times. The stacks created by unaligned literal columns can't grow, so any row that crosses an unaligned literal column grows by $2$ and we don't need an $x$ variable to keep track of this. 

The full set of constraints can now be written. For $0\le i < n$ and $0\le j < 1+2t_i$ there is a tester row $r_{i, j}$. This is allowed to grow at each tester column and at one of the literal columns for each of the clauses above it, up to a total width of $2\left(i+1+t_i\right)$ (recall that $t_i=3(i+1)+\frac12i(i+1)$ is the number of tester columns in clauses $c_0,\ldots,c_i$).
The constraint created by a tester row can then be expressed as follows:

\begin{equation}\label{eqn:TesterRows}
2\sum_{p=0}^{3i+2}y_p+\sum_{p=0}^{i}\sum_{q=0}^{3+p-1}x(r_{i, j}, c_{p, q})\le 2\left(i+1+t_i\right)
\end{equation}

For $0\le i < n$, the uppermost literal row in the $i$th clause is $\ell_{3i}$.
The literal row $\ell_{3i}$ is allowed to grow once at each of the tester columns for \emph{previous} clauses, and at one of the literal columns for each of the clauses above it, for a total of $2\left(i+t_{i-1}\right)$ times.
This leads to the following inequality.

\begin{equation}\label{eqn:TopLiteralRow}
(1-y_{3i})\left(2\sum_{p=0}^{3i-1}y_p+\sum_{p=0}^{i}\sum_{q=0}^{3+p-1}x(\ell_{3i}, c_{p, q})\right)\le 2\left(i+t_{i-1}\right)
\end{equation}

The next two literal rows in the $i$th clause are each allowed to grow by an additional two squares. So for $0\le i < n$ and $m\in \{1, 2\}$, we have the following:

\begin{equation}\label{eqn:LiteralRows}
(1-y_{3i+m})\left(2\sum_{p=0}^{3i+m-1}y_p+\sum_{p=0}^{i}\sum_{q=0}^{3+p-1}x(\ell_{3i+m}, c_{p, q})\right)\le 2\left(1+i+t_{i-1}\right)
\end{equation}

Finally, for $0\le p < n$ and $0\le q<3+i$, we have a tester column $c_{p, q}$, which is allowed to grow $4$ times.
Recall that $c_{p, q}$ crosses all literal and tester rows in clauses $c_{p+1}$ through $c_{n-1}$.
This leads to the constraint:

\begin{equation}\label{eqn:TesterCols}
\sum_{i=p}^{n-1}\sum_{m=0}^{2}(1-y_{3i+m})\left(2-x\left(\ell_{3i+m}, c_{p, q}\right)\right)+\sum_{i=p}^{n-1}\sum_{j=0}^{2t_i}\left(2-x\left(r_{i, j}, c_{p, q}\right)\right)\le 4
\end{equation}

Here we are assuming that a literal row is always aligned when the corresponding literal column is unaligned. There could be both vertical and horizontal stacks starting at a single SWITCH gadget, but this could only increase the values of the left hand sides of \eqref{eqn:TesterRows}--\eqref{eqn:TesterCols}. 

\begin{lemma}\label{lem:ClauseGadget}
Inequalities \eqref{eqn:TesterRows}--\eqref{eqn:TesterCols} imply that, for each $i$, $y_{3i}+y_{3i+1}+y_{3i+2}=1$.
\end{lemma}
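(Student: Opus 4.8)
The plan is to derive from \eqref{eqn:TesterRows}--\eqref{eqn:TesterCols} the two bounds $S_i\le i+1$ and $S_i\ge i+1$ for every $i$, where I abbreviate $Y_i:=y_{3i}+y_{3i+1}+y_{3i+2}$ and $S_i:=\sum_{p=0}^{i}Y_p=\sum_{p=0}^{3i+2}y_p$ (the number of literal columns originating in $c_0,\dots,c_i$). Together these give $S_i=i+1$, hence $Y_i=S_i-S_{i-1}=1$, which is the claim. I use the fact, built into the definition of the $x$-variables, that every crossing of a growing vertical stack with a growing horizontal stack contributes a combined growth of exactly $2$ (split as $x$ for the horizontal and $2-x$ for the vertical stack), and that every crossing of a growing stack with an unaligned literal column forces that stack to grow by $2$. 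I also use the counts recorded in the construction: $c_i$ has $3+i$ tester columns of its own, so $c_0,\dots,c_i$ together have $t_i=3(i+1)+\tfrac12 i(i+1)$ tester columns, each tester column of $c_p$ crossing every literal and tester row of $c_p,\dots,c_{n-1}$ (by \Cref{lem:ConvexConversion}), and combined growth $2$ per stack crossing (by \Cref{lem:StackGrowth}).

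\emph{Upper bound $S_i\le i+1$ (unconditional).} Fix $i$ and sum \eqref{eqn:TesterRows} over all $1+2t_i$ tester rows $r_{i,j}$ of $c_i$. Writing $\Sigma=\sum_{j}\sum_{p\le i,q}x(r_{i,j},c_{p,q})$ for the total growth of these tester rows at their crossings with the $t_i$ tester columns reaching $c_i$, this gives $2(1+2t_i)S_i+\Sigma\le 2(1+2t_i)(i+1+t_i)$. Since all summands of \eqref{eqn:TesterCols} are nonnegative, for each such column $c_{p,q}$ its tester-row part alone satisfies $\sum_j(2-x(r_{i,j},c_{p,q}))\le 4$, i.e.\ $\sum_j x(r_{i,j},c_{p,q})\ge 2(1+2t_i)-4$; summing over the $t_i$ columns gives $\Sigma\ge t_i(4t_i-2)$. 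Substituting and simplifying yields $S_i\le i+1+\tfrac{2t_i}{1+2t_i}<i+2$, hence $S_i\le i+1$.

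\emph{Lower bound $S_i\ge i+1$ (induction on $i$), and conclusion.} Assume $Y_j=1$ for all $j<i$, so $S_{i-1}=i$; I must show $Y_i=1$, and by the upper bound $Y_i\in\{0,1\}$, so it suffices to rule out $Y_i=0$ (the base case $i=0$ is this argument with an empty hypothesis, using $t_{-1}=0$, $S_{-1}=0$). Suppose $Y_i=0$, so $c_i$ contributes three unaligned literal rows $\ell_{3i},\ell_{3i+1},\ell_{3i+2}$, each crossing all $t_i$ tester columns. Two estimates then collide. First, a \emph{budget-exhaustion} estimate for the clauses above $c_i$: for each $j<i$, the two unaligned literal rows of $c_j$ have, by \eqref{eqn:TopLiteralRow}--\eqref{eqn:LiteralRows} with $Y_j=1$ and $S_{j-1}=j$, combined growth at tester columns at most $4t_{j-1}+2$, so the tester columns of $c_0,\dots,c_j$ must grow by at least $4t_j-(4t_{j-1}+2)=10+4j$ at their crossings with $c_j$'s literal rows; these crossings are disjoint over $j$, and comparing $\sum_{j=0}^{i-1}(10+4j)$ with the total budget $4t_{i-1}$ of the tester columns of $c_0,\dots,c_{i-1}$ shows those columns have at most $4t_{i-1}-\sum_{j=0}^{i-1}(10+4j)=2i$ units of growth left for anything below $c_{i-1}$, in particular at most $2i$ units at crossings with $c_i$'s literal rows. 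Second, a \emph{demand} estimate for $c_i$: using $S_{i-1}=i$ and $y_{3i}=y_{3i+1}=y_{3i+2}=0$ in \eqref{eqn:TopLiteralRow}--\eqref{eqn:LiteralRows}, the three literal rows of $c_i$ can grow at tester columns by at most $2t_{i-1}$, $2(1+t_{i-1})$, $2(1+t_{i-1})$, so the tester columns crossing them must grow, in total at those crossings, by at least $(2t_i-2t_{i-1})+2\bigl(2t_i-2(1+t_{i-1})\bigr)=14+6i$. But the only tester columns crossing $c_i$'s literal rows are those of $c_0,\dots,c_{i-1}$ (at most $2i$ units left) and the $3+i$ new ones of $c_i$ (at most $4(3+i)$ units), totalling at most $2i+4(3+i)=6i+12<14+6i$ — a contradiction. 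Hence $Y_i=1$, and combining both bounds, $S_i=i+1$ and $Y_i=1$ for all $i$.

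\emph{Main obstacle.} The crux is that this is a razor-thin amortized count: the choice $t_i=3(i+1)+\tfrac12 i(i+1)$ and the precise growth allowances of literal rows, tester rows and tester columns are tuned precisely so that the ``old'' tester columns are left with exactly $2i$ units of slack, which is just short of the $14+6i-4(3+i)=2i+2$ units they would need to let $c_i$ go unsatisfied. The delicate part of writing this out is therefore the bookkeeping: checking that every crossing counted (literal row vs.\ tester column, tester row vs.\ tester column, literal column vs.\ row) is counted exactly once and that no tester-column budget is double-spent, and then verifying that the constants align with the correct \emph{strict} inequality in each of the two estimates. The structural inputs needed — combined growth exactly $2$ per stack crossing, and the monotone clover schematic forcing every literal/tester column of $c_i$ to pass through all lower clauses — are already established.
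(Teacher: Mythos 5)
Your proof is correct and follows essentially the same route as the paper's: an induction over clauses in which the tester rows give the upper bound $y_{3i}+y_{3i+1}+y_{3i+2}\le 1$ and a tester-column budget argument rules out all three being zero, with your ``remaining budget $\le 2i$'' step being exactly the paper's invariant \eqref{eqn:InductiveHypothesis} (spent growth $\ge 4t_{i-1}-2i$, maintained via the same per-clause increment $10+4j$), only derived on the fly from $Y_j=1$ for $j<i$ rather than carried as a second induction statement. The remaining differences are presentational: you sum over all $1+2t_i$ tester rows and average (resp.\ compare total demand $14+6i$ against total supply $2i+4(3+i)=6i+12$) where the paper pigeonholes a single tester row (resp.\ a single tester column of clause $i$, forcing $\sum_m(2-x)\ge 5>4$).
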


\begin{proof}
By induction on $k$, we show that for $i<k$, we have
\begin{equation}
y_{3i}+y_{3i+1}+y_{3i+2}=1.\label{eq:InductiveHypothesis0}
\end{equation}
We will simultaneously show (in the same induction) that:

\begin{equation}\label{eqn:InductiveHypothesis}
\sum_{p=0}^{k-1}\sum_{q=0}^{3+p-1}\sum_{i=p}^{k-1}\sum_{m=0}^{2}(1-y_{3i+m})\left(2-x\left(\ell_{3i+m}, c_{p, q}\right)\right)\ge 4t_{k-1}-2k
\end{equation}

The left hand side of \eqref{eqn:InductiveHypothesis} counts how much the tester columns grow in total when crossing literal rows in the first $k$ clause components.
Recall that each of the $t_{k-1}$ tester columns is allowed to grow by $4$ squares, so in total they can grow by $4t_{k-1}$ squares.
So what \eqref{eqn:InductiveHypothesis} says is that at most $2k$ squares worth of slack is propagating downward after the first $k$ clause components.

The base case for the induction is $k=0$. Since there are no literals and no tester columns, the result is trivial in the base case. 

For induction, suppose the above holds for some $k$. First, we show that $y_{3k}+y_{3k+1}+y_{3k+2}\le 1$.
By using \eqref{eq:InductiveHypothesis0} in \eqref{eqn:TesterRows}, we have, for $0\le j < 1+2t_k$:

\begin{equation}\label{eqn:TesterRowsWithHypothesis}
2k+2(y_{3k}+y_{3k+1}+y_{3k+2})+\sum_{p=0}^{k}\sum_{q=0}^{3+p-1}x(r_{k, j}, c_{p, q})\le 2\left(k+1+t_k\right)
\end{equation}

We want to show that there is a value of $j$ where the sum of the $x(r_{k, j}, c_{p, q})$ is at least $2t_k-1$. All the terms in \eqref{eqn:TesterCols} are positive, so for each $p, q$ we can extract that:

\[\sum_{j=0}^{2t_k}\left(2-x\left(r_{k, j}, c_{p, q}\right)\right)\le 4\]
Summing over the $t_k$ values of $p$ and $q$:

\[\sum_{j=0}^{2t_k}\sum_{p=0}^{k}\sum_{q=0}^{3+p-1}\left(2-x\left(r_{k, j}, c_{p, q}\right)\right)\le 4t_k\]
The outermost sum sums over $2t_k+1$ positive integral values. Since $2(2t_k+1)<4t_k$, at least one of these values must be less than $2$ (and so less than or equal to $1$). That is to say, there is some $j$ such that:

\[\sum_{p=0}^k\sum_{q=0}^{3+p-1}\left(2-x\left(r_{k, j}, c_{p, q}\right)\right)\le 1\]
This can be rearranged to obtain:

\[\sum_{p=0}^k\sum_{q=0}^{3+p-1}x\left(r_{k, j}, c_{p, q}\right)\ge 2t_k-1\]
Subtracting this from \eqref{eqn:TesterRowsWithHypothesis}, we see that:

\[2(y_{3k}+y_{3k+1}+y_{3k+2})\le 3\]
Since the $y_i$ are integer valued, this implies that $y_{3k}+y_{3k+1}+y_{3k+2}\le 1$.

Next we should show that $y_{3k}+y_{3k+1}+y_{3k+2}\ge 1$. Suppose for contradiction that
\begin{equation}\label{eq:ContradictionAssumption}
y_{3k}=y_{3k+1}=y_{3k+2}=0
\end{equation}
By subtracting the inductive hypothesis \eqref{eqn:InductiveHypothesis} from the sum of \eqref{eqn:TesterCols} over $0\le p < k$ and $0\le q < 3+p$, we see that:

\begin{equation}\label{eqn:InductiveHypothesisReverse}
2k \ge \sum_{p=0}^{k-1}\sum_{q=0}^{3+p-1}\sum_{m=0}^{2}\left(2-x\left(\ell_{3k+m}, c_{p, q}\right)\right)=6t_{k-1}-\sum_{p=0}^{k-1}\sum_{q=0}^{3+p-1}\sum_{m=0}^{2}x\left(\ell_{3k+m}, c_{p, q}\right)
\end{equation}
Adding up \eqref{eqn:TopLiteralRow} and versions of \eqref{eqn:LiteralRows} for both values $m\in\{1,2\}$ (and simplifying using \eqref{eq:ContradictionAssumption} and \eqref{eq:InductiveHypothesis0}):

\[6k+\sum_{m=0}^2\sum_{p=0}^{k}\sum_{q=0}^{3+p-1}x(\ell_{3k+m}, c_{p, q})\le 4+6\left(k+t_{k-1}\right)\]
Splitting the sum into the cases $p=k$ and $p<k$, we get:
\[6k+\sum_{m=0}^{2}\sum_{q=0}^{3+k-1}x(\ell_{3k+m}, c_{k, q})+\sum_{m=0}^2\sum_{p=0}^{k-1}\sum_{q=0}^{3+p-1}x(\ell_{3k+m}, c_{p, q})\le 4+6\left(k+t_{k-1}\right)\]
So together with \eqref{eqn:InductiveHypothesisReverse},
we see that:

\[\sum_{m=0}^{2}\sum_{q=0}^{3+k-1}x(\ell_{3k+m}, c_{k, q})\le 4+2k\]
Since $2(3+k)>4+2k$, there must be some value of $q$ for which:

\[\sum_{m=0}^{2}x(\ell_{3k+m}, c_{k, q})\le 1\]
This implies that:

\[\sum_{m=0}^{2}\left(2-x\left(\ell_{3k+m}, c_{k, q}\right)\right)\ge 5\]
From \eqref{eqn:TesterCols} we can extract that:

\[\sum_{m=0}^{2}\left(2-x\left(\ell_{3k+m}, c_{k, q}\right)\right)\le 4,\] 
a contradiction.
We conclude that $y_{3k}$, $y_{3k+1}$ and $y_{3k+2}$ can't all be zero, and since we already saw that $y_{3k}+y_{3k+1}+y_{3k+2}\le 1$, we get $y_{3k}+y_{3k+1}+y_{3k+2}=1$.
It remains to check that \eqref{eqn:InductiveHypothesis} holds for $k+1$. 

For a general function $f$ in two variables:

\[\sum_{p=0}^{k}\sum_{i=p}^{k}f(p, i)=\sum_{p=0}^{k-1}\sum_{i=p}^{k-1}f(p, i)+\sum_{p=0}^{k}f(p, k)\]
So going from $k$ to $k+1$, the left hand side of \eqref{eqn:InductiveHypothesis} increases by:

\[\sum_{p=0}^{k}\sum_{q=0}^{3+p-1}\sum_{m=0}^{2}(1-y_{3k+m})\left(2-x\left(\ell_{3k+m}, c_{p, q}\right)\right)\]
Going from $k$ to $k+1$, the right hand side of \eqref{eqn:InductiveHypothesis} increases by $4\left(3+k\right)-2$ (since $t_k-t_{k-1}$ is $3+k$). So to show that \eqref{eqn:InductiveHypothesis} holds for $k+1$, it is sufficient to show that:

\[\sum_{p=0}^{k}\sum_{q=0}^{3+p-1}\sum_{m=0}^{2}(1-y_{3k+m})\left(2-x\left(\ell_{3k+m}, c_{p, q}\right)\right)\ge 4\left(3+k\right)-2\]

There are $3$ cases to check based on which of $y_{3k}$, $y_{3k+1}$ or $y_{3k+2}$ is equal to $1$. Consider the case $y_{3k+2}=1$. Summing \eqref{eqn:TopLiteralRow} for $i=k$ and \eqref{eqn:LiteralRows} for $i=k$ and $m=1$, and simplifying using \eqref{eq:InductiveHypothesis0}, we get:

\[4k+\sum_{m=0}^1\sum_{p=0}^{k}\sum_{q=0}^{3+p-1}x(\ell_{3k+m}, c_{p, q})\le 2+4\left(k+t_{k-1}\right)\]
Rearranging (using that the number of pairs of $(p, q)$ is $t_k=t_{k-1}+3+k$), we obtain:

\[\sum_{p=0}^{k}\sum_{q=0}^{3+p-1}\sum_{m=0}^1\left(2-x(\ell_{3k+m}, c_{p, q})\right)\ge 4(3+k)-2\]
So:

\[\sum_{p=0}^{k}\sum_{q=0}^{3+p-1}\sum_{m=0}^2\left(1-y_{3k+m}\right)\left(2-x(\ell_{3k+m}, c_{p, q})\right)\ge 4(3+k)-2\]
as required.
The cases $y_{3k}=1$ and $y_{3k+1}=1$ are similar.
By induction, this completes the proof.
\end{proof}

\begin{lemma}\label{lem:ClauseGadgetsFinal}
Suppose there is a perfect packing of the clause components where the membrane clause squares push out. Then if $\Phi$ has a clause of form $\neg x_i\vee \neg x_j\vee \neg x_k$ (resp.~$x_i\vee x_j\vee x_k$), then there is a block of membrane connecting squares in the upper (resp.~lower) membrane row that all push in and connects to a PUSH-UP-IF-PLUS (resp.~PUSH-DOWN-IF-MINUS) gadget for one of the variables $x_i, x_j$ or $x_k$.  
\end{lemma}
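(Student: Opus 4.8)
The plan is to read off from the perfect packing a set of integer values to which Lemma~\ref{lem:ClauseGadget} applies, and then to translate that lemma's conclusion about the $y_i$ back into a statement about blocks of connecting membrane squares. Throughout I treat the upper clause components, which represent the negative clauses $\neg x_i\vee\neg x_j\vee\neg x_k$; the lower clause components and the positive clauses are handled by the mirror-image argument with ``up'' and ``down'' interchanged.

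First I would set up the variables. Since the reference centers in a clause component form a regular grid, Lemma~\ref{lem:ConvexReferenceCenters} shows that the perfect packing partitions into horizontal and vertical stacks, and by the discussion following Lemma~\ref{lem:StackGrowth} the only crossings that can occur are the four in Figure~\ref{fig:StackCrossing}, each contributing total growth exactly $2$. For the $i$th SWITCH gadget put $y_i=1$ if it creates a vertical stack and $y_i=0$ if it creates a horizontal stack; for a crossing of a horizontal stack $a$ with a vertical stack $b$ other than an unaligned literal column, let $x(a,b)\in\{0,1,2\}$ be the amount $a$ grows there, so that $b$ grows there by exactly $2-x(a,b)$. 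I would then check that these values satisfy \eqref{eqn:TesterRows}--\eqref{eqn:TesterCols}. Each stack begins at width $1$ (Figure~\ref{fig:StackStart}) and, by Lemmas~\ref{lem:StaticShift} and~\ref{lem:VerticalStaticShift}, grows exactly once at each horizontal or vertical static shift it traverses; since the prescribed growth allowances were inflated to absorb these shifts, it remains only to bound the growth at genuine stack crossings. Here the hypothesis that the clause membrane squares push out enters: an unaligned literal column reaches the membrane row still pushing inward on a whole block of connecting membrane squares, and because the flanking clause membrane squares push out, that column cannot widen at all, so every row crossing it grows by the full $2$---this explains the coefficient $2$ on each $y_p$ in \eqref{eqn:TesterRows}--\eqref{eqn:TesterCols} (Figure~\ref{fig:VerticalStackEnd}). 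The remaining terms record growth at tester columns and at aligned crossings; summing the per-crossing growths of a tester row, of a literal row (the inequality being vacuous when its own $y$ is $1$, since then the row is itself unaligned and the factor $1-y$ vanishes), and of a tester column, and comparing with the stated caps $2(i+1+t_i)$, $2(i+t_{i-1})$, $2(1+i+t_{i-1})$, and $4$, yields exactly \eqref{eqn:TesterRows}, \eqref{eqn:TopLiteralRow}, \eqref{eqn:LiteralRows}, and \eqref{eqn:TesterCols}.

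Next I would invoke Lemma~\ref{lem:ClauseGadget} to conclude $y_{3i}+y_{3i+1}+y_{3i+2}=1$ for every clause $c_i$. Fix a negative clause $\neg x_i\vee\neg x_j\vee\neg x_k$, say the $m$th from the top, so that exactly one of $y_{3m},y_{3m+1},y_{3m+2}$ equals $1$; without loss of generality it is the SWITCH gadget associated with the literal $\neg x_i$. That gadget creates a vertical stack, i.e.\ an unaligned literal column, which by construction runs down to a block of connecting membrane squares in the upper membrane row connected to the PUSH-UP-IF-PLUS gadget of the variable component $x_i$ (Figure~\ref{fig:PushConnection}). Since this stack pushes inward all the way to the membrane row, every square of that block pushes in, which is precisely the asserted conclusion. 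For a positive clause, the same reasoning in the lower clause components, with ``up'' and ``down'' exchanged, produces a block of connecting membrane squares in the lower membrane row, all pushing in, attached to a PUSH-DOWN-IF-MINUS gadget for one of $x_i,x_j,x_k$.

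The main obstacle is the middle step---verifying \eqref{eqn:TesterRows}--\eqref{eqn:TesterCols} from the geometry of the packing. The delicate points are the bookkeeping that cleanly separates growth occurring at static shifts (one per shift, by Lemmas~\ref{lem:StaticShift} and~\ref{lem:VerticalStaticShift}) from growth occurring at crossings, and the argument that an unaligned literal column has fixed width---the sole place the hypothesis on the clause membrane squares is used, and which must be matched up with the width of the block of connecting membrane squares at which the column terminates. Once the inequalities are in hand, the remainder is a direct appeal to Lemma~\ref{lem:ClauseGadget} together with a straightforward reading-off of the construction.
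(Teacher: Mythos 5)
Your proposal is correct and follows essentially the same route as the paper: extract the $y_i$ and $x(a,b)$ values from the packing, check that the hypothesis on the clause membrane squares forces \eqref{eqn:TesterRows}--\eqref{eqn:TesterCols} to hold, invoke \Cref{lem:ClauseGadget} to get $y_{3i}+y_{3i+1}+y_{3i+2}=1$, and conclude via \Cref{lem:StaticShift} that the unaligned literal column forces its block of connecting membrane squares to push in. The only difference is organizational: the paper derives the inequalities in the section text preceding \Cref{lem:ClauseGadget} and keeps the proof of \Cref{lem:ClauseGadgetsFinal} to the final translation step, whereas you fold that derivation into the proof itself.
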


\begin{proof}
Whenever a SWITCH gadget creates a vertical stack, the corresponding block of membrane connecting squares must all push in (by \Cref{lem:StaticShift}). \Cref{lem:ClauseGadget} says that, for each clause, at least one of the SWITCH gadgets must create a vertical stack.
So for clause of form $\neg x_i\vee \neg x_j\vee \neg x_k$ (resp.~$x_i\vee x_j\vee x_k$) in $\Phi$, one of the variables has a PUSH-UP-IF-PLUS (resp.~PUSH-UP-IF-MINUS) gadget that is connected to a block of membrane connecting squares that all push in.
\end{proof}

\subsection{Proof of \Cref{thm:main}}

We are now ready to give the proof of our main theorem.

\mainthm*

\begin{proof}
Let $\Phi$ be an instance of \MCTSAT. Construct the orthogonally convex grid polygon $P$ and the number $k$ of squares to be packed as above. The number $k$ is $\mathcal{O}(m^8)$ where $m$ is the number of variables and clauses in $\Phi$. This size is dominated by squares in and surrounding the tester rows. The polygon $P$ can be explicitly constructed in polynomial time, as outlines in the following.

Assume that $\Phi$ has $m$ clauses and $v$ variables.
We start by drawing the full schematics including all the variable rows, helper rows, vertical static shifts, pinch columns, helper row columns, redundancy columns, literal rows, literal columns, tester rows, tester columns, and PUSH columns. We then compute the size that each stack is allowed to grow (taking into account any static shifts), and construct the polygon by gluing together the various gadgets appropriately. In total:

\begin{itemize}
    \item There are $v$ variable gadgets, which \emph{in total} have $\mathcal{O}(m)$ helper rows. Since variables that don't appear in any clauses can be excluded, we can assume that $v=\mathcal{O}(m)$.
    \item Each variable gadget has an upper and lower pinch column, $\mathcal{O}(m)$ helper row columns, and $\mathcal{O}(m)$ redundancy columns. All of these grow by at most $\mathcal{O}(m)$ before terminating. Eacg if these columns has a vertical static shift, so the total number of vertical static shifts in the clause gadgets is $\mathcal{O}(m^2)$.
    \item There are $3m$ literal columns and $\mathcal{O}(m^2)$ tester columns, which each grow by $\mathcal{O}(m)$. These continue as PUSH columns in the variable components.
    \item There are $3m$ literal rows and $\mathcal{O}(m^3)$ tester rows, which each grow by $\mathcal{O}(m^2)$.
\end{itemize}

The size of the polygon is dominated by squares in and around the literal rows. There are $\mathcal{O}(m^3)$ literal rows, growing to a size of $\mathcal{O}(m^2)$. Since there are $\mathcal{O}(m^2)$ columns growing to a width of $\mathcal{O}(m)$ each, the total number of squares in the polygon is at most $\mathcal{O}(m^8)$.

\Cref{lem:VariablePackingsExist,lem:ClausePackingsExist} say that there is a packing of $P$ with $k$ squares of size $2\times 2$ whenever $\Phi$ has a satisfying assignment. \Cref{lem:ConvexVariableGadgets,lem:ClauseGadgetsFinal} say that any such packing of $P$ must correspond to a satisfying assignment of $\Phi$. So by the NP-hardness of \MCTSAT, the problem \USPack is NP-hard for orthogonally convex polygons.
\end{proof}

\section{Concluding remarks}

To our knowledge, these represent the first results on NP-hardness for packing or covering a simple polygon with identical (fixed) shapes and the first NP-hardness result for partitioning simple polygons into connected pieces. There are many interesting problems that are known to be hard for polygons with holes but with unknown complexity for simple polygons. Until now, techniques for showing hardness of many of these problems have not been available. 

The problem \USPack in a grid polygon is equivalent to \textsc{Maximum-Independent-Set} on a grid graph $G$ with diagonals added; see \Cref{fig:MaxIndependentSet}.
We can define $G$ to be \emph{orthogonally convex} if whenever two vertices $u,v\in V(G)$ are from the same row or column, then all grid points between $u$ and $v$ are also vertices of $G$.
Our result has the following interesting consequence:

\begin{figure}
\centering
\includegraphics[page=3]{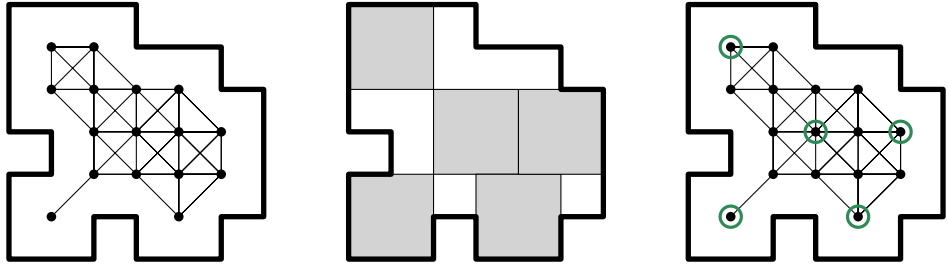}
\caption{The equivalence between \USPack and \textsc{Maximum-Independent-Set}.
Left: A polygon and the equivalent instance of \textsc{Maximum-Independent-Set}.
Middle: A packing with $2\times 2$ squares.
Right: The corresponding independent set.
}
\label{fig:MaxIndependentSet}
\end{figure}

\begin{corollary}
The maximum independent set problem for orthogonally convex grid graphs with diagonals is NP-hard.
\end{corollary}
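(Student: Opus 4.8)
The plan is to reduce from \USPack\ for orthogonally convex grid polygons (\Cref{thm:main}) using the equivalence sketched in \Cref{fig:MaxIndependentSet}. Given a grid polygon $P$, define a graph $G_P$ with vertex set
\[
V(G_P)=\{\,p=(p_x,p_y)\in\mathbb Z^2 : [p_x-1,p_x+1]\times[p_y-1,p_y+1]\subseteq P\,\},
\]
the integer points that can serve as the center of an integer-coordinate $2\times 2$ square contained in $P$, where $p$ and $q$ are adjacent exactly when $\lVert p-q\rVert_\infty=1$. This is an induced subgraph of the king graph, i.e.\ a grid graph with all diagonals added. Two $2\times 2$ squares centered at $p$ and $q$ have disjoint interiors iff $\lVert p-q\rVert_\infty\ge 2$, so an independent set in $G_P$ corresponds bijectively to a packing of $P$ by interior-disjoint integer-coordinate $2\times 2$ squares of the same size. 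By \Cref{Finiteness}, the maximum number of $2\times 2$ squares packable into $P$ is attained by an integer-coordinate packing, so this number equals $\alpha(G_P)$; hence $(P,k)$ is a yes-instance of \USPack\ iff $(G_P,k)$ is a yes-instance of maximum independent set.

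Next I would verify that $G_P$ inherits orthogonal convexity from $P$. Suppose $u=(x_1,y)$ and $v=(x_2,y)$ are vertices of $G_P$ with $x_1<x_2$, and let $x\in\mathbb Z$ with $x_1<x<x_2$. For any real $t\in[y-1,y+1]$, the intersection of $P$ with the horizontal line $\{(s,t):s\in\mathbb R\}$ is nonempty (it meets the square around $u$), bounded, closed and connected, hence a closed interval; it contains $[x_1-1,x_1+1]\times\{t\}$ and $[x_2-1,x_2+1]\times\{t\}$, so it contains their convex hull and therefore $[x-1,x+1]\times\{t\}$, using $x_1\le x-1$ and $x+1\le x_2$. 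Letting $t$ range over $[y-1,y+1]$ gives $[x-1,x+1]\times[y-1,y+1]\subseteq P$, i.e.\ $(x,y)\in V(G_P)$. The argument for two vertices in a common column is identical with vertical lines. Thus every grid point between two vertices of $G_P$ in a common row or column is again a vertex, so $G_P$ is an orthogonally convex grid graph with diagonals.

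Finally I would check that the reduction is polynomial time. One should not start from an arbitrary orthogonally convex grid polygon, whose set of interior integer points could be exponentially large, but from the concrete polygon $P$ produced by the reduction behind \Cref{thm:main}: as noted in its proof, $P$ consists of $\mathcal O(m^8)$ unit cells and in particular lies inside a grid of polynomially bounded size, so $V(G_P)$ is polynomially large and $G_P$ (together with the unchanged target $k$) can be written down in polynomial time. Combined with the two paragraphs above, a polynomial-time algorithm for maximum independent set on orthogonally convex grid graphs with diagonals would solve \USPack\ for orthogonally convex grid polygons, so \Cref{thm:main} yields the claimed NP-hardness.

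The only content here that is not immediate from \Cref{thm:main} is the slice argument transferring orthogonal convexity to $G_P$, which is routine; the one point one must not overlook is to instantiate the reduction on the explicit polynomially sized polygon from the proof of \Cref{thm:main} rather than on a generic grid polygon, so that $G_P$ has polynomially many vertices.
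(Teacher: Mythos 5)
Your proposal is correct and follows essentially the same route as the paper: the corollary is obtained exactly via the equivalence of \Cref{fig:MaxIndependentSet}, identifying integer-coordinate $2\times 2$ square packings (justified by \Cref{Finiteness}) with independent sets in the king graph on square centers, and observing that orthogonal convexity of the polygon from \Cref{thm:main} transfers to the graph. Your slice argument and the remark about instantiating on the polynomially sized polygon from the proof of \Cref{thm:main} simply make explicit details the paper leaves implicit.
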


The most natural class of polygons for which the complexity of \USPack is unresolved is \emph{staircase} grid polygons, i.e., grid polygons where the boundary can be partitioned into two chains, both of which are simultaneously $x$- and $y$-monotone.
Many of our ideas about packing in orthogonally convex polygons work here, but we do not know of a way to make clause components that could be used for this problem.
Another related problem is \USPack when the polygon $P$ is convex, but not a grid polygon.

Allowing the squares to rotate arbitrarily changes the problem significantly, and although it seems obvious that it makes the packing problem no more tractable than in the axis-aligned case, we have not found a way to prove hardness.
A long line of mathematical research has been devoted to this problem when the container $P$ is also a (larger) square.
This was initiated by Erd\H{o}s and Graham~\cite{erdos1975packing} in 1975, and it is still an active research area~\cite{chung2019efficient}.
The complicated nature of this problem is exemplified by the fact that even for a mere $11$ unit squares, it is unknown what is the smallest square in which they can be packed~\cite{Gensane2005}.
It may be possible to apply our ideas to show that it is NP-hard to pack unit squares in a simple polygon with rotation, but this would require at minimum a much more sophisticated version of the reference center idea.

For certain geometric shapes other than squares, our techniques may be useful for showing hardness of the associated packing problems in simple polygons.
The problem that seems most amenable to this approach is packing equilateral triangles (with $180$-degree rotations allowed).
We have not taken the effort to figure out the details, as packing equilateral triangles seems to be of limited interest.
Hardness of packing unit disks may be possible, but has some of the same problems as packing squares with rotations.

It may also be possible to show hardness for packing $n\times k$ rectangles with $90$-degree rotations allowed into a simple (not necessarily grid) polygon using these techniques.
It seems much more difficult to apply our ideas to the problem of packing $n\times k$ rectangles into a simple \emph{grid} polygon. This problem is in P for $1\times 2$ rectangles, but little is known even for $1\times 3$ rectangles.

In a forthcoming paper (with another set of authors), we show that \emph{reconfiguration} from one packing of axis-aligned unit squares to another is PSPACE-hard, even in a simple polygon.
The construction relies on a slight modification of the construction from \Cref{sec:simplepacking}.
Until now, it was only known that reconfiguration in a polygon with holes is PSPACE-hard~\cite{DBLP:journals/ijrr/SoloveyH16}.

\printbibliography

@book{o1987art,
  title={Art Gallery Theorems and Algorithms},
  author={O'Rourke, Joseph},
  year={1987},
  publisher={Oxford University Press}
}

@incollection{chazelle1994decomposition,
  title={Decomposition algorithms in geometry},
  author={Chazelle, Bernard and Palios, Leonidas},
  booktitle={Algebraic Geometry and its applications},
  editor={Chandrajit L. Bajaj},
  pages={419--447},
  year={1994},
  doi={10.1007/978-1-4612-2628-4_27}
}

@article{DBLP:journals/pieee/Shermer92,
  author    = {Thomas C. Shermer},
  title     = {Recent results in art galleries},
  journal   = {Proc. {IEEE}},
  volume    = {80},
  number    = {9},
  pages     = {1384--1399},
  year      = {1992},
  url       = {https://doi.org/10.1109/5.163407},
  doi       = {10.1109/5.163407},
  timestamp = {Tue, 16 Aug 2022 23:04:52 +0200},
  biburl    = {https://dblp.org/rec/journals/pieee/Shermer92.bib},
  bibsource = {dblp computer science bibliography, https://dblp.org}
}

@incollection{chazelle1985approximation,
  title={Approximation and decomposition of shapes},
  author={Chazelle, Bernard},
  booktitle={Algorithmic and Geometric Aspects of Robotics},
  editor={Jacob T. Schwartz and Chee-Keng Yap},
  series={Advances in Robotics},
  volume={1},
  pages={145--185},
  year={1987}
}

@incollection{keil1999polygon,
  title={Polygon decomposition},
  author={Keil, J. Mark},
  booktitle={Handbook of computational geometry},
  pages={491--518},
  year={1999},
  chapter={11},
  editor={Sack, J{\"o}rg-R{\"u}diger and Urrutia, Jorge},
  doi={10.1016/B978-044482537-7/50012-7}
}

@incollection{keil1985minimum,
title = {Minimum Decompositions of Polygonal Objects},
editor = {Godfried T. Toussaint},
series = {Machine Intelligence and Pattern Recognition},
volume = {2},
pages = {197-216},
year = {1985},
booktitle = {Computational Geometry},
doi = {10.1016/B978-0-444-87806-9.50012-8},
author = {J. Mark Keil and J{\"o}rg-R. Sack}
}

@incollection{o2004polygons,
  title={Polygons},
  chapter={30},
  author={O’Rourke, Joseph and Suri, Subash and T\'{o}th, Csaba D.},
  booktitle={Handbook of discrete and computational geometry},
  editor={J. E. Goodman and J. O'Rourke},
  pages={787--810},
  year={2018},
  edition={Third edition},
  doi={10.1201/9781315119601},
}

@article{DBLP:journals/ijcga/BergK12,
  author       = {Mark de Berg and
                  Amirali Khosravi},
  title        = {Optimal Binary Space Partitions for Segments in the Plane},
  journal      = {Int. J. Comput. Geom. Appl.},
  volume       = {22},
  number       = {3},
  pages        = {187--206},
  year         = {2012},
  url          = {https://doi.org/10.1142/S0218195912500045},
  doi          = {10.1142/S0218195912500045},
  timestamp    = {Wed, 04 May 2022 13:22:35 +0200},
  biburl       = {https://dblp.org/rec/journals/ijcga/BergK12.bib},
  bibsource    = {dblp computer science bibliography, https://dblp.org}
}

@article{DBLP:journals/comgeo/BeauquierNRR95,
  author       = {Dani{\`{e}}le Beauquier and
                  Maurice Nivat and
                  Eric R{\'{e}}mila and
                  Mike Robson},
  title        = {Tiling Figures of the Plane with Two Bars},
  journal      = {Comput. Geom.},
  volume       = {5},
  pages        = {1--25},
  year         = {1995},
  url          = {https://doi.org/10.1016/0925-7721(94)00015-N},
  doi          = {10.1016/0925-7721(94)00015-N},
  timestamp    = {Thu, 11 Feb 2021 23:27:20 +0100},
  biburl       = {https://dblp.org/rec/journals/comgeo/BeauquierNRR95.bib},
  bibsource    = {dblp computer science bibliography, https://dblp.org}
}

@article{DBLP:journals/talg/AamandARA23,
  author       = {Anders Aamand and
                  Mikkel Abrahamsen and
                  Thomas D. Ahle and
                  Peter M. R. Rasmussen},
  title        = {Tiling with Squares and Packing Dominos in Polynomial Time},
  journal      = {{ACM} Trans. Algorithms},
  volume       = {19},
  number       = {3},
  pages        = {30:1--30:28},
  year         = {2023},
  url          = {https://doi.org/10.1145/3597932},
  doi          = {10.1145/3597932},
  timestamp    = {Fri, 18 Aug 2023 08:46:32 +0200},
  biburl       = {https://dblp.org/rec/journals/talg/AamandARA23.bib},
  bibsource    = {dblp computer science bibliography, https://dblp.org}
}

@article{DBLP:journals/ipl/FowlerPT81,
  author       = {Robert J. Fowler and
                  Mike Paterson and
                  Steven L. Tanimoto},
  title        = {Optimal Packing and Covering in the Plane are NP-Complete},
  journal      = {Inf. Process. Lett.},
  volume       = {12},
  number       = {3},
  pages        = {133--137},
  year         = {1981},
  url          = {https://doi.org/10.1016/0020-0190(81)90111-3},
  doi          = {10.1016/0020-0190(81)90111-3},
  timestamp    = {Wed, 14 Nov 2018 10:49:26 +0100},
  biburl       = {https://dblp.org/rec/journals/ipl/FowlerPT81.bib},
  bibsource    = {dblp computer science bibliography, https://dblp.org}
}

@article{DBLP:journals/algorithmica/BaurF01,
  author       = {Christoph Baur and
                  S{\'{a}}ndor P. Fekete},
  title        = {Approximation of Geometric Dispersion Problems},
  journal      = {Algorithmica},
  volume       = {30},
  number       = {3},
  pages        = {451--470},
  year         = {2001},
  url          = {https://doi.org/10.1007/s00453-001-0022-x},
  doi          = {10.1007/S00453-001-0022-X},
  timestamp    = {Wed, 17 May 2017 14:25:15 +0200},
  biburl       = {https://dblp.org/rec/journals/algorithmica/BaurF01.bib},
  bibsource    = {dblp computer science bibliography, https://dblp.org}
}

@techreport{berman1981optimal,
  author      = {Berman, Francine and Leighton, Frank Thomsons and Snyder, Lawrence},
  title       = {Optimal tile salvage},
  institution = {Purdue University},
  year        = {1981},
  url = {https://docs.lib.purdue.edu/cstech/322},
  number      = {81--396}
}

@article{DBLP:journals/siamcomp/Lichtenstein82,
  author       = {David Lichtenstein},
  title        = {Planar Formulae and Their Uses},
  journal      = {{SIAM} J. Comput.},
  volume       = {11},
  number       = {2},
  pages        = {329--343},
  year         = {1982},
  url          = {https://doi.org/10.1137/0211025},
  doi          = {10.1137/0211025},
  timestamp    = {Sat, 27 May 2017 14:22:58 +0200},
  biburl       = {https://dblp.org/rec/journals/siamcomp/Lichtenstein82.bib},
  bibsource    = {dblp computer science bibliography, https://dblp.org}
}

@misc{topp,
  title = {The Open Problems Project. Problem 56: Packing Unit Squares in a Simple Polygon},
  url = {https://topp.openproblem.net/p56},
  note = {Accessed: 2023-12-21},
  editor = {Erik D. Demaine and Joseph S. B. Mitchell and Joseph O’Rourke}
}

@phdthesis{el2009decomposing,
  title={Decomposing and packing polygons},
  author={El-Khechen, Dania},
  year={2009},
  school={Concordia University},
  url = {https://spectrum.library.concordia.ca/id/eprint/976664/}
}

@inproceedings{DBLP:conf/cccg/RenssenS11,
  author       = {Andr{\'{e}} van Renssen and
                  Bettina Speckmann},
  title        = {The $2\times 2$ Simple Packing Problem},
  booktitle    = {23rd Annual Canadian Conference on Computational
                  Geometry (CCCG 2011)},
  year         = {2011},
  url          = {http://www.cccg.ca/proceedings/2011/papers/paper14.pdf},
  timestamp    = {Thu, 12 Mar 2020 11:28:11 +0100},
  biburl       = {https://dblp.org/rec/conf/cccg/RenssenS11.bib},
  bibsource    = {dblp computer science bibliography, https://dblp.org}
}

@inproceedings{DBLP:conf/cccg/El-KhechenDIO09,
  author       = {Dania El{-}Khechen and
                  Muriel Dulieu and
                  John Iacono and
                  Nikolaj van Omme},
  title        = {Packing 2{\texttimes}2 unit squares into grid polygons is NP-complete},
  booktitle    = {21st Annual Canadian Conference on Computational
                  Geometry (CCCG 2009)},
  pages        = {33--36},
  year         = {2009},
  url          = {http://cccg.ca/proceedings/2009/cccg09_09.pdf},
  timestamp    = {Wed, 12 May 2021 14:19:37 +0200},
  biburl       = {https://dblp.org/rec/conf/cccg/El-KhechenDIO09.bib},
  bibsource    = {dblp computer science bibliography, https://dblp.org}
}

@article{DBLP:journals/corr/abs-1209-5307,
  author       = {Sarah R. Allen and
                  John Iacono},
  title        = {Packing identical simple polygons is NP-hard},
  journal      = {CoRR},
  volume       = {abs/1209.5307},
  year         = {2012},
  url          = {http://arxiv.org/abs/1209.5307},
  timestamp    = {Mon, 13 Aug 2018 16:47:19 +0200},
  biburl       = {https://dblp.org/rec/journals/corr/abs-1209-5307.bib},
  bibsource    = {dblp computer science bibliography, https://dblp.org},
  doi = {10.48550/arXiv.1209.5307}
}

@article{DBLP:journals/tit/ORourkeS83,
  author       = {Joseph O'Rourke and
                  Kenneth J. Supowit},
  title        = {Some NP-hard polygon decomposition problems},
  journal      = {{IEEE} Trans. Inf. Theory},
  volume       = {29},
  number       = {2},
  pages        = {181--189},
  year         = {1983},
  url          = {https://doi.org/10.1109/TIT.1983.1056648},
  doi          = {10.1109/TIT.1983.1056648},
  timestamp    = {Tue, 10 Mar 2020 10:48:35 +0100},
  biburl       = {https://dblp.org/rec/journals/tit/ORourkeS83.bib},
  bibsource    = {dblp computer science bibliography, https://dblp.org}
}

@article{DBLP:journals/jal/BermanJLSS90,
  author       = {Francine Berman and
                  David S. Johnson and
                  Frank Thomson Leighton and
                  Peter W. Shor and
                  Larry Snyder},
  title        = {Generalized Planar Matching},
  journal      = {J. Algorithms},
  volume       = {11},
  number       = {2},
  pages        = {153--184},
  year         = {1990},
  url          = {https://doi.org/10.1016/0196-6774(90)90001-U},
  doi          = {10.1016/0196-6774(90)90001-U},
  timestamp    = {Sun, 28 May 2017 13:24:59 +0200},
  biburl       = {https://dblp.org/rec/journals/jal/BermanJLSS90.bib},
  bibsource    = {dblp computer science bibliography, https://dblp.org}
}

@article{DBLP:journals/jacm/MulzerR08,
  author       = {Wolfgang Mulzer and
                  G{\"{u}}nter Rote},
  title        = {Minimum-weight triangulation is NP-hard},
  journal      = {J. {ACM}},
  volume       = {55},
  number       = {2},
  pages        = {11:1--11:29},
  year         = {2008},
  url          = {https://doi.org/10.1145/1346330.1346336},
  doi          = {10.1145/1346330.1346336},
  timestamp    = {Tue, 06 Nov 2018 12:51:45 +0100},
  biburl       = {https://dblp.org/rec/journals/jacm/MulzerR08.bib},
  bibsource    = {dblp computer science bibliography, https://dblp.org}
}

@article{DBLP:journals/ijcga/FeketeM01,
  author       = {S{\'{a}}ndor P. Fekete and
                  Joseph S. B. Mitchell},
  title        = {Terrain Decomposition and Layered Manufacturing},
  journal      = {Int. J. Comput. Geom. Appl.},
  volume       = {11},
  number       = {6},
  pages        = {647--668},
  year         = {2001},
  url          = {https://doi.org/10.1142/S0218195901000687},
  doi          = {10.1142/S0218195901000687},
  timestamp    = {Thu, 04 Jun 2020 19:43:39 +0200},
  biburl       = {https://dblp.org/rec/journals/ijcga/FeketeM01.bib},
  bibsource    = {dblp computer science bibliography, https://dblp.org}
}

@article{DBLP:journals/algorithmica/ArkinHS00,
  author       = {Esther M. Arkin and
                  Martin Held and
                  Christopher L. Smith},
  title        = {Optimization Problems Related to Zigzag Pocket Machining},
  journal      = {Algorithmica},
  volume       = {26},
  number       = {2},
  pages        = {197--236},
  year         = {2000},
  url          = {https://doi.org/10.1007/s004539910010},
  doi          = {10.1007/S004539910010},
  timestamp    = {Wed, 17 May 2017 14:25:14 +0200},
  biburl       = {https://dblp.org/rec/journals/algorithmica/ArkinHS00.bib},
  bibsource    = {dblp computer science bibliography, https://dblp.org}
}

@article{DBLP:journals/comgeo/ArkinFM00,
  author       = {Esther M. Arkin and
                  S{\'{a}}ndor P. Fekete and
                  Joseph S. B. Mitchell},
  title        = {Approximation algorithms for lawn mowing and milling},
  journal      = {Comput. Geom.},
  volume       = {17},
  number       = {1-2},
  pages        = {25--50},
  year         = {2000},
  url          = {https://doi.org/10.1016/S0925-7721(00)00015-8},
  doi          = {10.1016/S0925-7721(00)00015-8},
  timestamp    = {Thu, 11 Feb 2021 23:27:28 +0100},
  biburl       = {https://dblp.org/rec/journals/comgeo/ArkinFM00.bib},
  bibsource    = {dblp computer science bibliography, https://dblp.org}
}

@article{DBLP:journals/comgeo/FeketeM05,
  author       = {S{\'{a}}ndor P. Fekete and
                  Henk Meijer},
  title        = {The one-round Voronoi game replayed},
  journal      = {Comput. Geom.},
  volume       = {30},
  number       = {2},
  pages        = {81--94},
  year         = {2005},
  url          = {https://doi.org/10.1016/j.comgeo.2004.05.005},
  doi          = {10.1016/J.COMGEO.2004.05.005},
  timestamp    = {Thu, 11 Feb 2021 23:27:02 +0100},
  biburl       = {https://dblp.org/rec/journals/comgeo/FeketeM05.bib},
  bibsource    = {dblp computer science bibliography, https://dblp.org}
}

@article{DBLP:journals/corr/LubiwM17,
  author       = {Anna Lubiw and
                  Debajyoti Mondal},
  title        = {On compatible triangulations with a minimum number of Steiner points},
  journal      = {Theor. Comput. Sci.},
  volume       = {835},
  pages        = {97--107},
  year         = {2020},
  url          = {https://doi.org/10.1016/j.tcs.2020.06.014},
  doi          = {10.1016/J.TCS.2020.06.014},
  timestamp    = {Thu, 27 Aug 2020 14:48:42 +0200},
  biburl       = {https://dblp.org/rec/journals/tcs/LubiwM20.bib},
  bibsource    = {dblp computer science bibliography, https://dblp.org}
}

@article{DBLP:journals/ior/FeketeMB05,
  author       = {S{\'{a}}ndor P. Fekete and
                  Joseph S. B. Mitchell and
                  Karin Beurer},
  title        = {On the Continuous Fermat-Weber Problem},
  journal      = {Oper. Res.},
  volume       = {53},
  number       = {1},
  pages        = {61--76},
  year         = {2005},
  url          = {https://doi.org/10.1287/opre.1040.0137},
  doi          = {10.1287/OPRE.1040.0137},
  timestamp    = {Tue, 31 Mar 2020 18:17:11 +0200},
  biburl       = {https://dblp.org/rec/journals/ior/FeketeMB05.bib},
  bibsource    = {dblp computer science bibliography, https://dblp.org}
}

@inproceedings{DBLP:conf/icalp/Lingas82,
  author       = {Andrzej Lingas},
  title        = {The Power of Non-Rectilinear Holes},
  booktitle    = {9th International Colloquium on Automata, Languages, and Programming (ICALP 1982)},
  pages        = {369--383},
  year         = {1982},
  url          = {https://doi.org/10.1007/BFb0012784},
  doi          = {10.1007/BFB0012784},
  timestamp    = {Tue, 14 May 2019 10:00:44 +0200},
  biburl       = {https://dblp.org/rec/conf/icalp/Lingas82.bib},
  bibsource    = {dblp computer science bibliography, https://dblp.org}
}

@inproceedings{DBLP:conf/compgeom/DemaineEHILMOW04,
  author       = {Erik D. Demaine and
                  Jeff Erickson and
                  Ferran Hurtado and
                  John Iacono and
                  Stefan Langerman and
                  Henk Meijer and
                  Mark H. Overmars and
                  Sue Whitesides},
  title        = {Separating point sets in polygonal environments},
  booktitle    = {20th {ACM} Symposium on Computational Geometry (SoCG 2004)},
  pages        = {10--16},
  year         = {2004},
  url          = {https://doi.org/10.1145/997817.997822},
  doi          = {10.1145/997817.997822},
  timestamp    = {Mon, 14 Jun 2021 16:25:03 +0200},
  biburl       = {https://dblp.org/rec/conf/compgeom/DemaineEHILMOW04.bib},
  bibsource    = {dblp computer science bibliography, https://dblp.org}
}

@inproceedings{DBLP:conf/cccg/KirkpatrickKP11,
  author       = {David G. Kirkpatrick and
                  Irina Kostitsyna and
                  Valentin Polishchuk},
  title        = {Hardness Results for Two-Dimensional Curvature-Constrained Motion
                  Planning},
  booktitle    = {23rd Annual Canadian Conference on Computational
                  Geometry (CCCG 2011)},
  year         = {2011},
  url          = {http://www.cccg.ca/proceedings/2011/papers/paper99.pdf},
  timestamp    = {Thu, 12 Mar 2020 11:28:30 +0100},
  biburl       = {https://dblp.org/rec/conf/cccg/KirkpatrickKP11.bib},
  bibsource    = {dblp computer science bibliography, https://dblp.org}
}

@article{DBLP:journals/jacm/AbrahamsenAM22,
  author       = {Mikkel Abrahamsen and
                  Anna Adamaszek and
                  Tillmann Miltzow},
  title        = {The Art Gallery Problem is {\(\exists\)}{\(\mathbb{R}\)}-complete},
  journal      = {J. {ACM}},
  volume       = {69},
  number       = {1},
  pages        = {4:1--4:70},
  year         = {2022},
  url          = {https://doi.org/10.1145/3486220},
  doi          = {10.1145/3486220},
  timestamp    = {Sat, 09 Apr 2022 12:28:54 +0200},
  biburl       = {https://dblp.org/rec/journals/jacm/AbrahamsenAM22.bib},
  bibsource    = {dblp computer science bibliography, https://dblp.org}
}

@inproceedings{DBLP:conf/focs/Abrahamsen21,
  author       = {Mikkel Abrahamsen},
  title        = {Covering Polygons is Even Harder},
  booktitle    = {62nd {IEEE} Annual Symposium on Foundations of Computer Science ({FOCS} 2021)},
  pages        = {375--386},
  year         = {2021},
  url          = {https://doi.org/10.1109/FOCS52979.2021.00045},
  doi          = {10.1109/FOCS52979.2021.00045},
  timestamp    = {Sat, 09 Apr 2022 12:37:31 +0200},
  biburl       = {https://dblp.org/rec/conf/focs/Abrahamsen21.bib},
  bibsource    = {dblp computer science bibliography, https://dblp.org}
}

@inproceedings{DBLP:journals/corr/abs-2311-10631,
  author       = {Mikkel Abrahamsen and
                  Joakim Blikstad and
                  Andr{\'{e}} Nusser and
                  Hanwen Zhang},
  title        = {Minimum Star Partitions of Simple Polygons in Polynomial Time},
  booktitle = {Symposium on Foundations of Computer Science (FOCS 2024)},
  year         = {2024},
  url          = {https://doi.org/10.48550/arXiv.2311.10631},
  doi          = {10.48550/ARXIV.2311.10631},
  timestamp    = {Wed, 22 Nov 2023 17:00:58 +0100},
  biburl       = {https://dblp.org/rec/journals/corr/abs-2311-10631.bib},
  bibsource    = {dblp computer science bibliography, https://dblp.org}
}

@incollection{chazelle1985optimal,
  title={Optimal convex decompositions},
  author={Chazelle, Bernard and Dobkin, David P.},
  series={Machine Intelligence and Pattern Recognition},
editor = {Godfried T. Toussaint},
  volume={2},
  pages={63--133},
volume = {2},
  year={1985},
booktitle = {Computational Geometry},
  addendum={Preliminary version at STOC 1979},
  doi={10.1016/B978-0-444-87806-9.50009-8}
}

@article{DBLP:journals/jacm/AsanoAI86,
  author    = {Takao Asano and
               Tetsuo Asano and
               Hiroshi Imai},
  title     = {Partitioning a polygonal region into trapezoids},
  journal   = {J. {ACM}},
  volume    = {33},
  number    = {2},
  pages     = {290--312},
  year      = {1986},
  url       = {https://doi.org/10.1145/5383.5387},
  doi       = {10.1145/5383.5387},
  timestamp = {Tue, 06 Nov 2018 12:51:45 +0100},
  biburl    = {https://dblp.org/rec/journals/jacm/AsanoAI86.bib},
  bibsource = {dblp computer science bibliography, https://dblp.org},
  addendum = {Preliminary version af FOCS 1983.}
}

@article{DBLP:journals/siamcomp/ImaiA86,
  author       = {Hiroshi Imai and
                  Takao Asano},
  title        = {Efficient Algorithms for Geometric Graph Search Problems},
  journal      = {{SIAM} J. Comput.},
  volume       = {15},
  number       = {2},
  pages        = {478--494},
  year         = {1986},
  url          = {https://doi.org/10.1137/0215033},
  doi          = {10.1137/0215033},
  timestamp    = {Sat, 27 May 2017 14:22:58 +0200},
  biburl       = {https://dblp.org/rec/journals/siamcomp/ImaiA86.bib},
  bibsource    = {dblp computer science bibliography, https://dblp.org}
}

@article{DBLP:journals/corr/abs-2211-01359,
  author       = {Mikkel Abrahamsen and
                  Nichlas Langhoff Rasmussen},
  title        = {Partitioning a Polygon Into Small Pieces},
  journal      = {CoRR},
  volume       = {abs/2211.01359},
  year         = {2022},
  url          = {https://doi.org/10.48550/arXiv.2211.01359},
  doi          = {10.48550/ARXIV.2211.01359},
  timestamp    = {Fri, 04 Nov 2022 13:48:49 +0100},
  biburl       = {https://dblp.org/rec/journals/corr/abs-2211-01359.bib},
  bibsource    = {dblp computer science bibliography, https://dblp.org}
}

@article{DBLP:journals/jpdc/LeungTWYC90,
  author       = {Joseph Y.{-}T. Leung and
                  Tommy W. Tam and
                  C. S. Wong and
                  Gilbert H. Young and
                  Francis Y. L. Chin},
  title        = {Packing Squares into a Square},
  journal      = {J. Parallel Distributed Comput.},
  volume       = {10},
  number       = {3},
  pages        = {271--275},
  year         = {1990},
  url          = {https://doi.org/10.1016/0743-7315(90)90019-L},
  doi          = {10.1016/0743-7315(90)90019-L},
  timestamp    = {Sat, 22 Feb 2020 19:35:32 +0100},
  biburl       = {https://dblp.org/rec/journals/jpdc/LeungTWYC90.bib},
  bibsource    = {dblp computer science bibliography, https://dblp.org}
}

@inproceedings{PackingSegments,
  title={Packing Segments in a Simple Polygon is {APX}-hard},
  author={Heuna Kim and Tillmann Miltzow},
  booktitle={European Conference on Computational Geometry (EuroCG 2015)},
  note = {\url{http://eurocg15.fri.uni-lj.si/pub/eurocg15-book-of-abstracts.pdf}},
  pages={24--27},
  year={2015}
}

@article{DBLP:journals/corr/abs-1008-1224,
  author       = {Erik D. Demaine and
                  S{\'{a}}ndor P. Fekete and
                  Robert J. Lang},
  title        = {Circle Packing for Origami Design Is Hard},
  journal      = {CoRR},
  volume       = {abs/1008.1224},
  year         = {2010},
  url          = {http://arxiv.org/abs/1008.1224},
  doi = {10.48550/arXiv.1008.1224},
  timestamp    = {Mon, 13 Aug 2018 16:48:31 +0200},
  biburl       = {https://dblp.org/rec/journals/corr/abs-1008-1224.bib},
  bibsource    = {dblp computer science bibliography, https://dblp.org}
}

@article{DBLP:journals/corr/abs-2004-07558,
  author       = {Mikkel Abrahamsen and
                  Tillmann Miltzow and
                  Nadja Seiferth},
  editor       = {Sandy Irani},
  title        = {Framework for ER-Completeness of Two-Dimensional Packing Problems},
  booktitle    = {61st {IEEE} Annual Symposium on Foundations of Computer Science, {FOCS}
                  2020, Durham, NC, USA, November 16-19, 2020},
  pages        = {1014--1021},
  publisher    = {{IEEE}},
  year         = {2020},
  url          = {https://doi.org/10.1109/FOCS46700.2020.00098},
  doi          = {10.1109/FOCS46700.2020.00098},
  timestamp    = {Mon, 03 Jan 2022 22:18:57 +0100},
  biburl       = {https://dblp.org/rec/conf/focs/AbrahamsenMS20.bib},
  bibsource    = {dblp computer science bibliography, https://dblp.org}
}

@article{PilzLayeredPlanar3SAT,
  author       = {Alexander Pilz},
  title        = {Planar 3-SAT with a Clause/Variable Cycle},
  journal      = {Discret. Math. Theor. Comput. Sci.},
  volume       = {21},
  number       = {3},
  year         = {2019},
  url          = {https://doi.org/10.23638/DMTCS-21-3-18},
  doi          = {10.23638/DMTCS-21-3-18},
  timestamp    = {Tue, 24 Jan 2023 10:52:14 +0100},
  biburl       = {https://dblp.org/rec/journals/dmtcs/Pilz19.bib},
  bibsource    = {dblp computer science bibliography, https://dblp.org}
}

@book{garey1979computers,
  title={Computers and intractability: {A} Guide to the Theory of {NP}-Completeness},
  author={Garey, Michael R. and Johnson, David S.},
  year={1979},
  publisher={W. H. Freeman \&\ Co.},
}

@inproceedings{DBLP:conf/approx/FeketeKKMS11,
  author       = {S{\'{a}}ndor P. Fekete and
                  Tom Kamphans and
                  Alexander Kr{\"{o}}ller and
                  Joseph S. B. Mitchell and
                  Christiane Schmidt},
  title        = {Exploring and Triangulating a Region by a Swarm of Robots},
  booktitle    = {Approximation, Randomization, and Combinatorial Optimization. Algorithms
                  and Techniques ({APPROX} 2011)},
  pages        = {206--217},
  year         = {2011},
  url          = {https://doi.org/10.1007/978-3-642-22935-0\_18},
  doi          = {10.1007/978-3-642-22935-0\_18},
  timestamp    = {Tue, 21 Sep 2021 09:36:24 +0200},
  biburl       = {https://dblp.org/rec/conf/approx/FeketeKKMS11.bib},
  bibsource    = {dblp computer science bibliography, https://dblp.org}
}

@article{erdos1975packing,
  title={On packing squares with equal squares},
  author={Erd\H{o}s, Paul and Graham, Ron},
  journal={Journal of Combinatorial Theory, Series A},
  volume={19},
  number={1},
  pages={119--123},
  year={1975},
  publisher={Elsevier},
doi = {10.1016/0097-3165(75)90099-0}
}

@article{chung2019efficient,
  author    = {Fan Chung and
               Ron Graham},
  title     = {Efficient Packings of Unit Squares in a Large Square},
  journal   = {Discret. Comput. Geom.},
  volume    = {64},
  number    = {3},
  pages     = {690--699},
  year      = {2020},
  doi       = {10.1007/s00454-019-00088-9}
}

@Article{Gensane2005,
  author    = {Thierry Gensane and
               Philippe Ryckelynck},
  title     = {Improved Dense Packings of Congruent Squares in a Square},
  journal   = {Discret. Comput. Geom.},
  volume    = {34},
  number    = {1},
  pages     = {97--109},
  year      = {2005},
  doi       = {10.1007/s00454-004-1129-z},
  timestamp = {Thu, 12 Mar 2020 17:21:15 +0100},
  biburl    = {https://dblp.org/rec/journals/dcg/GensaneR05.bib},
  bibsource = {dblp computer science bibliography, https://dblp.org}
}

@article{DBLP:journals/ijrr/SoloveyH16,
  author       = {Kiril Solovey and
                  Dan Halperin},
  title        = {On the hardness of unlabeled multi-robot motion planning},
  journal      = {Int. J. Robotics Res.},
  volume       = {35},
  number       = {14},
  pages        = {1750--1759},
  year         = {2016},
  url          = {https://doi.org/10.1177/0278364916672311},
  doi          = {10.1177/0278364916672311},
  timestamp    = {Thu, 17 Sep 2020 12:02:36 +0200},
  biburl       = {https://dblp.org/rec/journals/ijrr/SoloveyH16.bib},
  bibsource    = {dblp computer science bibliography, https://dblp.org}
}

@article{DBLP:journals/jacm/HochbaumM85,
  author       = {Dorit S. Hochbaum and
                  Wolfgang Maass},
  title        = {Approximation Schemes for Covering and Packing Problems in Image Processing
                  and {VLSI}},
  journal      = {J. {ACM}},
  volume       = {32},
  number       = {1},
  pages        = {130--136},
  year         = {1985},
  url          = {https://doi.org/10.1145/2455.214106},
  doi          = {10.1145/2455.214106},
  timestamp    = {Tue, 06 Nov 2018 12:51:46 +0100},
  biburl       = {https://dblp.org/rec/journals/jacm/HochbaumM85.bib},
  bibsource    = {dblp computer science bibliography, https://dblp.org}
}

@article{DBLP:journals/ipl/Chan04,
  author       = {Timothy M. Chan},
  title        = {A note on maximum independent sets in rectangle intersection graphs},
  journal      = {Inf. Process. Lett.},
  volume       = {89},
  number       = {1},
  pages        = {19--23},
  year         = {2004},
  url          = {https://doi.org/10.1016/j.ipl.2003.09.019},
  doi          = {10.1016/J.IPL.2003.09.019},
  timestamp    = {Fri, 26 May 2017 22:54:45 +0200},
  biburl       = {https://dblp.org/rec/journals/ipl/Chan04.bib},
  bibsource    = {dblp computer science bibliography, https://dblp.org}
}

@article{DBLP:journals/comgeo/AgarwalKS98,
  author       = {Pankaj K. Agarwal and
                  Marc J. van Kreveld and
                  Subhash Suri},
  title        = {Label placement by maximum independent set in rectangles},
  journal      = {Comput. Geom.},
  volume       = {11},
  number       = {3-4},
  pages        = {209--218},
  year         = {1998},
  url          = {https://doi.org/10.1016/S0925-7721(98)00028-5},
  doi          = {10.1016/S0925-7721(98)00028-5},
  timestamp    = {Thu, 11 Feb 2021 23:27:48 +0100},
  biburl       = {https://dblp.org/rec/journals/comgeo/AgarwalKS98.bib},
  bibsource    = {dblp computer science bibliography, https://dblp.org}
}

\end{document}